\documentclass[11pt,a4paper]{article}
\pdfoutput=1

\usepackage{jheppub}
\makeatletter
\def\@fpheader{}
\makeatother

\usepackage[export]{adjustbox}
\usepackage{amsthm}
\usepackage{amsfonts}
\usepackage{bm}
\usepackage{dsfont}
\usepackage[titletoc]{appendix}
\usepackage{url}
\usepackage{braket}
\usepackage{qcircuit}
\usepackage{here}
\usepackage{mathrsfs}
\usepackage{comment}
\usepackage{tikz-cd}
\usepackage{accents}
\usepackage{enumitem}
\usepackage{array,longtable}
\usepackage{setspace}

\newtheorem*{assumption*}{Assumption}

\newtheorem{theorem}{Theorem}[section]
\newtheorem{corollary}[theorem]{Corollary}
\newtheorem{lemma}[theorem]{Lemma}
\newtheorem{proposition}[theorem]{Proposition}

\theoremstyle{definition}
\newtheorem{example}[theorem]{Example}
\newtheorem{definition}[theorem]{Definition}
\newtheorem*{definition*}{Definition}
\newtheorem{defn-prop}{Definition-Proposition}[section]
\newtheorem{axiom}[theorem]{Axiom}

\def\tr{{\mathrm{tr}}}
\def\str{{\mathrm{str}}}

 \def\CC{{\cal C}}

 \def\CO{{\cal O}}
 
 \def\CH{{\cal H}}

\def\be{\begin{equation}}
\def\ee{\end{equation}}
\def\ba{\begin{eqnarray}}
\def\ea{\end{eqnarray}}

\numberwithin{equation}{section}

\newcommand{\kket}[1]{\ket{#1 \rangle}}
\newcommand{\bbra}[1]{\bra{\langle #1}}
\newcommand{\bbrakket}[1]{\braket{\braket{#1}}}

\newcommand{\HBU}{{\mathcal{H}_\mathrm{BU}}}
\newcommand{\HBUpre}{{\mathcal{H}_\mathrm{BU}^\mathrm{pre}}}
\newcommand{\CBU}{{\mathcal{C}_\mathrm{BU}}}
\newcommand{\CBUpre}{{\mathcal{C}_\mathrm{BU}^\mathrm{pre}}}
\newcommand{\CBUHS}{{\mathcal{C}_\mathrm{BU}^\mathrm{HS}}}

\newcommand{\CBUvN}{{\mathcal{C}_\mathrm{BU}^\mathrm{vN}}}
\newcommand{\CBUvNL}{{\mathcal{C}_\mathrm{BU}^{\mathrm{vN},L}}}
\newcommand{\CBUvNR}{{\mathcal{C}_\mathrm{BU}^{\mathrm{vN},R}}}
\newcommand{\CBUfd}{{\mathcal{C}_\mathrm{BU}^\mathrm{f.d.}}}
\newcommand{\CBUold}{{\mathcal{C}_\mathrm{BU}^\mathrm{old}}}
\newcommand{\defined}{\stackrel{\mathrm{def}}{=}}
\newcommand{\Bord}{\mathrm{Bord}}

\newcommand{\BordX}{{\mathrm{Bord}_d^\mathcal{X}}}

\newcommand{\Hilb}{\mathrm{Hilb}}
\newcommand{\sHilb}{\mathrm{sHilb}}
\renewcommand{\Vec}{\mathrm{Vec}}
\newcommand{\sVec}{\mathrm{sVec}}

\newcommand{\cellmath}[1]{%
  \par\smallskip
  \noindent\makebox[\linewidth][c]{\(\displaystyle #1\)}
  \smallskip%
}

\title{Wormholes as red herrings: reflection positivity and the reconstruction of unitary quantum field theories}

\author[a]{Jacob McNamara}
\author[b]{and Zhencheng Wang}

\affiliation[a]{Simons Center for Geometry and Physics, Stony Brook University,\\
Stony Brook, NY 11794, USA}

\affiliation[b]{Department of Physics, University of Illinois Urbana-Champaign,\\
Urbana, IL 61801, USA}

\emailAdd{jmcnamara@scgp.stonybrook.edu}
\emailAdd{zcwang1@illinois.edu}

\abstract{As Coleman famously argued, the apparent breakdown of partition-function factorization in quantum gravity associated with Euclidean wormholes is a red herring, arising from a hidden average over an ensemble of theories. We present a direct analog of Coleman's argument for the apparent breakdown of Hilbert-space factorization associated with spatial wormholes, i.e., Einstein--Rosen bridges. Our main result is the following reconstruction theorem for quantum field theories: unitary QFTs are determined, up to unitary isomorphism, by their closed-manifold partition functions; every reflection-positive partition function arises from a unitary quantum field theory; and the states prepared by manifolds span the space of invariant states under the reconstructed theory's symmetry group. Interpreting the result gravitationally, we conclude that any apparent breakdown of Hilbert-space factorization is a red herring, arising from restricting to an incomplete spectrum of charged states.
}

\begin{document}

\maketitle

\section{Introduction}

Quantum gravity, though as yet undefined, is meant to be a quantum-mechanical theory of dynamical spacetime manifolds. In general, quantum mechanics instructs us to compute probability amplitudes by summing the amplitudes for each possible history. This perspective has proven extremely fruitful in quantum field theory (QFT), as codified in the Feynman path integral over quantum fields. By analogy, it seems very natural to attempt to define amplitudes in quantum gravity by summing over all possible spacetime manifolds in a gravitational path integral (GPI).

The GPI has come in and out of popularity over the years, with a recent surge of interest following the solution of Jackiw-Teitelboim (JT) gravity in terms of an explicit evaluation of the GPI \cite{Saad:2019lba}. We have learned that the GPI is smarter than previously thought: not only can it calculate the Bekenstein--Hawking entropy of black holes \cite{Gibbons:1976ue}; it can even reproduce the Page curve \cite{Page:1993wv} for evaporating black holes once replica wormholes are included \cite{Penington:2019npb,Almheiri:2019psf,Penington:2019kki,Almheiri:2019qdq,Almheiri:2019hni}. Thus, the GPI seems to give answers entirely consistent with unitary black hole evaporation, as expected in ultraviolet (UV) complete quantum gravity.

However, despite its successes, the GPI has a severe flaw compared to the familiar path integral of QFT. In the QFT path integral, the fluctuating quantum fields are local degrees of freedom living on top of spacetime. These local degrees of freedom may be cut and glued subject to a division of spacetime into coordinate patches, which makes the locality of the resulting QFT manifest. Mathematically, this cutting and gluing property is encoded by the statement that the dynamical fields in the QFT path integral form a \textit{sheaf}.

In contrast, the fluctuating field in the GPI is a bulk spacetime manifold pinned at its boundary to some region held fixed as a source. Given a division of the boundary into pieces, there is no canonical division of the bulk. This is especially true outside of the semiclassical regime, where we might hope to apply the Ryu--Takayanagi/Hubeny--Rangamani--Takayanagi \cite{Ryu:2006bv,Hubeny:2007xt} or quantum extremal surface \cite{Faulkner:2013ana,Engelhardt:2014gca} prescription, and associate each boundary region to its \textit{entanglement wedge} in the bulk. However, even in the semiclassical regime, the fields in the GPI do not form a sheaf. This is because the entanglement wedge of a disjoint union may be larger than the disjoint union of the entanglement wedges, containing what is known as an \textit{entanglement island} \cite{Almheiri:2019psf,Penington:2019npb,Almheiri:2019qdq,Penington:2019kki,Almheiri:2019hni}. The failure of the fields in the GPI to form a sheaf is a feature, not a bug, as the appearance of entanglement islands is essential in the recent successes of the GPI discussed above.

Consequently, these recent developments have raised a new tension, known as the \textit{Factorization Paradox}, arising from our strong belief in the holographic principle \cite{tHooft:1993dmi,Susskind:1994vu, Bousso:2002ju}. The \textit{holographic principle} asserts that UV-complete quantum gravity must admit a dual description in terms of local, non-gravitational degrees of freedom living on the boundary of spacetime. There is no tension between holography and the unitarity of black hole evaporation, and indeed, the two are intimately linked. The tension, instead, is between the locality of a holographic dual and the nonlocal nature of the GPI. This tension is most severe when the bulk contains a wormhole connecting two otherwise-disjoint components of the boundary. Holography tells us that the data associated to a disjoint boundary must factorize, but a wormhole leads to a connected, non-factorizing contribution to the GPI.

The Factorization Paradox is most commonly viewed as an issue arising from Euclidean (spacetime) wormholes, meaning wormholes whose Euclidean cross section is codimension-one in the bulk \cite{Hawking:1987mz,Giddings:1987cg,Hawking:1988ae,Coleman:1988cy,Giddings:1988cx,Giddings:1988wv,Witten:1999xp,Maldacena:2004rf,Arkani-Hamed:2007cpn,Marolf:2020xie}. A Euclidean wormhole describes the propagation of a baby (spatially closed) universe from one boundary component to another. This propagation of baby universes threatens to spoil the factorization of multi-boundary partition functions,
\begin{equation}\label{eq:factorized_Z}
    \zeta(M_1 \sqcup M_2) \stackrel{?}{=} \zeta(M_1)\zeta(M_2).
\end{equation}
Partition-function factorization is, by now, well understood: the necessary and sufficient condition for partition-function factorization is that the Hilbert space $\HBU$ of baby universe states is trivial \cite{Almheiri:2019qdq,Penington:2019kki,Marolf:2020xie,McNamara:2020uza,Banerjee:2022pmw,Usatyuk:2024mzs,Usatyuk:2024isz,Harlow:2025pvj,Abdalla:2025gzn,Harlow:2026hky,Zhao:2026mpl},
\begin{equation}\label{eq:hbu_equals_C}
    \HBU = \mathbb{C},
\end{equation}
so that there is no non-trivial propagation of baby universes.

Moreover, we understand what happens when partition-function factorization breaks down. As classically argued by Coleman \cite{Coleman:1988cy}, and subsequently expanded on by Giddings and Strominger \cite{Giddings:1988cx,Giddings:1988wv} as well as Marolf and Maxfield \cite{Marolf:2020xie}, this breakdown is always a red herring. When the GPI produces non-factorizing multi-boundary partition functions $\zeta$, they may always\footnote{\label{footnote:moment}Up to an unresolved moment problem (as discussed in, e.g., \cite{Marolf:2024jze,Dong:2024tjx,Chen:2025fwp}).} be reinterpreted as the ensemble average of individually-factorizing partition functions $\zeta_\alpha$ over a measure space of so-called \textit{$\alpha$-sectors},
\begin{equation}\label{eq:quenched_average_intro}
    \zeta(M_1 \sqcup \cdots \sqcup M_n) = \int d\alpha\ \zeta_\alpha(M_1) \cdots \zeta_\alpha(M_n).
\end{equation}
The original breakdown of partition-function factorization is reinterpreted as the presence of statistical correlations in the underlying ensemble. One may hope that by projecting to a fixed $\alpha$-sector, one could recover the partition function of a single UV-complete theory of quantum gravity.

However, partition-function factorization is not the end of the story. The factorization \eqref{eq:factorized_Z} of the partition function only ensures that the resulting quantum system is boundary-local in a very weak sense. In QFT, locality imposes a full tower of constraints, corresponding to the factorization of higher and higher categorical data associated to higher and higher codimension cuts of spacetime. Correspondingly, the Factorization Paradox consists of a full tower of possible breakdowns of factorization, which quantify the full tension between the GPI and the holographic principle.

In this paper, we move one step up this tower, and study the factorization of multi-boundary Hilbert spaces,
\begin{equation}\label{eq:factorizing_H}
    \mathcal{H}_{B_1 \sqcup B_2} \stackrel{?}{=} \mathcal{H}_{B_1} \otimes \mathcal{H}_{B_2}.
\end{equation}
Factorization of Hilbert spaces is certainly threatened by Euclidean wormholes, as the states on both sides are constrained to lie in the same $\alpha$-sector. However, projecting onto a fixed $\alpha$-sector does not fully resolve Hilbert-space factorization. This is because of the possibility of spatial wormholes, known as \textit{Einstein--Rosen (ER) bridges}, which connect disjoint spatial slices of the boundary through their bulk codimension-two horizon.

Hilbert-space factorization is less well understood than partition-function factorization. We expect that, in a UV-complete theory, Hilbert-space factorization is resolved by the ER = EPR\footnote{``EPR'' stands for Einstein--Podolsky--Rosen, who famously discussed quantum entanglement \cite{Einstein:1935rr}.} proposal \cite{Maldacena:2001kr,VanRaamsdonk:2010pw,Maldacena:2013xja}, so that every ER bridge is equivalent to an entangled state in a tensor-factorized Hilbert space.\footnote{Our notion of ER = EPR mainly concerns the direction ER $\Rightarrow$ EPR, and does not require the converse direction except in some very general, formal sense.} Until now, we have not had a clear characterization of when ER = EPR must hold, analogous to the necessary and sufficient condition \eqref{eq:hbu_equals_C} for partition-function factorization. Hilbert-space factorization has been studied in a number of specific contexts \cite{Harlow:2015lma,Harlow:2018tqv,Penington:2023dql,Kolchmeyer:2023gwa,Chua:2023ios,Boruch:2024kvv,Balasubramanian:2024yxk,Li:2024nft,Banerjee:2024fmh,Balasubramanian:2025zey,Balasubramanian:2025jeu}, with the predominant overarching theme being that Hilbert spaces seem to factorize once enough degrees of freedom are included.

While we have not known a sufficient condition, we have certainly known a necessary condition. As first observed by Harlow \cite{Harlow:2015lma}, ER = EPR can never hold if our bulk theory has a gauge symmetry $G$ with an incomplete spectrum of states \cite{Polchinski:2003bq,Banks:2010zn}. For instance, if the bulk theory contains a $U(1)$ gauge field without electrically charged matter, then a Reissner--Nordstr\"{o}m wormhole threaded by electric flux cannot possibly be identified with an entangled superposition of one-sided states. This necessary condition forms a motivational basis for related Swampland constraints including the absence of global 1-form symmetries \cite{Harlow:2018tng,Rudelius:2020orz,Heidenreich:2021xpr,McNamara:2021cuo,Cordova:2022rer} and the triviality of cobordism classes \cite{McNamara:2019rup}.

\subsection{Summary of results}\label{sec:summary_of_results}

In this paper, we provide a precise characterization of the breakdown of Hilbert-space factorization, analogous to \eqref{eq:hbu_equals_C}, working within the developing axiomatic framework for the GPI \cite{Marolf:2020xie,Gesteau:2020wrk,Colafranceschi:2023txs,Colafranceschi:2023urj,Marolf:2024adj,Chen:2025fwp,DiUbaldo:2026rly,Zhao:2026mpl}. In this axiomatic framework (reviewed in Section \ref{sec:axiomatic_approach_philosophy}), we fix a class $\mathcal{X}$ of manifolds with sources, viewed as the boundary conditions of the GPI, as well as a reflection-positive, factorizing partition function $\zeta$, viewed as the result of projecting said GPI onto a fixed $\alpha$-sector.

With $\mathcal{X}$ and $\zeta$ fixed, we construct a mathematically well-defined invariant, the \textit{baby universe category} $\CBU$, such that ER = EPR holds if and only if the baby universe category is trivial,
\begin{equation}\label{eq:triviality_of_CBU}
    \CBU = \Hilb.
\end{equation}
Here, $\Hilb$ denotes the category of complex Hilbert spaces and bounded linear maps.\footnote{One might have expected the constraint to be $\CBU = \sHilb$, the category of complex super-Hilbert spaces, or $\CBU = \Hilb_\mathbb{R}$, the category of real Hilbert spaces, motivated by \cite{Harlow:2023hjb}. For the definition of $\CBU$ used in most of this paper, \eqref{eq:triviality_of_CBU} is the correct constraint. See Section \ref{sec:fermionic_sources} for further discussion.} Heuristically, the objects of $\CBU$ are generated, under direct sum and projection onto subobjects, by the set of closed codimension-two slices of the bulk. The morphisms are quantum superpositions of bordisms modulo gravitational constraints. Our construction of $\CBU$ is directly motivated and enabled by the previous work \cite{Colafranceschi:2023urj, Marolf:2024adj} of Colafranceschi, Dong, Marolf, Zhang, and the second author of this paper.

When Hilbert-space factorization breaks down, our definition of $\CBU$ allows us to run a direct analog of Coleman's argument \cite{Coleman:1988cy} at one higher category level. The result is the following main theorem, which, while motivated by quantum gravity, is a mathematically rigorous result in formal QFT alone.

\begin{theorem}[Main Theorem]\label{thm:main}
    Let $\zeta$ be a Euclidean partition function defined on a class $\mathcal{X}$ of $d$-manifolds, possibly with sources, that satisfies Axioms \ref{axiom:finiteness}-\ref{axiom:reflection_positivity}. In particular, $\zeta$ need only be known on closed $\mathcal{X}$-manifolds, and must be finite, must factorize on disjoint unions, and must be reflection positive. Then the following are true:
    \begin{enumerate}[label=(\alph*), font=\normalfont\bfseries]
        \item There exists a unitary functorial QFT (Definition \ref{defn:functorial_QFT}), defined on the same class of $d$-manifolds, with partition function $\zeta$.

        \item Any two such QFTs are unitarily isomorphic.

        \item For such a QFT, let $G$ denote the compact Hausdorff group implementing the maximal unitary symmetry under which all sources in $\mathcal{X}$ are neutral. For any spatial manifold $B$, the closed span of the set of states preparable by Euclidean $\mathcal{X}$-manifolds with boundary $B$ is precisely the $G$-invariant subsector of the full Hilbert space of states on $B$.
    \end{enumerate}
\end{theorem}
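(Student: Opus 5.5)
The plan is to run Coleman's argument one categorical level up, with the baby universe category $\CBU$ in the role that $\HBU$ plays for partition functions. First, build the ``pre-theory'' directly from $\zeta$ by a GNS-type construction. For each spatial $(d-1)$-manifold $B$, take the vector space of formal linear combinations of $\mathcal{X}$-manifolds with boundary $B$. Give it the inner product $\langle M, N\rangle = \zeta(\overline{M}\cup_B N)$, which is positive semidefinite by reflection positivity and finite by the finiteness axiom. Quotient by null vectors and complete to get $\mathcal{H}^{\rm pre}_B$.

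Gluing along bordisms then gives densely defined maps $\mathcal{H}^{\rm pre}_{B}\to\mathcal{H}^{\rm pre}_{B'}$. I would check that they are well defined on the quotient and bounded by the usual Cauchy--Schwarz argument. This yields a functor that is unitary but only ``lax'': the map $\mathcal{H}^{\rm pre}_{B_1}\otimes\mathcal{H}^{\rm pre}_{B_2}\to\mathcal{H}^{\rm pre}_{B_1\sqcup B_2}$ is an isometry, using factorization of $\zeta$, but in general it is not surjective. The failure of surjectivity is exactly the ER-bridge phenomenon.

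Next, repeat the construction one level up. Codimension-two closed manifolds $\Sigma$ become objects, and cylinders $\Sigma\times[0,1]$ with states on them become morphisms. Idempotent-completing and direct-sum-completing gives $\CBU$, which should be a unitary (C*/W*) symmetric monoidal category with duals, since every object is self-dual via bent cylinders. The key structural step is a Doplicher--Roberts (or Deligne-type) reconstruction. A rigid symmetric C*-tensor category with simple unit is equivalent to $\mathrm{Rep}(G)$ for a compact group $G$, and a fibre functor to $\Hilb$ exists and is unique up to unitary isomorphism. Here simplicity of the unit follows from $\zeta(\emptyset)=1$ and factorization. I would use the fibre functor $F:\CBU\to\Hilb$ to ``ungauge'' the theory. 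Define $\mathcal{H}_B$ as the GNS space of $B$ induced from the pre-theory tensored over $\CBU$ with the regular representation $\bigoplus_\rho \rho\otimes\rho^*$. Equivalently, dress each boundary component with charged endpoints valued in $F$. Then check that the resulting assignment is a strict symmetric monoidal functor, so that $\mathcal{H}_{B_1\sqcup B_2}=\mathcal{H}_{B_1}\otimes\mathcal{H}_{B_2}$, and that its partition function on closed manifolds is still $\zeta$. This proves (a).

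For (b), given any unitary functorial QFT $Z$ with partition function $\zeta$, the states prepared by manifolds span a subspace isometric to $\mathcal{H}^{\rm pre}_B$, since the inner products are computed by $\zeta$. Restricting $Z$ to codimension two produces a fibre functor on $\CBU$. Uniqueness of fibre functors up to unitary monoidal isomorphism then transports to a unitary isomorphism $Z\cong Z'$, and I would verify naturality with respect to bordisms. For (c), the automorphism group of the fibre functor is $G$, and it acts unitarily on every $\mathcal{H}_B$ while fixing all manifold-prepared states. By construction, $\mathcal{H}_B=\bigoplus_\rho(\text{multiplicity})\otimes\rho$, with the $\rho=1$ summand equal to $\mathcal{H}^{\rm pre}_B$. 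So the closed span of preparable states is exactly $\mathcal{H}_B^G$. Maximality of $G$ follows because any unitary symmetry fixing all sources defines a monoidal automorphism of the fibre functor.

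I expect the main obstacle to be analytic rather than algebraic. Doplicher--Roberts needs rigidity, meaning objects with finite-dimensional endomorphism algebras and conjugates satisfying the conjugate equations. In infinite-dimensional QFT, however, the cylinders $\Sigma\times I$ give Hilbert--Schmidt or von Neumann--type morphism spaces, not obviously finite-dimensional ones. I would first try to decompose each object's endomorphism algebra (the notation $\CBUvN$, $\CBUHS$ suggests this) into a direct integral or sum of factors. I would then show that the simple summands are finite type using dimension functions built from $\zeta(\Sigma\times S^1)$-type traces, so that reconstruction applies to a finitely-supported rigid subcategory. Compactness of $G$ would then follow from the Tannakian duality.
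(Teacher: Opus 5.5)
Your architecture matches the paper's: universal construction, a symmetric W*-category whose objects are spatial slices, Doplicher--Roberts reconstruction, QFTs as fibre functors, $G$ as the automorphism group of the fibre functor, and preparable states as the trivial isotypic summand. The gap is at the step you flag as the main obstacle; it needs a specific lemma, and the route you sketch does not supply it.

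Decomposing $\mathrm{End}(B)$ into factors and using a trace built from $\zeta(S^1\times B)$-type data only gives a faithful normal semifinite trace $\tr(N)=\zeta(\tr_{\mathrm{Bord}}N)$. Such a trace rules out type III but not type II. A type II summand has no minimal projections, hence no simple objects, so there is no ``finitely-supported rigid subcategory'' to feed to DR.

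The missing lemma is the trace inequality of \cite{Colafranceschi:2023urj}, $\tr(\mathcal O_1^\dagger\mathcal O_2^\dagger\mathcal O_2\mathcal O_1)\le\tr(\mathcal O_2^\dagger\mathcal O_2)\,\tr(\mathcal O_1^\dagger\mathcal O_1)$. It is proved by folding the trace twice into an overlap $\langle\mathcal O_2\sqcup\overline{\mathcal O}_1|\,s\,|\mathcal O_2\sqcup\overline{\mathcal O}_1\rangle$ on a four-boundary space, then applying Cauchy--Schwarz with $s^2=1$ and multiplicativity of $\zeta$. It does two jobs:
\begin{itemize}
\item It gives $\lVert\widehat N\rVert\le\sqrt{\langle N|N\rangle}$. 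Your ``usual Cauchy--Schwarz'' boundedness step is circular if run in $\mathcal H_B$ alone.
\item Applied to a projection it gives $\tr(P)\le\tr(P)^2$, i.e.\ $\tr(P)\ge1$ for $P\neq0$.
\end{itemize}
That lower bound makes the category atomic, makes every simple object finite-dimensional, and identifies dualizable objects with finite-dimensional ones. Hence $\CBU$ is \emph{rigidly generated}, and DR is applied to its finite-dimensional subcategory and transported back along W*-Cauchy completion.

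Relatedly, your claim that every object is self-dual via bent cylinders is wrong on two counts. Bending pairs $B$ with $\overline B$, not with $B$; in topological quantum mechanics $\mathrm{pt}^+\mapsto\Box$ of $U(k)$, which is not self-dual. And outside the topological case the bent cylinders $\ket{C_B(\beta)}$ do not converge as $\beta\to0$, so infinite-dimensional objects have no duals at all. This is why the paper needs ``rigidly generated'' rather than ``rigid''.

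Three further points need repair.

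\emph{Target category.} A fibre functor to $\Hilb$ need not exist. For fermionic $\BordX$ the category can contain fermions: in the one-state fermionic quantum mechanics, $\mathrm{pt}^+$ has dimension $1$ but the swap on $\mathrm{pt}^+\sqcup\mathrm{pt}^+$ equals $-\mathrm{id}$, which no symmetric functor to $\Hilb$ can realize. You must use the super version of DR, with fibre functors into $\sHilb$ and $G$ a supergroup containing $(-1)^F$. This is also what the definition of a QFT requires.

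\emph{Part (b).} ``Restricting $Z$'' to get a fibre functor on $\CBU$ is where the real content lies. $Z$ is defined only on bordisms, while $\CBU$ contains weak-operator limits and images of projections. You need that the linear extension of $Z$ is bounded, in fact isometric, in the regular-representation norm. One way is $\lVert Z(\mathcal O)\rVert=\lim_n\big(\tr\,Z((\mathcal O^\dagger\mathcal O)^n)\big)^{1/2n}$, whose right side is computed by $\zeta$; isometry also kills null morphisms. You also need non-degeneracy, so that $Z$ extends normally to the von Neumann completion and then to the Cauchy completion. Only then does uniqueness of fibre functors transfer to uniqueness of QFTs.

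\emph{Objects.} Keep the objects as the closed $(d-1)$-manifolds themselves, which are codimension two only from the bulk viewpoint. Morphisms $A\to B$ are limits of superpositions of arbitrary bordisms acting on $\mathcal H_{B\sqcup\overline A}$. $\BordX$ supplies no $(d-2)$-dimensional data from which to build objects ``one level up''.
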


\noindent Stated less formally, parts (a) and (b) of Theorem \ref{thm:main} assert that:
\begin{center}
\bfseries\itshape
Unitary quantum field theories are determined \\ by their closed-manifold partition function.
\end{center}
This is true both in the sense that a known unitary QFT can be reconstructed from its closed-manifold partition function, and in the stronger sense that an abstract closed-manifold partition function, known to be reflection positive, automatically produces a unitary QFT. Thus, the closed-manifold partition function of a unitary QFT is not a coarse invariant at all, as is sometimes stated; it is, instead, a complete invariant. Theorem \ref{thm:main} generalizes the perspective of Freed and Hopkins \cite{Freed:2016rqq} from unitary \textit{invertible} QFTs to arbitrary unitary QFTs. Let us caution that our notion of unitary functorial QFT, given in Definition \ref{defn:functorial_QFT}, includes only \textit{compact} QFTs, whose Euclidean closed-manifold partition function is finite.

The class $\mathcal{X}$ of manifolds can be taken to include sources if one wants, but can always be restricted to the bare minimum needed to define the QFT. For instance, we may take $\mathcal{X}$ to consist of unadorned Riemannian manifolds with whatever tangential structure (orientation, spin structure, etc.) is required. The entire collection of Hilbert spaces and Euclidean evolution operators are then uniquely reconstructed. One would only be forced to include sources in order to distinguish theories, such as symmetry-protected topological (SPT) or symmetry-enriched topological (SET) phases, which become trivial or indistinguishable when one forgets the coupling to background fields.

As a consequence of Theorem \ref{thm:main}, we are able to close a long-standing conjecture of Friedan and Shenker \cite{Friedan:1986ua} in the affirmative: unitary two-dimensional conformal field theories (CFTs) are determined by their all-genus partition function.\footnote{Theorem \ref{thm:main} also likely resolves a number of conceptual descendants of Friedan and Shenker's conjecture, such as \cite{Gaberdiel:2009rd,Gaberdiel:2010jf,Codogni:2019sub,Carpi:2026bfh}, though Theorem \ref{thm:main}, as stated, does not apply to gravitationally anomalous theories, such as two-dimensional chiral CFTs, which do not have a well-defined closed-manifold partition function.} This is because, for a CFT, all correlation functions of local operators may be regarded as matrix elements of Euclidean evolution operators via the state-operator correspondence. In the language of the CFT bootstrap, Theorem \ref{thm:main} states that imposing reflection positivity of the closed-manifold partition functions, for all manifolds at once, produces a complete set of bootstrap constraints, in any dimension. Theorem \ref{thm:main} also fits very well with the result \cite{Wen:2019ylt} of Wen and Wen, that certain $(2+1)$-dimensional topological orders (i.e., modular tensor categories), with identical modular data, are nevertheless distinguished by their partition functions on higher-genus mapping tori.

Applied to quantum gravity, parts (a) and (b) of Theorem \ref{thm:main} show that any breakdown of Hilbert-space factorization is, ultimately, a red herring, in the sense of Coleman \cite{Coleman:1988cy}. While the non-factorizing Hilbert spaces of states produced by cutting open the GPI are not the full Hilbert spaces of any holographic theory, they can always be embedded in the Hilbert spaces of an underlying boundary-local QFT, uniquely up to (non-unique) unitary isomorphism. In a precise sense (see Section \ref{sec:higher_averaging}), choosing such an embedding is directly analogous to projecting a non-factorizing partition function onto an $\alpha$-sector.

Part (c) of Theorem \ref{thm:main} provides us with a natural reinterpretation of the original breakdown of Hilbert-space factorization. Informally, part (c) states that a state can be prepared by manifolds if and only if it is neutral under the largest possible symmetry group. Equivalently, a state \textit{cannot} be prepared by manifolds if and only if it is \textit{charged} under a symmetry group. This is true both when we include no sources beyond the manifold itself, and also when we include any collection of sources we like. In particular, if we include enough sources to break every symmetry of our QFT, then every state can be prepared by manifolds with sources. Part (c) of Theorem \ref{thm:main} holds both for states on connected spatial slices and also, crucially, for states on disconnected spatial slices.

Interpreted in unitary theories of quantum gravity, we learn that completeness of the spectrum of charged states is not merely a \textit{necessary} condition for ER = EPR, as shown by Harlow \cite{Harlow:2015lma}, but is actually a \textit{sufficient} condition as well. Thus, we show that:
\begin{center}
\bfseries\itshape
Given unitarity, completeness of the \\ spectrum is equivalent to ER = EPR.
\end{center}
An important caveat, discussed in Section \ref{sec:bulk_interpretation_of_symmetry_group}, is that the required notion of completeness of the spectrum is completeness with respect to the total global symmetry group of the microscopic holographic dual. This group may or may not be visible in terms of gauge fields in a GPI or effective field theory (EFT) description we may have started with.
 
As a result, whenever partition-function factorization holds, but Hilbert-space factorization breaks down, we obtain the following reinterpretation: the non-factorizing multi-boundary Hilbert spaces of states preparable by the GPI are just the neutral subsector,
\begin{equation}\label{eq:quenched_gauging_intro}
    \mathcal{H}_{B_1 \sqcup \cdots \sqcup B_n} = \left( \widetilde{\CH}_{B_1} \otimes \cdots \otimes \widetilde{\CH}_{B_n} \right)^G,
\end{equation}
in the factorizing Hilbert spaces $\widetilde{\CH}_B$ in a unique underlying QFT. This neutral subsector would not be expected to factorize in the first place, because not every neutral state in a tensor product is the tensor product of neutral states.

The expression \eqref{eq:quenched_gauging_intro} for the multi-boundary Hilbert spaces produced by the GPI is directly analogous to the expression \eqref{eq:quenched_average_intro} for multi-boundary partition functions. In \eqref{eq:quenched_average_intro}, we write the multi-boundary partition functions as an average over $\alpha$-sectors. More specifically, it is a \textit{quenched} average, meaning that we first multiply the partition functions for each boundary and only take an average at the end. Similarly, \eqref{eq:quenched_gauging_intro} is a higher-categorical form of quenched averaging, in that we first tensor the Hilbert spaces for each boundary and only project onto $G$-invariant states at the end.\footnote{Let us contrast this operation with gauging the symmetry $G$, which can be viewed as a higher categorical \textit{annealed} average. First of all, when we gauge a symmetry, we project onto $G$ invariants on each spatial component independently. Secondly, we must also include twisted sectors, which has no analog in \eqref{eq:quenched_gauging_intro}.} This is a form of averaging both in the straightforward sense that the projector onto $G$-invariants can be defined by group-averaging, and in the more formal (but precise) sense discussed in Section \ref{sec:higher_averaging}. The structure \eqref{eq:quenched_gauging_intro} was previously observed by Maxfield \cite{Maxfield:2023mdj} in a beautiful toy example, reviewed in Section \ref{sec:Ex_TopoQM}, which served as a major point of inspiration for this paper.

For the majority of this paper, we discuss the issue of Hilbert-space factorization alone, assuming that partition-function factorization has already been resolved. Ultimately, however, it is more natural to consider both issues at once, as we discuss in Section \ref{sec:including_alpha_sectors}. In such a case, with a nontrivial ensemble of $\alpha$-sectors, we must generalize \eqref{eq:quenched_gauging_intro} to the direct integral,
\begin{equation}\label{eq:quenched_gauging_intro_with_alpha_sectors}
    \mathcal{H}_{B_1 \sqcup \cdots \sqcup B_n} = \int^\oplus d\alpha\ \left( \widetilde{\CH}_{B_1}^\alpha \otimes \cdots \otimes \widetilde{\CH}_{B_n}^\alpha \right)^{G_\alpha},
\end{equation}
of the $G_\alpha$-invariant sectors in the ensemble of underlying $\alpha$-sector QFTs.

\subsection{Overview of paper}\label{sec:overview}

At a high level, our paper is structured as follows. Sections \ref{sec:axiomatic_approach_philosophy}, \ref{sec:categories_of_sources}, \ref{sec:axioms_for_partition_function}, and \ref{sec:spaces_of_grav_states} form a review and formalization of the emerging axiomatic approach to the GPI, which provides the formal context in which we prove Theorem \ref{thm:main}. Its main output is the family of Hilbert spaces $\mathcal{H}_B$ of states prepared by manifolds, obtained using reflection positivity via a form of the Osterwalder--Schrader (OS)/Gelfand--Naimark--Segal (GNS) construction. Section \ref{sec:examples} illustrates Theorem \ref{thm:main} in a number of examples, including some low-dimensional theories where we can directly prove Theorem \ref{thm:main} by hand. Sections \ref{sec:CBU}, \ref{sec:fusion}, and \ref{sec:reconstruction} comprise the technical core of this paper, in which we construct the baby universe category $\CBU$ and use it to prove Theorem \ref{thm:main}. Finally, with Theorem \ref{thm:main} proven, Section \ref{sec:robustness_extensions} includes stress tests and extensions of our framework, and Section \ref{sec:conclusions_future_directions} concludes our discussion and points towards future directions.

Let us now describe our proof of Theorem \ref{thm:main}. As motivation, we briefly recall the details of Coleman's argument \cite{Coleman:1988cy} showing that the path integral over Euclidean wormholes may be interpreted as producing a classical ensemble over $\alpha$-sectors. From the modern perspective put forward by Marolf and Maxfield \cite{Marolf:2020xie}, this argument is based on understanding the structure of the baby universe Hilbert space $\HBU$, and the von Neumann algebra $\mathcal{A}_\mathrm{BU}$ \cite{Gesteau:2020wrk} acting on it. Both $\mathcal{A}_\mathrm{BU}$ and $\HBU$ are spanned by the collection of closed codimension-one slices $M$ of the bulk, with the algebra structure on $\mathcal{A}_\mathrm{BU}$ induced by disjoint union of manifolds. Linear combinations of geometric manifolds are subject to the Hamiltonian constraint, which is implemented at the level of $\HBU$ by a quotient by null states.

The key observation of Coleman's \cite{Coleman:1988cy} is that the disjoint union of manifolds is commutative, and so $\mathcal{A}_\mathrm{BU}$ is a commutative algebra. As a result, the quantum mechanics of baby universe classicalizes. More precisely, $\mathcal{A}_\mathrm{BU}$ and $\HBU$ can be identified with the spaces of bounded and square-integrable functions, respectively, on the spectrum $\mathrm{Spec}(\mathcal{A}_\mathrm{BU})$. This spectrum is canonically equipped with a measure induced by the GPI (or, equivalently, by the Hartle--Hawking no-boundary wavefunction \cite{Hartle:1983ai}). The points of $\mathrm{Spec}(\mathcal{A}_\mathrm{BU})$ are, precisely, the $\alpha$-sectors discussed above. Each $\alpha$-sector corresponds to a normal $*$-algebra homomorphism $\zeta_\alpha : \mathcal{A}_\mathrm{BU} \to \mathbb{C}$, which, when evaluated on a closed manifold $M$, produces a factorizing partition function $\zeta_\alpha(M)$.\footnote{Note that the measure space $\mathrm{Spec}(\mathcal{A}_\mathrm{BU})$ only admits a complete set of global points when it is discrete. When $\mathrm{Spec}(\mathcal{A}_\mathrm{BU})$ is diffuse, it will not actually be possible to project onto a single $\alpha$-sector, though it will still be possible to restrict to subsets of $\alpha$-sectors of non-zero measure.}

Our proof of Theorem \ref{thm:main} is a direct categorification of Coleman's argument. While $\mathcal{A}_\mathrm{BU}$ is the von Neumann algebra generated by closed codimension-one slices of the bulk, the baby universe category $\CBU$ is the monoidal category generated by closed codimension-two slices of the bulk. In more detail, the objects of $\CBU$ are given by formal direct sums of bulk codimension-two closed manifolds, viewed as higher-codimension analogs of baby universes. The morphisms in $\CBU$ are quantum superpositions of bulk codimension-one bordisms, and thus $\CBU$ is a quantum bordism category, generalizing the bordism groups of quantum gravity \cite{McNamara:2019rup}. These bulk codimension-one bordisms should be viewed as partial Cauchy slices of the bulk which evolve a codimension-two slice radially. Just as $\mathcal{A}_\mathrm{BU}$ is an algebra under disjoint union of manifolds, the category $\CBU$ is monoidal, again under disjoint union.

Both the morphisms and objects of $\CBU$ are subject to gravitational constraints. At the level of morphisms, this is simply the bulk Hamiltonian constraint, which is imposed by a quotient by null states as in the baby universe Hilbert space $\HBU$. At the level of objects, we have a higher-categorical version of the Hamiltonian constraint, enforcing that the bulk radial evolution of codimension-two slices must be pure gauge. Formally, this constraint is imposed by passing to an idempotent completion, as explained in Section \ref{sec:idempotent_completion}. We view the objects in $\CBU$ as possible horizons for ER bridges, and so the nontrivial objects in $\CBU$ represent the most general sort of unbreakable flux which may flow through an ER bridge and obstruct ER = EPR.

The specific categorical formalism we use in this paper is the framework of Cauchy complete W*-categories, which provides a natural setting for categorical quantum mechanics. We refer the reader to \cite{henriques2024completewcategories} for a review of the basic theory of W*-categories and a clear development of their role as categorical analogs of Hilbert spaces. A W*-category is an operator-algebraic category, all of whose endomorphism algebras are von Neumann algebras (otherwise known as W*-algebras). W*-categories are Cauchy complete, in the sense of \cite{henriques2024completewcategories}, when they admit arbitrary orthogonal direct sums of objects as well as images for every orthogonal projection operator. The baby universe category is a Cauchy complete W*-category, which is moreover a W*-tensor category\footnote{Our conventions for the term \textit{W*-tensor category} differ from \cite{henriques2024completewcategories}. For us, a W*-tensor category is required to be Cauchy complete and have a simple tensor unit. When the condition on the tensor unit is dropped, we will say we have a \textit{W*-multitensor category}.} under the monoidal structure induced by disjoint union. We reassure readers unfamiliar with the general theory of von Neumann algebras that $\CBU$ is, crucially, an \textit{atomic} W*-category, meaning each endomorphism algebra is a direct sum of type I factors and nothing else.

Just as the von Neumann algebra $\mathcal{A}_\mathrm{BU}$ is commutative, the W*-tensor category $\CBU$ is symmetric, again because the disjoint union of manifolds is commutative (in the strongest sense). We saw that the existence of $\alpha$-sectors arises from the basic fact that commutative von Neumann algebras always classicalize, reducing quantum mechanics to classical probability theory. Analogously, symmetric W*-tensor categories $\mathcal{C}$ always classicalize as,
\begin{equation}\label{eq:classicalization_of_category}
    \mathcal{C} = \mathrm{Rep}(G),
\end{equation}
for some compact group $G$.\footnote{Up to an important subtlety involving fermionic statistics, discussed in Section \ref{sec:reconstruction}.} This is true, at least, when they satisfy an appropriate finiteness condition, which we refer to as \textit{rigid generation} and define in Section \ref{sec:rg_W*_tensor}. We refer to \eqref{eq:classicalization_of_category} as ``classicalization,'' as it has reduced the quantum mechanical possibility of non-invertible symmetry to the representation theory of a classical (i.e., not quantum) group.

The classicalization \eqref{eq:classicalization_of_category} of rigidly generated symmetric W*-tensor categories is described in general by the framework of Tannakian reconstruction, and more specifically, by the Doplicher--Roberts (DR) reconstruction theorem \cite{DoplicherRoberts89,muger2007abstract}. The DR reconstruction theorem forms the technical core of our argument, and Theorem \ref{thm:main} follows immediately from DR reconstruction once we construct $\CBU$ and show that it is rigidly generated, by leveraging the trace inequality of \cite{Colafranceschi:2023urj}.

For readers familiar with the theory of Doplicher--Haag--Roberts (DHR) superselection sectors \cite{Doplicher:1969fvo,Doplicher:1969fvg,Doplicher:1971wk,Doplicher:1973at} in algebraic QFT (see \cite{Casini:2019kex,Casini:2020rgj,Casini:2021zgr,Benedetti:2022zbb,Benedetti:2024dku,Casini:2025lfn,Shao:2025mfj,Harlow:2025cqc} for recent discussion), we note that our main argument is simply an application of the philosophy of DHR superselection to the superselection sectors of the GPI defined by fixing the boundary of a spatial slice. These sectors break up into irreducible sectors, which are precisely the $\mu$-sectors of \cite{Colafranceschi:2023urj,Marolf:2024adj}. Thus, the superselection theory of boundaries of spatial slices in the GPI behaves like DHR superselection theory in the stable range of $d \geq 4$ spacetime dimensions.

In detail, this paper is organized as follows. In Section \ref{sec:axiomatic_approach_philosophy}, we describe the broad philosophy behind the developing axiomatic approach to quantum gravity. In Section \ref{sec:categories_of_sources}, we describe the formal structures needed on the class $\mathcal{X}$ of manifolds, possibly with sources, which serve as formal boundary conditions for the GPI. These structures are best phrased in terms of a non-unital geometric bordism category, $\BordX$, equipped with structures encoding unitarity. In Section \ref{sec:axioms_for_partition_function}, we review the axioms placed on the partition function $\zeta$, following \cite{Colafranceschi:2023urj}. We also provide a definition for unitary functorial QFTs in the style of Atiyah--Segal \cite{Atiyah1988TQFT,segal1988definition}, roughly following the approach of Kontsevich--Segal \cite{Kontsevich:2021dmb}.

In Section \ref{sec:spaces_of_grav_states}, we use a form of the GNS construction to define Hilbert spaces of states formally produced by manifolds, which we view as the natural Hilbert spaces obtained by cutting open the GPI. This construction is known in the TQFT literature, independently, as the \textit{universal construction} of \cite{Blanchet:1995TQFT}, and we refer to the constructed Hilbert spaces as the \textit{universal Hilbert spaces}. In this axiomatic framework, we state a precise formalization of the ER = EPR conjecture \cite{Maldacena:2001kr,VanRaamsdonk:2010pw,Maldacena:2013xja}, and set up the central questions of this paper: when do the Hilbert spaces produced from the GPI factorize, and if they do not, what is going on? In Section \ref{sec:examples}, we illustrate the answers provided by Theorem \ref{thm:main} in a number of explicit examples, including in particular the toy model of Maxfield \cite{Maxfield:2023mdj}.

Our proof of Theorem \ref{thm:main} begins in earnest in Section \ref{sec:CBU}, where we construct the baby universe category $\CBU$. Our construction proceeds in a series of technical steps. First, we define the pre-baby universe category $\CBUpre$, the free linearization of $\BordX$. The pre-baby universe category admits a canonical family of representations on the universal Hilbert spaces, and we define a W*-category $\CBUvN$ by taking a categorical double commutant. Applying standard results from Tomita--Takesaki theory, we reproduce the main results of \cite{Colafranceschi:2023urj,Marolf:2024adj} in our language: that $\CBUvN$ is equipped with a faithful normal semifinite trace, which is uniformly bounded below on projectors, proving that $\CBUvN$ is atomic. We then construct the baby universe category $\CBU$ itself by taking the W*-Cauchy completion of $\CBUvN$.

In Section \ref{sec:fusion}, we equip the baby universe category $\CBU$ with the symmetric monoidal fusion induced from the disjoint union of manifolds, and prove that $\CBU$ is rigidly generated. Finally, in Section \ref{sec:reconstruction}, we apply the DR reconstruction theorem to $\CBU$, concluding our proof of Theorem \ref{thm:main}.

Let us note that this paper is intended, mainly, to address the issue of Hilbert-space factorization, with applications in formal/mathematical QFT serving as an alternative source of motivation and interpretation of our results. A follow-up paper \cite{M-JF-R_wip} by Johnson-Freyd, Reutter, and the first author of this paper will present our results in a mathematically rigorous context, and explore the full categorical generality in which they apply. For a preliminary report on this work in progress, see \cite{johnsonfreyd2026rigidfirm}.

As our proof involves many moving pieces, we provide the following roadmap of the various constructions and theorems we apply to go from a bordism category $\BordX$ and a partition function $\zeta$ to a proof of Theorem \ref{thm:main}.
\begin{equation*}
    \begin{tikzcd}[column sep={5.2em}]
        \BordX \arrow[r, "\text{Linearize}"] & \CBUpre \arrow[r, "{\text{GNS via $\zeta$}}"] & \CBUvN \arrow[r, "\stackrel{\text{W*-Cauchy}}{\text{complete}}"] & \CBU \arrow[r, "\text{DR + \cite{Colafranceschi:2023urj}}"] & \text{Theorem \ref{thm:main}.}
    \end{tikzcd}
\end{equation*}
Thus, $\CBUpre$ is obtained from $\BordX$ by free linearization, $\CBUvN$ is obtained from $\CBUpre$ by a double commutant in the GNS representation induced by $\zeta$, $\CBU$ is obtained from $\CBUvN$ by taking a W*-Cauchy completion, and finally Theorem \ref{thm:main} follows by applying the DR reconstruction theorem to $\CBU$, with the crucial finiteness condition supplied, ultimately, by the results of \cite{Colafranceschi:2023urj}.

We also provide the following table (inspired by \cite{henriques2024completewcategories}) outlining the analogy between our results and the story of Coleman \cite{Coleman:1988cy}, Giddings and Strominger \cite{Giddings:1988cx,Giddings:1988wv}, and Marolf and Maxfield \cite{Marolf:2020xie}.

{
\setlength{\LTleft}{0pt}
\setlength{\LTright}{0pt}
\renewcommand{\arraystretch}{1.25}

\begin{longtable}{%
  |p{\dimexpr0.5\textwidth - 2\tabcolsep - 0.5\arrayrulewidth\relax}|
  p{\dimexpr0.5\textwidth - 2\tabcolsep - 0.5\arrayrulewidth\relax}|%
}
\hline
\makebox[\linewidth][c]{\textbf{Partition-function factorization}}
&
\makebox[\linewidth][c]{\textbf{Hilbert-space factorization}}
\\
\hline
\hline
\endfirsthead

\hline
\endhead
\hline
For closed $d$-manifolds $M_1,M_2$:
\cellmath{\zeta(M_1 \sqcup M_2) \stackrel{?}{=} \zeta(M_1)\zeta(M_2).} 
&
For closed $(d-1)$-manifolds $B_1,B_2$:
\cellmath{\mathcal{H}_{B_1 \sqcup B_2} \stackrel{?}{=} \mathcal{H}_{B_1} \otimes \mathcal{H}_{B_2}.}
\\
\hline

Linear span of closed $d$-manifolds:
\cellmath{\HBUpre = \mathrm{Span}_\mathbb{C}\{\ket{M}\}.}
&
Linearization of bordism category:
\cellmath{\CBUpre = \mathbb{C}[\BordX].}

\\
\hline

Inner product:
\cellmath{\braket{M_1 | M_2} = \zeta(M_2 \sqcup \overline{M}_1).}
&
Hilbert space-valued inner product:
\cellmath{\bbrakket{B_1 | B_2} = \mathcal{H}_{B_2 \sqcup \overline{B}_1 }.}

\\
\hline

Partition functions are overlaps with Hartle--Hawking state:
\cellmath{\zeta(M) = \braket{\varnothing | M}.}
&
Universal Hilbert spaces are overlaps with Hartle--Hawking object:
\cellmath{\mathcal{H}_B = \bbrakket{\varnothing | B}.}
\\
\hline

Cauchy completion:
\cellmath{\HBU = \mathrm{Compl}(\HBUpre).}
&
Double commutant and completion:
\cellmath{\CBUvN = (\CBUpre)'', \quad \CBU = \Hilb(\CBUvN).}

\\
\hline

Quotient by null states:
\cellmath{\ket{\psi_1} = \ket{\psi_2},}
in $\HBU$, whenever:
\cellmath{\braket{M | \psi_1} = \braket{M | \psi_2}, \quad \forall \ket{M} \in \HBUpre.}
&
Isomorphism in Cauchy completion:
\cellmath{\Psi_1 \cong \Psi_2,}
in $\CBU$, whenever:
\cellmath{\mathcal{H}_{\Psi_1 \sqcup \overline{B}} \cong \mathcal{H}_{\Psi_2 \sqcup \overline{B}}, \quad \forall B \in \CBUpre,}
by the Yoneda Lemma.
\\
\hline

Partition functions factorize if and only if:
\cellmath{\HBU=\mathbb{C}.}
&
Hilbert spaces factorize if and only if:
\cellmath{\CBU=\mathrm{Hilb}.}
\\
\hline

$\mathcal{A}_\mathrm{BU}$ is a commutative algebra:
\cellmath{M_1 \sqcup M_2 = M_2 \sqcup M_1.}
&
$\CBU$ is symmetric monoidal,
\cellmath{B_1 \sqcup B_2 = B_2 \sqcup B_1.}
\\
\hline
Regular $*$-algebra homomorphism:
\cellmath{\zeta_\alpha : \mathcal{A}_\mathrm{BU} \to \mathbb{C},}
giving an $\alpha$-sector partition function.
&
Fiber functor:
\cellmath{\CBU \to \sHilb,}
giving an $\alpha$-sector QFT.
\\
\hline

Measure space of $\alpha$-sectors:
\cellmath{\mathrm{Spec}(\mathcal{A}_\mathrm{BU}).}
&
Stack of fiber functors:
\cellmath{\mathrm{Spec}(\CBU) \simeq BG.}
\\
\hline

Baby universe algebra classicalizes:
\cellmath{\mathcal{A}_\mathrm{BU} = L^\infty\big(\mathrm{Spec}(\mathcal{A}_\mathrm{BU})\big)}
&
Baby universe category classicalizes:
\cellmath{\CBU = \sHilb\big(\mathrm{Spec}(\CBU)\big) \simeq \mathrm{sRep}(G)}
\\
\hline

Partition functions are ensemble averages:
\cellmath{\zeta(M) = \int d\alpha\ \zeta_\alpha(M)}
&
Universal Hilbert spaces are $G$-invariants:
\cellmath{\mathcal{H}_B = \widetilde{\mathcal{H}}_B^G = \int^\oplus_{BG} \widetilde{\mathcal{H}}_B}
\\
\hline

\end{longtable}
}
\vspace{1em}

\begin{figure}[h]
    \centering
    \includegraphics[width=0.9\linewidth]{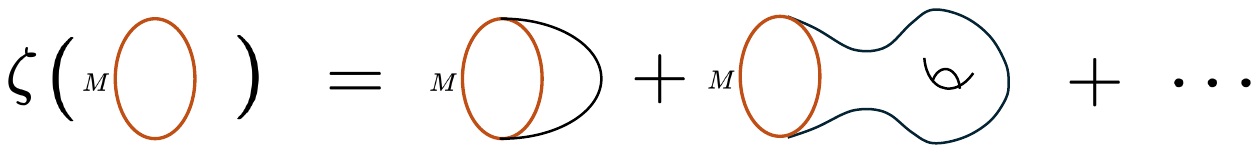}
    \caption{Schematically, the GPI computes the partition function $\zeta(M)$ by summing over all possible bulk manifolds with boundary $M$.}
    \label{fig:GPI}
\end{figure}

\section{An axiomatic approach to quantum gravity}\label{sec:axiomatic_approach_philosophy}

As discussed in the Introduction, the basic structure of the GPI is the following: given a bulk theory of gravity, we fix the boundary of spacetime and perform a path integral over all possible bulks. Thus, schematically, the GPI is given by
\begin{equation}\label{eq:GPI}
    \zeta(M) \defined \int_{\partial X = M} \mathcal{D}X\ e^{-S(X)},
\end{equation}
where $S(X)$ is the bulk action functional, and we always work in Euclidean signature. The GPI runs over the space of bulk manifolds $X$ with boundary $M$, as depicted in Figure \ref{fig:GPI}. Our convention for dimensions will be that $d$ denotes the dimension of the manifold $M$ held fixed at the boundary of the gravitational spacetime, while the bulk is $(d+1)$-dimensional, ignoring any possibly large internal dimensions.

Frequently, the bulk theory of gravity includes some dynamical fields $\phi_X$, including a dynamical metric tensor, which must also be integrated over in the GPI. For pedagogical simplicity, we usually assume that we have chosen Dirichlet boundary conditions along $M$, meaning that the bulk fields $\phi_X$ must restrict to a fixed choice $\phi_M$ of boundary fields serving as a classical source. The matching of bulk fields with boundary sources is known as the \textit{holographic dictionary}, originally described by \cite{Gubser:1998bc,Witten:1998qj} in the context of the anti-de Sitter/conformal field theory (AdS/CFT) correspondence \cite{Maldacena:1997re}. For more general boundary conditions, we still use language adapted to Dirichlet boundary conditions, but strictly speaking what we mean is that we turn on source fields for all possible deformations of the boundary condition. We will frequently suppress explicit reference to fields, considering them to be part of the structure of our manifolds, and always requiring isomorphisms of manifolds to be field-preserving.

The GPI \eqref{eq:GPI} is, in almost every case, only schematic. While it can sometimes be evaluated in a semiclassical saddle-point approximation, there is frequently little hope of fully evaluating the integral. The issues are more severe than the technical issues in evaluating the QFT path integral, for multiple reasons, including the conformal mode problem \cite{Gibbons:1978ac}, the choice of integration contour \cite{Gibbons:1978ac,Halliwell:1989dy,Kontsevich:2021dmb,Witten:2021nzp,Marolf:2022ybi}, and the measure for summing over topologies \cite{Hawking:1987mz,Hawking:1988ae,Coleman:1988cy,Giddings:1988cx,Giddings:1987cg,Giddings:1988wv,Maloney:2007ud,Saad:2019lba,Marolf:2020xie}. These issues arise, in large part, from the fact that there is no full justification for the Euclidean GPI arising from Wick rotation from Lorentzian signature.\footnote{See \cite{Louko:1995jw,Colin-Ellerin:2020mva,Marolf:2022ybi,Chen:2025leq,Held:2026huj,Held:2026bbo,Kolanowski:2026gii} for more on the GPI in Lorentzian signature.} It has been conjectured \cite{Harlow:2020bee} that the Euclidean GPI will only truly be justified in a UV-complete theory of quantum gravity.

Nevertheless, whatever quantum gravity is, it should at least come with a notion of allowed boundary conditions and a partition function $\zeta(M)$ for each allowed boundary condition $M$. This is especially clear in the case of asymptotically AdS boundaries, where we have a wealth of examples arising from the known landscape of AdS vacua of string theory. As a result, a new, axiomatic approach to quantum gravity has been the subject of increasing study in recent years \cite{Marolf:2020xie,Gesteau:2020wrk,Colafranceschi:2023urj,Marolf:2024adj,Chen:2025fwp,DiUbaldo:2026rly,Zhao:2026mpl}. In this approach, we simply assume that the partition function $\zeta(M)$, ostensibly computed by the GPI, is known from the outset. As a result, the GPI \eqref{eq:GPI} is axiomatized away, just as the path integral of QFT is axiomatized away in the Atiyah--Segal axiomatic approach to QFT.\footnote{The path integral is still quite useful in constructing and/or motivating examples of QFTs, and the GPI can likely play an analogous role.}

The currently developing axiomatic framework for quantum gravity is much more minimal than the Atiyah--Segal framework for QFT \cite{Atiyah1988TQFT,segal1988definition}. In the Atiyah--Segal framework, one axiomatizes not just the partition function, but also the Hilbert spaces and Euclidean evolution operators of the QFT. These assemble into a unitary symmetric monoidal functor out of a bordism category, as reviewed in Definition \ref{defn:functorial_QFT} below. Locality, motivated by the cutting and gluing property of the QFT path integral, is fundamentally baked into the Atiyah--Segal axioms. In contrast, for quantum gravity, we very emphatically do not want to assume locality, and so the current approach is to simply axiomatize the partition function $\zeta(M)$ by itself.

Technically, the emerging axiomatic framework consists of two main parts. The first part is a class $\mathcal{X}$ of source manifolds $M$, or even just abstract ``sources,'' that can serve as the input to the putative GPI. We discuss this class of source manifolds in Section \ref{sec:categories_of_sources}. The second part is the partition function $\zeta(M)$ itself, which is taken to satisfy a list of axioms laid out in Section \ref{sec:axioms_for_partition_function} (possibly dropping multiplicativity, i.e., Axiom \ref{axiom:multiplicativity}). The required properties of the class $\mathcal{X}$ of source manifolds are actually identical to those for a $d$-dimensional QFT, and thus it makes sense to ask whether a given gravitational partition function $\zeta$ arises from a $d$-dimensional QFT or not. When it does, we say that the quantum theory of gravity is \textit{holographic}, which is, crucially, not taken as an assumption in this paper.

Let us note that the axioms for the gravitational partition function are strictly weaker than those of QFT, in that the partition function of any QFT will automatically satisfy them. Thus, there is a formal sense, discussed in more detail below, in which any $d$-dimensional QFT can be viewed as a holographic quantum theory of gravity in $(d+1)$ dimensions. From this perspective, we view the state produced in a QFT by evolution over some manifold as a state in which the bulk simply \textit{is} that manifold, at least before imposing the bulk Hamiltonian constraint. Throughout, we will frequently switch back and forth between the QFT perspective, in which we are considering a candidate partition function for a $d$-dimensional QFT, and the gravitational perspective, in which we view the partition function and resulting quantum theory as arising from a $(d+1)$-dimensional GPI.

\section{Bordism categories of source manifolds}\label{sec:categories_of_sources}

In this section, we outline the formal properties we require of the class $\mathcal{X}$ of source manifolds. These formal properties are best expressed as structures on the geometric bordism category $\BordX$. Geometric bordism categories have been studied previously \cite{segal1988definition,stolz2004elliptic,ayala2009geometriccobordismcategories,stolz2011supersymmetric,stolz2012traces,Kontsevich:2021dmb,grady2026higher}, with the approaches of Stolz--Teichner \cite{stolz2012traces} and Kontsevich--Segal \cite{Kontsevich:2021dmb} providing the most direct motivation for our approach in this paper. The most important structure we require on $\BordX$ is what we call a \textit{unitary structure}, as defined in Section \ref{sec:OS_conjugation}, and as motivated by \cite{stehouwer2024unitary,stehouwer2024spin,ferrer2024dagger}.

Ultimately, nothing essential in our framework requires the source category to have anything to do with bordism. The necessary structures are purely categorical, and once we have the category $\BordX$ in hand, with the structures we discuss, we are free to forget how it was built. This will be shown explicitly in the follow-up work \cite{M-JF-R_wip} of Johnson-Freyd, Reutter, and the first author of this paper.

\subsection{Source data and locality}\label{sec:source_data}

To start, fix some class $\mathcal{X}$ of local source data on $d$-dimensional manifolds. This source data may include topological data such as an orientation or a spin structure, as well as continuous data, such as background fields of various kinds or marked loci representing operator insertions. One field of particular importance is a fixed Riemannian metric, which sources the bulk gravitational field. To fix terminology, we say that a choice of source data on a $d$-manifold is an \textit{$\mathcal{X}$-structure}, and a $d$-manifold $M$ equipped with an $\mathcal{X}$-structure is an \textit{$\mathcal{X}$-manifold}. Two $\mathcal{X}$-manifolds are considered to be the same when they are related by an \textit{$\mathcal{X}$-isomorphism}, i.e., an $\mathcal{X}$-preserving diffeomorphism. For instance, when $\mathcal{X}$ includes a Riemannian metric, $\mathcal{X}$-isomorphisms must be isometries.

Formally, the class $\mathcal{X}$ is a sheaf on the site of smooth $d$-manifolds and local diffeomorphisms \cite{Freed:2012bs}. The condition that $\mathcal{X}$ be a sheaf encodes the locality of the source data, as it asserts that we may cut and glue $\mathcal{X}$-structures coordinate patch by coordinate patch.\footnote{In other words, an $\mathcal{X}$-structure is something which transforms as an $\mathcal{X}$-structure.} We restrict to local source data because our goal is to probe locality; if our sources were not local, it would be impossible to ask the question. We assume that $\mathcal{X}$ is equipped with a topology, which we assume to be separable on compact $d$-manifolds.\footnote{Separability holds for any physically reasonable choice of source data, including spaces of finitely (or even countably) many smooth fields, or configuration spaces of marked loci.}

In this paper, we will need to study both $d$-dimensional and $(d-1)$-dimensional manifolds equipped with $\mathcal{X}$-structures. A standard issue is that, since $\mathcal{X}$-structures are only defined in dimension $d$, it does not make sense to ask for an $\mathcal{X}$-structure on a $(d-1)$-manifold $B$. The standard\footnote{See e.g. \cite{segal1988definition,stolz2004elliptic,stolz2011supersymmetric,freed2019lectures}.} way of handling this is to define a \textit{$(d-1)$-dimensional $\mathcal{X}$-manifold} to be a $(d-1)$-dimensional manifold $B$ equipped with a germ of an ambient $d$-dimensional $\mathcal{X}$-manifold in which $B$ is embedded as a co-oriented, codimension-one submanifold. We illustrate this definition in Figure \ref{fig:spacetime_germ}.

Physically, we view the manifold $B$ as a spatial slice of the ambient $d$-dimensional source spacetime. Saying that we only have a \textit{germ} means that we identify any two spacetimes if they are isomorphic in a neighborhood of $B$ via an isomorphism which acts trivially on $B$; thus, a germ only remembers an infinitesimal piece of spacetime around $B$. The co-orientation of $B$ in spacetime means that we have chosen an orientation for its normal bundle, which physically serves as the arrow of time.\footnote{The arrow of time has nothing to do with whether our theory admits a time-reversal symmetry, but instead indicates the direction of quantization. See e.g. \cite{freed2019lectures,Witten:2025ayw}.} $\mathcal{X}$-isomorphisms of $(d-1)$-dimensional $\mathcal{X}$-manifolds are required to preserve the arrow of time.

\begin{figure}
    \centering
\includegraphics[width=0.25\linewidth]{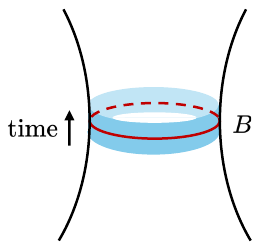}
    \caption{A spatial manifold $B$ (red) equipped with a germ of spacetime (shaded blue) and a co-orientation, i.e., an arrow of time. Whatever source fields we have included in the class $\mathcal{X}$ are defined on the germ of spacetime. Though suppressed from our notation, both the germ of spacetime, with its $\mathcal{X}$-structure, and the arrow of time are needed to specify an object of $\BordX$.}
\label{fig:spacetime_germ}
\end{figure}

We define an \textit{$\mathcal{X}$-bordism} $N : B_1 \to B_2$ between $(d-1)$-dimensional $\mathcal{X}$-manifolds to be a compact $d$-dimensional $\mathcal{X}$-manifold $N$ with boundary $\partial N = B_1 \sqcup B_2$, such that the arrows of time on $B_1$ and $B_2$ point inward and outward, respectively, and such that the germ of $N$ near its boundary agrees with the given germs on $B_1$ and $B_2$. We depict a few bordisms in Figure \ref{fig:germ_bordisms}. An $\mathcal{X}$-bordism is precisely what is needed to evolve quantum states from the incoming boundary $B_1$ to the outgoing boundary $B_2$, both in the context of QFT and in the GPI. Given $\mathcal{X}$-bordisms
\begin{equation}
    N_1 : B_1 \to B_2, \quad N_2 : B_2 \to B_3,
\end{equation}
we define the composite bordism $N_2 \circ N_1: B_1 \to B_3$ by gluing along the shared boundary $B_2$,
\begin{equation}
    N_2 \circ N_1 = N_2 \cup_{B_2} N_1,
\end{equation}
as depicted in Figure \ref{fig:germ_bordisms}. The glued bordism has a canonical $\mathcal{X}$-structure, obtained by gluing the $\mathcal{X}$-structures of $N_1$ and $N_2$ along the shared germ of an $\mathcal{X}$-structure around $B_2$. We consider two $\mathcal{X}$-bordisms to be isomorphic if they are related by an $\mathcal{X}$-isomorphism which restricts to the identity on the boundary germs.

\begin{figure}
    \centering
    \includegraphics[width=0.5\linewidth]{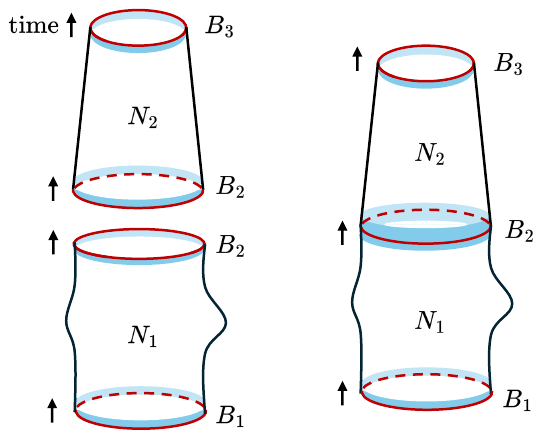}
    \caption{We depict two bordisms, $N_1:B_1\to B_2$, $N_2:B_2\to B_3$, as well as their composite $N_2 \circ N_1 : B_1 \to B_3$. The source and target of each bordism are determined by the arrows of time on its boundary. In defining the composite, we glue the $\mathcal{X}$-structures along the given germ of spacetime along $B_2$.}
    \label{fig:germ_bordisms}
\end{figure}

The collections of $(d-1)$-dimensional $\mathcal{X}$-manifolds and $d$-dimensional $\mathcal{X}$-bordisms form the objects and morphisms of the $\mathcal{X}$\textit{-bordism category}, $\BordX$. For most physically relevant choices of source data $\mathcal{X}$, the bordism category $\BordX$ will be non-unital, meaning that it will lack identity morphisms. This is because all non-empty bordisms are required to have some finite Euclidean duration, as measured by some invariant like the distance between the boundaries or the total volume, which can only increase under composition.

The main exception is when the class $\mathcal{X}$ is \textit{topological}. Formally, this means that the sheaf $\mathcal{X}$ is locally constant, in that specifying an $\mathcal{X}$-structure in a small neighborhood of any point is enough to uniquely specify an $\mathcal{X}$-structure in any contractible region containing that point. A class of topological source data $\mathcal{X}$ is the same thing \cite{lurie2008classification,ayala2015factorization,ayala2017local,freed2019lectures} as a \textit{tangential structure} in the sense of \cite{lashof1963poincare,Freed:2016rqq}. When $\mathcal{X}$ is topological, the bordism category $\BordX$ is unital. This is because, with no way to measure the duration of a bordism, cylinders can serve as identity morphisms.

Working with non-unital categories is a bit difficult because they do not have a notion of categorical isomorphism. One approach, taken for instance in \cite{stolz2012traces}, is to adjoin all $\mathcal{X}$-isomorphisms of $(d-1)$-dimensional manifolds as additional morphisms in $\BordX$. The composition of an $\mathcal{X}$-isomorphism with a bordism is defined by twisting the appropriate boundary by the given isomorphism. This approach produces a unital category, but not a category whose morphisms can always be viewed as finite-length bordisms.

Our approach will be different, following a suggestion made by Kontsevich and Segal in \cite{Kontsevich:2021dmb}.\footnote{And made to us by Theo Johnson-Freyd.} For us, the category $\BordX$ will be genuinely non-unital, but we will use the notion of $\mathcal{X}$-isomorphism in place of categorical isomorphism whenever we consider whether two objects or morphisms of $\BordX$ are the same.\footnote{This is, in fact, how we typically think of objects of bordism categories anyway, even in the topological case. This is because the categorical isomorphisms in topological bordism categories include not just isomorphisms of manifolds, but also $h$-cobordisms. See \cite{ferrer2024dagger} for further discussion.} As a result, in the various categorical structures we discuss and construct on $\BordX$, all coherence isomorphisms are taken to be $\mathcal{X}$-isomorphisms. As in \cite{Kontsevich:2021dmb}, our reason for doing this is analytic, as we discuss below. By abuse of notation, we will still use $u \circ N$ or $N \circ u$ to denote the composition of an $\mathcal{X}$-isomorphism $u$ with a bordism $N$, which makes sense even though $\mathcal{X}$-automorphisms are not morphisms in our category $\BordX$.

The bordism category $\BordX$ allows us to cut $\mathcal{X}$-manifolds along spatial slices, which is sufficient to encode locality in time. Let us now discuss locality in space. If we wanted to discuss full spatial locality, we should continue cutting our spatial slices in codimension-two, codimension-three, and so on, until we could discuss arbitrarily small neighborhoods of each spacetime point. In the topological case, this extended cutting is formalized by the definition of bordism $d$-categories \cite{lurie2008classification,calaque2019note}, and there is no obstruction in defining geometric bordism $d$-categories. However, in geometric QFTs, there are severe analytic issues related to cutting in higher codimension, including the expected Type III von Neumann algebras which appear once one cuts a spatial slice along a codimension-two interface.

For our goal of understanding Hilbert-space factorization, we so not need to address these analytic issues. Instead, we restrict ourselves to a trivial-seeming special case: cutting a disjoint spatial slice $B_1 \sqcup B_2$ along the \textit{empty} codimension-two interface separating $B_1$ from $B_2$. While this cut may not be so interesting in QFT, it is highly nontrivial in quantum gravity. This is because we can have a nontrivial codimension-two cut of a bulk spatial slice which does not extend out to the boundary. For instance, the horizon, or cross-section, of an ER bridge connecting $B_1$ and $B_2$ through the bulk is precisely such a cut.

Categorically, the possibility of cutting a disjoint spatial manifold $B_1 \sqcup B_2$ into its components is encoded in the symmetric monoidal structure on the bordism category $\BordX$. We recall that a \textit{symmetric monoidal structure} on a category $\mathcal{C}$ is a unital fusion operation,
\begin{equation}
    \otimes : \mathcal{C} \times \mathcal{C} \to \mathcal{C}, \quad \mathds{1} \in \mathcal{C},
\end{equation}
which is associative and commutative up to coherent isomorphism (in the strongest sense). For $\BordX$, the symmetric monoidal structure is given by disjoint union of manifolds, with unit the empty $(d-1)$-manifold,
\begin{equation}
    \sqcup : \BordX \times \BordX \to \BordX, \quad \varnothing^{d-1} \in \BordX.
\end{equation}
For categorically inclined readers, we note that the monoidal structure on the $\mathcal{X}$-bordism 1-category can be viewed as the composition of endomorphisms of the empty $(d-2)$-manifold in the $\mathcal{X}$-bordism 2-category. This is the precise sense in which splitting a disjoint union $B_1 \sqcup B_2$ into its components is a special case of cutting in codimension-two.

Some examples of topological bordism categories are the categories $\Bord_d^O, \Bord_d^{SO}$, and $\Bord_d^{Spin}$, of unoriented, oriented, and spin manifolds respectively. Each of these has a Riemannian analog, where, in addition to the tangential structure, we also require the choice of a Riemannian metric. Other examples of source data include background scalar fields, background gauge bundles with connection, configurations of points, lines, or general loci marked by operator labels, and foliations of fixed type.

\subsection{Unitary structures}\label{sec:OS_conjugation}

In this paper, our goal is to study unitary quantum systems, whose states live in Hilbert spaces and admit probabilistic interpretations via the Born rule. In order to make sense of unitarity, and its Euclidean avatar, reflection positivity, the class $\mathcal{X}$ of source manifolds must be equipped with additional structure. Parts of this structure were discussed recently in \cite{Witten:2025ayw} in the context of the GPI. Our approach is directly motivated by \cite{Freed:2016rqq,Penneys2018UnitaryDF,stehouwer2024unitary,stehouwer2024spin,ferrer2024dagger}.

\begin{figure}
    \centering
    \includegraphics[width=0.6\linewidth]{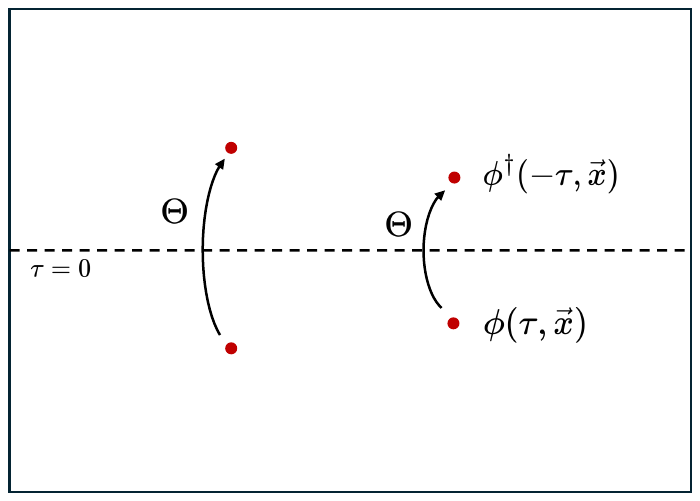}
    \caption{The Osterwalder--Schrader conjugation map $\Theta:\phi(\tau,\vec{x}) \mapsto \phi^\dagger (-\tau,\vec{x})$. This conjugation map is necessary in order to turn Euclidean correlation functions into an inner product, as required to formulate the condition of reflection positivity.}
    \label{fig:OS_conjugation}
\end{figure}

As motivation, let us recall the Osterwalder--Schrader (OS)/Wightman approach \cite{OsterwalderSchrader1973,OsterwalderSchrader1975,Wightman1956,StreaterWightman2000} to axiomatic QFT. In this approach, one starts with Euclidean correlation functions $\braket{-}_\Omega$ of fields $\phi(\tau, \vec x)$ on $\mathbb{R}^d$, ostensibly computed in some vacuum state $\ket{\Omega}$. In order to turn these Euclidean correlation functions into an inner product, one needs the \textit{OS conjugation map} $\Theta$, which acts on fields by
\begin{equation}
    \Theta : \phi(\tau, \vec x) \mapsto \phi^\dagger(- \tau, \vec x),
\end{equation}
as depicted in Figure \ref{fig:OS_conjugation}. Given a collection $\phi$ of fields supported on the lower half-space, one defines formal states $\ket{\phi \Omega}$, with inner product defined by,
\begin{equation}
    \braket{\phi_1 \Omega | \phi_2 \Omega} \defined \braket{\Theta(\phi_1) \phi_2}_\Omega.
\end{equation}
It is only at this stage, once OS conjugation has been applied, that one asks for reflection positivity.

Thus, we need an OS conjugation map on the bordism category $\BordX$. In particular, we need to know how to glue two bordisms,
\begin{equation}\label{eq:two_bordisms_rep_two_states}
    M_1, M_2 : \varnothing \to B,
\end{equation}
representing two states on $B$ into a closed manifold which represents the corresponding wavefunction overlap. At a technical level, what we need is a \textit{$\dagger$-structure}, or \textit{dagger-structure}, on $\BordX$. A $\dagger$-structure is a generalization of the operation on the category $\Hilb$ which takes a linear map to its adjoint. Concretely, a $\dagger$-structure on $\BordX$ is a contravariant involution,
\begin{equation}
    (-)^\dagger : \BordX \to (\BordX)^\mathrm{op},
\end{equation}
which is the identity on objects and takes a bordism $N : B_1 \to B_2$ to an adjoint bordism $N^\dagger : B_2 \to B_1$. Given bordisms $M_1, M_2$ as in \eqref{eq:two_bordisms_rep_two_states} representing two states, the closed manifold $M_1^\dagger \circ M_2$ represents their wavefunction overlap, as depicted in Figure \ref{fig:inner_product}. Moreover, we demand that every $\mathcal{X}$-isomorphism is unitary, in that its image under $\dagger$ is its inverse.

\begin{figure}
    \centering
    \includegraphics[width=0.8\linewidth]{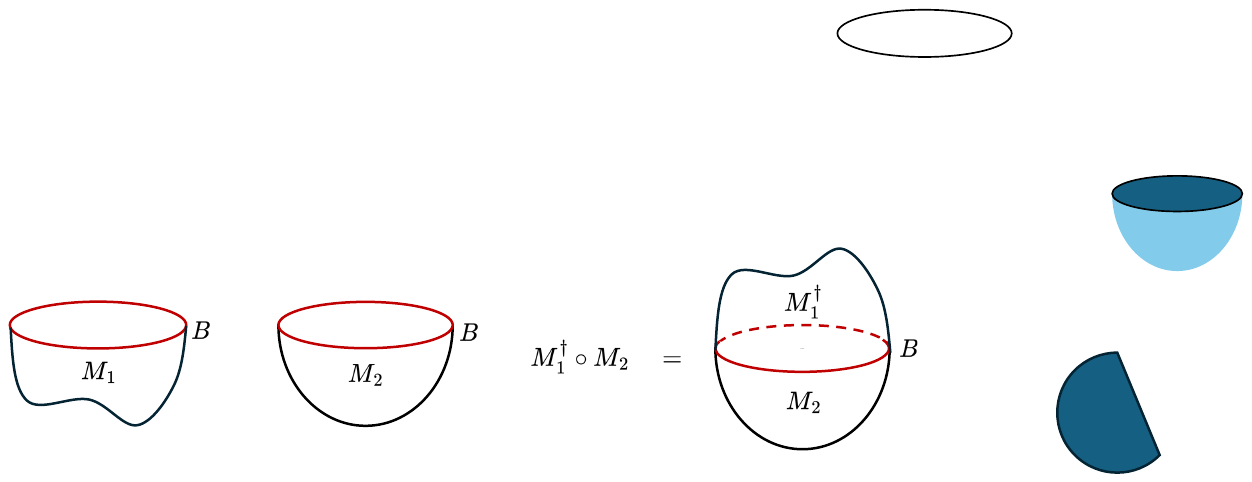}
    \caption{For two bordisms $M_1,M_2 : \varnothing \to B$ (left), representing two states on $B$, the corresponding wavefunction overlap is represented by the closed manifold $M_1^\dagger \circ M_2$ (right) obtained by reflecting $M_1$ across $B$ and gluing.}
    \label{fig:inner_product}
\end{figure}

To find a $\dagger$-structure, let us first note that the bordism category $\BordX$ is naturally equipped with a contravariant duality involution,
\begin{equation}\label{eq:duality_functor}
    (-)^\vee : \BordX \to (\BordX)^\mathrm{op},
\end{equation}
defined on objects by reversing the arrow of time and on bordisms by swapping the source and target. However, the duality involution $(-)^\vee$ does not typically define a $\dagger$-structure, because it is not typically the identity on objects. For example, if an oriented manifold $B \in \Bord_d^{SO}$ is not diffeomorphic to its orientation reversal, then there is no $SO$-isomorphism between $B$ and $B^\vee$ as objects of $\Bord_d^{SO}$. This is because any isomorphism $B \to B^\vee$, as we have defined it, must act by time reversal on the germ of spacetime around $B$. Thus, an isomorphism $B \to B^\vee$ must also reverse the orientation of space in order to preserve the orientation of spacetime. As a result, we cannot use the dual functor $(-)^\vee$, on its own, as an analog of OS conjugation.

While not a $\dagger$-structure, the functor $(-)^\vee$ is still quite useful. In fact, the object $B^\vee$ is a non-unital version of a dual object for $B$. This means that we have a family of canonical isomorphisms,
\begin{equation}
    \BordX(B_1 \sqcup B \to B_2) \cong \BordX(B_1 \to B_2 \sqcup B^\vee), \quad N \mapsto N^{\vee_B},
\end{equation}
defined by reversing the arrow of time on the boundary component $B$ of the bordism $N$. Thus, the symmetric monoidal category $\BordX$ is \textit{rigid} monoidal, in a non-unital sense, meaning that each object admits a dual.

It may come as a surprise to readers familiar with TQFT that $\BordX$ is rigid, as dualizability of objects in topological bordism categories is directly connected to the finite dimensionality of Hilbert spaces in TQFT. This is because one may use duals to turn the identity map $X \to X$ of any object into a coevaluation map $\mathds{1} \to X \otimes X^\vee$, which is only possible in finite dimensional vector spaces.

In contrast, in the non-unital rigid monoidal category $\BordX$, we cannot define coevaluation maps because we do not have identity morphisms in the first place. Nevertheless, we may still turn any bordism $A \to B$ of finite length into a bordism $\varnothing \to B \sqcup A^\vee$. For comparison, consider the category $\Hilb^{\mathrm{HS}}$, whose objects are Hilbert spaces and whose morphisms are Hilbert--Schmidt operators. Any Hilbert--Schmidt operator $\widehat{\mathcal{O}} : \mathcal{H}_1 \to \mathcal{H}_2$ can be turned into a state $\ket{\CO} \in \mathcal{H}_2 \otimes \mathcal{H}_1^\vee$; infinite dimensionality causes no issues here, because the identity operator on an infinite-dimensional Hilbert space is never Hilbert--Schmidt. For this reason, finite length bordisms behave in many ways like Hilbert--Schmidt operators.\footnote{In fact, bordisms even behave like trace-class operators, as discussed in \cite{stolz2012traces}. This is because any finite length bordism can be decomposed as a composition of two shorter bordisms, and the product of any two Hilbert--Schmidt operators is always trace class.} This is the essential analytic reason why we do not include identity morphisms in the category $\BordX$.

Returning to the problem of equipping $\BordX$ with a $\dagger$-structure, we take a cue from the natural $\dagger$-structure on $\Hilb$ defined by taking adjoints of operators. In $\Hilb$, the adjoint $(-)^\dagger$ is defined not just in terms of the dual $(-)^\vee$, but in terms of the conjugate dual $\overline{(-)}^\vee$, where
\begin{equation}
    \overline{(-)} : \Hilb \to \Hilb,
\end{equation}
is the functor which takes a Hilbert space to its complex conjugate. Now, $\overline{(-)}^\vee$ is not quite the identity on objects. To remedy this, we must combine $\overline{(-)}^\vee$ with the canonical unitary isomorphism $\rho_\mathcal{H} : \mathcal{H} \to \overline{\mathcal{H}}^\vee$ induced by the inner product.  Concretely, $\rho_\CH$ maps the ket vector $\ket{\psi} \in \CH$ to the corresponding bra vector $\bra{\psi} \in \overline{\CH}^\vee$. The matrix elements of $\rho_\CH$, when computed in any linear basis, are the components $G_{\bar \imath j}$ of the Gram matrix of the inner product on $\CH$.

Following \cite{stehouwer2024unitary,stehouwer2024spin}, we refer to a choice of isomorphism $\rho_\mathcal{H} : \CH \to \overline{\CH}^\vee$ as a \textit{Hermitian structure} with respect to the conjugate dual functor $\overline{(-)}^\vee$, and note that the objects of $\Hilb$ are precisely vector spaces equipped with a Hermitian structure which is, moreover, positive definite. Using the conjugate, dual, and Hermitian structures, the $\dagger$-structure on $\Hilb$ is defined by,
\begin{equation}\label{eq:dagger_on_hilb}
    \big( T : \mathcal{H}_1 \to \mathcal{H}_2\big) \mapsto \big(T^\dagger \defined \rho_{\mathcal{H}_1}^{-1} \circ \overline{T}^\vee \circ \rho_{\mathcal{H}_2} : \mathcal{H}_2 \to \mathcal{H}_1\big).
\end{equation}
The Hermitian structures $\rho_{\mathcal{H}_1}^{-1}, \rho_{\mathcal{H}_2}$ in \eqref{eq:dagger_on_hilb} are invisible when $T^\dagger$ is computed in an orthonormal basis, but are necessary when it is computed in a general linear basis, as we have,
\begin{equation}
    (T^\dagger)_i^j = G_{i \bar \imath} T_{\bar \imath}^{\bar \jmath} G^{j \bar \jmath},
\end{equation}
using the Hermitian metric and its inverse to raise and lower indices.

\begin{figure}
    \centering
    \includegraphics[width=0.9\linewidth]{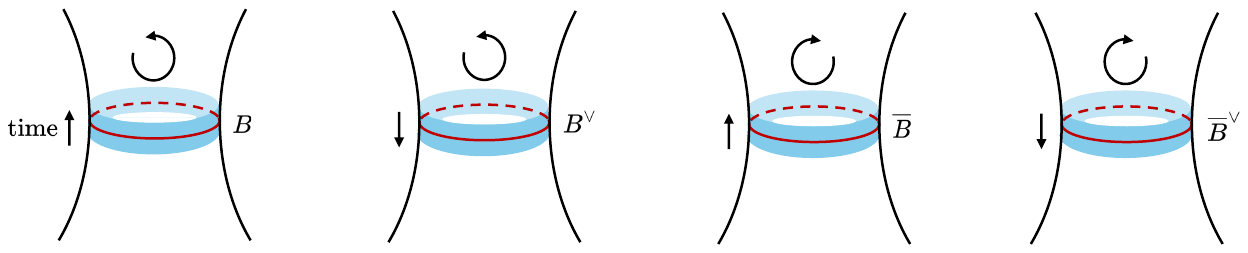}
    \caption{Starting from a spatial manifold $B \in \mathrm{Bord}^{\mathcal X}_d$ (far left), reversing the arrow of time gives the dual $B^\vee$ (left), while reversing the orientation (and conjugating sources) in spacetime gives the conjugate $\overline{B}$ (right). Finally, the conjugate dual $\overline{B}^{\vee}$ (far right) is obtained by applying both operations. The vertical arrows denote the arrows of time, while the circular arrows denote the orientation of the ambient spacetime. Neither the dual $(-)^\vee$ nor the conjugate $\overline{(-)}$ move points of spacetime, acting only on the auxiliary structures given by the arrow of time and ambient $\mathcal{X}$-structure, respectively.}
    \label{fig:conjugation_object}
\end{figure}

By analogy, we require that $\mathcal{X}$ be equipped with a conjugation involution $\overline{(-)}$, taking any $\mathcal{X}$-manifold $M$ to its conjugate $\overline{M}$. At the level of the bordism category, $\overline{(-)}$ defines a covariant conjugation involution,
\begin{equation}
    \overline{(-)} : \BordX \to \BordX.
\end{equation}
Heuristically, the conjugation map reverses orientations and takes complex source fields to their complex conjugates. More precisely, $\overline{(-)}$ defines what we mean by the complex conjugate of a source field in the first place. We depict the conjugation operation $\overline{(-)}$, as well as the dual $(-)^\vee$, in Figure \ref{fig:conjugation_object}.

As in $\Hilb$, the conjugate dual $\overline{(-)}^\vee$ is not the identity on objects on the nose. In the case of $\BordX$, the conjugate dual $\overline{(-)}^\vee$ acts on a spatial manifold $B$ by reversing the arrow of time and conjugating sources on the germ of spacetime. Thus, we need to equip every manifold $B \in \BordX$ with a Hermitian structure,
\begin{equation}
    \rho_B : B \xrightarrow{\sim} \overline{B}^\vee,
\end{equation}
in order to equip $\BordX$ with a $\dagger$-structure. To understand the conjugate dual of an object $B \in \BordX$, consider the following example.

\begin{example}\label{ex:Hermitian_str_SO}
    Let $\Bord_d^{SO}$ denote the topological bordism category of oriented $d$-manifolds. The conjugate dual $\overline{B}^\vee$ of an object $B \in \Bord_d^{SO}$ is the exact same manifold, with the ambient orientation of spacetime flipped, and with the arrow of time reversed. To identify $B$ with $\overline{B}^\vee$, we flip the germ of spacetime,
    \begin{equation}
        \rho_B : B \to \overline{B}^\vee, \quad  (\tau, \vec{b}) \mapsto (- \tau, \vec{b}),
    \end{equation}
    fixing the spatial coordinate $\vec{b} \in B$ and flipping the Euclidean time coordinate $\tau$. One might object that $\rho_B$ seems to be orientation-reversing, but this is not the case: $\overline{(-)}$ is defined to reverse the ambient orientation, while $(-)^\vee$ has nothing to do with the orientation. Note that $\rho_B$ also preserves the orientation of space induced from the orientation of spacetime via the arrow of time.
\end{example}

As in Example \ref{ex:Hermitian_str_SO}, the Hermitian structures $\rho_B$ in $\BordX$ should act as time reversal, fixing space point-wise, and flipping the germ of spacetime about space. We emphasize that $\rho_B$, like the arrow of time itself, has nothing to do with whether our theory admits a time-reversal symmetry. Instead, $\rho_B$ serves as the structure we use to turn kets into bras. One may worry, with justification, that when $\mathcal{X}$ includes the choice of a metric (or any continuous source field), a general germ of spacetime will not admit any time-reversing isometry. We return to this point in Section \ref{sec:germs}, with the upshot being that we have to restrict to germs of spacetime which do admit a time-reversal isometry.

By combining conjugation, duality, and the Hermitian structures, we can finally equip $\BordX$ with a $\dagger$-structure, defined by,
\begin{equation}\label{eq:dagger_on_BordX}
    \big( N : A \to B \big) \mapsto \big( N^\dagger \defined \rho_A^{-1} \circ \overline{N}^\vee \circ \rho_B : B \to A \big).
\end{equation}
Physically, $N^\dagger$ is defined by first using $\rho_B$ to turn an incoming ket state $\ket{\psi}$ on $B$ into the corresponding bra $\bra{\psi}$, propagating $\bra{\psi}$ backwards in time along $\overline{N}^\vee$, and finally using $\rho_A^{-1}$ to turn the resulting bra state back into a ket. We use the $\dagger$-structure to define the closed manifold obtained by pairing two bordisms,
\begin{equation}
    \braket{M_1 | M_2}_\Bord \defined M_1^\dagger \circ M_2, \quad M_1, M_2 : \varnothing \to B,
\end{equation}
from nothing to a fixed spatial slice $B$.

\begin{figure}
    \centering
    \includegraphics[width=0.8\linewidth]{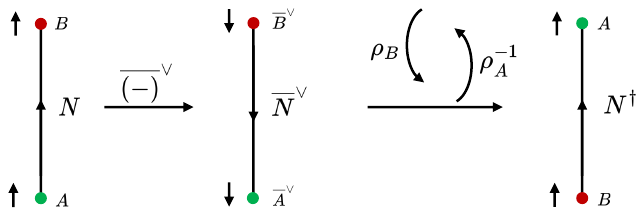}
    \caption{To define the $\dagger$-structure on $\BordX$ we start with a bordism $N : A \to B$ (left), applying the conjugate dual functor to obtain $\overline{N}^\vee : \overline{B}^\vee \to \overline{A}^\vee$ (middle). This operation reverses the arrows of time on both boundaries and reverses the orientation of spacetime. Finally, we obtain the adjoint $N^\dagger : B \to A$ by conjugating $\overline{N}^\vee$ with the Hermitian structures $\rho_A^{-1}$ and $\rho_B$, which we view geometrically as a $\pi$ rotation.}
    \label{fig:conjugation_bordism}
\end{figure}

The conjugate functor $\overline{(-)}$ and the Hermitian structure $\rho_B$ are required to satisfy a collection of coherence axioms, which we discuss in detail in Appendix \ref{app:daggers_duals_conjugates}. Specifically, up to slight differences in conventions, we ask that the induced $\dagger$-structure \eqref{eq:dagger_on_BordX} makes $\BordX$ \textit{strong fermionically $\dagger$-compact} in the sense of \cite{stehouwer2024unitary,stehouwer2024spin}. Equivalently (for our purposes), we ask for the non-unital, rigid, symmetric monoidal category $\BordX$ to be a $\dagger$-category \textit{with unitary duality}, in the sense of \cite{ferrer2024dagger}. In this paper, we refer to either of these equivalent structures as a \textit{unitary structure}, and a functor which preserves them as a \textit{unitary functor}.

Let us spell out the additional constraints which enter into the definition of a unitary structure. First, we require an additional family of coherence isomorphisms $\overline{B}^\vee \cong \overline{B^\vee}$. For us, these isomorphisms will simply be identity maps, since reversing the arrow of time and conjugating sources commute on the nose. Via these isomorphisms, we obtain what is known as the \textit{twist automorphism},
\begin{equation}
    \theta_B \defined \rho_{\overline{B}^\vee} \circ \rho_B : B \to B,
\end{equation}
using that $\overline{(-)}$ and $(-)^\vee$ commute and are involutive on the nose. We then require that $\theta_B$ satisfies,
\begin{equation}
\theta_B^2 = \mathrm{id}_B,
\end{equation}
as $\mathcal{X}$-automorphisms of $B$. A unitary structure on $\BordX$ is a non-linear analog of a balanced unitary dual functor (UDF) on a unitary tensor category \cite{Penneys2018UnitaryDF}. We say that a bordism category $\BordX$ equipped with a unitary structure is a \textit{unitary bordism category}.

To illustrate the physical meaning of the twist $\theta_B$, consider the following example.

\begin{example}\label{ex:twist_on_spin_bord}
    Let $\Bord_d^{Spin}$ denote the topological bordism category of $d$-manifolds equipped with spin structures. As discussed in \cite{Freed:2016rqq,stehouwer2024unitary,stehouwer2024spin}, the spin bordism category $\Bord_d^{Spin}$ admits a canonical unitary structure with the following property: for every spin $(d-1)$-manifold, the twist automorphism $\theta_B : B \to B$ is the $2 \pi$ rotation spin diffeomorphism, which acts as the identity on the points of $B$ and acts on spin frames as a $2 \pi$ rotation.
\end{example}

As illustrated by Example \ref{ex:twist_on_spin_bord}, the twist automorphism $\theta_B$ corresponds, physically, to a $2 \pi$ rotation of $B$.\footnote{Analogously, $\rho_B$ corresponds to a $\pi$ rotation. However, as demonstrated algebraicly in Appendix \ref{app:daggers_duals_conjugates}, $\rho_B$ and $\rho_{B^\vee}$ (as well as $\rho_{\overline{B}}$) correspond to $\pi$ rotations in opposite directions, because of the issue discussed in \cite{stehouwer2024unitary,stehouwer2024unitary} related to the positivity of inner products on the duals of super-Hilbert spaces.} The condition $\theta_B^2 = \mathrm{id}_B$ is thus the statement that a $4 \pi$ rotation must act trivially, which is physically motivated by the fact that our spatial manifolds are not embedded anywhere, and thus cannot have anyonic behavior under rotation. We say that a unitary bordism category $\BordX$ is \textit{bosonic} if $\theta_B = \mathrm{id}_B$ for all objects $B$, and is \textit{fermionic} otherwise. This interpretation of $\theta_B$ is justified by the spin-statistics theorem for unitary functorial QFTs \cite{Freed:2016rqq,stehouwer2024unitary,stehouwer2024spin}, which asserts that $\theta_B$ (measuring \textit{spin}) must map to fermion parity $(-1)^F$ (measuring \textit{statistics}) in any unitary functorial QFT (Definition \ref{defn:functorial_QFT}).

Example \ref{ex:twist_on_spin_bord} illustrates an additional important point: equipping the bordism category $\BordX$ with a unitary structure is a nontrivial step, which may involve physically meaningful choices. To see this, consider the example of \cite{stehouwer2024unitary}, given by the spin bordism category $\Bord_1^{Spin}$ in $d = 1$ dimensions. Now, note that a spin structure on a 1-manifold is precisely the same data as an orientation and a separate $\mathbb{Z}_2$ gauge bundle. This is because an orientation on a 1-manifold already trivializes the tangent bundle, and removing the usual twisting that distinguishes a spin structure from a $\mathbb{Z}_2$ bundle. As a result, $\Bord_1^{Spin}$ is equivalent, as a symmetric monoidal category, to the bordism category $\Bord_1^{{SO} \times \mathbb{Z}_2}$ of oriented 1-manifolds equipped with a $\mathbb{Z}_2$ bundle. However, the two categories are distinguished by their unitary structures. While $\theta_B$ is the $2 \pi$ spin rotation in $\Bord_1^{Spin}$, it is simply the identity in $\Bord_1^{{SO} \times \mathbb{Z}_2}$.

While equipping $\BordX$ with a unitary structure is nontrivial in general, at least in the topological case $\BordX$ admits a canonical unitary structure whenever $\mathcal{X}$ is a \textit{stable tangential structure}, meaning, some structure on the stabilized tangent bundle \cite{Freed:2016rqq,ferrer2024dagger}. Stable tangential structures include, for instance: orientations, spin structures, pin structures, and background gauge fields for any finite symmetry. While the analog of this canonical unitary structure is not clear in the geometric case, we can use the same canonical unitary structure if $\mathcal{X}$ is the class of Riemannian manifolds equipped, additionally, with some stable tangential structure. Moreover, many choices of geometric source data seem to admit natural unitary structures. For instance, a complex metric is taken by $\overline{(-)}$ to its complex conjugate \cite{Kontsevich:2021dmb}.

\subsubsection{Restrictions on germs and sources}\label{sec:germs}

In order for a geometric bordism category $\BordX$ to admit a unitary structure, we are forced to place a few restrictions on the germs of spacetime and the types of sources we allow. The first, discussed above, is that we must restrict to germs which admit a time-reversal symmetry that fixes the spatial slice, so that every object $B \in \BordX$ is equipped with a Hermitian structure $\rho_B : B \to \overline{B}^\vee$. Thankfully, this restriction is also physically natural, as this is the only case in which we expect quantum states to form a Hilbert space in any case \cite{Kontsevich:2021dmb}. We must also demand that the source fields turned on around the spatial slices are invariant under $\overline{(-)}$. For instance, a complex scalar field may only take real values on the fixed spatial slice, and must be taken to its complex conjugate under time reversal.

\begin{figure}
    \centering
    \includegraphics[width=0.7\linewidth]{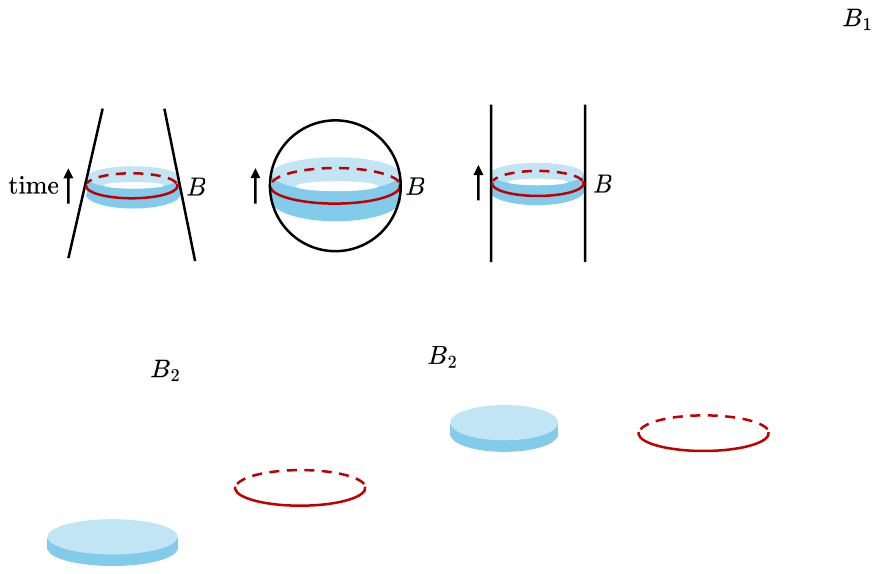}
    \caption{For a spatial manifold $B$, there may be multiple non-isomorphic germs of spacetime in which it might be embedded. This germ might admit no additional symmetry (left), time-reversal symmetry (middle), or time-reversal and time-translation symmetry (right).}
    \label{fig:different_germs}
\end{figure}

For the sake of simplicity, we make the additional assumption that the germs of spacetime admit an infinitesimal time translation symmetry. In particular, we assume the germ of a metric around an object $B \in \BordX$ takes the form,
\begin{equation}
    ds^2 = N_B(\vec x)^2 d\tau^2 + g_{i j}^B(\vec x) dx^i dx^j,
\end{equation}
up to diffeomorphism, for some metric $g_{ij}^B$ and lapse function $N_B$ on $B$.

This restriction on germs ensures that each object $B \in \BordX$ admits a one-parameter semigroup of self-adjoint self-bordisms, given by \textit{cylinders},
\begin{equation}\label{eq:self_adjoint_semigroup}
    C_B(\beta) = [0, \beta] \times B, \quad C_B(\beta_1) \circ C_B(\beta_2) = C_B(\beta_1 + \beta_2), \quad C_B(\beta)^\dagger = C_B(\beta),
\end{equation}
as used in \cite{Colafranceschi:2023urj,Marolf:2024adj}. Here, $\beta > 0$ denotes the parameter controlling the semigroup. Note that $C_B(\beta)$ is only a Riemannian product when the lapse $N_B$ is constant, in which case we will always rescale $\tau$ to set $N_B = 1$ and take $\beta$ to denote the actual geometric length of $C_B(\beta)$. In a unitary QFT, the evolution operator along $C_B(\beta)$ will be $e^{- \beta \widehat{H}_B}$, where $\widehat{H}_B$ is the Hamiltonian operator on $B$ associated to the lapse function $N_B$. In the topological case, we can drop the parameter $\beta$, as the cylinders of different length are all isomorphic and serve as identity bordisms.

While the restriction to germs with a time-reversal symmetry is structurally important, the restriction to germs with a time-translation symmetry is essentially cosmetic (though necessary for defining a Hamiltonian operator). Without this restriction, we would replace the semigroup \eqref{eq:self_adjoint_semigroup} with the filtered category of bordisms into $B$, as was used in \cite{Kontsevich:2021dmb} to define what they called the \textit{continuity property}. This generalization will play a central role in the follow-up work \cite{M-JF-R_wip, johnsonfreyd2026rigidfirm}, in terms of the firmness property (in the sense of \cite{quillen1997module}) of Riemannian bordism categories.

The second structural restriction concerns fermionic sources. The axioms defining a unitary structure on $\BordX$ demand that the twist automorphisms $\theta_B$ form a natural automorphism,
\begin{equation}
    \theta : \mathrm{id}_\BordX \Rightarrow \mathrm{id}_\BordX,
\end{equation}
of the identity functor on $\BordX$. Concretely, this means that the twists commute with all bordisms,
\begin{equation}
    \theta_{B} \circ N = N \circ \theta_{A}, \quad \forall N : A \to B.
\end{equation}
As a result, in order for $\BordX$ to admit a unitary structure, the class $\mathcal{X}$ of source fields cannot include any fermionic fields or operator insertions, meaning sources which transform nontrivially under $\theta_B$.

To avoid confusion, we emphasize that $\BordX$ can still include source fields, such as spin structures or spin$^c$ gauge fields, which allow the presence of fermionic degrees of freedom in a QFT defined on $\mathcal{X}$-manifolds. As an illustrative example, note that while a background spin$^c$ gauge field $A_\mu$ couples to dynamical spinor fields $\psi$, this interaction takes the form $\overline{\psi} A_\mu \gamma^\mu \psi$, so $A_\mu$ only couples to the fermion bilinear $\overline{\psi} \gamma^\mu \psi$, a bosonic operator. What we do not allow are fields, such as background spinor fields, which would couple directly to fermionic operators.

Though structurally important to our construction throughout most of this paper, there is a clear fix, which we outline in Section \ref{sec:fermionic_sources}. The issue with fermionic sources is not really unitarity, but symmetric monoidality: fermionic sources must anticommute at spacelike separation, rather than commute. For comparison, consider the symmetric monoidal category $\sVec$ of super-vector spaces, whose symmetry is defined by the Koszul sign rule. Importantly, the morphisms in $\sVec$ are only the bosonic linear maps. If we tried to include fermionic linear maps as morphisms in $\sVec$, while still using the Koszul sign rule, we would obtain a category which is not symmetric monoidal in the usual sense.\footnote{The resulting category is a symmetric monoidal category \textit{enriched} over $\sVec$. Our approach for including fermionic sources, described in Section \ref{sec:fermionic_sources}, requires working with $\sVec$-enriched categories from the start.}

\subsubsection{Traces, super-traces, kets, and bras}\label{sec:traces_and_fermions}

One of the most important benefits, for this paper, of choosing a unitary structure on $\BordX$ is that it allows us to define the trace of a bordism. Heuristically, the trace of a bordism $N : B \to B$ is obtained by gluing the outgoing boundary of $N$ to the incoming boundary, in order to obtain a closed manifold.

However, there is a crucial subtlety in the definition of the trace involving fermions. To illustrate this, consider again the spin bordism category $\Bord_d^{Spin}$. The spin manifold obtained by directly gluing the two ends of a spin cylinder $C_B$ to each other is $S^1_\mathrm{p} \times B$, where $S^1_\mathrm{p}$ denotes a circle with periodic (non-bounding) spin structure. This is because, by directly gluing the two ends, we require fermion fields to come back to themselves on the nose when carried around the temporal circle.

In any unitary fermionic QFT, the spin manifold $S^1_\mathrm{p} \times B$ computes the super-trace,
\begin{equation}
    S^1_\mathrm{p} \times B \rightsquigarrow \str\big(e^{- \beta \widehat{H}_B}\big) \defined \tr\big((-1)^{\widehat{F}_B} e^{-\beta \widehat{H}_B}\big),
\end{equation}
of the (un-normalized) thermal density matrix on $B$. If we wanted the ordinary trace, we should instead work with the spin manifold $S^1_\mathrm{ap} \times B$, where $S^1_\mathrm{ap}$ denotes a circle with the anti-periodic (bounding) spin structure. More precisely, in a not-necessarily unitary fermionic QFT, the spin manifold $S^1_\mathrm{ap} \times B$ computes,
\begin{equation}\label{eq:str_minus_one_to_the_F}
    S^1_\mathrm{ap} \times B \rightsquigarrow \str\big(\widehat{\theta}_B e^{- \beta \widehat{H}_B}\big) = \tr\big((-1)^{\widehat{F}_B} \widehat{\theta}_B e^{- \beta \widehat{H}_B}\big),
\end{equation}
where $\widehat{\theta}_B$ denotes the operator on Hilbert space implementing the twist automorphism $\theta_B$. In a unitary fermionic QFT, which must satisfy spin-statistics, we have $\widehat{\theta}_B = (-1)^{\widehat{F}_B}$, so \eqref{eq:str_minus_one_to_the_F} reduces to $\tr(e^{- \beta \widehat{H}_B})$, as expected.

As a result, the operation which glues the two ends of a bordism $N : B \to B$ directly corresponds not to the trace, but to the super-trace. Thus, we define the super-trace of a bordism to be,
\begin{equation}
    \str_\Bord(N) \defined N/(B_\mathrm{in} \sim B_\mathrm{out}), \quad N : B \to B,
\end{equation}
the closed manifold obtained by directly gluing the incoming boundary $B_\mathrm{in}$ of $N$ to the outgoing boundary $B_\mathrm{out}$. We then define the trace of a bordism by,
\begin{equation}
    \tr_\Bord(N) \defined \str(\theta_B \circ N) = N/(B_\mathrm{in} \sim_{\theta_B} B_\mathrm{out}), \quad N : B \to B,
\end{equation}
the closed manifold obtained by gluing the two boundaries $B_\mathrm{in}, B_\mathrm{out}$ via the twist automorphism $\theta_B$. Our assumption $\theta_B^2 = \mathrm{id}_B$ ensures that $\str_\Bord$ and $\tr_\Bord$ are the only two traces we may define on $\BordX$, as any other power of $\theta_B$ reduces to either $\mathrm{id}_B$ or $\theta_B$.\footnote{In the language of category theory, we say that the trace is \textit{balanced}, or \textit{spherical}.} By our restriction to bosonic sources only, both $\str_\Bord$ and $\tr_\Bord$ are cyclic.

\begin{figure}
    \centering
    \includegraphics[width=\linewidth]{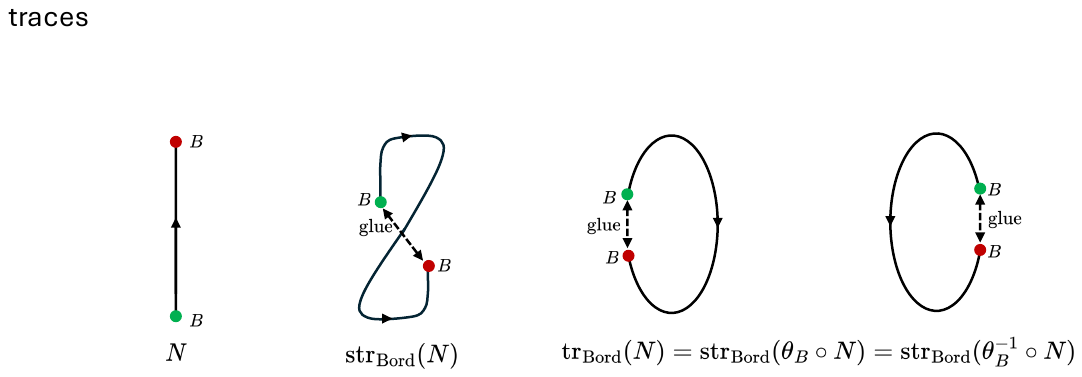}
    \caption{For a bordism $N:B\to B$ (far left), the super-trace $\str_\Bord(N)$ is defined by gluing the incoming and outgoing boundaries directly (middle left). Note that, if $N$ is embedded in some higher-dimensional space, this requires us to cross $N$ over itself if we do not want to accidentally rotate either boundary. The trace $\tr_\Bord(N)$ is instead defined by rotating one (or either) boundary by $2 \pi$ before gluing (middle right), and does not require crossing $N$ over itself at all. Finally, if we had not assumed $\theta_B^2 = \mathrm{id}_B$, we would obtain yet a third trace, defined by rotating by $2 \pi$ in the opposite direction (far right). This last distinction disappears for us, since $\theta_B^{-1} = \theta_B$ by assumption.}
    \label{fig:traces}
\end{figure}

As a final note, we observe that the trace on $\BordX$ induces a trace pairing for any two parallel bordisms,
\begin{equation}\label{eq:trace_pairing_on_bord}
    (N_1, N_2) \mapsto \tr_\Bord(N_1^\dagger \circ N_2), \quad N_1, N_2 : A \to B,
\end{equation}
which is analogous to the trace pairing on operators between any two Hilbert spaces. To understand the trace pairing, note that the trace pairing on operators $S, T : \mathcal{H}_1 \to \mathcal{H}_2$ satisfies,
\begin{equation}\label{eq:trace_of_operators_equals_HS_inner_prod}
    \tr(S^\dagger T) = S_i^j T_j^i = (S^{j \bar \imath} G_{\bar \imath i})(T^{i \bar \jmath} G_{\bar \jmath j}) = \braket{S | T},
\end{equation}
where $\ket{S}, \ket{T} \in \mathcal{H}_2 \otimes \overline{\mathcal{H}}_1$ are the states obtained by raising one index using the Hermitian metric.

The trace pairing \eqref{eq:trace_pairing_on_bord} on bordisms satisfies an analogous equation. First, for a bordism $N : A \to B$, we define the \textit{ket bordism} associated to $N$ by,
\begin{equation}\label{eq:ket_of_bord}
    \ket{N}_\Bord : \varnothing \to B \sqcup \overline{A}, \quad \ket{N}_\Bord \defined (N \circ \rho_A^{-1})^{\vee_{\overline{A}^\vee}}.
\end{equation}
Thus, $\ket{N}_\Bord$ is defined by pre-composing with $\rho_A^{-1}$ to obtain a bordism $\overline{A}^\vee \to B$, then partially dualizing to obtain a bordism $\varnothing \to B \sqcup \overline{A}$. For two parallel bordisms $N_1, N_2 : A \to B$, the analog of \eqref{eq:trace_of_operators_equals_HS_inner_prod} is,
\begin{equation}\label{eq:tr_pairing_bord_is_bord_overlap}
    \tr_\Bord(N_1^\dagger \circ N_2) = \braket{N_1 | N_2}_\Bord,
\end{equation}
where the \textit{bra bordism} $_\Bord\bra{N_1}$ is defined by,
\begin{equation}
    _\Bord\bra{N} \defined \ket{N}_\Bord^\dagger,
\end{equation}
and the inner product in \eqref{eq:tr_pairing_bord_is_bord_overlap} simply denotes composition. We prove \eqref{eq:tr_pairing_bord_is_bord_overlap} in Appendix \ref{app:daggers_duals_conjugates} as Proposition \ref{prop:trace_pairing_is_inner_product}, as well as establish a number of other useful properties of the traces and kets associated to bordisms.

\subsection{The bordism group}\label{sec:bordism_group}

We close our discussion of the source category $\BordX$ by describing an important invariant: the \textit{bordism group},
\begin{equation}
    \Omega_{d-1}^\mathcal{X} \defined \pi_0 (\BordX),
\end{equation}
whose elements are equivalence classes $[B]$ of $(d-1)$-dimensional $\mathcal{X}$-manifolds. In $\Omega_{d-1}^\mathcal{X}$, we have $[B_1] = [B_2]$ whenever $B_1$ and $B_2$ are \textit{$\mathcal{X}$-bordant}, meaning connected by any $\mathcal{X}$-bordism whatsoever. At least with our restrictions on germs, $\mathcal{X}$-bordism is an equivalence relation: it is reflexive due to the existence of cylinders, it is symmetric via the $\dagger$-structure, and it is transitive due to composition of bordisms. The symmetric monoidal structure on $\BordX$ endows $\Omega_{d-1}^\mathcal{X}$ with the structure of an abelian group,
\begin{equation}
    [B_1] + [B_2] = [B_1 \sqcup B_2], \quad 0 = [\varnothing], \quad -[B] = [\overline{B}].
\end{equation}

We will see that the bordism group provides a classical obstruction to ER = EPR for any (non-zero) theory defined on smooth $\mathcal{X}$-manifolds, as anticipated in \cite{McNamara:2022xkg}. This classical obstruction is a shadow of the full obstruction to ER = EPR described by the baby universe category $\CBU$, and ultimately we obtain a grading of $\CBU$ by bordism classes. In order to avoid the obstruction arising from bordism, one would need to generalize $\BordX$ to a category of bordisms with singularities \cite{McNamara:2019rup}, as discussed for instance in \cite{lurie2008classification}. All of our constructions and main results extend to bordisms with singularities without significant modification.

\section{Axioms for the partition function}\label{sec:axioms_for_partition_function}

Having described our requirements on the sources of the GPI, we now describe our requirements on the partition function $\zeta$, ostensibly computed by the GPI (though the axioms can apply equally well regardless of how $\zeta$ was computed). We then compare these minimal axioms for $\zeta$ to an axiomatization of unitary QFT in the style of Atiyah--Segal.

\subsection{Abstract partition functions}

Fix a unitary bordism category $\BordX$. The first axiom specifies that $\zeta(M)$ is a known, finite number for all closed $d$-dimensional $\mathcal{X}$-manifolds $M$.

\begin{axiom}[Finiteness]\label{axiom:finiteness}
    For each closed $d$-dimensional source manifold $M$, we are given a finite number $\zeta(M) \in \mathbb{C}$. We assume that $\zeta(M)$ is invariant under $\mathcal{X}$-isomorphism.
\end{axiom}

\noindent Gravitationally, our assumption that $\zeta(M)$ is finite for closed manifolds $M$ may be motivated by the observations of \cite{Hamada:2021yxy} related to the finiteness of gravitational amplitudes. In QFT, the finiteness of $\zeta(M)$ implies that a QFT with partition function $\zeta$ must be \textit{compact}, as discussed below.

We assume that the partition function $\zeta(M)$ satisfies the following two kinematic conditions.

\begin{axiom}[Reality]\label{axiom:reality}
    We assume that $\zeta$ takes conjugation of source manifolds to complex conjugation, meaning that $\zeta(\overline{M}) = \overline{\zeta(M)}$ for all closed $M$.
\end{axiom}

\begin{axiom}[Multiplicativity]\label{axiom:multiplicativity}
    We assume that $\zeta$ is multiplicative on disjoint unions, so that we have $\zeta(M_1 \sqcup M_2) = \zeta(M_1) \zeta(M_2)$ for any two source manifolds $M_1, M_2$. Moreover, we assume that $\zeta(\varnothing) = 1$.
\end{axiom}

\noindent Taken together, Axioms \ref{axiom:finiteness}, \ref{axiom:reality}, and \ref{axiom:multiplicativity} assert that $\zeta$ defines a $*$-monoid homomorphism from the monoid of closed $d$-dimensional $\mathcal{X}$-manifolds under disjoint union to the complex numbers under multiplication.

Axiom \ref{axiom:reality} may be motivated in quantum gravity from the fact that CPT must be a gauge symmetry in the bulk \cite{Harlow:2020bee}, in the sense that conjugating the boundary of the GPI must produce a complex-conjugate partition function.

Axiom \ref{axiom:multiplicativity}, though unnatural from the perspective of the GPI, is motivated by the assumption that we have already resolved any breakdown of partition-function factorization, and is made for technical simplicity (see Section \ref{sec:including_alpha_sectors} for a sketch of how Axiom \ref{axiom:multiplicativity} might be dropped). The resolution of partition-function factorization might be achieved by projecting onto an $\alpha$-sector, or, equivalently, by finely tuning our given GPI to ensure partition-function factorization \cite{Blommaert:2021fob,Saad:2021rcu,Saad:2021uzi,Gesteau:2024gzf}. Alternatively, we could simply assume we are working with a UV-complete theory of quantum gravity, such as string theory, in which the issue of partition-function factorization never arises in the first place \cite{McNamara:2020uza, Eberhardt:2021jvj}.\footnote{This last approach, while justified given a UV-complete theory, also likely renders the whole question of Hilbert-space factorization moot.}

In addition to satisfying Axioms \ref{axiom:finiteness}-\ref{axiom:multiplicativity}, we expect the partition function to depend continuously on sources, including the choice of Riemannian metric. In \cite{Colafranceschi:2023urj}, a very mild continuity assumption was made, requiring only continuity in the length of an embedded cylinder. We make a stronger continuity assumption, as it seems to be required for a few auxiliary results (though, interestingly, not for the proof of our main theorem).\footnote{We thank Tom Banks for encouraging us to make this stronger continuity assumption.}

\begin{axiom}[Continuity]\label{axiom:continuity}
    We assume that $\zeta(M)$ depends continuously on sources in our chosen topology on the space of $\mathcal{X}$-structures on $M$.
\end{axiom}

\noindent Combining Axiom \ref{axiom:continuity} with our assumption that $\zeta$ is invariant under isomorphisms of $\mathcal{X}$-manifolds, we could alternatively say that $\zeta$ is a continuous function on the moduli space of closed $d$-dimensional $\mathcal{X}$-manifolds. We make no assumptions about the behavior of $\zeta$ at the boundaries of moduli space where source fields such as the metric degenerate (in contrast to \cite{Friedan:1986ua}). Note that Axiom \ref{axiom:continuity} is vacuous when $\mathcal{X}$ is topological.

The four axioms already described are kinematic, and the space of partition functions that satisfy them is not meaningfully smaller than the space of all possible functions $\zeta(M)$. This is not the case for our last axiom: reflection positivity, the Euclidean avatar of unitarity. As discussed above, reflection positivity is required in order for our partition function to have a consistent interpretation in terms of wavefunction overlaps. From the perspective of the GPI, reflection positivity is not at all obvious \cite{Marolf:2020xie,Colafranceschi:2023urj,Colafranceschi:2023txs,DiUbaldo:2026rly,Maloney:2007ud,Keller:2014xba,Benjamin:2019stq,Benjamin:2020mfz,Maxfield:2020ale,DiUbaldo:2023hkc}. Nevertheless, circumstantially, the GPI seems to behave consistently with unitarity in more ways than may have been naively expected, including giving an accurate count of black hole microstates and a derivation of the Page curve, as discussed in the Introduction.

To formulate reflection positivity, consider any $(d-1)$-manifold $B \in \BordX$, and two bordisms from the empty set into $B$.
\begin{equation}
    M_1, M_2 \in \BordX(\varnothing \to B).
\end{equation}
We view the bordisms $M_1, M_2$ as ways to prepare two states $\ket{M_1}, \ket{M_2}$ on the spatial manifold $B$. As discussed in Section \ref{sec:OS_conjugation}, the closed manifold corresponding to the wavefunction overlap of those two states is given by,
\begin{equation}
    \braket{M_1 | M_2}_\Bord = M_1^\dagger \circ M_2,
\end{equation}
as depicted above in Figure \ref{fig:inner_product}. Thus, knowing nothing about any underlying QFT, we already know that the wavefunction overlap must be given by,
\begin{equation}\label{eq:formal_inner_product}
    \braket{M_1 | M_2} = \zeta\big(\braket{M_1|M_2}_\Bord\big) = \zeta(M_1^\dagger \circ M_2).
\end{equation}
This expression is analogous to the inner product derived from Euclidean correlation functions in the OS/Wightman approach to axiomatic QFT. Knowing \eqref{eq:formal_inner_product} is enough to test whether the inferred inner product is positive definite.

\begin{axiom}[Reflection Positivity]\label{axiom:reflection_positivity}
    For all $B \in \BordX$, all finite collections of manifolds $M_i \in \BordX(\varnothing \to B)$, and all choices $c_i \in \mathbb{C}$ of coefficients, we assume that
    \begin{equation}\label{eq:formal_inner_product_linear_comb}
        \sum_{i, j} \overline{c}_ic_j \zeta(M_i^\dagger \circ M_j) \geq 0.
    \end{equation}
    In other words, we assume that the Hermitian matrix with elements $\zeta(M_i^\dagger \circ M_j)$ defines a positive semidefinite Gram matrix for all $B \in \BordX$.
\end{axiom}

A simple consequence of reflection positivity is that the partition function on a \textit{double}, $\zeta(M^\dagger \circ M)$, must be positive. However, in general, we need to consider arbitrary sesquilinear combinations, for the simple fact that a Hermitian matrix with positive entries on the diagonal is not necessarily positive semidefinite. The only general exception is when the matrix is known to be rank-one (as will be the case for invertible theories), in which case positivity of $\zeta$ on doubles is sufficient \cite{Freed:2016rqq}.

Compared to Axioms \ref{axiom:finiteness}-\ref{axiom:continuity}, which are purely kinematic and easy to satisfy, reflection positivity is an extremely strong assumption about $\zeta$. For any fixed $B$ and any finite collection of states, reflection positivity is a finite system of inequalities. However, reflection positivity for all manifolds and spatial slicings at once is an infinite system of inequalities. As a result, reflection positivity can cut the space of partition functions down dramatically, as is familiar in the context of the bootstrap.

We note that reflection positivity is particularly constraining when applied to \textit{disconnected} spatial slices. For instance, applying reflection positivity to disconnected spatial slices directly implies integrality of the degeneracies of every energy level, which is sometimes taken as an additional assumption in the context of the bootstrap. We will see this in practice in the example discussed in Section \ref{sec:Ex_TopoQM}, based on \cite{Marolf:2020xie,Maxfield:2023mdj,Colafranceschi:2023urj,Marolf:2024adj}. We will see it again at the core of the proof of the DR reconstruction theorem \cite{DoplicherRoberts89,muger2007abstract}, and thus also at the core of our proof of Theorem \ref{thm:main}. If one is looking for a driving force behind Theorem \ref{thm:main}, it is reflection positivity applied to disconnected spatial slices, which is precisely also where the question of Hilbert-space factorization arises in the first place.

\subsection{Functorial quantum field theories}\label{sec:functorial_QFT}

Let us now give our working definition of a unitary functorial QFT, in the spirit of Atiyah and Segal. Our definition takes its most direct inspiration from the approach of Kontsevich and Segal \cite{Kontsevich:2021dmb}, though with the restriction (discussed in Section \ref{sec:germs}) to spatial slices with infinitesimal time-reversal and time-translation symmetries.

\begin{definition}\label{defn:functorial_QFT}
    A \textit{unitary quantum field theory} of dimension $d$ is a non-degenerate, unitary, symmetric monoidal functor
    \begin{equation}
        \BordX \to \sHilb, \quad B \mapsto \widetilde{\CH}_B, \quad N \mapsto \widehat{N},
    \end{equation}
    from a unitary bordism category to the category of super-Hilbert spaces.\footnote{As is standard in the physics literature, we take a \textit{super-Hilbert space} to be a Hilbert space equipped with an orthogonal decomposition into bosonic and fermionic sub-spaces. In particular, for us, the squared norm of a fermionic state is a positive real number. This choice is closely related to our conventions for unitary structures, as used in Appendix \ref{app:daggers_duals_conjugates}.}
\end{definition}

\noindent We have taken the target of our QFTs to be $\sHilb$, in order to accommodate fermionic statistics. By the spin-statistics theorem, every unitary QFT on a bosonic bordism category factors through the subcategory $\Hilb \subset \sHilb$ of purely bosonic Hilbert spaces, and so for bosonic theories we could take the target to be $\Hilb$ with no loss of generality. We will suppress the prefix ``super'' in super-Hilbert space for the remainder of this section.

Let us unpack the various terms in Definition \ref{defn:functorial_QFT}. Firstly, a QFT defined on $\BordX$ consists of the following basic data:

\begin{itemize}

    \item {\bf Spaces of quantum states:} For each spatial manifold $B$, a Hilbert space $\widetilde{\mathcal{H}}_B$.\footnote{We use the notation $\widetilde{\mathcal{H}}_B$ for the Hilbert spaces of an actual QFT to distinguish them from the universal Hilbert spaces $\mathcal{H}_B$ constructed in Section \ref{sec:universal_construction} and used throughout.}

    \item {\bf Euclidean evolution operators:} For each bordism $N : B_1 \to B_2$, a bounded linear operator,
    \begin{equation}
        \widehat{N} : \widetilde{\mathcal{H}}_{B_1} \to \widetilde{\mathcal{H}}_{B_2},
    \end{equation}
    which evolves quantum states along the bordism $N$.
    
\end{itemize}

\noindent  The Hilbert spaces and evolution operators must satisfy a list of constraints, given as follows:

\begin{itemize}
    
\item {\bf Functoriality:} The condition that the operators assigned to composable bordisms multiply to the operator assigned to their composite,
\begin{equation}
    \big(N_1 \circ N_2 = N_3\big) \implies \big( \widehat{N}_1 \widehat{N}_2 = \widehat{N}_3 \big).
\end{equation}
Functoriality encodes locality in time, as motivated by the QFT path integral.

\item {\bf Non-degeneracy:} The replacement, given the non-unital nature of $\BordX$, for the standard condition that a functor must preserve identity morphisms. Specifically, we demand that, for any spatial manifold $B$, the evolution operators $\widehat{C}_B(\beta)$ along cylinder bordisms converge,
\begin{equation}
    \widehat{C}_B(\beta) \to \mathrm{id}_{\widetilde{\CH}_B}, \quad \beta \to 0,
\end{equation}
to the identity operator in the strong operator topology. Non-degeneracy is closely related to the continuity property of \cite{Kontsevich:2021dmb}, as well as the notion of firmness \cite{quillen1997module} which will be used in \cite{M-JF-R_wip}. In the topological case, non-degeneracy reduces to unitality.

Non-degeneracy is needed to exclude uninteresting counterexamples to our main theorem, which involve additional decoupled sectors on which every bordism acts by the zero operator. Such decoupled sectors would, formally, be spaces of infinite-energy states, and should not be included in the physically-relevant notion of QFT.

\item {\bf Symmetric monoidality:} The data of unitary natural isomorphisms,
\begin{equation}
    \widetilde{\mathcal{H}}_{B_1 \sqcup B_2} = \widetilde{\mathcal{H}}_{B_1} \otimes \widetilde{\mathcal{H}}_{B_2}, \quad \widetilde{\mathcal{H}}_\varnothing = \mathbb{C},
\end{equation}
compatible with the symmetric monoidal structures. Symmetric monoidality encodes a minimal amount of spatial locality.

\item {\bf Unitarity:} The condition that the QFT be a unitary functor. Concretely, first, the operator assigned to the adjoint of a bordism must be the adjoint operator,
\begin{equation}
    \widehat{N^\dagger} = \widehat{N}^\dagger.
\end{equation}
Moreover, we require the data of an anti-unitary CPT conjugation map,
\begin{equation}
    \widehat{\mathcal{T}} : \widetilde{\mathcal{H}}_B \to \widetilde{\mathcal{H}}_{\overline{B}},
\end{equation}
satisfying $\widehat{\mathcal{T}}^2 = 1$ and a naturality condition with respect to bordisms. In fact, $\widehat{\mathcal{T}}$ is not additional data, as one form of the CPT theorem asserts that we may extract a canonical choice of $\widehat{\mathcal{T}}$ assuming only that our QFT is a non-degenerate symmetric monoidal $\dagger$-functor.\footnote{We thank Luuk Stehouwer for this observation.} The anti-unitary map $\widehat{\mathcal{T}}$ is the only universal data related to the CPT theorem when a QFT is quantized on an arbitrary spatial manifold $B$, as recently discussed in detail in \cite{Witten:2025ayw}.\footnote{If $B$ happens to admit an $\mathcal{X}$-isomorphism $\mathcal{R} : \overline{B} \to B$, then the combination $\widehat{\mathcal{R}}\widehat{\mathcal{T}} : \widetilde{\CH}_B \to \widetilde{\CH}_B$ is more closely related to the CRT operator traditionally studied on Minkowski space, particularly when $\mathcal{R}$ reflects a single spatial coordinate. It is worth noting an important special case, which is when the $\mathcal{X}$-isomorphism $\mathcal{R}$ is the identity on points. Such an isomorphism exists for all $B$ in the unoriented bosonic case, and exists when $B$ is oriented in the pin$^\pm$ fermionic case (as required to lift the identity map to a pin$^\pm$-diffeomorphism). In such cases, we typically denote the resulting anti-unitary operator $\widehat{\mathcal{R}}\widehat{\mathcal{T}}$ by $\widehat{\mathsf{T}}$ instead, as discussed in \cite{Witten:2025ayw}, and refer to it as a \textit{time-reversal symmetry}. Note that, even in the fermionic case, we take $\widehat{\mathcal{T}}^2 = 1$. The use of the pin$^\pm$ isomorphism $\mathcal{R} : \overline{B} \to B$ in defining $\widehat{\mathsf{T}}$ ensures that we have $\widehat{\mathsf{T}}^2 = (-1)^{\widehat{F}}$ in the pin$^+$ case and $\widehat{\mathsf{T}}^2 = 1$ in the pin$^-$ case, as expected (see \cite[Equation 3.34]{Freed:2016rqq}).}

\end{itemize}

Given a unitary QFT, we may evaluate it on closed $d$-dimensional $\mathcal{X}$-manifolds $M$, viewed as $\mathcal{X}$-bordisms $M : \varnothing \to \varnothing$. Using monoidality, the corresponding evolution operators are linear maps $\widehat{M} : \mathbb{C} \to \mathbb{C}$, and hence correspond to multiplication by a complex number. We denote this number by $\zeta(M)$, and refer to it as the \textit{partition function} of the QFT on $M$. Functoriality and unitarity imply that we have
\begin{equation}
    \zeta(M_1 \sqcup M_2) = \zeta(M_1) \zeta(M_2), \quad \zeta(\varnothing) = 1, \quad \zeta(\overline{M}) = \overline{\zeta(M)}.
\end{equation}
Finally, positivity of the inner products on the Hilbert spaces $\widetilde{\mathcal{H}}_B$ implies that $\zeta$ is reflection positive.

We have thus shown that four of our axioms for the partition function (namely, Axioms \ref{axiom:finiteness}, \ref{axiom:reality}, \ref{axiom:multiplicativity}, and \ref{axiom:reflection_positivity}) follow from Definition \ref{defn:functorial_QFT}, and so our axioms for the partition function are, almost, strictly weaker than our axioms for a unitary QFT. We have intentionally neglected to assume that the evolution operators $\widehat{N}$ depend continuously on source fields, besides the strong convergence $\widehat{C}_B(\beta) \to 1$ as $\beta \to 0$. Remarkably, continuity of $\zeta$ in sources actually implies continuity of $\widehat{N}$ in sources (in the weak-$*$ topology). This will follow as a consequence of Theorem \ref{thm:main}, as the unique reconstructed QFT has this property by Proposition \ref{prop:continuity_of_morphisms}, and our proof of Theorem \ref{thm:main} does not require us to assume any continuity of our QFT besides the continuity of $\zeta$.

Definition \ref{defn:functorial_QFT} is an axiomatization of what is typically called a \textit{compact QFT} in the physics literature. The exact definition of a compact QFT varies, but the standard requirements include a finite thermal partition function when quantized on a compact spatial manifold (this implies that the Hamiltonian on a compact spatial manifold has a discrete energy spectrum). Definition \ref{defn:functorial_QFT} requires the partition function on every compact Euclidean manifold to be finite, which includes the thermal partition functions as a special case. Thus, we exclude theories with continuous spectra, such as the Liouville CFT in $d = 2$, as well as theories with unbounded Hagedorn growth of the spectrum. We also exclude non-compact TQFTs, such as the BF theories arising from the so-called ``flat gauging'' of continuous gauge groups (see, e.g., \cite{Brennan:2024fgj,Bonetti:2024cjk}).

One may worry that the closed-manifold partition function of a physical QFT is typically UV divergent, and thus requires regularization and renormalization. Our notion of QFT axiomatizes the result of having performed this renormalization in some fixed choice of regularization scheme and counterterms. One may further worry that this makes our framework non-universal, meaning, counterterm-dependent. We assuage this worry in Section \ref{sec:counterterms}, showing that our construction depends on counterterms only up to the ambiguity in defining fermion parity for states on certain manifolds (for example, the ambiguity in defining fermion parity in the Ramond sector of the superstring worldsheet).

\section{Spaces of gravitational states}\label{sec:spaces_of_grav_states}

We now turn to the problem of building Hilbert spaces $\mathcal{H}_B$ of states prepared by manifolds with boundary $B$, given only a closed-manifold partition function $\zeta$ on a unitary bordism category $\BordX$. Viewed gravitationally, the Hilbert spaces $\mathcal{H}_B$ consist of all states preparable by cutting open the GPI. If we have included sources for all degrees of freedom in the bulk EFT, then this is also precisely the set of states obtained by allowing effective field theory (EFT) degrees of freedom to evolve and interact gravitationally.

Pieces of this construction have appeared previously in the recent quantum gravity literature \cite{Marolf:2020xie,Gesteau:2020wrk,Colafranceschi:2023urj,Marolf:2024adj,Chen:2025fwp,Zhao:2026mpl,DiUbaldo:2026rly}, as well as in the formal TQFT literature under the name of the \textit{universal construction} \cite{Blanchet:1995TQFT} (see also \cite{freedman2005universal,Calegari:2008cw, khovanov2020universal}), which we adopt here. The universal construction is a form of the standard Gelfand--Naimark--Segal (GNS) construction \cite{Gelfand:1943normed,Segal:1947operator}, as used in the OS/Wightman approach to axiomatic QFT. In the GNS construction, one starts with a state on a $*$-algebra, and uses the state to build a Hilbert space representation of formally defined states.

In a precise sense, the abstract partition function $\zeta$ defines a state (in fact, a tracial state) on the unitary bordism category $\BordX$, and the universal construction is the corresponding form of the GNS construction.\footnote{We thank Theo Johnson-Freyd for this perspective.} Thus, we will use $\zeta$ to build a formal, non-degenerate Hilbert space representation of $\BordX$, or in other words, a non-degenerate unitary functor,
\begin{equation}
    \BordX \to \Hilb,
\end{equation}
which is, almost, a unitary QFT. However, as we discuss in Section \ref{sec:ER=EPR}, the universal construction fails to define a QFT by failing to produce tensor-factorized Hilbert spaces on disconnected spatial manifolds. This failure of the universal construction to define a unitary QFT leads to the central question of study in this paper.

\subsection{The universal construction}\label{sec:universal_construction}

To start, fix $B \in \BordX$. The \textit{universal pre-Hilbert space} $\mathcal{H}_B^\mathrm{pre}$ is defined to be the free vector space
\begin{equation}
    \mathcal{H}_B^\mathrm{pre} = \mathbb{C}\left[\BordX(\varnothing \to B)\right],
\end{equation}
of finite $\mathbb{C}$-linear combinations of formal states $\ket{M}$ associated to $\mathcal{X}$-bordisms $M: \varnothing \to B$. We equip $\mathcal{H}_B^\mathrm{pre}$ with the sesquilinear form,
\begin{equation}\label{eq:inner_prod_on_H_pre}
        \Big\langle\sum_i c_i M_i \Big| \sum_j c_j M_j\Big\rangle \defined \sum_{i, j} \overline{c}_ic_j \zeta(M_i^\dagger \circ M_j),
\end{equation}
which is positive semidefinite by reflection positivity. As we have restricted to purely bosonic source fields (see Section \ref{sec:germs}), all states in the universal pre-Hilbert spaces will be bosonic, even when $\BordX$ is fermionic in the sense of including background spin structures or similar.

Next, the \textit{universal Hilbert space} $\mathcal{H}_B$ is defined to be the Cauchy completion of the pre-Hilbert space $\mathcal{H}_B^\mathrm{pre}$ with respect to the seminorm induced by the inner product \eqref{eq:inner_prod_on_H_pre}. Taking this Cauchy completion automatically includes the quotient by the subspace of null states, which forms the closure of the origin in the non-Hausdorff topology induced by the seminorm. This Cauchy completion is automatically a Hilbert space, by construction. Note that, modulo null states, any components of $M$ which are not connected to $B$ may be replaced by their partition function using the multiplicativity of $\zeta$.

Let us comment on the interpretation of the state $\ket{M}$. In QFT, the state $\ket{M}$ is the state prepared by a Euclidean path integral over $M$. In contrast, gravitationally, $\ket{M}$ represents the state prepared by a GPI with $M$ held fixed in the Euclidean past, as illustrated in Figure \ref{fig:GPI_Ham_constraint}. But, since bulk time evolution in quantum gravity is pure gauge, one could justifiably say that the state $\ket{M}$ is a state where $M$ \textit{is} the bulk geometry.\footnote{At least in the sense of a field-basis eigenstate, if not in the sense of a coherent state.} Moreover, the source fields correspond holographically to dynamical fields in the bulk, so the configuration of source fields on $M$ specifies the precise configuration of the corresponding bulk fields. Note that this holographic interpretation is always available, regardless of how $\zeta$ was computed. Thus, any QFT can formally be regarded as a theory of quantum gravity, in the somewhat tautological sense that it has a subspace of quantum states parametrized by manifolds.

\begin{figure}
    \centering
    \includegraphics[width=0.4\linewidth]{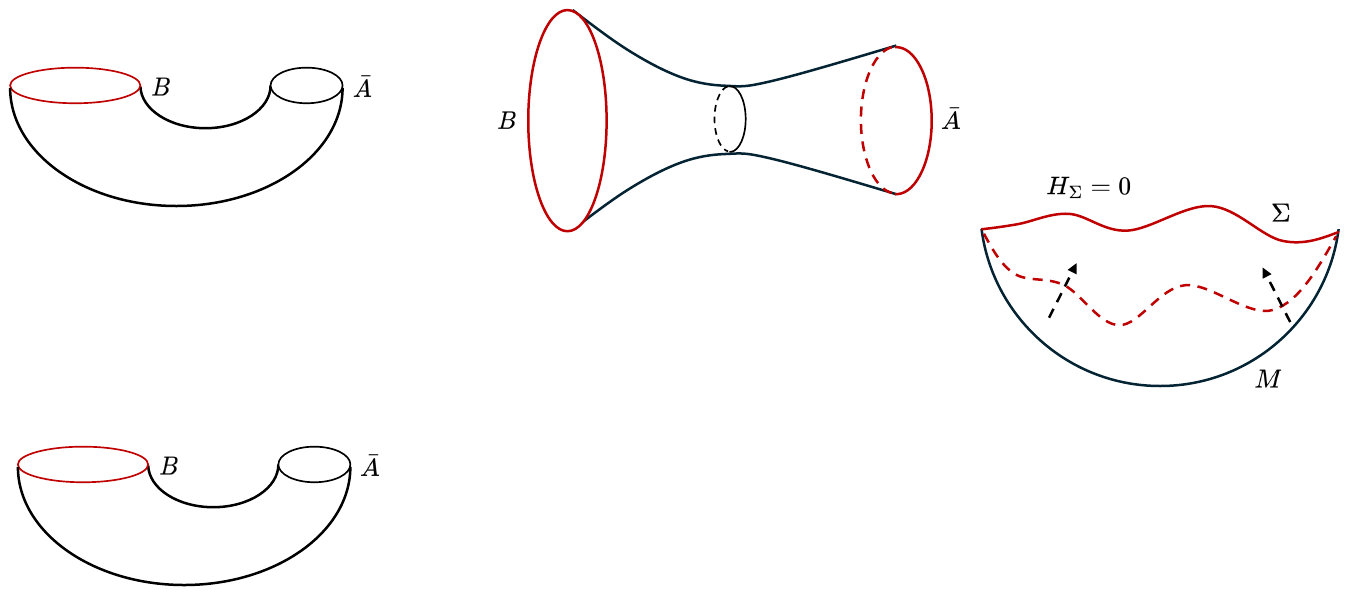}
    \caption{The Euclidean GPI prepares a state $\ket{M}$ by integrating over bulk spacetimes with $M$ held fixed in the Euclidean past. The final slice of this path integral, denoted $\Sigma$, is where the prepared wavefunction is read off, which must satisfy the non-perturbative Hamiltonian constraint $H_\Sigma=0$. However, we can alternatively view this state as a field-basis eigenstate in which the bulk simply \textit{is} the manifold $M$. These two descriptions, being related by bulk time evolution, are gauge equivalent.}
    \label{fig:GPI_Ham_constraint}
\end{figure}

The holographic interpretation of the state $\ket{M}$ is unambiguous in the pre-Hilbert space $\mathcal{H}_B^\mathrm{pre}$, where each state admits a unique expression as a superposition of manifold states $\ket{M}$. As $M$ is only defined up to $\mathcal{X}$-isomorphisms fixing $B$, the states $\ket{M} \in \mathcal{H}_B^\mathrm{pre}$ automatically solve the bulk momentum constraint.\footnote{If we wanted, we could have let $\mathcal{H}_B^\mathrm{pre}$ be the free vector space on the set of $\mathcal{X}$-bordisms $M : \varnothing \to B$ considered up to \textit{equality}, not just up to $\mathcal{X}$-isomorphism. Then the states in $\mathcal{H}_B^\mathrm{pre}$ would \textit{not} solve the bulk momentum constraint. Nevertheless, the completed Hilbert spaces $\mathcal{H}_B$ would stay the same, since the difference of isomorphic manifolds would necessarily be null by our assumption that $\zeta$ is invariant under $\mathcal{X}$-isomorphism.} However, we have not taken into account the bulk Hamiltonian constraint, so one should view $\mathcal{H}_B^\mathrm{pre}$ as a kinematic vector space where we can unambiguously describe the geometry and topology of the bulk spatial slice without having to worry about bulk time evolution.

In contrast, $\mathcal{H}_B$ is a Hilbert space of physical states arising from projecting states in the kinematic vector space $\mathcal{H}_B^\mathrm{pre}$ onto the space of solutions to the non-perturbative Hamiltonian constraint. This projection takes the form of the quotient by null states.\footnote{See \cite{Held:2025mai} for recent discussion of the relationship between this approach and more traditional approaches to solving the Hamiltonian constraint.} As a result, the states in $\mathcal{H}_B$ generically do not have unique expressions as linear combinations of manifolds. Gravitationally, this corresponds to the fact that the geometry \cite{dewitt1967quantum,Kuchar:1993ne,Isham:1992ms} and topology \cite{Jafferis:2017tiu} of a bulk Cauchy slice are not gauge invariant after the Hamiltonian constraint is imposed. One can ask how well the inner product on $\mathcal{H}_B$ distinguishes the states $\ket{M}$ for different $M$, in order to get a diagnostic of how close the bulk may be to semiclassical \cite{Miyaji:2015woj,Lashkari:2015hha,Bak:2017rpp,Marolf:2017kvq,Belin:2018fxe,Czech:2023zmq}.

A natural worry is that the universal Hilbert spaces we have constructed are too big to possibly be the Hilbert spaces of a QFT, as we have states in $\mathcal{H}_B$ for every source manifold $M$ with boundary $B$. However, as shown in Appendix \ref{app:source_continuity}, the universal Hilbert spaces $\mathcal{H}_B$ are always separable, and the states $\ket{M} \in \mathcal{H}_B$ vary continuously as a function of sources. These results are direct consequences of continuity of $\zeta$ and our assumption that the spaces of sources on compact manifolds are seperable. 

So far, we have built Hilbert spaces $\mathcal{H}_B$ for each spatial manifold $B \in \BordX$. Now, we supply the data of Euclidean evolution operators $\widehat{N} : \mathcal{H}_{A} \to \mathcal{H}_{B}$ for each bordism $N : A \to B$. By definition, $\widehat{N}$ acts by composition,
\begin{equation}\label{eq:definition_of_N_hat}
    \widehat{N} \sum_i c_i \ket{M_i} \defined \sum_i c_i \ket{N \circ M_i}.
\end{equation}
A priori, \eqref{eq:definition_of_N_hat} defines an operator $\widehat{N} : \mathcal{H}_{A}^\mathrm{pre} \to \mathcal{H}_{B}^\mathrm{pre}$, which will extend uniquely to the universal Hilbert spaces provided it is bounded in operator norm.

To show that $\widehat{N}$ is bounded, we use the fundamental trace inequality of \cite{Colafranceschi:2023urj}, reproduced in Appendix \ref{app:trace_inequality} as Proposition \ref{prop:trace_inequality}. Corollary \ref{cor:operator_norm_bound} supplies the bound,
\begin{equation}
    \lVert \widehat{N} \rVert \leq \sqrt{\braket{N | N}},
\end{equation}
on the operator norm of $\widehat{N}$, where the state $\ket{N} \in \mathcal{H}_{B \sqcup \overline{A}}^\mathrm{pre}$ is defined as in \eqref{eq:ket_of_bord} by using the unitary structure of $\BordX$ to bend the incoming boundary $A$ of $N$ into an outgoing boundary component $\overline{A}$. Thus, $\widehat{N}$ is bounded, and extends uniquely to a map $\CH_A \to \CH_B$ between the Cauchy completions.

We now show that the assignment $B \mapsto \CH_B, N \mapsto \widehat{N}$ assembles into a non-degenerate unitary functor $\BordX \to \Hilb$. First, we check functoriality. Given composable bordisms, $N_1 \circ N_2 = N_3$, we have
\begin{equation}
    \widehat{N}_1 \widehat{N}_2 \ket{M} = \widehat{N}_1 \ket{N_2 \circ M} = \ket{N_1 \circ N_2 \circ M} = \widehat{N}_3 \ket{M},
\end{equation}
on manifold states, and thus for all states by linearity and boundedness. For unitarity, first, we compute,
\begin{equation}
    \braket{M_1 | \widehat{N^\dagger} | M_2} = \braket{M_1 | N^\dagger \circ M_2} = \zeta ( M_1^\dagger \circ N^\dagger \circ M_2 ) = \braket{M_1 \circ N | M_2} = \braket{M_1 | \widehat{N}^\dagger | M_2},
\end{equation}
and thus $\widehat{N^\dagger} = \widehat{N}^\dagger$ for manifold states, and again for all states by linearity and boundedness. Then, the anti-linear CPT conjugation map $\widehat{\mathcal{T}}$ is defined at the level of the universal pre-Hilbert spaces by,
\begin{equation}
    \widehat{\mathcal{T}} \sum_i c_i \ket{M_i} \defined \sum_i \overline{c}_i \ket{\overline{M}_i}.
\end{equation}
The operator $\widehat{\mathcal{T}}$ is anti-unitary due to the reality of $\zeta$, and thus extends uniquely to an anti-unitary map $\CH_B \to \CH_{\overline{B}}$. Finally, non-degeneracy is given by Proposition \ref{prop:approx_identity}. Altogether, the universal construction specifies a non-degenerate unitary functor $\BordX \to \Hilb$, as claimed.

One immediate consequence of the definition of the universal Hilbert spaces $\mathcal{H}_B$ is that, whenever the bordism class $[B] \in \Omega_{d-1}^\mathcal{X}$ is nonzero, the universal Hilbert space $\mathcal{H}_B$ must vanish. This is simply because, when $[B] \neq 0$, there simply are no bordisms $M : \varnothing \to B$, and so we cannot produce any states at all. More formally, $\mathcal{H}_B^\mathrm{pre}$ is the free vector space on the empty set, which is just the zero vector space.

\subsection{Hilbert-space factorization and ER = EPR}\label{sec:ER=EPR}

Above, we used the universal construction to define a non-degenerate unitary functor $\BordX \to \Hilb$, which nearly defines a functorial QFT. However, there is a key axiom of functorial QFT which we have not yet verified. This is the condition that the functor $\BordX \to \Hilb$ must be symmetric monoidal, in that we should have natural unitary isomorphisms
\begin{equation}\label{eq:monoidality?}
    \mathcal{H}_{B_1} \otimes \mathcal{H}_{B_2} \stackrel{?}{=} \mathcal{H}_{B_1 \sqcup B_2}, \quad \CH_\varnothing \stackrel{?}{=} \mathbb{C},
\end{equation}
compatible with the symmetry. Thus, we ask: is the universal construction symmetric monoidal?

Let us first discuss the second condition $\CH_\varnothing = \mathbb{C}$. Gravitationally, states in $\HBU$ correspond to states with closed bulk spatial slices, and so $\CH_\varnothing$ is the \textit{baby universe Hilbert space} $\HBU$. Multiplicativity ensures that the inner product,
\begin{equation}
    \braket{M_1 | M_2} = \zeta(M_1^\dagger \sqcup M_2) = \overline{\zeta(M_1)} \zeta(M_2),
\end{equation}
on $\mathcal{H}_\varnothing^\mathrm{pre}$ is rank-one, so $\HBU$ is one-dimensional. Moreover, the \textit{Hartle--Hawking no-boundary wavefunction} $\ket{\varnothing} \in \HBU$ \cite{Hartle:1983ai}, prepared by the empty $d$-manifold, has norm,
\begin{equation}
    \braket{\varnothing | \varnothing} = \zeta(\varnothing) = 1,
\end{equation}
again by multiplicativity. Thus, $\HBU$ is one dimensional, with a preferred unit vector, and is thus canonically unitarily isomorphic to $\mathbb{C}$.

Under this isomorphism, a state $\ket{M} \in \HBU$ prepared by a closed manifold is mapped to the complex number $\braket{\varnothing | M} = \zeta(M)$. Moreover, the operator $\widehat{M} : \HBU \to \HBU$ maps $\ket{\varnothing}$ to $\ket{M} = \zeta(M) \ket{\varnothing}$, and so $\widehat{M}$ is simply multiplication by the complex number $\zeta(M)$. So, the functor $\BordX \to \Hilb$ built by the universal construction preserves the tensor unit, and even takes a closed manifold to the given partition function $\zeta$.

However, the first part of monoidality, namely, Hilbert-space factorization, fails. The reason is that, unlike the states of a local QFT, the states in the universal construction are not prepared in a local way. The main issue is that we can have connected $d$-manifolds $M$ whose boundary is disjoint, $\partial M = B_1 \sqcup B_2$. Viewed gravitationally, such states correspond to spatial wormholes, or \textit{Einstein--Rosen (ER) bridges}. In general, we only have a comparison map
\begin{equation}\label{eq:comparison_map}
    \sqcup: \mathcal{H}_{B_1} \otimes \mathcal{H}_{B_2} \to \mathcal{H}_{B_1 \sqcup B_2}, \quad \ket{M_1} \otimes \ket{M_2} \mapsto \ket{M_1 \sqcup M_2},
\end{equation}
between the tensor products of universal Hilbert spaces and universal Hilbert spaces on the disjoint union. Multiplicativity implies that \eqref{eq:comparison_map} is an isometric embedding, but it has no reason to be surjective in general. Instead, the maps \eqref{eq:comparison_map}, together with the Hartle--Hawking state $\ket{\varnothing} \in \HBU$, endow the universal construction with a \textit{lax} symmetric monoidal structure \cite{Blanchet:1995TQFT}, which is better than nothing but not enough to qualify as a local QFT.

As discussed in the Introduction, there is a proposal, in the context of quantum gravity, for how ER bridges could fit inside a tensor product Hilbert space. This proposal goes by the name of \textit{ER = EPR} \cite{Maldacena:2001kr,Maldacena:2013xja,VanRaamsdonk:2010pw}, and asserts that, modulo the non-perturbative Hamiltonian constraint \cite{Jafferis:2017tiu}, ER bridges should all be equivalent to entangled superpositions of disconnected states. The most well-studied example is the state of the AdS-Schwarzschild wormhole, which has been identified \cite{Maldacena:2001kr} with the thermofield double state,
\begin{equation}\label{eq:TFD}
    \ket{\mathrm{TFD}_\beta} = \sum_n e^{- \beta E_n/2} \ket{n} \otimes \ket{\overline{n}},
\end{equation}
in the holographically dual CFT quantized on $S^{d-1} \sqcup \overline{S^{d-1}}$. In our language, the argument of \cite{Maldacena:2001kr} is that the AdS-Schwarzschild wormhole is the state prepared by the cylinder $C_{S^{d-1}}(\beta/2)$, viewed as a bordism $\varnothing \to S^{d-1} \sqcup \overline{S^{d-1}}$. Computed in an underlying holographic CFT, the state $\ket{C_{S^{d-1}}(\beta/2)}$ is precisely the thermofield double state $\ket{\mathrm{TFD}_\beta}$, as it is obtained by partially dualizing the Euclidean evolution operator $e^{- \beta \widehat{H}_{S^{d-1}}/2}$ into a state.

In our framework, we can give an entirely rigorous definition of the ER = EPR proposal as follows.

\begin{definition}
    Let $\zeta$ be a partition function on a unitary bordism category $\BordX$, satisfying Axioms \ref{axiom:finiteness}-\ref{axiom:reflection_positivity}. We say \textit{ER = EPR holds} for $\mathcal{X}$ and $\zeta$ if the canonical isometric map,
    \begin{equation}
        \sqcup : \mathcal{H}_{B_1} \otimes \mathcal{H}_{B_2} \to \mathcal{H}_{B_1 \sqcup B_2}
    \end{equation}
    is an isomorphism for all $B_1, B_2$.\footnote{In other words, ER = EPR holds if the canonical lax monoidal structure on the universal construction is actually strong monoidal.}
\end{definition}

When ER = EPR holds, every ER bridge lies in the subspace spanned by superpositions of disconnected states, and may therefore be written as an entangled superposition of disconnected states modulo null states. Let us emphasize that ER = EPR is a property of both the partition function $\zeta$ and the class $\mathcal{X}$. This is because ER = EPR, requiring a complete set of one-sided states, depends sensitively on which one-sided states we think we have. Even if a partition function $\zeta$ is, in principle, compatible with ER = EPR (for instance, whenever $\zeta$ comes from an actual QFT), we may have artificially or unknowingly restricted our class $\mathcal{X}$ of sources beyond what is necessary to prepare a complete set of one-sided states.

To illustrate this, note that one obvious way in which ER = EPR might fail is if the bordism group $\Omega_{d-1}^\mathcal{X}$ is non-zero \cite{McNamara:2019rup,McNamara:2022xkg}, because manifolds $B$ with nonzero bordism class necessarily have $\mathcal{H}_B = 0$. Nevertheless, we will almost always have $\mathcal{H}_{B \sqcup \overline{B}} \neq 0$, as this two-sided universal Hilbert space contains at least the cylinder states $\ket{C_B(\beta)}$. In order to restore ER = EPR, we would need to enlarge our source category $\BordX$ to include background \textit{cobordism defects} \cite{McNamara:2019rup} in order to trivialize the bordism group $\Omega_{d-1}^\mathcal{X}$. The only other way out would be if the cylinder state $\ket{C_B(\beta)}$ was null, which happens if and only if we have $\zeta(S^1_{\mathrm{ap}} \times B) = 0$. In such a case, the nontrivial bordism class $[B] \neq 0$ is not an obstruction to ER = EPR, because an underlying QFT would not need any states on $B$. In such a case, we should say that the corresponding 1-form global symmetry in the bulk is gauged \cite{McNamara:2019rup}.

We illustrate the obstruction to ER = EPR arising from bordism with the following example.

\begin{example}
    Suppose $\zeta$ is the partition function of a spin CFT in $d = 2$ dimensions, but we neglect to include any spin fields in our class $\mathcal{X}$ of sources. Then, even though the underlying spin CFT has perfectly factorizing Hilbert spaces, we cannot produce any states in the Ramond-sector Hilbert space $\mathcal{H}_{S^1_\mathrm{p}}$, because the bordism class $[S^1_\mathrm{p}] \in \Omega_1^{Spin}$ is nonzero. Nevertheless, we could still produce states on $S^1_\mathrm{p} \sqcup S^1_\mathrm{p}$, including the Ramond-sector thermofield double states. Note that including spin fields in our class $\mathcal{X}$ of sources would trivialize the bordism class $[S^1_\mathrm{p}] \in \Omega_1^\mathcal{X}$, removing the obstruction to ER = EPR.
\end{example}

The bordism group $\Omega_{d-1}^\mathcal{X}$ of source manifolds, viewed as semiclassical configurations of the bulk EFT as in \cite{McNamara:2019rup, Kaidi:2024cbx}, is insufficient to fully capture the breakdown of ER = EPR. In particular, the semiclassical bordism group is only well-defined in a particular duality frame, and only directly captures charges associated with the topology of configuration space. Physically, the issue is that $\Omega_{d-1}^\mathcal{X}$ collapses too many configurations into a single equivalence class. To fully capture the breakdown of ER = EPR, we should only identify two configurations if they are actually equivalent under the gravitational gauge constraints, not merely if they are connected by some trajectory in the GPI.

The baby universe category $\CBU$, as constructed in Section \ref{sec:CBU}, remedies this issue. While $\CBU$ is graded by $\Omega_{d-1}^\mathcal{X}$, the baby universe category remembers the actual quantum state of the spacetime manifold, and not merely its semiclassical bordism class. We will prove the following result, showing that $\CBU$ is a sharp diagnostic of the breakdown of ER = EPR.

\begin{theorem}\label{thm:condition_for_ER=EPR}
    Let $\zeta$ be a partition function on a unitary bordism category $\BordX$ satisfying Axioms \ref{axiom:finiteness}-\ref{axiom:reflection_positivity}. Then ER = EPR holds if and only if the associated baby universe category is trivial,
    \begin{equation}\label{eq:CBU_is_trivial_iff_ER=EPR}
        \CBU = \Hilb.
    \end{equation}
\end{theorem}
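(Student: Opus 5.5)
The strategy is to make precise what ``$\CBU = \Hilb$'' means and then relate it to surjectivity of the comparison maps \eqref{eq:comparison_map}. I will use the description of $\CBU$ sketched in the Introduction. $\CBU$ is the W*-Cauchy completion of $\CBUvN$. The objects of $\CBUvN$ are the spatial manifolds $B \in \BordX$. Its morphism spaces $\mathrm{Hom}(A,B)$ are the weak-$*$ closures of the span of the operators $\widehat{N}$, for bordisms $N: A \to B$, acting on the family of universal Hilbert spaces $\mathcal{H}_{A \sqcup C} \to \mathcal{H}_{B \sqcup C}$. This is the categorical double commutant of the GNS representation. The tensor unit is $\varnothing$, with $\mathrm{End}(\varnothing) = \mathbb{C}$ since $\HBU = \mathbb{C}$. ``$\CBU = \Hilb$'' then means that the unit functor $\Hilb \to \CBU$, $V \mapsto V \otimes \varnothing$, is an equivalence. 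Equivalently, every object of $\CBU$ is an orthogonal direct sum of copies of $\varnothing$, i.e.\ $\varnothing$ is the unique simple object up to isomorphism. The key link is a Yoneda-type identification, which I plan to prove first: for objects $A,B$ of $\CBUvN$,
\[
\mathrm{Hom}_{\CBU}(\varnothing, B) \cong \mathcal{H}_B ,
\]
and more generally $\mathrm{Hom}(A,B)$ is recovered from $\mathcal{H}_{B \sqcup \overline{A}}$ through the ket construction \eqref{eq:ket_of_bord}. Here a state $\ket{M}$ corresponds to the morphism $\widehat{M}: \varnothing \to B$, and closure is taken in the appropriate strong topology using the approximate identity $\widehat{C}_B(\beta) \to 1$.

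($\Leftarrow$) Suppose $\CBU \simeq \Hilb$, so that each object $B$ decomposes as $B \cong \bigoplus_i \varnothing$ via isometries $v_i : \varnothing \to B$ with $\sum_i v_i v_i^\dagger = 1_B$. Because the tensor product is $\sqcup$, this decomposition can be tensored with $B_2$. Then $1_{B_1 \sqcup B_2} = \sum_{i,j} (v_i \sqcup w_j)(v_i \sqcup w_j)^\dagger$. Applying the Yoneda identification, any state in $\mathcal{H}_{B_1 \sqcup B_2} = \mathrm{Hom}(\varnothing, B_1 \sqcup B_2)$ is a norm-convergent sum of terms of the form $(v_i \sqcup w_j)\,c_{ij}$, where $c_{ij} \in \mathrm{End}(\varnothing) = \mathbb{C}$. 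Each such term lies in the image of $\mathcal{H}_{B_1} \otimes \mathcal{H}_{B_2}$ under $\sqcup$. Since $\sqcup$ is an isometry, its image is closed, and so it is surjective.

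($\Rightarrow$) Suppose ER = EPR holds. For each $B$, take an orthonormal basis $\{e_i\}$ of $\mathcal{H}_B$ and regard each $e_i$ as a morphism $\varnothing \to B$; these are isometries because $e_i^\dagger e_i = \braket{e_i|e_i} \in \mathrm{End}(\varnothing)$. I then claim $\sum_i e_i e_i^\dagger = 1_B$ in $\mathrm{End}(B)$. To check this, it suffices to look at the action on each $\mathcal{H}_{B \sqcup C}$. By ER = EPR, $\mathcal{H}_{B \sqcup C} = \mathcal{H}_B \otimes \mathcal{H}_C$, and on this space $e_i e_i^\dagger$ acts as $\ket{e_i}\bra{e_i} \otimes 1$. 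The sum converges strongly to the identity. So $B \cong \bigoplus_i \varnothing$ in $\CBUvN$. Every object of the Cauchy completion is a subobject of a direct sum of such $B$, and $\mathrm{End}(\bigoplus\varnothing) = B(\ell^2)$, whose projections all split into copies of $\varnothing$. Hence every object is a sum of $\varnothing$'s, and $\CBU \simeq \Hilb$ as a W*-tensor category.

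I expect the main obstacle to be justifying the Yoneda identification, and in particular checking that $e_i e_i^\dagger$ really are elements of the weak-$*$ closed span of bordism operators. The worry is that the double commutant may differ from what one expects. I would first attempt this through the trace inequality (Proposition \ref{prop:trace_inequality}), which bounds the operator norms. Combined with the cylinder approximate identity, this bound should show that finite-rank operators built from states lie in the closure.
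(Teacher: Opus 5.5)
Your argument is correct, but it takes a different route from the paper's. The paper gets the theorem as an immediate corollary of its $\mu$-sector structure. Atomicity of $\CBU$ (from the bound $\tr(P)\geq 1$ extracted from the trace inequality), together with the standard-form identification of the Hilbert-space-valued inner product with $\mathcal{H}_{B\sqcup\overline{A}}$, gives $\mathcal{H}_{B\sqcup\overline{A}}=\bigoplus_\mu\mathcal{H}_B^\mu\otimes\mathcal{H}_{\overline{A}}^{\overline{\mu}}$. The $\mu=\varnothing$ summand is exactly the image of $\sqcup$. So ER = EPR holds iff no simple $\mu\neq\varnothing$ sits inside a manifold object, iff $\varnothing$ is the only simple object.

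You never decompose into simples. You work only with the unit, identify $\mathrm{Hom}(\varnothing,B)\cong\mathcal{H}_B$, and, when ER = EPR holds, split each manifold explicitly as $B\cong\mathcal{H}_B\otimes\varnothing$ using an orthonormal basis. Conversely, you push a resolution of the identity through $\varnothing$ to force factorization. This avoids atomicity entirely and gives an explicit trivialization. Your ``i.e.'' equating ``every object is a sum of copies of $\varnothing$'' with ``$\varnothing$ is the unique simple'' does secretly use atomicity, but your argument only ever uses the first formulation. The paper's route costs nothing once $\mu$-sectors are in place, and it says more: it identifies the non-factorizing part as $\bigoplus_{\mu\neq\varnothing}\mathcal{H}_B^\mu\otimes\mathcal{H}_{\overline{A}}^{\overline{\mu}}$, i.e.\ exactly which fluxes obstruct.

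The step you flag as the main obstacle is milder than you expect. For $e\in\mathcal{H}_B$, the operators $\chi\mapsto e\sqcup\chi$ from $\mathcal{H}_{\overline{A}}$ to $\mathcal{H}_{B\sqcup\overline{A}}$ have norm exactly $\lvert\lvert e\rvert\rvert$. This follows simply because $\sqcup$ is isometric, which needs only multiplicativity and not the trace inequality. Hence $e$ is a norm limit of elements of $\CBUpre$ and lies in $\CBUvN$, and $e e^\dagger$ then does too, by closure under composition and $\dagger$.

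You also need surjectivity in the $(\Leftarrow)$ direction, so that the $v_i, w_j$ are actual states. It holds because $\mathcal{H}_{\overline{A}}$ is generated from $\ket{\varnothing}\in\mathcal{H}_\varnothing=\mathbb{C}$ by the commuting right action. So a morphism $\varnothing\to B$ in the double commutant is determined by its component at $A=\varnothing$, which is a vector of $\mathcal{H}_B$.

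Two minor points. First, $B\cong\bigoplus_i\varnothing$ holds in $\CBU$, not in $\CBUvN$, which does not contain that formal direct sum. Second, for $A\neq\varnothing$, $\mathcal{H}_{B\sqcup\overline{A}}$ captures only the Hilbert--Schmidt morphisms $A\to B$, not all of $\mathrm{Hom}(A,B)$; but you never use that case.
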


As a result, \eqref{eq:CBU_is_trivial_iff_ER=EPR} provides a formalization, in the axiomatic framework for quantum gravity studied in this paper, of the Swampland cobordism conjecture \cite{McNamara:2019rup} in bulk codimension two. Theorem \ref{thm:condition_for_ER=EPR} is a direct analog of the condition $\HBU = \mathbb{C}$ for partition-function factorization, which is itself a formalization, in the same framework, of the Swampland cobordism conjecture in bulk codimension one \cite{McNamara:2020uza}.

When ER = EPR fails, other familiar features of a local QFT will break down as well. One of these is the relationship between the partition function on a Euclidean circle $S^1_\beta$ of length $\beta$ and the trace of the (un-normalized) thermal density matrix $e^{- \beta \widehat{H}}$. This is because, as traces require the use of the monoidal structure to define, they need not be preserved by a lax monoidal functor. In Appendix \ref{app:geometric_symmetries}, we define the Hamiltonian operators $\widehat{H}_B$ acting on the universal Hilbert spaces, and prove the bound (Proposition \ref{prop:thermal_bound}),
\begin{equation}\label{eq:thermal_bound}
    \mathrm{tr}_{\mathcal{H}_B} (e^{- \beta \widehat{H}_B}) \leq \zeta(S^1_{\beta} \times B),
\end{equation}
with equality if and only if $\mathcal{H}_B \otimes \mathcal{H}_{\overline{B}} = \mathcal{H}_{B \sqcup \overline{B}}$. In \eqref{eq:thermal_bound}, the circle must be given the anti-periodic spin structure when necessary.

Gravitationally, the inequality \eqref{eq:thermal_bound} has a natural interpretation in the regime where we expect black holes to dominate the thermal ensemble. The thermal partition function $\zeta(S^1_{\beta} \times B)$ computed by the GPI includes Euclidean black hole configurations in which the thermal circle shrinks, and gives a result compatible with the Bekenstein--Hawking entropy. However, $\mathrm{tr}_{\mathcal{H}_B} (e^{- \beta \widehat{H}_B})$ only counts those black hole microstates which actually exist as one-sided states preparable by our chosen class of sources. If we have included sources for all fields in the bulk EFT, then these are precisely the black hole states which can be formed by the gravitational collapse of EFT degrees of freedom. Morally, one can view $\mathrm{tr}_{\mathcal{H}_B} (e^{- \beta \widehat{H}_B})$ as being computed by a GPI with an additional restriction that the thermal circle is not allowed to shrink in the bulk (see \cite{Harlow:2020bee} for further comments).

\subsection{Symmetries and missing states}

While the universal construction does not define a unitary QFT, let us suppose, temporarily, that there is an underlying QFT with partition function $\zeta$, whose factorizing Hilbert spaces we denote $\widetilde{\CH}_B$. There is a canonical map $\mathcal{H}_B \to \widetilde{\mathcal{H}}_B$ from the universal Hilbert spaces to the Hilbert spaces $\widetilde{\CH}_B$, defined by mapping the formal state $\ket{M}$ in the universal Hilbert space to the state $\ket{\psi_M} \in \widetilde{\mathcal{H}}_B$ prepared, in the actual QFT, by the path integral over $M$ in the underlying QFT. In fact, this map defines an isometric embedding $\mathcal{H}_B \subset \widetilde{\mathcal{H}}_B$, since, by construction, the squared norm of the state $\ket{M}$ is $\zeta(M^\dagger \circ M)$, which agrees with the norm of $\ket{\psi_M}$ by assumption.

Thus, if an underlying QFT exists, the universal construction fails to satisfy ER = EPR only because it is missing some states. This conclusion fits naturally with the bound \eqref{eq:thermal_bound}, which would follow immediately from the presence of an isometric embedding $\mathcal{H}_B \subset \widetilde{\mathcal{H}}_B$ compatible with the Hamiltonian (as it would be).

Supposing an underlying QFT does exist, we ask: which states \textit{should} be missed by the universal construction? Well, if the underlying QFT has a global symmetry $G$, and our class $\mathcal{X}$ of sources does not include any fields charged under $G$, then every state in the universal construction must be neutral under $G$. Nevertheless, in a two-boundary Hilbert space, we may prepare states with non-vanishing one-sided charges, so long as the charges on the two sides are precisely correlated so that the net charge cancels. For example, the thermofield double state \eqref{eq:TFD} includes fluctuations over all states in the one-sided Hilbert space, including any one-sided charged states, despite the fact that the net charge of the thermofield double state is precisely zero.

Thus, in any underlying QFT with a nontrivial global symmetry, ER = EPR cannot possibly hold unless we include a complete set of charges in our collection $\mathcal{X}$ of sources, as was first observed by Harlow in \cite{Harlow:2015lma}. Our main result, Theorem \ref{thm:main}, asserts that: first of all, an underlying QFT does exist, and is specified uniquely up to unitary isomorphism. Moreover, the states charged under a global symmetry are the \textit{only} states which the universal construction misses, so the clear obstruction described by Harlow \cite{Harlow:2015lma} is, in fact, the only obstruction.

The same logic applies to something which is not usually considered a symmetry: the ``symmetry'' implemented by the $(d-1)$-dimensional topological operators defined by stacking invertible $(d-1)$-dimensional QFTs on top of the identity operator. In QFT, this is not usually considered a global symmetry, as it acts on a fixed spatial manifold $B$ by a theory-independent c-number, entirely determined by the bordism class $[B]$, which can be removed by a counterterm by definition. Nevertheless, this symmetry obstructs ER = EPR just the same: it corresponds to the $(\Omega_{d-1}^\mathcal{X})^\vee$ global symmetry described in the Swampland cobordism conjecture \cite{McNamara:2019rup},\footnote{This symmetry was independently rediscovered in \cite{Friedan:2023vxx}.} obstructing ER = EPR as discussed above.

As a final comment, we note that the obstruction arising from symmetry also means that any theory which requires fermions cannot possibly satisfy ER = EPR, at least within the framework we have developed so far. This is because any fermionic QFT has a global $\mathbb{Z}_2^F$ symmetry corresponding to fermion parity,\footnote{The bulk dual of the boundary global symmetry $\mathbb{Z}_2^F$ is a dynamical spin structure in the bulk, and the incompleteness of its spectrum is related to the 1-form symmetry discussed in \cite{Cheung:2024ypq}.} and as discussed in Section \ref{sec:germs}, we cannot include fermionic sources in a unitary bordism category. We describe how our framework must be modified to accommodate fermionic sources in Section \ref{sec:fermionic_sources}.

\section{Examples}\label{sec:examples}

In this section, we describe the universally constructed Hilbert spaces in a few simple, low-dimensional examples. In each example, we study a unitary bordism category equipped with a partition function $\zeta$ satisfying Axioms \ref{axiom:finiteness}-\ref{axiom:reflection_positivity}, and directly verify the conclusions of Theorem \ref{thm:main} to the extent that we are able. Our examples are ordered in terms of increasing complexity, and, correspondingly, decreasing tractability of reproducing Theorem \ref{thm:main} by hand.

\subsection{Topological quantum mechanics}
\label{sec:Ex_TopoQM}

Our first example is bosonic topological quantum mechanics (TQM) without time reversal symmetry. Thus, we take $d = 1$, and fix a partition function $\zeta$ on the unitary bordism category $\Bord_1^{SO}$ of oriented 1-manifolds. This example ends up being simple enough that we can completely prove Theorem \ref{thm:main} by hand. This example is motivated by the GPI of Marolf and Maxfield \cite{Marolf:2020xie}, and more specifically, by Maxfield's detailed analysis \cite{Maxfield:2023mdj} of the geometrically preparable states in TQM.

First, by finiteness, reality, and multiplicativity, $\zeta$ is specified by the single real number $\zeta(S^1) = k \in \mathbb{R}$. Reflection positivity, applied simply, implies that $k \geq 0$, as the circle is obtained as the double of a semicircle. The partition function $\zeta$ wants to be the partition function of a topological quantum mechanical theory with $k$ degenerate ground states, but for this to make sense we would need $k$ to be a non-negative integer.

Incredibly, reflection positivity does actually imply that $k \in \mathbb{Z}_{\geq 0}$. This follows from an argument at the heart of Tannakian reconstruction (see \cite{DoplicherRoberts89,deligne1990categories,deligne2002categories,muger2007abstract,EGNO}), which we now review in this special case. Note that every object in $\Bord_1^{SO}$ consists of some number $n_+$ of positively oriented points $\mathrm{pt}^+$, and some number $n_-$ of negatively oriented points, $\mathrm{pt}^-$. As shorthand, we will denote such an oriented $0$-manifold by the pair $(n_+, n_-)$.

Consider the diagonal universal Hilbert spaces $\mathcal{H}_{(n, n)}$. We define a homomorphism,
\begin{equation}\label{eq:hom_from_group_ring}
    \mathbb{C}[S_n] \to \CH_{(n, n)},
\end{equation}
from the group ring on the symmetric group as follows. For each permutation $\sigma \in S_n$, we obtain a bordism from $n$ points to $n$ points, all positively oriented, in which the $i$-th incoming point connects to the $\sigma(i)$-th outgoing point, for $1 \leq i \leq n$. This defines a homomorphism,
\begin{equation}
    S_n \to \Bord_1^{SO}\big((n,0) \to (n,0)\big).
\end{equation}
We define the state $\ket{\sigma} \in \mathcal{H}_{n, n}$ to be the state obtained by taking the corresponding bordism $(n, 0) \to (n, 0)$ and using the unitary structure to turn the $n$ incoming positively oriented points into $n$ outgoing negatively oriented points. Extending linearly, we obtain the homomorphism \eqref{eq:hom_from_group_ring}. For illustrative purposes, a few examples of the states $\ket{\sigma}$ for $n = 3$ are given by:
\begin{align}
    \sigma_1 = \mathrm{id},\qquad |\sigma_1\rangle &= \Ket{\includegraphics[height=1.35cm,valign=c]{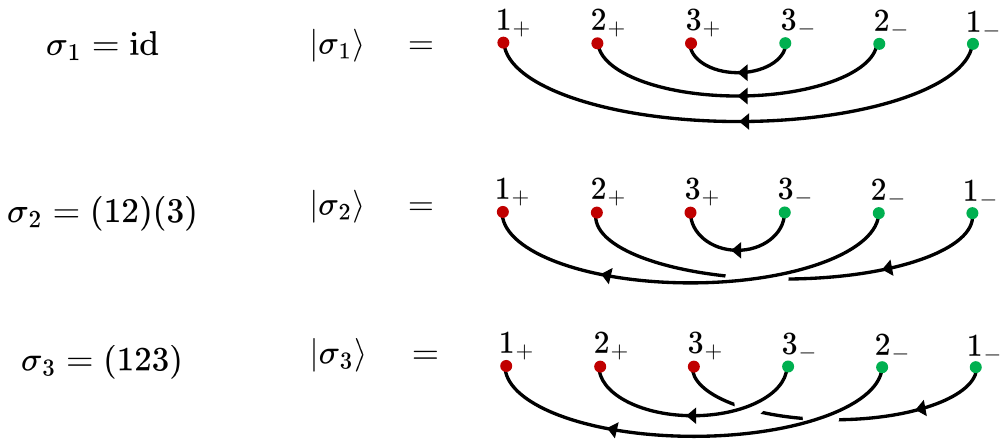}}\\[0.5em]
    \sigma_2 = (12)(3),\qquad |\sigma_2\rangle &= \Ket{\includegraphics[height=1.2cm,valign=c]{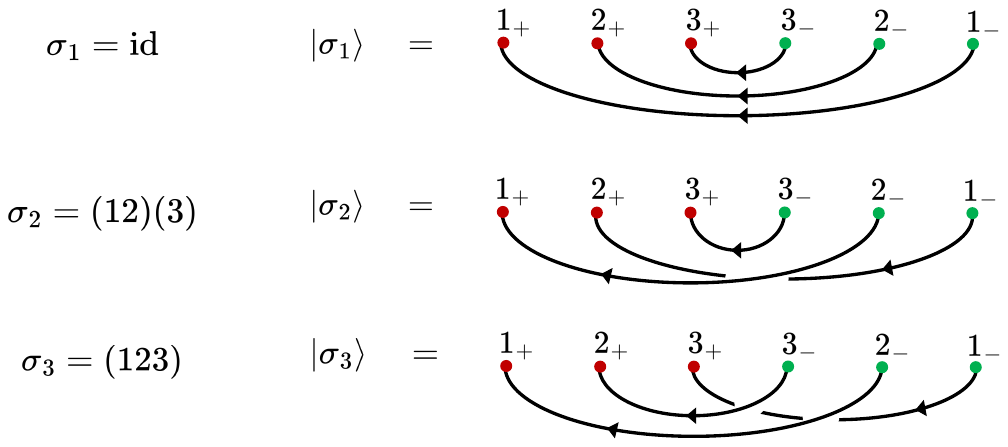}}\\[0.5em]
    \sigma_3 = (123),\qquad |\sigma_3\rangle &= \Ket{\includegraphics[height=1.15cm,valign=c]{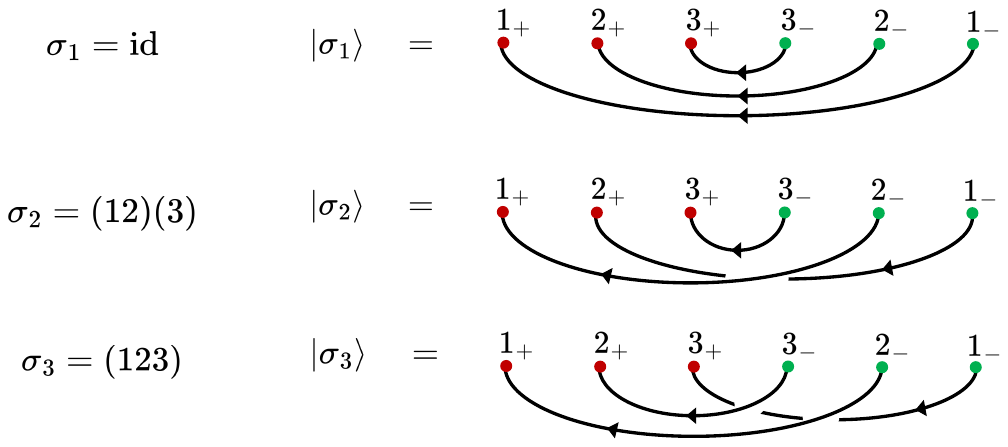}}
\end{align}

Now, following \cite{Marolf:2020xie,Colafranceschi:2023urj}, consider the state,
\begin{equation}
    \ket{P_{\Lambda^n}} = \frac{1}{n!} \sum_{\sigma \in S_n} (-1)^{|\sigma|} \ket{\sigma},
\end{equation}
corresponding to the projector in the group ring $\mathbb{C}[S_n]$ onto the $n$-th antisymmetric tensor power. A simple calculation shows that we have,
\begin{equation}\label{eq:antisymmetric_tensor}
    \braket{P_{\Lambda^n} | P_{\Lambda^n}} = \binom{k}{n} \defined \frac{k (k - 1) \cdots (k - n + 1)}{n!}.
\end{equation}
Thus, if $k$ is not a non-negative integer, the squared norm \eqref{eq:antisymmetric_tensor} will become negative for sufficiently large $n$, violating reflection positivity.

Given that $k \in \mathbb{Z}_{\geq 0}$, we may interpret $\zeta$ as the partition function of any unitary TQFT on $\Bord_1^{SO}$ whose Hilbert space $\widetilde{\CH}_{(1, 0)}$ on a single positively-oriented point is $k$-dimensional (the rest of the TQFT is determined by the cobordism hypothesis). Up to unitary isomorphism, there is only one such Hilbert space, given by $\mathbb{C}^k$, and so all unitary TQFTs with $\zeta(S^1) = k$ are unitarily equivalent. We note that any such TQFT is subject to a $U(k)$ global symmetry rotating the orthonormal basis of the $k$-dimensional space of ground states.\footnote{With time reversal symmetry, the global symmetry would be $O(k)$, we would replace oriented manifolds with unoriented ones, and we would replace the group ring $\mathbb{C}[S_n]$ with the Brauer algebra.}

Now, we ask, which states of this TQFT are produced by the universal construction? An obvious constraint, arising from the bordism group $\Omega_0^{SO} = \mathbb{Z}$, is that the universal construction only produces nonzero states in the diagonal Hilbert spaces $\widetilde{\mathcal{H}}_{(n, n)}$ on an equal number $n$ of positively and negatively oriented points. The diagonal Hilbert spaces $\widetilde{\mathcal{H}}_{(n, n)}$ are precisely the spaces of neutral states under the diagonal subgroup $U(1) \subset U(k)$, which acts by the phase $e^{i (n_+ - n_-) \theta}$ in the Hilbert space $\widetilde{\mathcal{H}}_{(n_+, n_-)}$. The diagonal subgroup $U(1) \subset U(k)$ is the Pontryagin dual of the bordism group $\Omega_0^{SO} = \mathbb{Z}$, acting as a global symmetry as described in \cite{McNamara:2019rup}.

However, even in the diagonal Hilbert spaces $\widetilde{\mathcal{H}}_{(n, n)}$, not every state is neutral under the full symmetry group $U(k)$, as discussed in \cite{Maxfield:2023mdj}. As a representation of the symmetry group $U(k)$, we have
\begin{equation}
    \widetilde{\mathcal{H}}_{(n, n)} = \Box^{\otimes n_+} \otimes \overline{\Box}^{\otimes n_-},
\end{equation}
where $\Box$ and $\overline{\Box}$ denote the fundamental and anti-fundamental representations of $U(k)$, respectively. By Schur--Weyl duality, we have a decomposition
\begin{equation}\label{eq:Schur_Weyl}
    \Box^{\otimes n} = \bigoplus_{D} \rho_{S_n}^D \otimes \rho_{U(k)}^D, \quad \mathrm{rows}(D) \leq k
\end{equation}
as representations of $S_n \times U(k) $, where $S_n$ acts to permute tensor factors and $U(k)$ acts diagonally. The sum \eqref{eq:Schur_Weyl} runs over all Young diagrams $D$ with $n$ boxes and at most $k$ rows, and $\rho_{S_n}^D$ and $\rho_{U(k)}^D$ are the associated irreducible representations of $S_n$ and $U(k)$, respectively. Thus, we have
\begin{equation}\label{eq:invts_in_Schur_Weyl}
    (\widetilde{\mathcal{H}}_{(n, n)})^{U(k)} = \bigoplus_{D} \rho_{S_n}^D \otimes \overline{\rho_{S_n}^D}, \quad \mathrm{rows}(D) \leq k.
\end{equation}
Again, the sum runs over all Young diagrams $D$ with $n$ boxes and at most $k$ rows.

Let us now match the invariant subspace $(\widetilde{\mathcal{H}}_{(n, n)})^{U(k)}$ with the states prepared by the universal construction. Up to unimportant disconnected components, any 1-manifold with boundary $(n, n)$ produces one of the states $\ket{\sigma}$ discussed above. Thus, the map $\mathbb{C}[S_n] \to \CH_{(n, n)}$ is surjective. By the Peter--Weyl theorem, we have
\begin{equation}\label{eq:Peter--Weyl}
    \mathbb{C}[S_n] = \bigoplus_{D} \rho_{S_n}^D \otimes \overline{\rho_{S_n}^D},
\end{equation}
where now the sum runs over all Young diagrams with $n$ boxes, with no restriction on the number of rows. Comparing with \eqref{eq:invts_in_Schur_Weyl}, we see that the kernel of $\mathbb{C}[S_n] \to \CH_{(n, n)}$ consists of the summands labeled by Young diagrams with greater than $k$ rows. Moreover, the universal construction produces the entire neutral sector under the $U(k)$ global symmetry, for any oriented 0-manifold in $\Bord_1^{SO}$, as shown in \cite{Maxfield:2023mdj}.

In fact, \cite{Maxfield:2023mdj} gives an explicit description of the space of null states in the universal pre-Hilbert spaces $\CH_{(n,n)}^\mathrm{pre}$, corresponding to the summands in \eqref{eq:Peter--Weyl} labeled by Young diagrams with more than $k$ rows. The argument is that the squared norm of the state obtained from a central projector in $\mathbb{C}[S_n]$, corresponding to some irreducible representation of $S_n$, is the dimension of the corresponding irreducible representation of $U(k)$. But this dimension vanishes if and only if the corresponding Young diagram has more than $k$ rows, and so states labeled by a Young diagram with more than $k$ rows are null. This generalizes \eqref{eq:antisymmetric_tensor} from the antisymmetric tensor representation to an arbitrary representation.

\subsection{Adding end-of-the-world branes}\label{sec:ex_EotW_branes}

Let us make a simple modification of the previous example, motivated by \cite{Marolf:2020xie} and \cite{McNamara:2019rup}. We adjoin some number $k_\mathrm{EotW}$ of end-of-the-world (EotW) branes, namely, $k_\mathrm{EotW}$ additional morphisms
\begin{equation}
    \psi_1, \dots, \psi_{k_\mathrm{EotW}} : \varnothing \to \mathrm{pt}^+,
\end{equation}
as well as their formal conjugates $\overline{\psi}_i$ and adjoints $\psi_i^\dagger$. We view the morphisms $\psi_i$ geometrically as oriented 1-dimensional bordisms-with-singularities, obtained by allowing free endpoints marked by an index $i = 1, \dots, k_\mathrm{EotW}$  (see Figure \ref{fig:EotW}). We denote the resulting unitary category by $\Bord_1^{{SO} + \mathrm{EotW}}$, and note that $\Omega_0^{{SO} + \mathrm{EotW}} = 0$ as long as $k_\mathrm{EotW} > 0$, since each EotW brane $\psi_i$ provides a bordism-with-singularities trivializing the class $[\mathrm{pt}^+]$.

\begin{figure}
    \centering
    \includegraphics[width=0.5\linewidth]{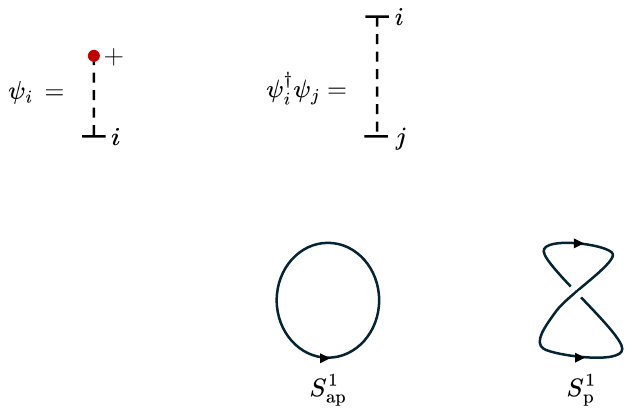}
    \caption{End-of-the-world (EotW) branes provide $k_\mathrm{EotW}$ additional morphisms $\psi_i:\varnothing \to \mathrm{pt}^+$, which we view as bordisms with singularities from the empty set to $\mathrm{pt}^+$ (left). The allowed type of singularity is a free boundary labeled by the index $i$. The compositions $\psi_i^\dagger \circ \psi_j$ are closed manifolds with singularities (right), which evaluate to the inner product matrix $\braket{\psi_i | \psi_j}$.}
    \label{fig:EotW}
\end{figure}

Suppose we have some partition function $\zeta$ on $\Bord_1^{{SO} + \mathrm{EotW}}$, satisfying Axioms \ref{axiom:finiteness}-\ref{axiom:reflection_positivity}. By restricting $\zeta$ to the sub-category $\Bord_1^{SO}$, we may apply the reasoning of the previous section to conclude that $\zeta(S^1)$ is given by some non-negative integer $k \in \mathbb{Z}_{\geq 0}$. The remaining data in $\zeta$ is the Gram matrix,
\begin{equation}\label{eq:inner_prod_of_eotws}
    \braket{\psi_i | \psi_j} = \zeta(\psi_i^\dagger \psi_j).
\end{equation}
Our goal, now, is to constrain the rank $r$ of the matrix \eqref{eq:inner_prod_of_eotws}.

Following \cite{Marolf:2020xie}, note that we may construct precisely $r$ orthonormal states
\begin{equation}
    \ket{\phi_a} = \sum_{i = 1}^{k_\mathrm{EotW}} c_a^i \ket{\psi_i}, \quad a = 1, \dots, r, \quad \braket{\phi_a | \phi_b} = \delta_{ab}
\end{equation}
from the states $\ket{\psi_i} \in \mathcal{H}_{\mathrm{pt}^+}^\mathrm{pre}$. Consider the state,
\begin{equation}
    \ket{P_{\mathrm{EotW}^\perp}} \defined \ket{\mathrm{id}_{\mathrm{pt}^+}} - \sum_{a = 1}^r \ket{\phi_a \sqcup \overline{\phi}_a} \in \mathcal{H}_{\mathrm{pt}^+ \sqcup \mathrm{pt}^-},
\end{equation}
where we have used the unitary structure to bend $\mathrm{id}_{\mathrm{pt}^+}$ into a bordism $\varnothing \to \mathrm{pt}^+ \sqcup \mathrm{pt}^-$. By construction, $\ket{P_{\mathrm{EotW}^\perp}}$ is a state obtained, by bending, from the formal projector onto the orthogonal complement of the subspace of $\mathbb{C}^k$ spanned by the EotW brane states. We compute,
\begin{equation}
    \braket{P_{\mathrm{EotW}^\perp} | P_{\mathrm{EotW}^\perp}} = k - 2 \sum_{a = 1}^r \delta_{aa} + \sum_{a = 1}^r \delta_{a a} \delta_{a a} = k - r. 
\end{equation}
Thus, reflection positivity implies that the rank $r$ of the inner product matrix \eqref{eq:inner_prod_of_eotws} is bounded above by the dimension $k$ of the underlying Hilbert space, $r \leq k$.

The bound $r \leq k$ is necessarily the only constraint, as given it we may always find a unitary TQFT with EotW branes and with partition function $\zeta$, by choosing $k_\mathrm{EotW}$ vectors in $\mathbb{C}^k$ with inner product matrix $\braket{\psi_i| \psi_j}$. Moreover, any two such choices are unitarily equivalent, as, up to a unitary change of basis, the Gram matrix is the only invariant of an ordered list of vectors in a Hilbert space.

Now, let us ask: which new states do the EotW branes add into the universal Hilbert spaces? Without EotW branes, we saw previously that we could already produce all states invariant under the $U(k)$ symmetry acting on the $k$-dimensional space of ground states. Let us fix an underlying TQFT, corresponding to a fixed choice of $k_\mathrm{EotW}$ vectors $\ket{\psi_i} \in \mathbb{C}^k$ with inner product matrix \eqref{eq:inner_prod_of_eotws}. Let us also choose some orthonormal basis $\ket{\phi_a}$ of the linear span of the states $\ket{\psi_i}$, as above. This orthonormal basis defines an isometric embedding $\mathbb{C}^r \hookrightarrow \mathbb{C}^k$, which precisely describes the additional states produced by the EotW branes.

Consider the unitary group $G = U(k - r)$ of the orthogonal complement $\mathbb{C}^{k - r} \hookrightarrow \mathbb{C}^k$ of the space $\mathbb{C}^r \hookrightarrow \mathbb{C}^k$ of EotW brane states. We claim that the universal Hilbert spaces of the theory with EotW branes are precisely the $U(k - r)$-invariant subspaces of the Hilbert spaces of the underlying TQFT, on any number of positively and negatively oriented points. One inclusion is obvious, as we cannot produce any charged states under $U(k - r)$ using smooth manifolds or EotW branes, by construction.

For the other direction, consider any tensor product of copies of $\mathbb{C}^k$ and its conjugate. We can prepare any state in this tensor product which is given by a tensor product of any number of vectors in $\mathbb{C}^r \subset \mathbb{C}^k$ and any number of invariant tensors under $U(k)$. Decompose the tensor product as a direct sum of tensor products of $\mathbb{C}^r$, $\mathbb{C}^{k - r}$, and their conjugates. We can clearly produce any state in the tensor factors $\mathbb{C}^r$, and we can also prepare any $U(k-r)$-invariant state in tensor products of $\mathbb{C}^{k-r}$ and its conjugate by applying Schur--Weyl duality to the states $\ket{P_{\mathrm{EotW}^\perp}}$. Thus, we can produce all $U(k-r)$-invariant states, as claimed.

This example demonstrates the general manner in which the universal Hilbert spaces change upon adding additional sources. If the original universal Hilbert spaces were given by the $G$-invariants, and we add new sources which break the symmetry to a subgroup $H \subset G$, then we will necessarily produce \textit{all} the states that are invariant under $H$ (though not necessarily under $G$). Again, we see that the only obstruction to states being geometrically preparable is the presence of a symmetry group.

\subsection{Fermionic topological quantum mechanics}\label{sec:ex_fTQM}

Our next example is closely related to the example in Section \ref{sec:Ex_TopoQM}, but generalized from bosonic TQM to fermionic TQM. Thus, we replace the oriented bordism category $\Bord_1^{SO}$ with the spin bordism category $\Bord_1^{Spin}$. Just as the example of bosonic TQM is related to the GPI of Marolf and Maxfield \cite{Marolf:2020xie}, the example of fermionic TQM is related to the fermionic generalization of Marolf and Maxfield's model considered in \cite{Balasubramanian:2020jhl}.

Let us first discuss the obvious consequences of reflection positivity for fermionic TQM. By finiteness, reality, and multiplicativity, the partition function on $\Bord_1^{Spin}$ is determined by two real numbers,
\begin{equation}
    \zeta(S^1_\mathrm{ap}) \in \mathbb{R}, \quad \zeta(S^1_\mathrm{p}) \in \mathbb{R},
\end{equation}
obtained by evaluating $\zeta$ on circles equipped with the anti-periodic (bounding) and periodic (non-bounding) spin structures, as illustrated in Figure \ref{fig:NS_R_circles}.

\begin{figure}
    \centering
    \includegraphics[width=0.4\linewidth]{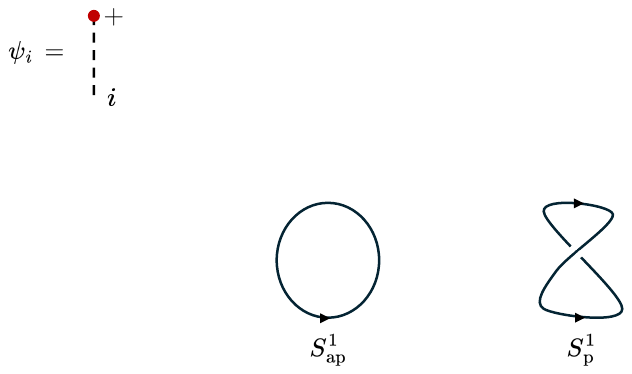}
    \caption{The anti-periodic circle $S^1_\mathrm{ap}$ and the periodic circle $S^1_\mathrm{p}$. Note that, in traversing $S^1_\mathrm{ap}$, one undergoes a $2 \pi$ rotation, while in traversing $S^1_\mathrm{p}$ there is no net rotation. Note also that $S^1_\mathrm{ap}$ is manifestly a double, while $S^1_\mathrm{p}$ is not reflection-symmetric. In fermionic topological quantum mechanics, a simple application of reflection positivity implies $\zeta(S^1_\mathrm{ap}) \geq 0$, and places no constraint on $\zeta(S^1_\mathrm{p})$.}
    \label{fig:NS_R_circles}
\end{figure}

What are the obvious consequences of reflection positivity, analogous to the condition $\zeta(S^1) \geq 0$ in the bosonic case? To answer this question, we need to know which circle is the double, $M^\dagger \circ M$, of the spin-bordism
\begin{equation}
    M : \varnothing \to \mathrm{pt}^+ \sqcup \mathrm{pt}^-, \quad M = [0, 1],
\end{equation}
given by a semicircle. Using the canonical unitary structure on $\Bord_1^{Spin}$, we may regard $M$ as the identity spin-bordism $\mathrm{id}_{\mathrm{pt}^+}$ of the positively-oriented point, and compute
\begin{equation}
    M^\dagger \circ M = \mathrm{tr}_\mathrm{Bord}(\mathrm{id}_{\mathrm{pt}^+}) = \mathrm{str}_\mathrm{Bord}(\theta_{\mathrm{pt}^+} \circ \mathrm{id}_{\mathrm{pt}^+}) = S^1_\mathrm{ap},
\end{equation}
as discussed in \cite{Freed:2016rqq,stehouwer2024unitary}. Even if we first twisted $M$ by $\theta_{\mathrm{pt}^+}$ on either boundary, we would still obtain $M^\dagger \circ M = S^1_\mathrm{ap}$, as the two additional copies of $\theta_{\mathrm{pt}^+}$ would cancel.

Thus, a simple application of reflection positivity requires the partition function on $S^1_\mathrm{ap}$ to be positive, $\zeta(S^1_\mathrm{ap}) \geq 0$, and places no restriction on the partition function $\zeta(S^1_\mathrm{p})$. As discussed in Section \ref{sec:traces_and_fermions}, this is the physically correct conclusion. This is because in an actual fermionic TQFT $\zeta(S^1_\mathrm{ap})$ computes the dimension of the super-Hilbert space of ground states on $\mathrm{pt}^+$, while $\zeta(S^1_\mathrm{p})$ computes the super-dimension. Let us note that this is a meaningfully different positivity constraint than we would have obtained if we considered bosonic TQM with a unitary $\mathbb{Z}_2$ internal symmetry, as the $\dagger$-structure associated to $\Bord_1^{{SO} \times \mathbb{Z}_2}$ makes the periodic $\mathbb{Z}_2$ bundle a double, not the anti-periodic $\mathbb{Z}_2$ bundle.

By a similar argument as the one presented in Section \ref{sec:Ex_TopoQM} for bosonic TQM, we could show that there exist non-negative integers $p, q \in \mathbb{Z}_{\geq 0}$ such that we have
\begin{equation}
    \zeta(S^1_\mathrm{ap}) = p + q, \quad \zeta(S^1_\mathrm{p}) = p - q.
\end{equation}
and thus identify $\zeta$ with the partition function of the fermionic TQM with super-Hilbert space $\mathbb{C}^{p | q}$ on a single positively oriented point. For a full argument, see our review of the proof of the DR reconstruction theorem in Section \ref{sec:DR_reconstruction}, and in particular the twisting of the symmetry used to prove integrality of dimensions. Using similar logic as in Section \ref{sec:Ex_TopoQM}, we would also conclude that the states prepared by the universal construction are precisely the states invariant under the $U(p) \times U(q)$ symmetry acting on the $(p | q)$-dimensional space of ground states.

Let us use this example to illustrate a key point, regarding the necessary exchange statistics of missing states. Consider the case of a single missing fermionic state, namely,
\begin{equation}\label{eq:part_fun_a_fermion}
    \zeta(S^1_\mathrm{ap}) = +1, \quad \zeta(S^1_\mathrm{p}) = -1.
\end{equation}
We have two bordisms,
\begin{equation}
    \mathrm{id}, s : \mathrm{pt}^+ \sqcup \mathrm{pt}^+ \to \mathrm{pt}^+ \sqcup \mathrm{pt}^+,
\end{equation}
given by the identity and the swap bordisms, respectively.

Via the unitary structure, we may view both $\mathrm{id}$ and $s$ as bordisms from $\varnothing$ to the disjoint union of two positively and two negatively oriented points. We thus have corresponding states,
\begin{equation}
    \ket{\mathrm{id}} = \Ket{\includegraphics[height=1.25cm,valign=c]{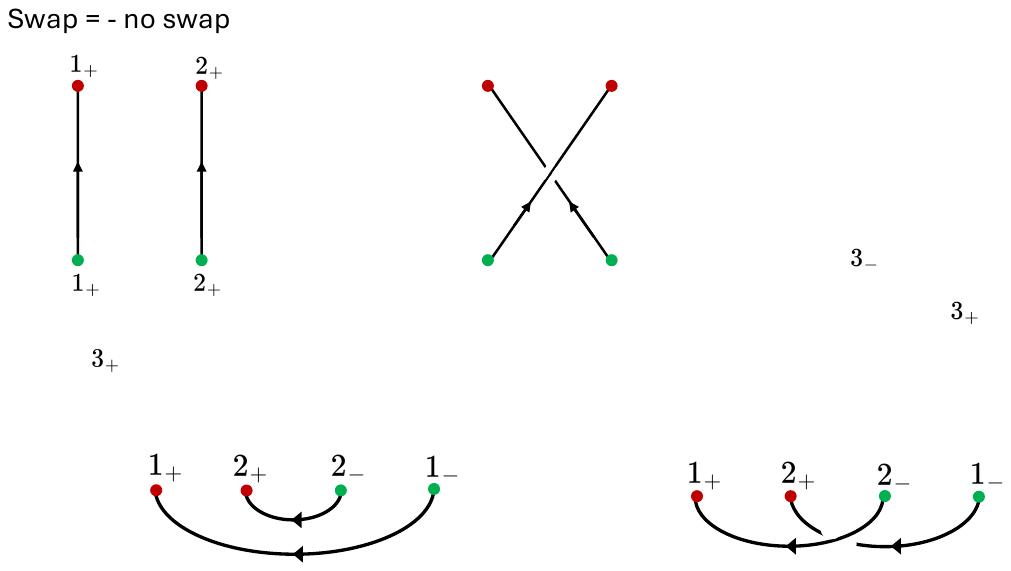}}, \quad \ket{s} = \Ket{\includegraphics[height=1.1cm,valign=c]{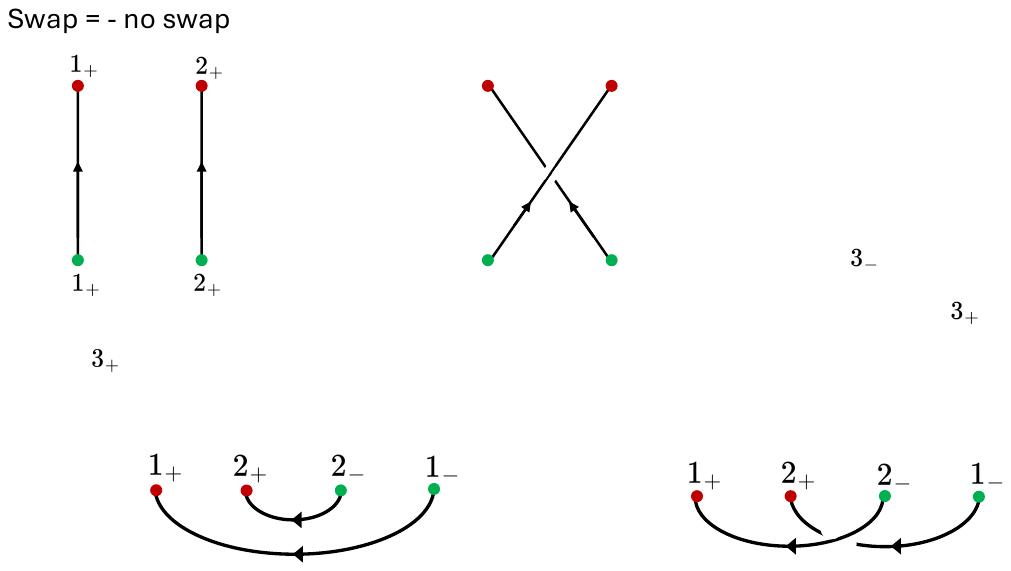}},
\end{equation}
in the four-boundary universal Hilbert space $\mathcal{H}_{\mathrm{pt}^+ \sqcup \mathrm{pt}^+ \sqcup \mathrm{pt}^- \sqcup \mathrm{pt}^-}$. The inner product matrix can be computed from \eqref{eq:part_fun_a_fermion}, and is given by,
\begin{equation}
    \begin{array}{c | c c}
        & \ket{\mathrm{id}} & \ket{s} \\
        \hline
        \bra{\mathrm{id}} & +1 & -1 \\
        \bra{s} & -1 & +1
    \end{array}
\end{equation}
As a result, modulo null states, we have $\ket{s} \sim - \ket{\mathrm{id}}$, i.e.,
\begin{equation}
    \Ket{\includegraphics[height=1.25cm,valign=c]{figures/FTQM_no_swap.pdf}} \sim - \Ket{\includegraphics[height=1.1cm,valign=c]{figures/FTQM_swap.pdf}}.
\end{equation}

We see that the unitary structure on $\Bord_1^{Spin}$, together with reflection positivity, allows us to directly deduce the exchange statistics for missing states. This is the basic mechanism by which the partition function $\zeta$ tells us which missing states should be bosons, and which should be fermions. We might hope to adjoin these missing states by adding EotW branes as in the previous example, but this will fail due to the inability of our formalism, as described so far, to accommodate fermionic sources. We demonstrate this failure in Section \ref{sec:fermionic_sources}, as well as provide the resolution.

\subsection{Non-topological quantum mechanics}\label{sec:ex_QM}

As a final example in $d = 1$, let us consider non-topological quantum mechanics, where we take $\mathcal{X}$ to consist of the choice of an orientation and a Riemannian metric. Let us assume that we have an actual compact quantum mechanical system, meaning a Hilbert space $\mathcal{H}$ with Hamiltonian $\widehat{H}$ such that the thermal partition function,
\begin{equation}
    \zeta(S^1_\beta) = \tr(e^{- \beta \widehat{H}}) < \infty,
\end{equation}
is finite for all $\beta$. Compactness implies that $\widehat{H}$ has a discrete spectrum, bounded from below, consisting of energy levels $E_0 < E_1 < \cdots$ with integer degeneracies $d_n \in \mathbb{Z}_{\geq 0}$. Thus, we have,
\begin{equation}\label{eq:spectral_rep_thermal_part_fun}
    \zeta(S^1_\beta) = \sum_{n \geq 0} d_n e^{- \beta E_n}.
\end{equation}

In fact, it is possible to fully bootstrap this example. One could consider carefully chosen states in the universal Hilbert space $\CH_{\mathrm{pt}^+ \sqcup \mathrm{pt}^-}$, corresponding to the spectral projectors of the presumptive Hamiltonian operator. By applying an anti-symmetrization argument as in Section \ref{sec:Ex_TopoQM}, one could conclude that $\zeta(S^1_\beta)$ may be written, as in \eqref{eq:spectral_rep_thermal_part_fun}, as a sum over a discrete spectrum with integer degeneracies at each energy level. However, the analysis needed to run this bootstrap is subsumed by the analysis used in our proof of Theorem \ref{thm:main}.

Instead, we use this example to illustrate an interesting possibility regarding the nature of the symmetry group $G$. In a quantum mechanical system, it is quite easy to find symmetries, as a symmetry need only commute with the Hamiltonian operator. Thus, the full unitary symmetry of the Hamiltonian $\widehat{H}$ discussed above is simply,
\begin{equation}
    G = \prod_n U(d_n),
\end{equation}
the product of the unitary groups rotating degenerate states within each energy level. As topologized in the strong operator topology, $G$ is a compact Hausdorff group, being an infinite product of Lie groups. However, it is not a Lie group.

Let us consider the states in the universal Hilbert space $\CH_{\mathrm{pt}^+ \sqcup \mathrm{pt}^-}$. This is given by the closed linear span, in $\mathcal{H} \otimes \overline{\CH}$, of the thermofield double states $\ket{\mathrm{TFD}_\beta}$ produced by intervals of length $\beta/2$ as $\beta$ varies. Certainly, $\CH_{\mathrm{pt}^+ \sqcup \mathrm{pt}^-}$ is contained in the $G$-invariant subspace, as every thermofield double state is $G$-invariant. Conversely, in each microcanonical window around an energy level $E_n$, there is a unique $G$-invariant state, which is maximally entangled within the microcanonical window and pairs each one-sided state with its CPT conjugate. These states may all be obtained as linear combinations of thermofield double states $\ket{\mathrm{TFD}_\beta}$ by taking an inverse Laplace transform. Thus, the states in the two-boundary universal Hilbert space $\CH_{\mathrm{pt}^+ \sqcup \mathrm{pt}^-}$ are precisely the $G$-invariant states.

As a result, we see that the group $G$ described in Theorem \ref{thm:main} can really be a compact Hausdorff group, and is not always a compact Lie group. That being said, this example is rather specific to quantum mechanics ($d = 1$), and the condition for being a symmetry in higher dimensions is much more restrictive, as a symmetry is required to commute with the local stress-energy tensor. We do not know any example in higher dimensions where the group $G$ is not a compact Lie group.

\subsection{Two-dimensional Ising CFT}\label{sec:ex_Ising}

Our final example considered in this section is given by the two-dimensional Ising CFT, considered as a theory on the geometric bordism category $\Bord_2^{\mathrm{Riem}, SO}$ of oriented two-dimensional Riemannian manifolds.\footnote{We need to choose Riemannian metrics, and not just conformal structures, in order to trivialize the Weyl anomaly and get a well-defined partition function.} The Ising CFT has three Virasoro primary operators: $\mathds{1}, \sigma$, and $\epsilon$. It also has a $\mathbb{Z}_2$ spin-flip symmetry, under which $\sigma$ is charged. Under the state-operator correspondence, every state in the Hilbert space on $S^1$ is a linear combination of the states corresponding to $\mathds{1}, \sigma$, and $\epsilon$, as well as their Virasoro descendants. We ask: which of these states are produced by the path integral over a smooth Riemann surface with one boundary circle? By the state-operator correspondence, this is the same as asking which operators acquire a one-point function on some closed Riemann surface.

On the sphere, the identity operator $\mathds{1}$ certainly acquires a one-point function, as it corresponds to the vacuum under the state-operator correspondence. However, there are more operators which acquire a sphere one-point function if we allow an arbitrary metric on the sphere, or, equivalently, keep the metric fixed but vary the local coordinate sewing our operator insertion into the sphere. This change of local coordinate is, precisely, the action of the Virasoro group on the space of local operators. Thus, any state in the vacuum Virasoro module, corresponding to the stress tensor and its descendants, can be produced by some possibly-infinite linear combination of appropriately chosen metrics at genus zero. Moreover, if we parametrize our choice of metric by the choice of Virasoro transformation used to sew in the local coordinate, the null states in the sense of the universal construction are, precisely, the null states in the vacuum Virasoro module. By an analogous argument, it is sufficient, at any genus, to determine which Virasoro primaries acquire a one-point function.

Let us now move to genus one, where we get additional states. In particular, the operator $\epsilon$ acquires a torus one-point function \cite{Itzykson:1986pj,DiFrancesco:1987ez,Bagger:1988yc}, and so the corresponding state and all of its descendants are produced by the path integral over a genus one surface with one circular boundary. More precisely, this path integral will produce some linear combinations of descendants of $\mathds{1}$ and $\epsilon$, and we may subtract off any descendants of $\mathds{1}$ by taking linear combinations with the states already produced at genus zero. Ultimately, at genus zero and one, we will produce all Virasoro descendants of $\mathds{1}$ and $\epsilon$. At this point, nothing new happens at higher genus, as we have already produced all neutral states under the $\mathbb{Z}_2$ spin flip symmetry.

Now, let us consider the universal construction for the Hilbert space on two circles, which corresponds to asking about which pairs of operators acquire a non-zero two-point function on some possibly-disconnected Riemann surface. Disconnected manifolds will produce the pairs of operators,
\begin{equation}\label{eq:Ising_two_circles1}
    (\mathds{1}, \mathds{1}), \quad (\epsilon, \mathds{1}), \quad (\mathds{1}, \epsilon), \quad (\epsilon, \epsilon),
\end{equation}
and descendants thereof under the two independent copies of Virasoro.

However, we will also get new states from two-point functions on connected Riemann surfaces. This will include linear combinations of the pairs \eqref{eq:Ising_two_circles1} and their descendants, together with the pair,
\begin{equation}\label{eq:Ising_two_circles2}
    (\sigma, \sigma),
\end{equation}
and its descendants, from the non-zero $\sigma$ two-point function on the sphere. Note that the pairs \eqref{eq:Ising_two_circles1} and \eqref{eq:Ising_two_circles2}, with their descendants, correspond to all the $\mathbb{Z}_2$-neutral states in the two-circle Hilbert space, consisting of the tensor products of neutral states with neutral states or charged states with charged states, but not the tensor products of neutral states with charged states or vice-versa. By similar arguments, we learn that, for any number of circles, the states in the Ising CFT preparable by manifolds are, precisely, the $\mathbb{Z}_2$-neutral states.

It is worth emphasizing that the Ising CFT has, in addition to the $\mathbb{Z}_2$ spin-flip symmetry, a non-invertible Kramers--Wannier symmetry under which both $\sigma$ and $\epsilon$ are charged. However, while this non-invertible symmetry is sufficient to prevent $\epsilon$ from acquiring a sphere one-point function, it is insufficient to prevent $\epsilon$ from acquiring a torus one-point function. The non-invertible symmetry is broken by the nontrivial topology of the torus, as explained in \cite{McNamara:2021cuo,Heckman:2024obe}. It is possible that Theorem \ref{thm:main} could be used to prove the conjecture of \cite{McNamara:2021cuo,Heckman:2024obe} that \textit{every} non-invertible symmetry is broken on some topology.\footnote{This conjecture does not immediately follow from Theorem \ref{thm:main} due to the possibility that some element of $G$ secretly acts non-invertibly on extended operators, while acting invertibly on the Hilbert space on any spatial manifold. We find this possibility unlikely, but ruling it out would require us to generalize our framework to the fully-extended case.}

\section{The baby universe category}\label{sec:CBU}

In this section, we turn to the central object of study in this paper: the baby universe category $\CBU$ associated to an abstract partition function $\zeta$ on a unitary bordism category $\BordX$ that satisfies axioms \ref{axiom:finiteness}-\ref{axiom:reflection_positivity}. Our construction of $\CBU$ in this section is, essentially, a formalization and retelling of the previous work \cite{Colafranceschi:2023urj,Marolf:2024adj} of Colafranceschi, Dong, Marolf, Zhang, and the second author of this paper.

\subsection{Towards a quantum bordism category}

The core conceptual idea of our construction of $\CBU$ is that, while the universal Hilbert spaces $\mathcal{H}_B$ miss some of the states of the putative QFT underlying $\zeta$, these states are still secretly present as the would-be one sided microstates of ER bridges in $\mathcal{H}_{B \sqcup \overline{A}}$, as $A$ ranges over all possible objects of $\BordX$. In general, while ER = EPR may fail, the states of ER bridges will nevertheless try their best to behave as if they were entangled states. For instance, the diagonal sector $\mathcal{H}_{B \sqcup \overline{B}}$ contains the ER bridge prepared by the cylinder $C_B(\beta)$, which wants to be the thermofield double state at inverse temperature $\beta$.

As we have seen in the examples of Section \ref{sec:examples}, the ER bridge states formally built from manifolds know a remarkably large amount about the missing states. In fact, example by example, we found that the ER bridges always seem to know \textit{everything} about the missing states, up to the action of a unitary symmetry group acting on the missing states. Our goal, in this section and the next two, is to systematize the process of distilling information about the missing states from the collection of ER bridges.

We do this by following a general categorical principle known as Grothendieck's relative point of view. The idea is that, to understand some mathematical object $Y$ in some category $\mathcal{C}$, one should consider maps $X \to Y$ from every other object $X \in \mathcal{C}$, rather than only consider maps $\mathds{1} \to Y$ from some unit object $\mathds{1}$. For instance, if we are doing any sort of geometry, and $\mathds{1}$ denotes a single point, then a map $\mathds{1} \to Y$ is a point of $Y$, while a map $X \to Y$ is an $X$-parametrized family of points of $Y$. While the bare set of points of $Y$ does not fully capture its geometry, the knowledge of how its points fit into parametrized families does. This approach to algebraic geometry is known as the \textit{functor of points}, and is justified by the Yoneda Lemma.

Applying Grothendieck's relative point of view, we learn that we have been making a mistake by only building states on $B$ from bordisms $\varnothing \to B$. Instead, we should try to build states on $B$ from bordisms $N : A \to B$, viewed as recipes for producing states on $B$ given a reservoir of arbitrary states on $A$. Given that we do not actually \textit{have} a reservoir of arbitrary states on $A$, the best we can do is prepare the ER bridge state $\ket{N}$ in the Hilbert space $\mathcal{H}_{B \sqcup \overline{A}}$. We depict this chain of logic in Figure \ref{fig:relative_POV}. Our goal now, following Grothendieck's relative point of view, is to assemble the Hilbert spaces $\mathcal{H}_{B \sqcup \overline{A}}$ into some category in which $A$ and $B$ live as objects.

\begin{figure}
    \centering
    \includegraphics[width=0.8\linewidth]{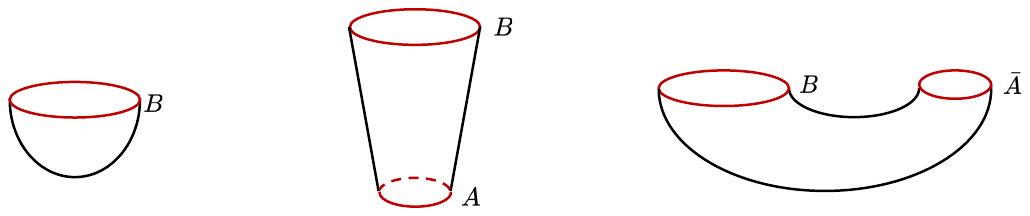}
    \caption{In the universal Hilbert space $\CH_B$, we only produce states using bordisms $\varnothing \to B$ (left). Applying Grothendieck's relative point of view, we should attempt to prepare states on $B$ from bordisms $A \to B$ (center), viewed as recipes for evolving arbitrary states on $A$ into states on $B$. Given that we do not \textit{have} arbitrary states on $A$, the best we can do is prepare the associated ER bridge state in $\CH_{B \sqcup \overline{A}}$ (right).}
    \label{fig:relative_POV}
\end{figure}

The most direct approach is to simply take $\mathcal{H}_{B \sqcup \overline{A}}$ to be the space of morphisms from $A$ to $B$. By the fundamental trace inequality of \cite{Colafranceschi:2023urj} (Proposition \ref{prop:trace_inequality}), composition of bordisms induces a jointly continuous bilinear composition map,
\begin{equation}
    \mathcal{H}_{C \sqcup \overline{B}} \times \mathcal{H}_{B \sqcup \overline{A}} \to \mathcal{H}_{C \sqcup \overline{A}},
\end{equation}
defined by using the unitary structure of $\BordX$ to compose along the intermediate boundary $B$, as described in Proposition \ref{cor:composition_is_jointly_continuous}. The resulting category is typically non-unital, just like $\BordX$, as the cylinder states $\ket{C_B(\beta)} \in \mathcal{H}_{B \sqcup \overline{B}}$ typically do not converge as $\beta \to 0$. To understand the problem, consider what would happen if we had ER = EPR, so that the two-boundary Hilbert space did factorize as $\mathcal{H}_{A \sqcup \overline{B}} = \mathcal{H}_A \otimes \overline{\mathcal{H}}_B$. In this case, we would recognize $\mathcal{H}_{A \sqcup \overline{B}}$ as the space of Hilbert--Schmidt operators $\mathcal{H}_A \to \mathcal{H}_B$, and, as mentioned in Section \ref{sec:OS_conjugation}, the identity operator on an infinite-dimensional Hilbert space is never Hilbert--Schmidt. Thus, the universal Hilbert space $\mathcal{H}_{B \sqcup \overline{A}}$ is a space of what we might call ``Hilbert--Schmidt morphisms'' from $A$ to $B$. We will call the resulting category the \textit{Hilbert--Schmidt baby universe category}, and denote it by $\CBUHS$.

There is nothing wrong with this direct approach, and we could prove all of our main results by working with $\CBUHS$ alone. However, our main approach in this paper will be slightly more involved, taking a detour through the world of von Neumann algebras. We do this for a few reasons: firstly, in order to make more of a direct connection to the motivating previous work \cite{Colafranceschi:2023urj, Marolf:2024adj}, and secondly, in order to provide a more easily generalizable framework. As we will see, the fact that composition of bordisms is jointly continuous in the Hilbert space topology is a lucky accident, arising physically from the fact that we are dealing with closed spatial manifolds and, ultimately, atomic (hence in particular Type I) von Neumann algebras. That being said, our motivating goal is to understand the complete breakdown of boundary locality in quantum gravity. Any higher-categorical extension of our framework would require us to cut spatial manifolds open along higher-codimension entangling surfaces, where we expect to find Type III von Neumann algebras.

To motivate our construction, note that if $\mathcal{H}_{B \sqcup \overline{A}}$ is the analog of the space of Hilbert--Schmidt operators between some Hilbert spaces associated to $A$ and $B$, what we want is the analog of the space of all bounded operators. These spaces will form the morphisms in a category $\CBUvN$, which is a \textit{W*-category}, the multi-object analog of a von Neumann algebra. To construct $\CBUvN$, we will follow \cite{Colafranceschi:2023urj}, and take a double commutant in a natural Hilbert space representation induced by $\zeta$.

\subsection{The algebraic core}\label{sec:analytic_core}

The first step in our construction is to linearize the bordism category $\BordX$, so that we can begin working with bordisms as operators before taking any completions.

\begin{definition}\label{defn:CBUpre}
    The \textit{pre-baby universe category} $\CBUpre$ is the free linearization $\mathbb{C}[\BordX]$ of the unitary bordism category of source manifolds.
\end{definition}

\noindent Unpacking Definition \ref{defn:CBUpre}, the objects of $\CBUpre$ are the same as those of $\BordX$, and the morphism spaces of $\CBUpre$ are the free vector spaces,
\begin{equation}
    \CBUpre(A \to B) = \mathbb{C}[\BordX(A \to B)].
\end{equation}
More concretely, a morphism $\mathcal{O} : A \to B$ in $\CBUpre$ is a finite linear combination,
\begin{equation}
    \mathcal{O} = \sum_i c_i N_i, \quad N_i : A \to B
\end{equation}
of $\mathcal{X}$-bordisms, and composition is induced bilinearly from the composition of bordisms. We extend the unitary and symmetric monoidal structures of $\BordX$ (anti-)linearly to $\CBUpre$. The pre-baby universe category $\CBUpre$ is the multi-object version of the surface algebras of \cite{Colafranceschi:2023urj}, in that the endomorphism algebra of an object $B \in \CBUpre$ is precisely the (left) surface algebra of $B$ considered in \cite[Section 3.1]{Colafranceschi:2023urj}.

The unitary monoidal category $\CBUpre$ admits a canonical trace, defined on bordisms by
\begin{equation}\label{eq:trace_on_CBUpre}
    \mathrm{tr}(N) = \zeta\big(\tr_\Bord(N)\big),
\end{equation}
and extended linearly, where we recall from Section \ref{sec:traces_and_fermions} that $\mathrm{tr}_\Bord(N) = \str_\Bord(\theta_B N)$ denotes the closed $\mathcal{X}$-manifold obtained from a bordism $N : B \to B$ by gluing its outgoing boundary to its incoming boundary after first acting with the twist $\theta_B$. The trace automatically satisfies many desired properties, such as
\begin{equation}
        \tr(\mathcal{O}_1 \circ \mathcal{O}_2) = \tr(\mathcal{O}_2 \circ \mathcal{O}_1), \quad \tr(\mathcal{O}^\dagger) = \overline{\tr(\mathcal{O})}, \quad \tr(\mathcal{O}_1 \sqcup \mathcal{O}_2) = \tr(\mathcal{O}_1) \tr(\mathcal{O}_2),
\end{equation}
which follow from naturality of $\theta_B$, reality, and multiplicativity, respectively. Moreover, the trace is normalized, in that $\tr(\varnothing) = 1$.

It will frequently be useful to convert a morphism $\mathcal{O} : A \to B$ in $\CBUpre$ into a morphism $\varnothing \to B \sqcup \overline{A}$. We do this by using the unitary structure, exactly as in $\BordX$, extending \eqref{eq:ket_of_bord} from Section \ref{sec:traces_and_fermions} linearly. Further, as discussed in Section \ref{sec:traces_and_fermions} and proven in Proposition \ref{prop:trace_pairing_is_inner_product}, the trace pairings and inner products agree in $\BordX$. By applying $\zeta$, we learn that the same holds in $\CBUpre$:
\begin{equation}\label{eq:trace_equals_inner_prod}
    \tr(\mathcal{O}_1^\dagger \circ \mathcal{O}_2) = \braket{\mathcal{O}_1 | \mathcal{O}_2}.
\end{equation}
By reflection positivity, we learn that the trace on $\CBUpre$ is positive. In fact, our entire construction depends on $\zeta$ only through the induced trace \eqref{eq:trace_on_CBUpre}, viewed as a positive tracial state on the $\dagger$-category $\CBUpre$.

By extending the universal construction linearly, we obtain a Hilbert space representation
\begin{equation}\label{eq:univ_const_extended_linearly}
    \CBUpre \to \Hilb,
\end{equation}
where we recall that a Hilbert space representation of a linear $\dagger$-category $\mathcal{C}$ is a linear $\dagger$-functor $\mathcal{C} \to \Hilb$. However, this representation is too coarse: it only remembers how to compose a bordism $N : B_1 \to B_2$ with bordisms $M : \varnothing \to B_1$, as these are the only bordisms which prepare states in the universal Hilbert space $\mathcal{H}_{B_1}$. For instance, if there are no bordisms $M : \varnothing \to B_1$ at all, so that $[B_1]$ is nontrivial in the bordism group $\Omega_{d-1}^\mathcal{X}$, then the operator $\widehat{N} : \mathcal{H}_{B_1} \to \mathcal{H}_{B_2}$ is simply zero.

Applying Grothendieck's relative point of view, we should instead consider how $N$ acts on states $\ket{M} \in \mathcal{H}_{B_1 \sqcup \overline{A}}$ prepared by bordisms $M : A \to B_1$, for arbitrary $A$. To describe this action, let us introduce the following notation for the \textit{partial composition} of morphisms $\CO : B_1 \to B_2$ with kets $\ket{\CO'}$ associated to morphisms $\CO':A \to B_1$:
\begin{equation}\label{eq:partial_composition_left}
    \CO \circ_{B_1} \ket{\CO'} \defined \ket{\CO \circ \CO'}.
\end{equation}
Partial composition allows us to construct a family of \textit{$\overline{A}$-twisted representations},
\begin{equation}\label{eq:A_twisted_rep}
    \CBUpre \to \Hilb, \quad B \mapsto \mathcal{H}_{B \sqcup \overline{A}}, \quad \CO \mapsto \widehat{\CO}_{\overline{A}},
\end{equation}
where $\widehat{\CO}_{\overline{A}}$ acts by partial composition as in \eqref{eq:partial_composition_left}, leaving the boundary $\overline{A}$ fixed. As for the universal construction, the operator $\widehat{\CO}_{\overline{A}}$ is well-defined and bounded by the trace inequality (Proposition \ref{prop:trace_inequality}), which provides the bound (Corollary \ref{cor:operator_norm_bound}),
\begin{equation}\label{eq:norm_bound_by_Hilb_norm}
    \lvert\lvert\widehat{\mathcal{O}}_{\overline{A}}\rvert\rvert \leq \sqrt{\braket{\mathcal{O} | \mathcal{O}}},
\end{equation}
uniformly in $A$.

We define the \textit{operator norm} of a morphism $\mathcal{O}$ in $\CBUpre$ to be,
\begin{equation}\label{eq:operator_norm}
    \lvert \lvert \mathcal{O} \rvert \rvert \defined \sup_A\ \lvert\lvert\widehat{\mathcal{O}}_{\overline{A}}\rvert\rvert,
\end{equation}
the supremum of the operator norms of $\mathcal{O}_{\overline{A}}$ over all $A$. This operator norm, arising from a family of Hilbert space representations, automatically satisfies the C*-axioms, and so $\CBUpre$ is a pre-C*-category in the sense of \cite{bunke2020additive}.

As $\CBUpre$ is equipped with a unitary structure, and in particular a $\dagger$-structure, we should also consider the adjoint version of Grothendieck's relative point of view. In this adjoint version, we view a state in $\CH_{B \sqcup \overline{A}}$, not as a generalization of a ket vector $\ket{M}$ prepared by a bordism $M : \varnothing \to B$, but as a generalization of a bra vector $\bra{M'}$ corresponding to a bordism $M' : A \to \varnothing$. Thus, for composable bordisms $\overline{\CO} : \overline{A}_1 \to \overline{A}_2$ and $\CO' : A_1 \to B$, we introduce the conjugate partial composition,
\begin{equation}\label{eq:partial_composition_right}
    \overline{\CO} \circ_{\overline{A_1}} \ket{\CO'} \defined \ket{\CO' \circ \CO^\dagger},
\end{equation}
as justified by Proposition \ref{prop:dinatuality_of_kets}. We then define the $B$-twisted representation of the complex conjugate category $\overline{\CBUpre}$ by,
\begin{equation}
    \overline{\CBUpre} \to \Hilb, \quad \overline{A} \mapsto \mathcal{H}_{B \sqcup \overline{A}}, \quad \overline{\CO} \mapsto \widehat{\overline{\CO}}_B
\end{equation}
where $\widehat{\overline{\CO}}_B$ acts by conjugate partial composition as in \eqref{eq:partial_composition_right}, with $B$ held fixed. As with the $\overline{A}$-twisted representations of $\CBUpre$, Corollary \ref{cor:operator_norm_bound} gives a uniform bound on the analogous operator norms in the $B$-twisted representations of $\overline{\CBUpre}$.

Because of the $\dagger$-structure, the $\overline{A}$-twisted representations the $B$-twisted representations are not independent: the operation of taking a bordism $M : A \to B$ to its adjoint $M^\dagger : B \to A$ defines a \textit{modular conjugation} map,
\begin{equation}\label{eq:modular_conjugation}
    \widehat{J} : \mathcal{H}_{B \sqcup \overline{A}} \to \mathcal{H}_{A \sqcup \overline{B}}, \quad \ket{M} \mapsto \ket{M^\dagger},
\end{equation}
which we prove is antiunitary in Corollary \ref{corr:Hermitian_and_dagger_is_antiunitary}. Equivalently, modular conjugation is simply the $\dagger$-structure on $\CBUHS$. As we prove in Corollary \ref{cor:modular_conjugation}, modular conjugation exchanges the action of a morphism $\mathcal{O} : B_1 \to B_2$ in the $\overline{A}$-twisted representation of $\CBUpre$ with the action of the conjugate morphism $\overline{\mathcal{O}} : \overline{B}_1 \to \overline{B}_2$ in the $A$-twisted representation of $\overline{\CBUpre}$. As a consequence, we may simply speak of ``the operator norm,'' as the operator norms defined by composition on either side agree.

The actions of morphisms $\mathcal{O} : B_1 \to B_2$ and $\mathcal{O}' : \overline{A}_1 \to \overline{A}_2$ on the two-boundary Hilbert space $\CH_{B \sqcup \overline{A}}$ commute. Thus, the $\overline{A}$-twisted representations of $\CBUpre$ and the $B$-twisted representations of $\overline{\CBUpre}$ combine to a representation,
\begin{equation}\label{eq:regular_rep}
    \CBUpre \otimes \overline{\CBUpre} \to \Hilb, \quad (B, A) \mapsto \mathcal{H}_{B \sqcup \overline{A}},
\end{equation}
of the tensor product category $\CBUpre \otimes \overline{\CBUpre}$, as depicted in Figure \ref{fig:2bdy_operators}. If we only consider diagonal sectors $\mathcal{H}_{B \sqcup \overline{B}}$, then the actions of $\CBUpre$ and $\overline{\CBUpre}$ on \eqref{eq:regular_rep} are simply the actions of the surface algebras on the left and right boundaries considered in \cite{Colafranceschi:2023urj, Marolf:2024adj}. The representation \eqref{eq:regular_rep} is obtained by completing the regular action of $\CBUpre$ on itself in the Hilbert space topology derived from $\zeta$, so we will call it the \textit{regular representation} (\textit{induced by $\zeta$}, when we need to make the dependence on $\zeta$ explicit).

\begin{figure}
    \centering
    \includegraphics[width=0.4\linewidth]{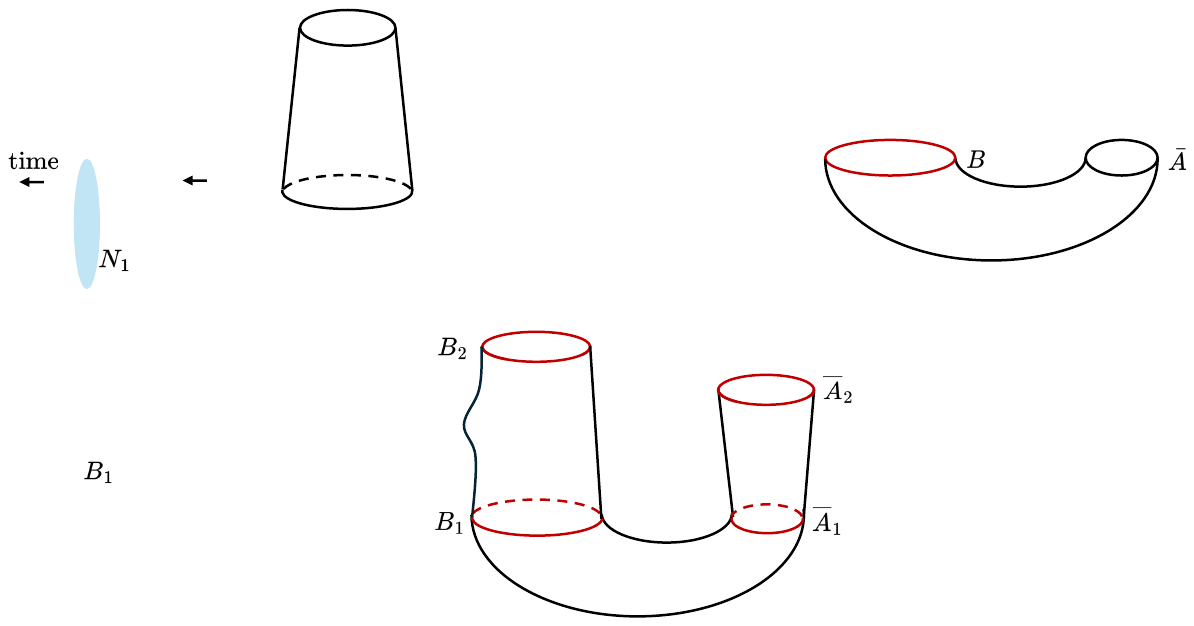}
    \caption{The two-boundary Hilbert space $\CH_{B_1\sqcup\overline A_1}$ is the representation space for the regular representation. Morphisms $B_1 \to B_2$ in $\CBUpre$ act on the left boundary $B_1$, and morphisms $\overline{A}_1 \to \overline{A}_2$ act on the right boundary $\overline{A}_1$, in such a way that the two actions commute. Modular conjugation $\widehat{J}$ acts by reflecting the whole figure horizontally.}
    \label{fig:2bdy_operators}
\end{figure}

Because of null states, the regular representation is not faithful, and so the operator norm \eqref{eq:operator_norm} is, strictly speaking, merely a seminorm. Let us say that a morphism $\mathcal{O} : B_1 \to B_2$ in $\CBUpre$ is \textit{null} if the operators $\widehat{\mathcal{O}}_{\overline{A}}$ vanish for all $A$, or equivalently, if $\lvert\lvert \mathcal{O} \rvert \rvert = 0$. The subspaces of null morphisms in $\CBUpre$ form an ideal of morphisms $\mathcal{N} \subset \CBUpre$, and we may form the quotient category $\CBUpre/\mathcal{N}$, defined by taking the quotient of each morphism space by the linear subspace of null morphisms. By definition, the regular representation \eqref{eq:regular_rep} descends to a faithful representation of $\CBUpre/\mathcal{N}$.

In fact, we claim that a morphism $\mathcal{O} : B_1 \to B_2$ is null if and only if the corresponding state $\ket{\mathcal{O}}$ is a null state. Certainly, if $\ket{\mathcal{O}}$ is null as a state, then we have $\lvert\lvert\mathcal{O}\rvert\rvert = 0$ by \eqref{eq:norm_bound_by_Hilb_norm}, so $\mathcal{O}$ is null as a morphism. Conversely, suppose $\mathcal{O}$ is null as a morphism. Consider the states
\begin{equation}
    \widehat{\mathcal{O}}_{\overline{B}_1} \ket{C_{B_1}(\beta)} = \ket{\mathcal{O} \circ C_{B_1}(\beta)} = C_{\overline{B}_1}(\beta) \circ_{\overline{B}_1} \ket{\mathcal{O}},
\end{equation}
in $\mathcal{H}_{B_2 \sqcup \overline{B}_1}$. Since $\mathcal{O}$ is null as a morphism, the state $\widehat{\mathcal{O}}_{B_1} \ket{C_{B_1}(\beta)}$ must vanish. But by Proposition \ref{prop:approx_identities_in_two_sided_reps}, the cylinder operators converge to the identity in the strong operator topology as $\beta \to 0$, so the states $C_{\overline{B}_1}(\beta) \circ_{\overline{B}_1} \ket{\mathcal{O}}$ converge to $\ket{\mathcal{O}}$. Thus, $\ket{\mathcal{O}}$ is null, as claimed.

\subsubsection{Simplifications in the topological case}

Momentarily, we will complete $\CBUpre/\mathcal{N}$ by taking a double commutant. However, we first note a tremendous simplification in the topological case. This simplification arises from the general bound,
\begin{equation}\label{eq:thermal_bound_for_top_simplification}
    \mathrm{tr}_{\mathcal{H}_B} (e^{- \beta \widehat{H}_B}) \leq \zeta(S^1_\beta \times B),
\end{equation}
proven in Proposition \ref{prop:thermal_bound} and discussed previously in Section \ref{sec:ER=EPR}. In the topological case, \eqref{eq:thermal_bound_for_top_simplification} reduces to,
\begin{equation}\label{eq:TQFT_finite_dim}
    \dim(\mathcal{H}_B) \leq \zeta(S^1 \times B) < \infty.
\end{equation}
Thus, the universal Hilbert spaces of a topological partition function are finite-dimensional. This is not obvious from the start, because there are frequently infinitely many diffeomorphism classes of smooth manifolds with any fixed boundary.

As we have seen, a morphism $\mathcal{O}$ in $\CBUpre$ is null if and only if the corresponding state $\ket{\CO}$ is null. Thus, the morphism spaces in $\CBUpre/\mathcal{N}$ are also finite dimensional, and so $\CBUpre/\mathcal{N}$ needs no operator algebraic completion (or, rather, any completion would just give us $\CBUpre/\mathcal{N}$ back). So we are free to work with $\CBUpre/\mathcal{N}$ on its own, a C*-category with finite dimensional morphism spaces. Readers who are only interested in the topological case may freely skip to Section \ref{sec:idempotent_completion}, and replace the W*-Cauchy completion discussed there with the algebraic semisimple completion.

\subsection{Taking a double commutant}

Following \cite{Colafranceschi:2023urj,Marolf:2024adj}, we now form an operator-algebraic completion of the pre-C*-category $\CBUpre$ by taking a double commutant.

\begin{definition}\label{defn:CBU_vN}
    The \textit{left von Neumann completion} $\CBUvNL$ of $\CBUpre$ is the double commutant W*-category associated to the action of $\CBUpre$ on the regular representation \eqref{eq:regular_rep}. Analogously, the \textit{right von Neumann completion} $\CBUvNR$ is the double commutant W*-category associated to the action of $\overline{\CBUpre}$ on the regular representation.\footnote{To stave off a possible confusion, note that the action of $\overline{\CBUpre}$ on the regular representation, whose double commutant defines the right von Neumann completion $\CBUvNR$, is nevertheless still a \textit{left} action, though it can equivalently be viewed as a right action of $\CBUpre$ using the $\dagger$-structure.}
\end{definition}

\noindent While we have defined both a left and right completion of $\CBUpre$, modular conjugation \eqref{eq:modular_conjugation} induces a canonical anti-unitary spatial equivalence of categories,
\begin{equation}
    \CBUvNR = \overline{\CBUvNL}.
\end{equation}
By convention, we will use the un-adorned notation $\CBUvN$ below to denote the left von Neumann completion $\CBUvNL$, and simply refer to it as ``the von Neumann completion.''

Let us spell out the definition of the double commutant W*-category, as defined in \cite{ghez1985w, bunke2020additive}.\footnote{Technically, our definition of the double commutant is a slight generalization of the definition given in \cite{ghez1985w, bunke2020additive}, as they define the double commutant in a one-sided representation. Our definition of $\CBUvN$ coincides with the double commutant of \cite{ghez1985w,bunke2020additive} in the (enormous) direct-sum representation $B \mapsto \bigoplus_{A} \mathcal{H}_{B \sqcup \overline{A}}$, while the single commutant of \cite{ghez1985w,bunke2020additive} corresponds to the linking algebra \cite{ghez1985w} of our single commutant.} First of all, the \textit{commutant} $(\CBUpre)'$ of the left action of $\CBUpre$ on the regular representation is the category whose objects are the same as $\overline{\CBUpre}$, and whose morphisms $\overline{T} : \overline{A}_1 \to \overline{A}_2$ are uniformly bounded natural transformations from the $\overline{A}_1$-twisted representation of $\CBUpre$ to the $\overline{A}_2$-twisted representation. More explicitly, a morphism $\overline{T} : \overline{A}_1 \to \overline{A}_2$ in $(\CBUpre)'$ is a family of operators,
\begin{equation}
    \widehat{\overline{T}}_B : \mathcal{H}_{B \sqcup \overline{A}_1} \to \mathcal{H}_{B \sqcup \overline{A}_2},
\end{equation}
parametrized by $B \in \CBUpre$, which commute with the left action of $\CBUpre$ in that
\begin{equation}\label{eq:commutation_relation}
    \widehat{\overline{T}}_{B_2} \widehat{\mathcal{O}}_{\overline{A}_1} = \widehat{\mathcal{O}}_{\overline{A}_2} \widehat{T}_{B_1},
\end{equation}
for all morphisms $\mathcal{O} : B_1 \to B_2$ in $\CBUpre$. We have a canonical inclusion,
\begin{equation}
    \overline{\CBUpre} \to (\CBUpre)',
\end{equation}
due to the uniform bound \eqref{eq:norm_bound_by_Hilb_norm}, whose kernel is precisely $\overline{\mathcal{N}}$.

Iterating, the \textit{double commutant} $(\CBUpre)''$ is the category whose objects are the same as $\CBUpre$, and whose morphisms $\mathcal{O} : B_1 \to B_2$ are uniformly bounded natural transformations from the $B_1$-twisted representation of $(\CBUpre)'$ to the $B_2$-twisted representation. Concretely, a morphism $\mathcal{O} : B_1 \to B_2$ in $(\CBUpre)''$ is a family of operators
\begin{equation}
    \widehat{O}_{\overline{A}} : \mathcal{H}_{B_1 \sqcup \overline{A}} \to \mathcal{H}_{B_2 \sqcup \overline{A}},
\end{equation}
which commute with all morphisms $\overline{T}$ in $(\CBUpre)'$ as in \eqref{eq:commutation_relation}. Analogously, there is a canonical inclusion,
\begin{equation}\label{eq:faithful_inclusion}
    \CBUpre \to (\CBUpre)'',
\end{equation}
whose kernel is $\mathcal{N}$. The double commutant $(\CBUpre)''$ (even the single commutant $(\CBUpre)'$) is automatically a W*-category \cite{ghez1985w}. Note that while $\CBUpre$ is a non-unital category, the (single or double) commutant is automatically unital, as identity operators commute with everything.

Our definition of $\CBUvN$ as a double commutant, while useful for establishing formal properties, is far from illuminating. To build a better understanding of $\CBUvN$, note that the image of $\CBUpre$ in $\CBUvN$ under the canonical map \eqref{eq:faithful_inclusion} is dense in the weak operator topology. This follows from the W*-categorical double commutant theorem \cite[Theorem 4.2]{ghez1985w} combined with the non-degeneracy of the regular representation (as proven in Proposition \ref{prop:approx_identities_in_two_sided_reps}). Thus, $\CBUvN$ has the following, more physical, description: its objects are closed $(d-1)$-dimensional $\mathcal{X}$-manifolds, just as in $\BordX$, and its morphisms are weak operator limits of superpositions of $\mathcal{X}$-bordisms. Taking weak operator limits automatically factors through the quotient by the ideal $\mathcal{N}$ of null morphisms, as $\mathcal{N}$ is the closure of the origins in the morphism spaces in $\CBUpre$ with respect to the weak operator topology induced by the regular representation.

As in any W*-category, the endomorphism algebras
\begin{equation}
    \mathcal{A}_B \defined \CBUvN(B \to B),
\end{equation}
in $\CBUvN$ are von Neumann algebras. In fact, these von Neumann algebras are precisely the left von Neumann algebras $\mathcal{A}_B^L$ studied in \cite{Colafranceschi:2023urj,Marolf:2024adj}.\footnote{The endomorphism algebra of the conjugate object $\overline{B} \in \overline{\CBUvN} = \CBUvNR$ is the corresponding right von Neumann algebra $\mathcal{A}_B^R$.} To see this, note that \cite{Marolf:2024adj} proved that the representations of the surface algebras (the endomorphism algebras in $\CBUpre$) on the off-diagonal sectors $\mathcal{H}_{B \sqcup \overline{A}}$ extend to representations of the von Neumann algebras $\mathcal{A}_B^L$ defined via the representation on the diagonal sector $\mathcal{H}_{B \sqcup \overline{B}}$. This means that the weak operator topology induced on the surface algebras by the diagonal representation agrees with that induced by the entire regular representation, and so their completions agree as well.

By construction, the regular representation admits commuting actions of the left and right von Neumann completions $\CBUvNL$ and $\CBUvNR$, and thus defines a representation,
\begin{equation}\label{eq:standard_form_rep}
    \CBUvNL \otimes \CBUvNR \to \Hilb, \quad (B, A) \mapsto \mathcal{H}_{B \sqcup \overline{A}}.
\end{equation}
By Tomita's commutation theorem \cite{TakesakiI},\footnote{Technically, by applying Tomita's commutation theorem to the associated linking algebras \cite{ghez1985w}.} we have
\begin{equation}
    \widehat {J} \CBUvNL \widehat{J} = (\CBUvNL)',
\end{equation}
identifying the commutant of $\CBUvNL$ as its image $\widehat {J} \CBUvNL \widehat{J} = \CBUvNR = \overline{\CBUvNL}$ under modular conjugation. As a result, we learn that the regular representation is precisely the standard-form representation of the W*-category $\CBUvN$, in which we have faithful actions of $\CBUvN$ and $\overline{\CBUvN}$ which are each the commutant of the other.

In any W*-category, the standard-form representation plays a key role. As discussed in \cite{henriques2024completewcategories}, the functor,
\begin{equation}
    \CBUvN \otimes \overline{\CBUvN} \to \Hilb, \quad (B, A) \mapsto \mathcal{H}_{B \sqcup \overline{A}},
\end{equation}
mapping a pair of objects in the W*-category $\CBUvN$ to the standard-form representation is also known as the \textit{Hilbert space-valued inner product}, as it behaves in many ways as a categorical analog of an inner product. We will use the notation
\begin{equation}\label{eq:categorical_wavefunction_overlep}
    \bbrakket{A | B} \defined \mathcal{H}_{B \sqcup \overline{A}},
\end{equation}
to denote this Hilbert space-valued inner product in order to recall the boundary-state formalism. We will also refer to the Hilbert space-valued inner product as the \textit{categorical wavefunction overlap}.

Let us note that the unitary structure on $\BordX$ endows the $\CBUvN$ with an additional piece of structure beyond its $\dagger$-structure, which is the anti-linear involution defined by
\begin{equation}\label{eq:bi_involutive_structure}
    \overline{(-)} : \CBUvN \to \overline{\CBUvN}, \quad B \mapsto \overline{B}.
\end{equation}
This involution, together with the $\dagger$-structure, equip $\CBUvN$ with a \textit{bi-involutive structure} in the sense of \cite{henriques2024completewcategories}.

\subsubsection{A bit of Tomita--Takesaki theory}\label{sec:Tomita_Takesaki}

Our construction of the von Neumann completion $\CBUvN$ is a natural generalization of a standard construction in Tomita--Takesaki theory \cite{TakesakiI,TakesakiII}. In this section, we spell out this analogy, and apply standard techniques from Tomita--Takesaki theory to quickly reproduce the key analytic results of \cite{Colafranceschi:2023urj} concerning the structure of the von Neumann endomorphism algebras $\mathcal{A}_B$ in $\CBUvN$.

In Tomita--Takesaki theory, one starts with what is known as a \textit{Hilbert algebra}: a pre-Hilbert space $\mathcal{H}^\mathrm{pre}$ equipped with a (typically non-unital) $*$-algebra structure satisfying a number of axioms. One then forms the Hilbert space completion $\mathcal{H}$ of the pre-Hilbert space $\mathcal{H}^\mathrm{pre}$, and considers the action of $\mathcal{H}^\mathrm{pre}$ on $\mathcal{H}$ induced by left multiplication. The \textit{left von Neumann algebra} $\mathcal{A}$ of the Hilbert algebra $\mathcal{H}^\mathrm{pre}$ is the double commutant,
\begin{equation}
    \mathcal{A} = (\mathcal{H}^\mathrm{pre})'',
\end{equation}
of this collection of left-multiplication operators. By writing the Tomita operator $\widehat{S} : a \mapsto a^\dagger$ in terms of its polar decomposition, $\widehat{S} = \widehat{J} \widehat{\Delta}^{1/2}$, one defines the modular conjugation operator $\widehat{J}$ and the operator $\widehat{\Delta}$ which implements modular flow. Finally, using modular conjugation, one proves that $\mathcal{H}$ is a standard form representation of the von Neumann algebra $\mathcal{A}$.

Though not so named, this is precisely the formalism used by \cite{Colafranceschi:2023urj} to construct von Neumann algebras $\mathcal{A}_B$ from the surface algebras $\CBUpre(B \to B)$.\footnote{ZW thanks Marius Junge for explaining this point.} Above, we have generalized the construction of \cite{Colafranceschi:2023urj} to the natural multi-object version. The pre-baby universe category $\CBUpre$ is an example of what might be called a \textit{Hilbert algebroid}, namely, a multi-object version of a Hilbert algebra. By analogy, then, the von Neumann completion $\CBUvN$ might be called the \textit{left von Neumann algebroid} of the Hilbert algebroid $\CBUpre$. To demystify the naming conventions, recall that a one-object linear category is entirely determined by a single algebra, namely the endomorphism algebra of its single object. An \textit{algebroid} is the multi-object version of an algebra: a linear category. Thus, ``von Neumann algebroid'' is simply a synonym for what we have been calling a W*-category.

The Hilbert algebroid $\CBUpre$, and its endomorphism Hilbert algebras \cite{Colafranceschi:2023urj}, are quite special among all possible Hilbert algebroids and algebras. In a general Hilbert algebra, one does not require the map $a \mapsto a^\dagger$ to be anti-unitary. If it is, as in the morphism spaces of $\CBUpre$, one obtains what is known as a \textit{unimodular Hilbert algebra}, which necessarily has trivial modular flow, $\widehat{\Delta} = 1$. Moreover, in any Hilbert algebra, the formula,
\begin{equation}
    \varphi(a^\dagger a) \defined \braket{a | a},
\end{equation}
on positive elements $a^\dagger a$ in the unimodular Hilbert algebra $\mathcal{H}^\mathrm{pre}$ extends to define a faithful normal semifinite (n.s.f.) weight on the left von Neumann algebra $\mathcal{A}$. When the Hilbert algebra is unimodular, the weight $\varphi$ is \textit{tracial}, $\varphi(a^\dagger a) = \varphi(a a^\dagger)$, and defines an n.s.f. trace on $\mathcal{A}$. This trace agrees precisely with the trace \eqref{eq:trace_on_CBUpre} on the unimodular Hilbert algebroid $\CBUpre$, which extends to the n.s.f. trace on $\CBUvN$ constructed and studied in \cite{Colafranceschi:2023urj}. The existence of an n.s.f. trace on $\CBUvN$ immediately implies that the von Neumann endomorphism algebras $\mathcal{A}_B$ are direct integrals of type I and II factors only.

However, we are not yet done, as the Hilbert algebroid $\CBUpre$ is even better than unimodular: the composition on $\CBUpre$ is not merely separately bounded, but jointly bounded in the Hilbert space norm, according to the fundamental trace inequality of \cite{Colafranceschi:2023urj} (Proposition \ref{prop:trace_inequality}), 
\begin{equation}
    \tr(\mathcal{O}_1^\dagger \circ \mathcal{O}_2^\dagger \circ \mathcal{O}_2 \circ \mathcal{O}_1) \leq \tr(\mathcal{O}_2^\dagger \circ \mathcal{O}_2) \tr(\mathcal{O}_1^\dagger \circ \mathcal{O}_1).
\end{equation}
Applied to orthogonal projectors $P : B \to B$ in $\CBUvN$ satisfying $P^2 = P$, $P^\dagger = P$, the trace inequality reduces to
\begin{equation}
    \tr(P) \leq \tr(P)^2,
\end{equation}
or equivalently,
\begin{equation}\label{eq:lower_bound_on_trace}
    \tr(P) \geq 1,
\end{equation}
for non-zero projectors by the faithfulness of the trace.

As a result, the W*-category $\CBUvN$ is \textit{atomic}, meaning every projector (even ones with infinite trace) is an orthogonal sum of minimal projectors. As a result, the von Neumann endomorphism algebras $\mathcal{A}_B$ are a direct sum of type I factors, as shown in \cite{Colafranceschi:2023urj}. The atomicity of $\CBUvN$ will play an absolutely crucial role in all of our main results.\footnote{It also means that the full formalism of Tomita--Takesaki theory is overkill for $\CBUvN$. We include it here to prepare for future extensions of our framework to higher codimension.}

Atomic von Neumann algebras have a historically important alternative description. By taking the subspace of Hilbert--Schmidt operators inside an atomic von Neumann algebra $\mathcal{A}$, equipped with the Hilbert--Schmidt norm,
\begin{equation}
    \lvert \lvert a \rvert \rvert_2 \defined \sqrt{\tr(a^\dagger a)},
\end{equation}
one obtains what is known as an \textit{H*-algebra}, and conversely, every non-degenerate H*-algebra arises uniquely in this way \cite{Ambrose1945}. The Hilbert--Schmidt category $\mathcal{C}_\mathrm{BU}^\mathrm{HS}$, embedded in $\CBUvN$ via the bound \eqref{eq:norm_bound_by_Hilb_norm}, is the associated \textit{H*-category}, by which we mean a multi-object analog of an H*-algebra.

\subsection{Categorical Cauchy completion}\label{sec:idempotent_completion}

Above, we constructed a W*-category $\CBUvN$ by taking the double commutant of $\CBUpre$ in the regular representation, or equivalently, completing $\CBUpre$ in the corresponding weak operator topology. We are now prepared to define the baby universe category $\CBU$ itself. Having already taken a completion at the level of morphisms, we now need to take a completion at the level of objects.

The first step in completing the set of objects in $\CBUvN$ is very simple: we form the direct-sum completion $(\CBUvN)^\oplus$, whose objects are possibly-infinite formal orthogonal direct sums of objects in $\CBUvN$. More invariantly, $(\CBUvN)^\oplus$ has objects associated to any categorical wavefunction,
\begin{equation}\label{eq:categorical_superposition_of_projectors}
    \Psi = \bigoplus_i \mathcal{H}_i \otimes B_i, \quad \mathcal{H}_i \in \Hilb, \quad B_i \in \CBUvN.
\end{equation}
The tensor product $\mathcal{H} \otimes B$ should be understood as the result of stacking a decoupled quantum system, with Hilbert space $\mathcal{H}$, on top of whatever quantum system $B$ represents. The choice of an orthonormal basis for $\mathcal{H}$ induces an isomorphism,
\begin{equation}
    \mathcal{H} \otimes X \cong X^{\oplus \mathrm{dim}(\mathcal{H})},
\end{equation}
so $\mathcal{H} \otimes B$ can always be written as formal direct sums of objects in $\CBUvN$. We extend the bi-involutive structure $B \mapsto \overline{B}$ on $\CBUvN$ to $(\CBUvN)^\oplus$ by anti-linearity over $\Hilb$; for instance, we set $\overline{\mathcal{H} \otimes B} = \overline{\mathcal{H}} \otimes \overline{B}$.

The passage from $\CBUvN$ to its direct-sum completion $(\CBUvN)^\oplus$ should be viewed as analogous to forming the free $\mathbb{C}$-vector spaces $\mathcal{H}_B^\mathrm{pre}$ from the spaces of geometric bordisms $\BordX(\varnothing \to B)$ in the universal construction. In both cases, we allow ourselves to form arbitrary linear superpositions of geometric manifolds, with the only differences being the dimensions of the manifolds and the category level of the coefficients. The final step in our construction of $\CBU$ will be analogous, then, to the Cauchy completion and quotient by null states which produces the universal Hilbert spaces $\mathcal{H}_B$ from the universal pre-Hilbert spaces $\mathcal{H}_B^\mathrm{pre}$.

As motivation, let us recall the gravitational interpretation of the objects in $\CBUvN$ given by closed $(d-1)$-dimensional $\mathcal{X}$-manifolds. Given two manifolds $A, B \in \CBUvN$, the categorical wavefunction overlap $\bbrakket{A|B} = \mathcal{H}_{B \sqcup \overline{A}}$ is the Hilbert space of ER bridges from $A$ to $B$, possibly including the trivial sort of ER bridge where $A$ and $B$ are not connected. For any fixed ER bridge, there may be some sort of unsplittable gauge flux flowing along the bridge from $A$ to $B$, which may or may not be visible in the EFT description.

Merely fixing the boundary conditions $A, B$ at the boundary of an ER bridge does not necessarily fully fix the flux flowing through the bridge. Moreover, any fixed flux may appear with some large multiplicity. We would like to determine the irreducible unsplittable fluxes which may flow through an ER bridge and obstruct ER = EPR. Thus, our goal now is to break the objects $A, B \in \CBUvN$ into their constituent irreducible flux sectors.

We begin by recalling the categorical notion of the image of a projector. If $X \in \mathcal{C}$ is an object of some category, and $P : X \to X$ is a projector, $P^2 = P$, we say the \textit{image of $P$}, if it exists, is an object $\mathrm{im}(P) \in \mathcal{C}$ equipped with inclusion and projection morphisms,
\begin{equation}
    i:\mathrm{im}(P) \to X, \quad \pi : X \to \mathrm{im}(P),
\end{equation}
such that $i \circ \pi = P$ and $\pi \circ i = \mathrm{id}_{\mathrm{im}(P)}$. When an image $\mathrm{im}(P)$ exists, it is unique up to unique isomorphism. In a W*-category, we typically assume our projectors are orthogonal, $P^\dagger = P$, and demand that the inclusion and projection are related by $i^\dagger = \pi$. Given an orthogonal projector $P : X \to X$ in a W*-category, we obtain an orthogonal direct sum decomposition,
\begin{equation}
    X \cong \mathrm{im}(P) \oplus \mathrm{im}(1-P),
\end{equation}
whenever both $P$ and $1-P$ have images.

However, the existence of images for projectors is far from guaranteed. Let us illustrate this with an example.

\begin{example}
    Consider topological quantum mechanics as discussed in Section \ref{sec:Ex_TopoQM}, specified by $\zeta(S^1) = k \in \mathbb{Z}_{\geq 0}$. Consider the morphism $P : \mathrm{pt}^+ \sqcup \mathrm{pt}^- \to \mathrm{pt}^+ \sqcup \mathrm{pt}^-$ in $\CBUvN$ given by,
    \begin{equation}
        P=\frac{1}{k}\includegraphics[height=2cm,valign=c]{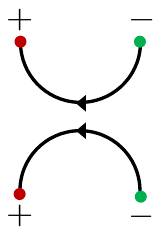}
    \end{equation}
    By construction, we have $P^\dagger = P$, and moreover we have $P^2 = P$ in $\CBUvN$, as a closed circle may be replaced by the number $k$.
    
    Now, $P$ happens to have an image in $\CBUvN$, given by the empty 0-manifold $\varnothing$, as can be seen visually by the fact that $P$ may be split into two pieces by a horizontal cut. The inclusion and projection morphisms are given by,
    \begin{equation}
        i= \frac{1}{\sqrt{k}}\,\includegraphics[height=1cm,valign=c]{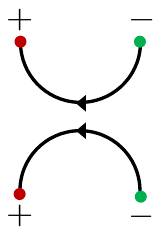}, \qquad \pi = i^\dagger = \frac{1}{\sqrt{k}}\, \includegraphics[height=1cm,valign=c]{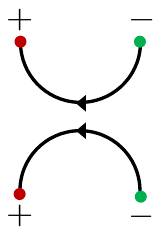},
    \end{equation}
    which satisfy $i\circ \pi = P$ and $\pi \circ i=\mathrm{id}_\varnothing$ as required.
    Thus, we would like to conclude that
    \begin{equation}\label{eq:decomp_adjoint}
        \mathrm{pt}^+ \sqcup \mathrm{pt}^- \stackrel{?}{=} \varnothing \oplus \mathrm{im}(1-P),
    \end{equation}
    as objects in $\CBUvN$.
    
    However, while $P$ has an image in $\CBUvN$, the complementary projector $1-P$ does not. We can see this by passing to the known underlying TQM, whose Hilbert space on $\mathrm{pt}^+$ is $\mathbb{C}^k$, or as a representation of $U(k)$, the fundamental $\Box$. The decomposition \eqref{eq:decomp_adjoint} wants to be the decomposition,
    \begin{equation}
        \Box \otimes \overline{\Box} = \mathds{1} \oplus \mathrm{Ad}_{\mathfrak{su}(k)},
    \end{equation}
    of $U(k)$ representations. But, while we do have a manifold $\varnothing$ corresponding to the trivial rep $\mathds{1}$, we do not have any manifold corresponding to $\mathrm{Ad}_{\mathfrak{su}(k)}$. Thus, we are not able to break $\mathrm{pt}^+ \sqcup \mathrm{pt}^-$ into a direct sum of manifolds in $\CBUvN$.
\end{example}

To remedy the failure of $\CBUvN$ to admit images for all projectors, we pass to what is known as the \textit{idempotent completion} of $\CBUvN$, otherwise known as the \textit{Karoubi envelope}.\footnote{See \cite{Binder:2019zqc} for a physicist-friendly overview of idempotent completion.} The idempotent completion of $\CBUvN$ is the category $\hat{\mathcal{C}}_\mathrm{BU}^\mathrm{vN}$ obtained by adjoining formal images $\mathrm{im}(P)$ of all orthogonal projectors $P$. For a pair of orthogonal projectors $P_1, P_2$ on objects $B_1, B_2$ respectively, we define the spaces of morphisms in the idempotent completion by,
\begin{equation}
    \hat{\mathcal{C}}_\mathrm{BU}^\mathrm{vN}\big(\mathrm{im}(P_1) \to \mathrm{im}(P_2)\big) = P_2 \circ \CBUvN(B_1 \to B_2) \circ P_1.
\end{equation}
Thus, a morphism $\mathrm{im}(P_1) \to \mathrm{im}(P_2)$ is a morphism $\mathcal{O} : B_1 \to B_2$ such that
\begin{equation}
    P_2 \circ \mathcal{O} = \mathcal{O} = \mathcal{O} \circ P_1.
\end{equation}
For example, the identity morphism of $\mathrm{im}(P)$ is the projector $P$ itself.

We have a fully faithful embedding $\CBUvN \to \hat{\mathcal{C}}_\mathrm{BU}^\mathrm{vN}$ mapping $B \mapsto \mathrm{im}(\mathrm{id}_B)$, which we will simply denote by $B$. One can easily check that $\mathrm{im}(P)$ actually is an image of $P$ in $\hat{\mathcal{C}}_\mathrm{BU}^\mathrm{vN}$, with inclusion and projection maps both given by $P$ itself. By the uniqueness of images, when they exist, any images of projectors that already existed in $\CBUvN$ will be canonically isomorphic to the new formal images added in the idempotent completion. Finally, we note that $\hat{\mathcal{C}}_\mathrm{BU}^\mathrm{vN}$ inherits a bi-involutive structure from $\CBUvN$, defined by $\overline{\mathrm{im}(P)} = \mathrm{im}(\overline{P})$.

What is the interpretation of the object $\mathrm{im}(P) \in \hat{\mathcal{C}}_\mathrm{BU}^\mathrm{vN}$ obtained as the formal image of a projector $P$ on $A$? To answer this, note that the categorical wavefunction overlap,
\begin{equation}
    \bbrakket{\mathrm{im}(P) | B} \subset \bbrakket{A | B} = \mathcal{H}_{B \sqcup \overline{A}},
\end{equation}
is simply the image of the conjugate projector $\overline{P}$ acting on the component $\overline{A}$ via the regular representation. Thus, an ER bridge from $\mathrm{im}(P) \subset A$ to $B$ is a superposition of ER bridges from $A$ to $B$, which happens to lie in the image of the projector $P$ acting (by pre-composition) on the incoming boundary $A$. We will occasionally abuse notation by treating objects $X, Y$ in the idempotent completion as if they were manifolds, for instance writing $\mathcal{H}_{Y \sqcup \overline{X}}$ for $\bbrakket{X | Y}$, or writing $C_X(\beta)$ for the projection of a cylinder $C_B(\beta)$ to the image of a projector $X \subset B$.

Taken together, the process of taking the direct-sum completion and the idempotent completions, in either (equivalent) order, is known as \textit{W*-Cauchy completion} \cite{henriques2024completewcategories}. For any W*-category $\mathcal{C}$, we use,
\begin{equation}
    \Hilb(\CC) \defined (\hat{\CC})^{\oplus} = \widehat{\CC^{\oplus}},
\end{equation}
to denote the W*-Cauchy completion. We can now give a definition of the baby universe category itself.

\begin{definition}\label{defn:CBU}
    The baby universe category is the W*-Cauchy completion,
    \begin{equation}
        \CBU = \Hilb(\CBUvN),
    \end{equation}
    of the von Neumann completion $\CBUvN$ of the pre-baby universe category $\CBUpre$.
\end{definition}

As mentioned above, the process of W*-Cauchy completion, and in particular the idempotent completion, is a categorical analog of the Cauchy completion of a pre-Hilbert space. The categorical analog of the quotient by null states is the following. While the objects of $\CBUpre$ have a simple, geometric description as closed $(d-1)$-dimensional manifolds, the objects in $\CBU$ have no such simple description. In particular, while every object of $\CBU$ may be written as a categorical superposition of images of projectors on manifolds, this expression need not be unique. This is because there may be nontrivial isomorphisms between images of different projectors acting on different manifolds. Before taking the idempotent completion, the objects $B \in \CBUpre$ are linearly independent, but once we pass to $\CBU$ and expand them in terms of their component pieces, we may learn that they satisfy categorical linear relations.

Gravitationally, let us recall from Section \ref{sec:universal_construction} that $\mathcal{H}_B^\mathrm{pre}$ is a vector space of kinematic states, while $\mathcal{H}_B$ is a Hilbert space of physical states arising from projecting onto the solutions to the Hamiltonian constraint equation associated to bulk time evolution. Analogously, $\CBUpre$ is a category of kinematic objects, while $\CBU$ is a category of physical objects.\footnote{The objects of $\CBU$, given by categorical quantum states of bulk-codimension two slices, may be viewed as a sort of motive of spacetime. This is because the passage from, say, the category of smooth projective varieties to the category of pure Chow motives is also given by linearization and idempotent completion (see, e.g., \cite{milne2013motives}).} We view the nontrivial isomorphisms in $\CBU$ as a categorical version of the Hamiltonian constraint, arising from the fact that the categorical evolution of a codimension-two slice of a bulk spacetime along codimension-one ER bridges is pure gauge.

Let us describe another way of seeing that passing to the W*-Cauchy completion includes a categorical analog of taking the quotient by null states. First, note that two states $\ket{\psi_1}, \ket{\psi_2}$ are equivalent modulo null states if and only if they have the same wavefunction overlap with all states,
\begin{equation}\label{eq:same_mod_null_iff_same_overlap}
    \ket{\psi_1} \sim \ket{\psi_2} \iff \braket{\phi | \psi_1} = \braket{\phi | \psi_2},\ \forall \ket{\phi}.
\end{equation}

While the null state $\ket{\psi_2} - \ket{\psi_1}$ has no categorical analog, as we cannot subtract objects in a category, we do have an analog of \eqref{eq:same_mod_null_iff_same_overlap}. Let $\Psi_1, \Psi_2 \in \CBU$ be any objects. Then we have,
\begin{equation}\label{eq:W*_Yoneda_as_mod_null}
    \Psi_1 \cong \Psi_2 \iff \bbrakket{\Phi | \Psi_1} \cong \bbrakket{\Phi | \Psi_2},\ \forall \Phi \in \CBU.
\end{equation}
In fact, \eqref{eq:W*_Yoneda_as_mod_null} does not merely describe a logical implication, but an isomorphism between the space of unitary isomorphisms $\Psi_1 \cong \Psi_2$ in $\CBU$ and the space of unitary natural isomorphisms $\bbrakket{-|\Psi_1} \cong \bbrakket{-|\Psi_2}$ between the functors $\bbrakket{-|\Psi_i} : \overline{\CBU} \to \Hilb$ defined by taking a categorical wavefunction overlap.

More formally, \eqref{eq:W*_Yoneda_as_mod_null} is the \textit{W*-categorical Yoneda Lemma} \cite[Prop. 4.12]{henriques2024completewcategories}, which asserts that the W*-categorical Yoneda embedding,
\begin{equation}\label{eq:W*_Yoneda}
    \CC \to \mathrm{Fun}_\text{W*}(\overline{\CC}, \Hilb), \quad \Psi \mapsto \bbrakket{- | \Psi},
\end{equation}
is fully faithful, for any W*-category $\CC$. Moreover, the second part of \cite[Prop. 4.12]{henriques2024completewcategories} asserts further that \eqref{eq:W*_Yoneda} presents $\mathrm{Fun}_\text{W*}(\overline{\CC}, \Hilb)$ as the W*-Cauchy completion $\Hilb(\CC)$. Thus, in the sense of \eqref{eq:W*_Yoneda_as_mod_null}, passing from $\CC$ to $\Hilb(\CC)$ involves a categorical quotient by null states.

\subsubsection{Decomposition into $\mu$-sectors}

Our goal, now, is to describe the structure of the Cauchy complete W*-category $\CBU$. This structure is entirely determined by the collection of projectors in the von Neumann endomorphism algebras $\mathcal{A}_B$, as studied in \cite{Colafranceschi:2023urj}. As we saw in Section \ref{sec:Tomita_Takesaki}, the trace inequality implies that the von Neumann algebras $\mathcal{A}_B$ are atomic, so that every projector is an orthogonal direct sum of minimal projectors. This translates into a sort of infinitary semisimplicity for $\CBU$: every object $X \in \CBU$, including those associated to manifolds $B$, decomposes as a possibly-infinite orthogonal direct sum of simple objects.\footnote{Equivalently, we have a canonical equivalence $\CBU = \mathrm{Rep}(\mathcal{Z}(\CBU))$, where the commutative atomic von Neumann algebra $\mathcal{Z}(\CBU)$ is the W*-algebraic center of the W*-category $\CBU$, i.e., the center of its linking algebra. The algebra $\mathcal{Z}(\CBU)$ was called the universal central algebra in \cite{Colafranceschi:2023urj,Marolf:2024adj}.}

These simple objects represent the irreducible flux sectors which may flow through an ER bridge, and correspond precisely to the $\mu$-sectors of \cite{Colafranceschi:2023urj,Marolf:2024adj}. Thus, we make the following definition.

\begin{definition}
    A \textit{$\mu$-sector} is a simple object in the baby universe category $\CBU$.
\end{definition}

\noindent Each $\mu$-sector may be realized as the formal image of some minimal projector $P \in \mathcal{A}_B$ acting on some manifold $B$, but this representation need not be unique, as discussed above, due to the existence of nontrivial isomorphisms in $\CBU$. For each $\mu$-sector, we define the conjugate sector $\overline{\mu}$ via the bi-involutive structure inherited from $\CBUvN$ (and ultimately from $\BordX$).

The atomic nature of $\CBU$ allows us to do a straightforward sort of categorical quantum mechanics in the complete W*-category $\CBU$, as described in general in \cite{henriques2024completewcategories}. The set of $\mu$-sectors forms a categorical orthonormal basis for $\CBU$, in that we have
\begin{equation}
    \bbrakket{\mu | \nu } = \begin{cases} \mathbb{C}, & \mu \cong \nu, \\ 0, & \mu \ncong \nu. \end{cases}
\end{equation}
Moreover, we have the following resolution of the identity functor,
\begin{equation}\label{eq:resolution_of_the_identity}
    \mathrm{id}_{\CBU} = \bigoplus_\mu \kket{\mu} \bbra{\mu},
\end{equation}
where the summands $\kket{\mu} \bbra{\mu}$ denote the compositions of the functors $\bbra{\mu} : \CBU \to \Hilb$, given by taking the Hilbert-space valued inner product with $\mu$, with the functors $\kket{\mu} : \Hilb \to \CBU$, mapping $\mathcal{H} \mapsto \mathcal{H} \otimes \mu$.

Acting with the resolution of the identity \eqref{eq:resolution_of_the_identity} on the objects $B \in \CBU$ prepared by manifolds, we obtain a direct-sum decomposition,
\begin{equation}\label{eq:categorical_wavefunction_of_B}
    B = \bigoplus_\mu \kket{\mu} \bbrakket{\mu | B} = \bigoplus_\mu \mathcal{H}_B^\mu \otimes \mu,
\end{equation}
where the \textit{$\mu$-sector Hilbert spaces} $\mathcal{H}_B^\mu$, as studied in \cite{Colafranceschi:2023urj,Marolf:2024adj}, are defined by,
\begin{equation}
    \mathcal{H}_B^\mu \defined \bbrakket{\mu | B}.
\end{equation}
The decomposition \eqref{eq:categorical_wavefunction_of_B} of $B$ into $\mu$-sectors carries over to the von Neumann endomorphism algebras $\mathcal{A}_B$. If we let $\mathcal{A}_B^\mu = \mathcal{B}(\mathcal{H}_B^\mu)$ denote the Type I factor of all bounded operators on $\mathcal{H}_B^\mu$, we obtain the canonical direct-sum decomposition
\begin{equation}
    \mathcal{A}_B = \bigoplus_\mu \mathcal{A}_B^\mu,
\end{equation}
studied in \cite{Colafranceschi:2023urj,Marolf:2024adj}.

Let us comment on the gravitational meaning of the $\mu$-sector Hilbert space $\mathcal{H}_B^\mu$. While the universal Hilbert space $\mathcal{H}_B$ is the Hilbert space of gravitational states whose spatial boundary is $B$, the $\mu$-sector Hilbert space $\mathcal{H}_B^\mu$ is the Hilbert space of ``half-ER bridges,'' namely, ER bridges from $B$ to the irreducible flux sector $\mu$.\footnote{Our ``half-ER bridges'' should be compared to the ``half-wormholes'' considered in \cite{Saad:2021rcu,Saad:2021uzi}.} To make this intuition sharp, we compute
\begin{equation}\label{eq:two_sided_decomp}
    \mathcal{H}_{B \sqcup \overline{A}} = \bbrakket{A | B} = \bigoplus_\mu \bbrakket{A | \mu} \bbrakket{\mu | B} = \bigoplus_\mu \mathcal{H}_{B}^\mu \otimes \mathcal{H}_{\overline{A}}^{\overline{\mu}},
\end{equation}
as shown for the diagonal sector $A = B$ in \cite{Colafranceschi:2023urj} and for the general case in \cite{Marolf:2024adj}. Equation \eqref{eq:two_sided_decomp} shows that every ER bridge from $A$ to $B$ can be decomposed uniquely as a superposition of entangled half-ER bridges, subject to the constraint that the flux $\mu$ on both sides must match, as depicted in Figure \ref{fig:half_ER_bridges}. This constraint is all that obstructs ER = EPR, and so we have proven Theorem \ref{thm:condition_for_ER=EPR}, that ER = EPR holds if and only if $\CBU = \Hilb$.

\begin{figure}
    \centering
    \includegraphics[width=0.9\linewidth]{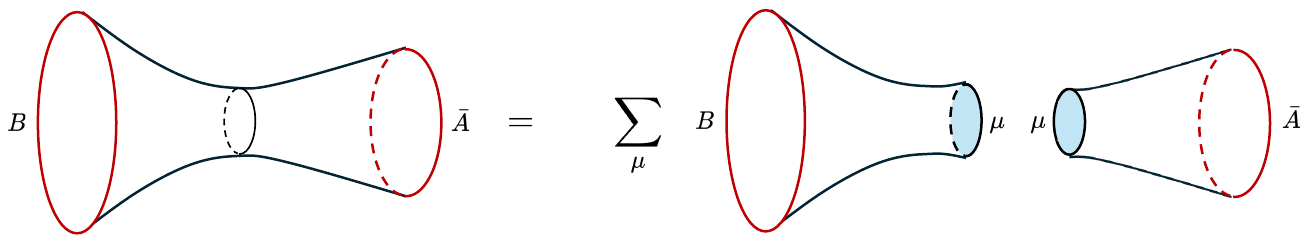}
    \caption{The structure of the baby universe category $\CBU$ implies that any Einstein--Rosen bridge can be decomposed as a sum over half-ER bridges with different irreducible fluxes $\mu$ (as previously derived in \cite{Colafranceschi:2023urj,Marolf:2024adj}). This is the only obstruction to ER = EPR, and so we learn that ER = EPR if and only if $\CBU$ has only one simple object, or more canonically, if $\CBU = \Hilb$.}
    \label{fig:half_ER_bridges}
\end{figure}

Among the $\mu$-sectors, there is one which plays a privileged role, given by the object $\varnothing \in \CBU$ associated to the empty $(d-1)$-dimensional $\mathcal{X}$-manifold, which we call the \textit{Hartle--Hawking object}. Multiplicativity of $\zeta$ implies that $\mathcal{A}_\varnothing = \mathbb{C}$, and so the object $\varnothing \in \CBU$ is automatically simple, corresponding by itself to a $\mu$-sector. By definition, the $\varnothing$-sector Hilbert spaces are simply the universal Hilbert spaces,
\begin{equation}
    \mathcal{H}_B^\varnothing = \bbrakket{\varnothing | B} = \mathcal{H}_B,
\end{equation}
and moreover, the isometric image of $\mathcal{H}_B \otimes \mathcal{H}_{\overline{A}}$ inside $\mathcal{H}_{B \sqcup \overline{A}}$ under the disjoint union map is precisely the $\mu = \varnothing$ summand in \eqref{eq:two_sided_decomp}. For this reason, the privileged $\mu$-sector $\varnothing$ was called $\mu = \otimes$ in \cite{Colafranceschi:2023urj,Marolf:2024adj}. Let us note further that the algebra $\mathcal{A}_\varnothing = \mathbb{C}$ is precisely the baby universe von Neumann algebra $\mathcal{A}_\mathrm{BU}$ discussed in the Introduction, and so the irreducibility of $\varnothing \in \CBU$ is precisely tied to the factorization of the partition function.

We note that each $\mu$-sector can be unambiguously assigned a bordism class,
\begin{equation}
    [\mu] \in \Omega_{d-1}^\mathcal{X},
\end{equation}
defined to be the bordism class of any manifold $B \in \CBU$ which contains $\mu$ as a sub-object. This is unambiguous, as any two manifolds $B_1, B_2$ which both contain $\mu$ as a subobject must admit a non-zero morphism $B_1 \to B_2$ in $\CBU$ (defined, for instance, by projecting $B_1$ onto $\mu$ and then embedding $\mu$ in $B_2$). But this is only possible if $B_1$ and $B_2$ are connected by a bordism in $\BordX$, and so we have $[B_1] = [B_2]$.

As a result, the baby universe category $\CBU$ is canonically graded by the bordism group $\Omega_{d-1}^\mathcal{X}$. This means that the only way ER = EPR could hold, if $\Omega_{d-1}^\mathcal{X} \neq 0$, would be if every manifold $B$ with $[B] \neq 0$ vanished in $\CBU$. This only happens when $\zeta(S^1 \times B) = 0$, as discussed in Section \ref{sec:ER=EPR}, so that an underlying QFT does not admit any states on $B$. An isomorphism $B = 0$ in $\CBU$ is a clear indication that the associated 1-form symmetry of \cite{McNamara:2019rup} is gauged, as $B = 0$ is precisely the consequence of the Gauss law associated to gauging this 1-form symmetry. It would mean that the manifold $B$ is not a consistent background for the bulk theory of quantum gravity.

\subsubsection{Finite-dimensional objects}\label{sec:finite_dimensional}

We have seen that $\CBU$, being a Cauchy complete atomic W*-category, is an infinitary version of a semisimple category. For some purposes, and particularly in order to eventually apply the DR reconstruction theorem, it will be helpful to work with an actual semisimple category. Thus, let us define $\CBUfd$ to be the full sub-category of $\CBU$ containing all finite direct sums of $\mu$-sectors. $\CBUfd$ is semisimple by construction. Importantly, the full baby universe category may be easily recovered from $\CBUfd$ by ``extension of scalars'' from $\Vec$ to $\Hilb$. More formally, $\CBU$ is recovered from $\CBUfd$ by W*-Cauchy completion,
\begin{equation}
    \CBU = \Hilb(\CBUfd),
\end{equation}
which corresponds to allowing infinite direct sums when it previously only had finite ones. Thus, the structure of $\CBU$ is entirely determined by $\CBUfd$, even though the objects $B \in \CBU$ prepared by manifolds will typically not be finite-dimensional (outside of topological theories).

We may obtain an important alternate characterization of $\CBUfd$ through the use of the canonical trace on $\CBU$. For any object $X \in \CBU$, we define its \textit{categorical dimension} by,
\begin{equation}\label{eq:categorical_dimension}
    d_X = \mathrm{dim}(X) \defined \tr(\mathrm{id}_X),
\end{equation}
and we say that $X$ is \textit{finite-dimensional} if it has finite categorical dimension. By positivity of the trace, we have $d_X \geq 0$ for all $X \in \CBU$, and in fact, we even have $d_X \geq 1$ for any nonzero object $X$, by the lower bound \eqref{eq:lower_bound_on_trace} on the trace of any projector. As the trace on $\CBU$ is semifinite, each $\mu$-sector is necessarily finite-dimensional, as otherwise we could find a smaller, finite-dimensional sub-object of $\mu$. As a result, an object is finite dimensional if and only if it is a finite direct sum of $\mu$-sectors, and so $\CBUfd$ is precisely the full subcategory of $\CBU$ on the finite-dimensional objects.

\section{Fusion of manifolds}\label{sec:fusion}

Having constructed the baby universe category $\CBU$ associated to a partition function $\zeta$ (satisfying Axioms \ref{axiom:finiteness}-\ref{axiom:reflection_positivity}) on a unitary bordism category $\BordX$, we now discuss the canonical symmetric monoidal structure on $\CBU$ induced by the disjoint union of manifolds. At this point, our construction genuinely moves beyond \cite{Colafranceschi:2023urj,Marolf:2024adj}, outside of reframing established results. From a concrete perspective, what we do in this section is take the suggested next step of \cite{Colafranceschi:2023urj,Marolf:2024adj} and consider universal Hilbert spaces with three or more boundaries.

\subsection{The symmetric monoidal structure}\label{sec:monoidal_structure}

We want to equip $\CBU$ with a symmetric monoidal structure induced from the disjoint union of manifolds. At the level of the pre-baby universe category $\CBUpre$, the free linearization of $\BordX$, there is no issue: we simply take the free linearization of the symmetric monoidal structure on $\BordX$. In more detail, the symmetric monoidal structure takes two objects $B_1, B_2 \in \CBUpre$ to their disjoint union,
\begin{equation}
    B_1 \sqcup B_2,
\end{equation}
with unit the empty $(d-1)$-manifold $\varnothing$, as in $\BordX$. The disjoint union of morphisms,
\begin{equation}
    \mathcal{O} = \sum_i c_i N_i, \quad \mathcal{O}' = \sum_j c_j' N_j',
\end{equation}
is defined bilinearly,
\begin{equation}\label{eq:fusion_of_bordisms}
    \mathcal{O} \sqcup \mathcal{O}' \defined \sum_{i, j} c_i c_j' (N_i \sqcup N_j').
\end{equation}
We note that the disjoint union of a null morphism in $\CBUpre$ with anything is automatically null, by multiplicativity of $\zeta$.

To extend this symmetric monoidal structure to $\CBU$, it suffices to extend it to the von Neumann completion $\CBUvN$. This is because the operation of W*-Cauchy completion automatically transfers symmetric monoidal structures. We simply define the disjoint union of a formal direct sum of objects by extending the disjoint union bilinearly over $\Hilb$, and we define the disjoint union of formal images of projectors to be the formal image of the disjoint union of the projectors,
\begin{equation}
    \mathrm{im}(P) \sqcup \mathrm{im}(P') \defined \mathrm{im}(P \sqcup P').
\end{equation}
Moreover, all coherence data (associators, symmetry, etc.) automatically transfer as well.

The only analytic issue, then, is defining the disjoint union of the new morphisms in $\CBUvN$ obtained in the process of taking the double commutant in the regular representation,
\begin{equation}
    \CBUpre \times \overline{\CBUpre} \to \Hilb, \quad (B, A) \mapsto \mathcal{H}_{B \sqcup \overline{A}}.
\end{equation}
Let us consider how a disjoint union $N \sqcup N'$ of bordisms $N : B_1 \to B_2$, $N' : B_1' \to B_2'$ acts in the regular representation. By definition, the disjoint union $N \sqcup N'$ acts on states $\ket{M} \in \mathcal{H}_{B_1 \sqcup B_1' \sqcup \overline{A}}$ by gluing the incoming boundary of $N$ to the component $B_1$ of the boundary of $M$ and the incoming boundary of $N'$ to the component $B_1'$ (see Figure \ref{fig:3bdy_wormholes}). Moreover, the two possible orders of gluing commute.

\begin{figure}
    \centering
    \includegraphics[width=0.6\linewidth]{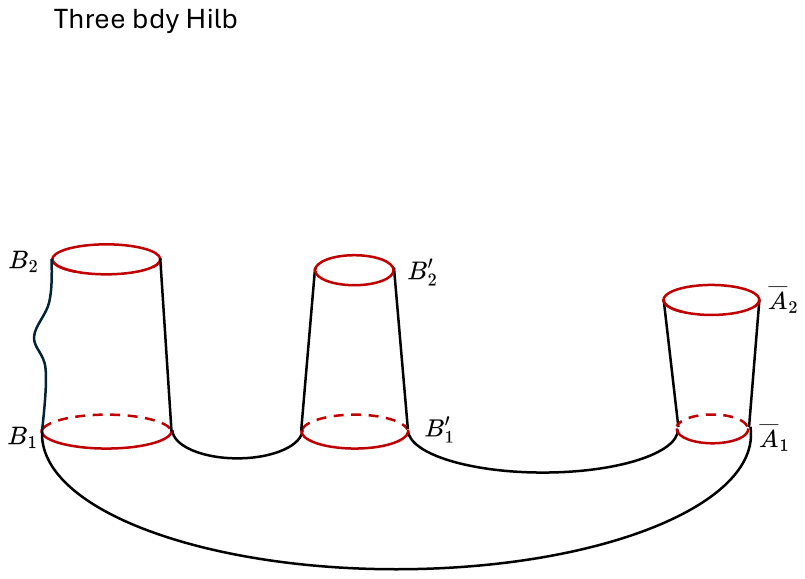}
    \caption{The action of a disjoint union $N \sqcup N'$ of bordisms $N: B_1 \to B_2, N':B_1' \to B_2'$ on a state in the three-boundary Hilbert space $\CH_{B_1\sqcup B_1' \sqcup \overline{A}}$. Each boundary can be acted on by bordisms, and these three actions all commute. In particular, the disjoint union $N \sqcup N'$ lands in the commutant of the action of bordisms on $\overline{A}_1$.}
    \label{fig:3bdy_states}
\end{figure}

As a result, we see that the functor,
\begin{equation}
    \CBUpre \times \CBUpre \times \overline{\CBUpre} \to \Hilb, \quad (B, B', A) \mapsto \mathcal{H}_{B \sqcup B' \sqcup \overline{A}},
\end{equation}
can be viewed as a restriction of the regular representation in multiple compatible ways, depending on how we divide $B \sqcup B' \sqcup \overline{A}$ as a disjoint union of two components. However, by construction, the action of $\CBUpre$ on any Hilbert space in the regular representation extends uniquely to a normal representation of the von Neumann completion $\CBUvN$, so we have a functor
\begin{equation}\label{eq:triple_rep}
    \CBUvN \times \CBUvN \times \overline{\CBUvN} \to \Hilb,
\end{equation}
defined by the same formula. But under \eqref{eq:triple_rep}, the image of $\CBUvN \times \CBUvN$ lands in the commutant $(\overline{\CBUvN})'$ of the action of $\overline{\CBUvN}$ on $A$. This commutant is precisely given by $\CBUvN$ itself by Tomita's commutation theorem, as discussed above. Thus, the functor \eqref{eq:triple_rep} induces a symmetric monoidal fusion $\sqcup : \CBUvN \times \CBUvN \to \CBUvN$ extending the disjoint union on $\CBUpre$.

In fact, there is a natural generalization of the functor \eqref{eq:triple_rep} for any number of boundaries, given by
\begin{equation}
    \CBUvN \times \cdots \times \CBUvN \times \overline{\CBUvN} \to \Hilb, \quad (B_1, \cdots, B_n, A) \mapsto \mathcal{H}_{B_1 \sqcup \cdots \sqcup B_n \sqcup \overline{A}}.
\end{equation}
These functors encode, collectively, all the data of the symmetric monoidal structure on $\CBUvN$. For instance, the four-boundary Hilbert spaces ($n = 3$) encode the associators for $\CBUvN$, and the five-boundary Hilbert spaces ($n = 4$) enforce the pentagon equation.

We note, for later reference, that the symmetric monoidal structure of $\CBU$ respects its grading by the bordism group $\Omega_{d-1}^\mathcal{X}$, as both the fusion of manifolds in $\CBU$ and the addition of cobordism classes are defined by disjoint union of manifolds.

\subsection{Duals and rigidity}\label{sec:duals_and_rigidity}

While the non-unital categories $\BordX$, $\CBUpre$, and even $\mathcal{C}_\mathrm{BU}^\mathrm{HS}$ are non-unital rigid monoidal categories (which we recall means that every object has a dual), the von Neumann completion $\CBUvN$ and the baby universe category $\CBU$ itself are not. This is because, in forming the von Neumann completion, we have introduced weak operator limits of morphisms $\mathcal{O}_i : B_1 \to B_2$ which may not converge when considered as states $\ket{\mathcal{O}_i}$. These morphisms include, in particular, the identity morphisms for any infinite-dimensional object.

However, some objects $X \in \CBU$ do admit duals, which are given by the conjugates $\overline{X}$ when they exist. In fact, an object in $X \in \CBU$ admits a dual if and only if it is finite-dimensional. This is because an object $X \in \CBU$ has a dual if and only if $\mathrm{id}_X$ is Hilbert--Schmidt, if and only if the limit as $\beta \to 0$ of the cylinder states $\ket{C_X(\beta)}$ exists, if and only if we may use that limit to define evaluation and coevaluation maps,
\begin{equation}
    \mathrm{ev}_X : \overline{X} \sqcup X \to \varnothing, \quad \mathrm{coev}_X : \varnothing \to X \sqcup \overline{X},
\end{equation}
forming a standard solution to the conjugate equations.\footnote{We get a standard solution, and thus a balanced unitary dual functor \cite{Penneys2018UnitaryDF}, by our assumption that the unitary structure on $\BordX$ is balanced in the same sense.}

Thus, we see that the full subcategory $\CBUfd \subset \CBU$ on the finite-dimensional objects is also exactly the full subcategory on the dualizable objects. This immediately implies that $\CBUfd$ is closed under disjoint union, as the fusion of dualizable objects in any monoidal category is always dualizable (with dual the fusion of the duals). Thus, $\CBUfd$ becomes a symmetric monoidal category in its own right. In fact, we may see this more directly: the trace on $\CBU$ is monoidal, in that we have,
\begin{equation}
    \tr(\mathcal{O}_1 \sqcup \mathcal{O}_2) = \tr(\mathcal{O}_1) \tr(\mathcal{O}_2).
\end{equation}
Thus, the categorical dimension is multiplicative under disjoint union, and so the disjoint union of finite-dimensional objects must be finite-dimensional.

As a result, $\CBUfd$ is a rigid, symmetric monoidal, idempotent complete C*-category with simple tensor unit, otherwise known as a \textit{unitary tensor category}. Moreover, many of the structures we have discussed reduce to canonical structures on any unitary tensor category, including the canonical balanced unitary dual functor (UDF) $X \mapsto \overline{X}$ \cite{Penneys2018UnitaryDF}, the induced positive trace, and the twist automorphism $\theta_X$.

The description of the monoidal structure on $\CBU$ given above, while geometrically concrete, is very algebraically abstract. We now give a description of the fusion ``in components,'' namely, in terms of the $\mu$-sectors which generate $\CBU$. Consider a triple $(\mu, \nu, \rho)$ of simple objects in $\CBU$. We define the \textit{fusion Hilbert space} to be the inner product,
\begin{equation}
    \mathcal{V}_{\mu \nu}^\rho \defined \bbrakket{\rho | \mu \sqcup \nu},
\end{equation}
whose dimension $N_{\mu \nu}^\rho = \dim(\mathcal{V}_{\mu \nu}^\rho)$ is the \textit{fusion coefficient}. By expanding $\mu \sqcup \nu$ into a complete set of simple objects, we learn that
\begin{equation}
    \mu \sqcup \nu = \bigoplus_\rho \mathcal{V}_{\mu \nu}^\rho \otimes \rho,
\end{equation}
as objects in $\CBU$. By the arguments given above (Section \ref{sec:finite_dimensional}), $\mu \sqcup \nu$ must be finite-dimensional, which implies that $\sum_\rho N_{\mu \nu}^\rho < \infty$. Thus, the fusion rule for $\mu$-sectors is finite: each fusion Hilbert space is finite dimensional, and only finitely many $\rho$ appear in the disjoint union $\mu \sqcup \nu$.

\subsection{Multi-boundary Hilbert spaces}\label{sec:multi_boundary_Hilb_spaces}

We may use the fusion Hilbert spaces to obtain a description of the multi-boundary universal Hilbert spaces, for any number of boundaries. First, we obtain an expression for the $\rho$-sector Hilbert spaces for a disjoint union, $\mathcal{H}_{B \sqcup B'}^\rho$. We compute,
\begin{equation}
    B \sqcup B' = \bigoplus_{\mu, \nu} \mathcal{H}_B^\mu \otimes \mathcal{H}_{B'}^\nu \otimes (\mu \sqcup \nu) = \bigoplus_{\mu, \nu, \rho} \mathcal{H}_B^\mu \otimes \mathcal{H}_{B'}^\nu \otimes \mathcal{V}_{\mu \nu}^\rho \otimes \rho.
\end{equation}
Comparing with the expansion of $B \sqcup B'$ itself into $\rho$-sectors, we obtain a canonical isomorphism,
\begin{equation}
    \mathcal{H}_{B \sqcup B'}^\rho = \bigoplus_{\mu, \nu} \mathcal{H}_B^\mu \otimes \mathcal{H}_{B'}^\nu \otimes \mathcal{V}_{\mu \nu}^\rho.
\end{equation}
We also directly obtain a description of the three-boundary universal Hilbert spaces, by applying \eqref{eq:two_sided_decomp} to the disjoint union $(B \sqcup B') \sqcup \overline{A}$:
\begin{equation}\label{eq:three_boundary_decomposition}
    \mathcal{H}_{B \sqcup B' \sqcup \overline{A}} = \bigoplus_{\mu, \nu, \rho} \mathcal{H}_B^\mu \otimes \mathcal{H}_{B'}^\nu \otimes \mathcal{V}_{\mu \nu}^\rho \otimes \mathcal{H}_{\overline{A}}^{\overline{\rho}},
\end{equation}
as depicted in Figure \ref{fig:3bdy_wormholes}.

\begin{figure}
    \centering
    \includegraphics[width=0.7\linewidth]{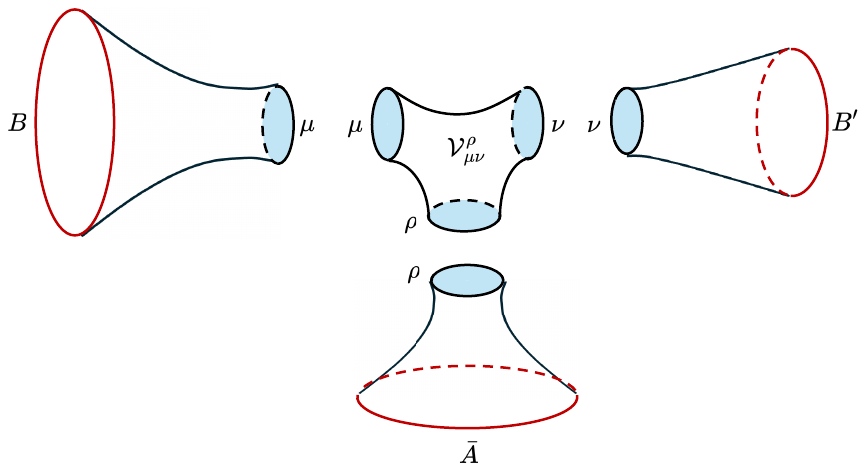}
    \caption{The fusion on $\CBU$ allows us to obtain a description of the three-boundary Hilbert space $\CH_{B \sqcup B' \sqcup \overline{A}}$: its states are obtained by sewing together three half-ER bridges with a state in the fusion Hilbert space $\mathcal{V}_{\mu \nu}^\rho$.}
    \label{fig:3bdy_wormholes}
\end{figure}

The expression \eqref{eq:three_boundary_decomposition} suggests a more concrete way to discover the fusion on $\CBU$. The von Neumann algebra $\mathcal{A}_B \otimes \mathcal{A}_{B'} \otimes \mathcal{A}_{\overline{A}}$ naturally acts on the three-boundary Hilbert space $\mathcal{H}_{B \sqcup B' \sqcup \overline{A}}$. We could then diagonalize this action with respect to the commuting $\mu, \nu$, and $\overline{\rho}$-sector central projectors in $\mathcal{A}_B$, $\mathcal{A}_{B'}$, and $\mathcal{A}_{\overline{A}}$ respectively. We would obtain a block diagonal representation of the type I factors comprising the three von Neumann algebras, and the multiplicity of the representation associated to the triple $(\mu, \nu, \overline{\rho})$ would be precisely the fusion coefficient $N_{\mu \nu}^\rho$. It would then take some work to show that this multiplicity is, in fact, independent of $B, B'$, and $\overline{A}$; one benefit of our formal approach is that it makes this independence manifest. Similarly, our approach makes the associativity of the fusion manifest, while a more concrete approach would involve hard work to show associativity.

Similar logic as discussed above applies to any number of boundaries. In general, we recover the universal Hilbert space for a disjoint union as,
\begin{equation}\label{eq:overlap_with_HH_object}
    \mathcal{H}_{B_1 \sqcup \cdots \sqcup B_n} = \bbrakket{\varnothing | B_1 \sqcup \cdots \sqcup B_n},
\end{equation}
the categorical wavefunction overlap of their fusion with the tensor unit: the Hartle--Hawking object $\varnothing \in \CBU$. Interpreted gravitationally, \eqref{eq:overlap_with_HH_object} describes the space of multi-boundary ER bridges. By expanding \eqref{eq:overlap_with_HH_object} into intermediate channels, we could write multiple different explicit expressions for the multi-boundary Hilbert spaces in terms of the $\mu$-sector Hilbert spaces and the fusion Hilbert spaces. These expressions would all be canonically isomorphic, due to the associators and pentagon equation of $\CBU$.

We note that equation \eqref{eq:overlap_with_HH_object} is analogous to the equation (see, e.g., \cite{Marolf:2020xie}),
\begin{equation}\label{eq:overlap_with_HH_wavefunction}
    \zeta(M_1 \sqcup \cdots \sqcup M_n) = \braket{\varnothing | M_1 \sqcup \cdots \sqcup M_n},
\end{equation}
relating a non-factorizing gravitational partition function with multiple boundary components to the wavefunction overlap with the Hartle--Hawking state $\ket{\varnothing}$ in the baby universe Hilbert space $\HBU$.

\section{Reconstruction}\label{sec:reconstruction}

In this section, we conclude our proof of Theorem \ref{thm:main}, by applying the Doplicher--Roberts (DR) reconstruction theorem \cite{DoplicherRoberts89,muger2007abstract} to the baby universe category.

\subsection{Rigidly generated W*-tensor categories}\label{sec:rg_W*_tensor}

Before we begin introducing the DR reconstruction theorem, we pause to introduce the following notion. We do this because the DR reconstruction theorem, as usually stated, does not directly apply to $\CBU$, as discussed further below.

\begin{definition}\label{defn:rigidly_generated_W*_tensor_cat}
    Let $\mathcal{C}$ be a W*-tensor category, which we recall means a W*-tensor category in the sense of \cite{henriques2024completewcategories} which is additionally Cauchy complete and whose tensor unit is simple. We say that $\mathcal{C}$ is \textit{rigidly generated} if it is generated by some collection of dualizable objects. We say the same for a W*-multitensor category, whose unit may not be simple.
\end{definition}

By \cite{Longo:1997yc}, we do not need to separately assume that a rigidly generated W*-tensor category is atomic, as the irreducibility of the unit implies that every dualizable object is a finite direct sum of simple objects. Conversely, every finite direct sum of simple objects is dualizable, as every simple object is a subobject of a dualizable object by rigid generation. As we have seen, $\CBU$ is rigidly generated, as it is generated by the subcategory $\CBUfd$ of dualizable objects.

In order to understand the general structure of rigidly generated W*-tensor categories, let $\mathrm{UTC}$ denote the $(2,1)$-category of unitary tensor categories, unitary monoidal functors, and unitary monoidal natural transformations, and let $\mathrm{W^*TC}^\mathrm{r.g.}$ denote the $(2,1)$-category of rigidly generated W*-tensor categories, W*-tensor functors, and unitary monoidal natural transformations. Consider the 2-functors,
\begin{equation}\label{eq:adjoint_equiv_tensor_categories}
    \begin{tikzcd}
        \mathrm{UTC} \arrow[rr, bend left, "\Hilb(-)"] && \mathrm{W^*TC}^\mathrm{r.g.} \ar[ll, bend left, "(-)^\mathrm{f.d.}"]
    \end{tikzcd}
\end{equation}
given respectively, by taking the W*-Cauchy completion and restricting to the full sub-category of dualizable objects. The second 2-functor $(-)^\mathrm{f.d.}$ is well defined, because any monoidal functor takes dualizable objects to dualizable objects, and the tensor product of two dualizable objects is always dualizable.

These 2-functors \eqref{eq:adjoint_equiv_tensor_categories} form an adjoint equivalence of $(2,1)$-categories, and so from an external perspective, there is really no difference at all between unitary tensor categories and rigidly generated W*-tensor categories. As a result, we may transfer any categorical properties whatsoever between $\mathrm{UTC}$ and $\mathrm{W^*TC}^\mathrm{r.g.}$, as well as their analogous symmetric monoidal versions. For instance, by equipping every symmetric unitary tensor category with its canonical balanced UDF, we learn that any rigidly generated symmetric W*-tensor category is equipped with a canonical bi-involutive structure, which restricts to this balanced UDF on the subcategory of dualizable objects.

More significantly, we can directly transfer the DR reconstruction theorem, which is usually stated as a theorem about unitary tensor categories, to a result about rigidly generated symmetric W*-tensor categories. We could have skipped this section, and neglected to present Definition \ref{defn:rigidly_generated_W*_tensor_cat}, by applying the DR reconstruction theorem to the unitary tensor category $\CBUpre$, and then repeatedly translating these results by hand to $\CBU$ itself. Indeed, from the perspective of the fusion of $\mu$-sectors, $\CBU$ itself is auxiliary. However, $\CBU$ plays a very important role for us, as it admits a canonical functor $\BordX \to \CBU$, embedding the geometric manifolds of $\BordX$ into the linear category $\CBU$. As discussed above, this map does not typically land in the unitary tensor category $\CBUfd$, since geometric manifolds require infinite-dimensional Hilbert spaces in most geometric QFTs.

A second reason to work with rigidly generated W*-tensor categories, as opposed to unitary tensor categories, is that there is no analogous adjoint equivalence once we drop the simplicity of the tensor unit, hence passing to rigidly generated W*-multitensor categories. This is because the results of \cite{Longo:1997yc} no longer apply in the multitensor setting, and a rigidly generated W*-multitensor category need not be atomic. For instance, consider the W*-multitensor category of measurable fields of Hilbert spaces over the interval $[0, 1]$, equipped with the fiberwise tensor product. The tensor unit is the constant field $\mathbb{C}$, which is a dualizable generating object all on its own. We discuss this generalization in Section \ref{sec:including_alpha_sectors}, in the context of a diffuse space of $\alpha$-sectors.

\subsection{Statistics and fiber functors}\label{sec:statistics_fiber_functors}

We now recall some preliminary notions which enter in the statement of the DR reconstruction theorem. First, given a simple object $\mu$ in a rigidly generated symmetric W*-tensor category $\mathcal{C}$, consider the twist automorphism $\theta_\mu : \mu \to \mu$ defined by the canonical bi-involutive structure. As $(-1)^F$ squares to one, $\theta_\mu$ may only take the values $\pm 1$, since $\mu$ is simple. We say that $\mu$ is a \textit{boson} (respectively a \textit{fermion}) if $\theta_\mu = +1$ (respectively $\theta_\mu = -1$), More generally, an object $X$ is bosonic (respectively fermionic) if it is a direct sum of bosons (respectively fermions) only, and define the \textit{fermion parity} $F_X$ of $X$ to be $(+1)$ (respectively $(-1)$), so that we have $\theta_B = (-1)^{F_X}$ for homogeneous objects $X$. We say that the category $\mathcal{C}$ is \textit{bosonic} if every simple object in $\mathcal{C}$ is a boson (so that $\theta_X = \mathrm{id}_X$ for all $X$), and otherwise that $\mathcal{C}$ is \textit{fermionic}.

The central objects of interest in DR reconstruction, in the bosonic case, are \textit{fiber functors}, namely, symmetric W*-tensor functors
\begin{equation}\label{eq:fiber_functor}
    \mathcal{Z}: \mathcal{C} \to \Hilb,
\end{equation}
from a bosonic symmetric W*-tensor category $\mathcal{C}$ to $\Hilb$. In the fermionic case, we must instead consider \textit{super-fiber functors},
\begin{equation}\label{eq:super_fiber_functor}
    \mathcal{Z} : \mathcal{C} \to \sHilb,
\end{equation}
to the category of super-Hilbert spaces. By the categorical spin-statistics theorem, a super-fiber functor must take bosons to bosons and fermions to fermions, and thus a fiber functor on $\mathcal{C}$ will land in the subcategory $\Hilb \subset \sHilb$ if and only if $\mathcal{C}$ is bosonic. Thus, nothing is lost if we always assume that the target of our fiber functors is $\sHilb$, and we will drop the prefix ``super,'' calling either \eqref{eq:fiber_functor} or \eqref{eq:super_fiber_functor} a fiber functor.

\subsubsection{Fiber functors as QFTs}\label{sec:fiber_functors_on_CBU}

What would a fiber functor on the baby universe category $\CBU$ give us? By pre-composing with the canonical unitary symmetric monoidal functor $\BordX \to \CBU$, we would obtain a unitary symmetric monoidal functor,
\begin{equation}
    \mathcal{Z} : \BordX \to \sHilb,
\end{equation}
or in other words, a unitary functorial QFT! The condition of non-degeneracy for this QFT follows from the non-degeneracy of the functor $\BordX \to \CBU$ itself (Proposition \ref{prop:approx_identities_in_two_sided_reps}). Moreover, the partition function of any QFT produced this way will automatically be $\zeta$, as a closed $\mathcal{X}$-manifold $M$ maps to multiplication by the complex number $\zeta(M)$ already in $\CBU$.

In fact, the converse is true as well: any unitary QFT on $\BordX$ with partition function $\zeta$ automatically extends uniquely to a fiber functor $\mathcal{Z} : \CBU \to \sHilb$. We prove this formally in Appendix \ref{app:univ_property}, but at a heuristic level, note that the action of $\mathcal{Z}$ on manifolds is entirely determined. If $\widetilde{\CH}_B$ is the (super-)Hilbert space of the unitary QFT quantized on $B$, we must define,
\begin{equation}
    \mathcal{Z}(B) = \widetilde{\CH}_B,
\end{equation}
and similarly for morphisms given by bordisms. This means that the unitary QFT automatically lifts to $\CBUpre$, and in fact lifts to $\CBU$ as shown in Appendix \ref{app:univ_property}.

What we prove in Appendix \ref{app:univ_property} is even stronger: the same conclusion holds even if we replace $\sHilb$ with any rigidly generated symmetric W*-tensor category $\mathcal{D}$. More specifically, Proposition \ref{prop:univ_property_CBU} shows that $\CBU$ has the following universal property: any $\mathcal{D}$-valued QFT on $\BordX$ (Definition \ref{defn:D_valued_QFT}), with partition function $\zeta$, admits a unique lift to a symmetric W*-tensor functor $\CBU \to \mathcal{D}$ which makes the following diagram commute,
\begin{equation}
    \begin{tikzcd}
        & \CBU \arrow[dr, dashed, "\exists !"] & \\
        \BordX \arrow[ur] \arrow[rr] & & \mathcal{D}
    \end{tikzcd}
\end{equation}
In other words, $\CBU$ is the universal rigidly generated symmetric W*-tensor category receiving a QFT with partition function $\zeta$ from $\BordX$. This universal property characterizes $\CBU$ uniquely up to unique equivalence.

\subsection{Doplicher--Roberts reconstruction}\label{sec:DR_reconstruction}

We now complete our proof of Theorem \ref{thm:main} by recalling the DR reconstruction theorem, in three parts: Theorems \ref{thm:existence_of_fiber_functors}, \ref{thm:uniqueness_of_fiber_functors}, and \ref{thm:concrete_Tannaka}. We will refer to these as parts (a), (b), and (c), as they mirror parts (a), (b), and (c) of Theorem \ref{thm:main}.

As discussed above, the version of the DR reconstruction theorem proven in \cite{DoplicherRoberts89, muger2007abstract} applies to a symmetric unitary tensor category, such as $\mathcal{C}^\mathrm{f.d.}$. However, we will instead describe DR reconstruction as a theorem about rigidly generated symmetric W*-tensor categories, by transfer along the adjoint equivalence \eqref{eq:adjoint_equiv_tensor_categories}.

\subsubsection{Existence of fiber functors}

The first part of the DR reconstruction theorem states the following:

\begin{theorem}[Doplicher--Roberts, part (a)]\label{thm:existence_of_fiber_functors}
    Let $\mathcal{C}$ be a rigidly generated symmetric W*-tensor category. Then there exists a fiber functor $\mathcal{Z} : \mathcal{C} \to \sHilb$.
\end{theorem}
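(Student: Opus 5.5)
We will reduce to the classical Doplicher--Roberts theorem for symmetric unitary tensor categories and then transport the result back. By the adjoint equivalence \eqref{eq:adjoint_equiv_tensor_categories}, $\mathcal{C} \simeq \Hilb(\mathcal{C}^\mathrm{f.d.})$, where $\mathcal{C}^\mathrm{f.d.}$ is a symmetric unitary tensor category: rigid, semisimple, with simple unit, and carrying its canonical balanced unitary dual functor and positive trace. It therefore suffices to build a symmetric unitary tensor functor $\mathcal{C}^\mathrm{f.d.} \to \sHilb^\mathrm{f.d.}$. Such a functor then extends uniquely, by W*-Cauchy completion (allowing infinite orthogonal direct sums), to a normal symmetric W*-tensor functor $\mathcal{C} \to \sHilb$.

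For the finite part we follow the structure of M\"uger's proof of the DR theorem \cite{muger2007abstract}.

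\textbf{Step 1: Bosonization.} Use the twist to split $\mathcal{C}^\mathrm{f.d.}$ into bosons and fermions. Modify the symmetry by the Koszul sign $c'_{X,Y} = (-1)^{F_X F_Y} c_{X,Y}$. This produces a bosonic symmetric category $\mathcal{C}_b$ (all twists trivial) with the same underlying monoidal category. A fiber functor $\mathcal{C}_b \to \Hilb$ then becomes a super-fiber functor on $\mathcal{C}$ by grading the image of $X$ by $F_X$.

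\textbf{Step 2: Integrality of dimensions.} In the bosonic category, the symmetric group $S_n$ acts on $X^{\otimes n}$. Form the antisymmetrizer $P_{\Lambda^n}$, exactly as in Section \ref{sec:Ex_TopoQM}. Its trace is $\binom{d_X}{n}$, and positivity of the trace forces $\binom{d_X}{n} \ge 0$ for all $n$. Hence $d_X \in \mathbb{Z}_{\ge0}$, and $\Lambda^{d_X} X$ is a dimension-one object, hence invertible.

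\textbf{Step 3: Reduction to a finitely generated case.} Since $\mathcal{C}$ is generated by a possibly infinite family of objects, we pass to the filtered union of full subcategories generated by finitely many objects (or a single object, a direct sum of finitely many generators). The fiber functors on these subcategories must be assembled compatibly. This is done either by a Zorn/compactness argument on the nonempty sets of fiber functors modulo isomorphism, or by M\"uger's direct construction, which builds the ind-commutative algebra $\bigotimes$ of "determinant trivializations" and passes to a quotient by a maximal ideal.

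\textbf{Step 4: Constructing the functor (the main obstacle).} For a single generator $X$ of integer dimension $d$, build a commutative algebra object $A$ in $\mathrm{Ind}(\mathcal{C}_b)$ that contains the coefficients of a would-be isomorphism $X \cong \mathds{1}^{\oplus d}$, with the determinant $\Lambda^d X \to \mathds{1}$ inverted. Then:
\begin{itemize}
\item Prove $A$ is nonzero. This is the crux: it uses exactly the positivity of $\binom{d}{n}$ and the vanishing $\Lambda^{d+1}X = 0$.
\item Take a maximal ideal $I$ of $A$. Show that $\mathrm{Hom}(\mathds{1}, A/I) = \mathbb{C}$; this uses simplicity of the unit and positivity/$*$-structure to rule out transcendental extensions.
\item Define $\mathcal{Z}(Y) = \mathrm{Hom}(\mathds{1}, A/I \otimes Y)$. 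This is a symmetric tensor functor to $\Vec$, faithful because $\mathcal{C}_b$ is semisimple.
\item Unitarize it using the $*$-structure. The inner product on $\mathcal{Z}(Y)$ comes from the positive trace, and the monoidal structure maps can be made unitary by polar decomposition.
\end{itemize}

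The hardest point is the nonvanishing of $A$ together with $\mathrm{End}(\mathds{1}_{A/I}) = \mathbb{C}$. We plan to quote M\"uger's argument for these rather than redo it.
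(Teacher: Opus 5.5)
Your proposal follows the same route as the paper. You transfer along the adjoint equivalence to the symmetric unitary tensor category $\mathcal{C}^\mathrm{f.d.}$, bosonize via the Koszul-twisted symmetry, get $d_X \in \mathbb{Z}_{\geq 0}$ from positivity of $\dim(\Lambda^n X_\mathrm{b}) = \binom{d_X}{n}$, and defer the existence of coherent unitary tensorators to M\"uger, whose construction (commutative algebra in $\mathrm{Ind}$, maximal ideal, unitarization) you sketch in more detail than the paper does. Two small corrections inside the parts you cite: in Step 3 the compactness argument must run over concrete unitary fiber functors, or the compact sets of unitary monoidal isomorphisms between them, rather than isomorphism classes, since agreement only up to isomorphism does not let you glue; and in M\"uger's argument $\mathrm{Hom}(\mathds{1}, A/I) = \mathbb{C}$ comes from $\mathrm{Hom}(\mathds{1}, A)$ having countable dimension (which is why the finitely generated reduction is needed) together with algebraic closedness of $\mathbb{C}$, not from positivity.
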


Applying Theorem \ref{thm:existence_of_fiber_functors} to the rigidly generated baby universe category $\CBU$, we obtain a fiber functor $\CBU \to \sHilb$. Using the equivalence between fiber functors on $\CBU$ and unitary QFTs with partition function $\zeta$ (Proposition \ref{prop:univ_property_CBU}), we prove part (a) of Theorem \ref{thm:main}:

\begin{theorem}[Theorem \ref{thm:main}, part (a)]
    Let $\zeta$ be a partition function on a unitary bordism category $\BordX$ satisfying Axioms \ref{axiom:finiteness}-\ref{axiom:reflection_positivity}. Then there exists a unitary QFT with partition function $\zeta$.
\end{theorem}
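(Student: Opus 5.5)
The plan is to assemble part (a) from the structural results already established about $\CBU$, applying Theorem \ref{thm:existence_of_fiber_functors} and then pulling back along the canonical functor $\BordX \to \CBU$.

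\textbf{Step 1: $\CBU$ satisfies the hypotheses of Theorem \ref{thm:existence_of_fiber_functors}.} This means checking that $\CBU$ is a rigidly generated symmetric W*-tensor category.
\begin{enumerate}
\item Cauchy completeness holds by Definition \ref{defn:CBU}.
\item The symmetric monoidal structure is the extension of disjoint union constructed in Section \ref{sec:monoidal_structure}. It is defined via the triple representation \eqref{eq:triple_rep} together with Tomita's commutation theorem.
\item Simplicity of the unit follows from $\mathcal{A}_\varnothing = \mathbb{C}$, which uses Axiom \ref{axiom:multiplicativity}.
\item For rigid generation, every object is an orthogonal direct sum of $\mu$-sectors by atomicity, which in turn comes from the trace lower bound \eqref{eq:lower_bound_on_trace}. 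Each $\mu$-sector is finite-dimensional by semifiniteness of the trace. Section \ref{sec:duals_and_rigidity} then shows that finite-dimensional objects are dualizable, with standard solutions built from limits of cylinder states.
\end{enumerate}
Hence $\CBU$ is generated by $\CBUfd$. Theorem \ref{thm:existence_of_fiber_functors} then yields a fiber functor $\mathcal{Z} : \CBU \to \sHilb$. Equivalently, one can apply the usual DR theorem to the symmetric unitary tensor category $\CBUfd$ and extend along the equivalence \eqref{eq:adjoint_equiv_tensor_categories}.

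\textbf{Step 2: pull back to $\BordX$.} The free linearization gives $\BordX \to \CBUpre$, the quotient by null morphisms gives $\CBUpre \to \CBUvN$, and the fully faithful embedding gives $\CBUvN \to \CBU$. Composing these produces a unitary symmetric monoidal functor $\BordX \to \CBU$. Composing further with $\mathcal{Z}$ gives a unitary symmetric monoidal functor $\BordX \to \sHilb$. Three points need checking:
\begin{itemize}
\item \emph{Compatibility with the unitary structure.} This holds because the $\dagger$ and conjugation structures on $\CBU$ are induced from $\BordX$. The canonical bi-involutive structure on a rigidly generated symmetric W*-tensor category restricts to the balanced UDF, so it is preserved by symmetric W*-tensor functors. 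The CPT map $\widehat{\mathcal{T}}$ can be taken from the CPT theorem remark after Definition \ref{defn:functorial_QFT}.
\item \emph{Non-degeneracy.} This follows from Proposition \ref{prop:approx_identities_in_two_sided_reps}: the images of $C_B(\beta)$ converge strongly to $\mathrm{id}_B$ in $\CBU$. A W*-functor is normal and preserves strong convergence on bounded sets, since the cylinders have norm at most one. Therefore $\mathcal{Z}(C_B(\beta)) \to \mathrm{id}$.
\item \emph{Partition function.} A closed manifold $M$ is an endomorphism of $\varnothing$ in $\CBU$, and it acts as the scalar $\zeta(M)$ because $\HBU \cong \mathbb{C}$. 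A monoidal functor sends this scalar to the same scalar, so the resulting QFT has partition function $\zeta$.
\end{itemize}

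\textbf{Main obstacle.} The delicate point is Step 1's use of a W*-version of DR rather than the standard unitary-tensor-category version, that is, transferring along \eqref{eq:adjoint_equiv_tensor_categories}. This requires that a fiber functor on $\CBUfd$ extends to a normal functor on all of $\CBU$, including infinite-dimensional manifold objects. I would handle this by defining the extension object-wise: set
\[
\mathcal{Z}(B) = \bigoplus_\mu \mathcal{H}_B^\mu \otimes \mathcal{Z}(\mu).
\]
Next, check that morphisms in $\mathcal{A}_B = \bigoplus_\mu \mathcal{B}(\mathcal{H}_B^\mu)$ act normally on this direct sum, and that the tensor structure is compatible with the decomposition \eqref{eq:three_boundary_decomposition}. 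A secondary subtlety is the fermionic case. There the fiber functor must land in $\sHilb$, and the twist $\theta_B$ must map to $(-1)^F$. This is ensured because $\theta$ on $\CBU$ is induced from the balanced unitary structure of $\BordX$.
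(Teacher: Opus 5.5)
Your proposal matches the paper's proof: apply Doplicher--Roberts existence (Theorem \ref{thm:existence_of_fiber_functors}) to the rigidly generated category $\CBU$, then precompose the resulting fiber functor with the canonical functor $\BordX \to \CBU$. Non-degeneracy comes from Proposition \ref{prop:approx_identities_in_two_sided_reps}, the partition function is correct because closed manifolds already act on $\varnothing$ by $\zeta(M)$, and your object-wise extension $\mathcal{Z}(B)=\bigoplus_\mu \mathcal{H}_B^\mu\otimes\mathcal{Z}(\mu)$ is exactly the paper's extended Hilbert space \eqref{eq:extended_Hilbert_space}. One small correction: the cylinders need not have norm at most one, since the ground-state energy of $\widehat{H}_B$ can be negative (e.g.\ a Casimir energy); however, $C_B(\beta)$ is a positive semigroup, so $\lvert\lvert C_B(\beta)\rvert\rvert=\lvert\lvert C_B(1)\rvert\rvert^{\beta}$ is uniformly bounded for $\beta\le 1$, which is all your normality argument needs.
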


\noindent Let us outline the proof of Theorem \ref{thm:existence_of_fiber_functors}, as a good deal of it is physically illuminating. In fact, much of the proof has, remarkably, recently reappeared in the context of the GPI \cite{Marolf:2020xie,Colafranceschi:2023urj,Marolf:2024adj}, at least in the bosonic case. The most important step is to prove that the categorical dimension $d_X$ of any finite-dimensional object $X \in \mathcal{C}$ is a non-negative integer, $d_X \in \mathbb{Z}_{\geq 0}$. This is necessary for any fiber functor to exist: fiber functors preserve categorical dimensions, and the dimensions of finite-dimensional objects in $\sHilb$ are non-negative integers.\footnote{Note that the categorical dimensions in $\sHilb$ are given by $\dim(\mathbb{C}^{p | q}) = p + q$, as we have chosen the canonical positive unitary structure on $\sHilb$.}

The standard proof that $d_X \in \mathbb{Z}_{\geq 0}$ is a simple generalization of the anti-symmetrization argument we saw in Section \ref{sec:Ex_TopoQM}. First, we twist the symmetry $s_{X, Y}$ on the possibly-fermionic category $\mathcal{C}$ by,
 \begin{equation}\label{eq:bosonization_symmetry}
     s_{X_\mathrm{b}, Y_\mathrm{b}} \defined (-1)^{F_X F_Y} s_{X, Y},
 \end{equation}
in order to obtain a bosonic rigidly-generated symmetric W*-tensor category $\mathcal{C}_\mathrm{b}$, with objects $X_\mathrm{b}$ for every object $X \in \mathcal{C}$. Physically, the object $X_\mathrm{b} \in \mathcal{C}_\mathrm{b}$ corresponding to a fermionic object $X \in \mathcal{C}$ is obtained by dressing $X$ with a transparent fermion, while the bosonic objects are left untouched. This leaves the categorical dimensions of all objects unchanged, as the categorical dimensions do not depend on the choice of symmetry.

We then consider the dimension of the antisymmetric tensor powers $\Lambda^n X_\mathrm{b}$ of the finite-dimensional object $X_\mathrm{b} \in \mathcal{C}_\mathrm{b}$.\footnote{Because of the twist, this corresponds to taking antisymmetric tensor powers of the bosonic part of $X$ and symmetric tensor powers of the fermionic part of $X$ in the original category $\mathcal{C}$.} A simple calculation shows
\begin{equation}
    \dim(\Lambda^n X_\mathrm{b}) = \binom{d_X}{n} \defined \frac{d_X (d_X - 1) \cdots (d_X - n + 1)}{n!},
\end{equation}
which is eventually negative, violating positivity, unless $d_X \in \mathbb{Z}_{\geq 0}$. For the $\mu$-sectors in $\CBU$, this argument, and the non-negative integers $d_\mu$, previously appeared in \cite{Colafranceschi:2023urj,Marolf:2024adj} (where our $d_\mu$ was called $n_\mu$).

Given that the categorical dimensions $d_X$ of all finite-dimensional objects $X \in \mathcal{C}$ are non-negative integers, the action of a fiber functor $\mathcal{Z} : \mathcal{C} \to \sHilb$ is fixed, at least up to unitary isomorphism. First, the action of $\mathcal{Z}$ on simple objects $\mu \in \mathcal{C}$ must be
\begin{equation}\label{eq:candidate_fiber_functor}
    \mathcal{Z}(\mu) = \mathbb{C}^{d_\mu},
\end{equation}
with even grading when $\mu$ is a boson and odd grading when $\mu$ is a fermion. That immediately implies that any fiber functor $\mathcal{Z} : \CBU \to \sHilb$ is given on manifold objects $B \in \CBU$ by,
\begin{equation}\label{eq:extended_Hilbert_space}
    \mathcal{Z}(B) = \bigoplus_\mu \mathcal{H}_B^\mu \otimes \mathbb{C}^{d_\mu},
\end{equation}
graded according to the fermion-parity of the $\mu$-sectors. The Hilbert spaces \eqref{eq:extended_Hilbert_space} were discussed in detail in \cite{Colafranceschi:2023urj,Marolf:2024adj}, where they were called the \textit{extended Hilbert spaces}. We will use $\widetilde{\CH}_B$ to denote the Hilbert space \eqref{eq:extended_Hilbert_space}; up to unitary isomorphism, it is necessarily the Hilbert space of any unitary QFT with partition function $\zeta$. Moreover, as a fiber functor takes traces to traces, the trace on $\CBU$ defined by $\zeta$ is identified with the actual Hilbert-space trace on $\widetilde{\CH}_B$; this was one of the central points of \cite{Colafranceschi:2023urj,Marolf:2024adj}.

The action of $\mathcal{Z}$ on morphisms is also uniquely fixed, as functors between atomic Cauchy complete W*-categories are determined up to unitary isomorphism by their action on simple objects. For instance, for $\CBU$, morphisms must act block-diagonally in the expansion \eqref{eq:extended_Hilbert_space} into $\mu$-sectors, with the action on the $\mu$-sector Hilbert space $\mathcal{H}_B^\mu$ induced from the action on the regular representation of $\CBU$.

We comment briefly on the gravitational interpretation of \eqref{eq:extended_Hilbert_space}. If the object $\mu \in \CBU$ represents the endpoint of a half-ER bridge, then $d_\mu$ is the number of additional black hole microstates needed in order to turn this half-ER bridge into an actual one-sided black hole state. Under the assignment $\mu \mapsto [\mu] \in \Omega_{d-1}^\mathcal{X}$ of a bordism class to each $\mu$-sector, we see that each of these additional $d_\mu$ black hole microstates is a cobordism defect, in the sense of \cite{McNamara:2019rup}: an additional degree of freedom in the UV completion of an incomplete theory of quantum gravity needed in order to trivialize the bordism class $[\mu]$.

The remainder of the proof of Theorem \ref{thm:existence_of_fiber_functors} boils down to showing that the functor $F$ defined above is symmetric monoidal. In doing so, one is forced to make arbitrary choices, which correspond to the choice of a coherent family of unitary isomorphisms,
\begin{equation}\label{eq:fiber_functor_data}
\phi_{\mu \nu}^\mathcal{Z} : \mathbb{C}^{d_\mu} \otimes \mathbb{C}^{d_\nu} \to \bigoplus_{\rho} \mathcal{V}_{\mu \nu}^\rho \otimes \mathbb{C}^{d_\rho},
\end{equation}
called a \textit{tensorator}, $\phi_{\mu \nu}^\mathcal{Z} : \mathcal{Z}(\mu) \otimes \mathcal{Z}(\nu) \to \mathcal{Z}(\mu \otimes \nu)$, which must satisfy a list of coherence axioms. Unitary isomorphisms \eqref{eq:fiber_functor_data}, with no constraints, clearly always exist, because the categorical dimensions are multiplicative, $d_\mu d_\nu = \sum_\rho N_{\mu \nu}^\rho d_\rho$. Showing that solutions to the coherence axioms exist involves heavy-duty categorical algebra; for the details, we refer the reader to \cite{muger2007abstract}.

\subsubsection{Uniqueness of fiber functors}

We have seen that, by Theorem \ref{thm:existence_of_fiber_functors}, any rigidly generated symmetric W*-tensor category admits a fiber functor. The second part of the DR reconstruction theorem states the following:

\begin{theorem}[Doplicher--Roberts, part (b)]\label{thm:uniqueness_of_fiber_functors}
    Let $\mathcal{C}$ be a rigidly generated symmetric W*-tensor category. Then any two unitary fiber functors $\mathcal{Z}_1, \mathcal{Z}_2 : \mathcal{C} \to \sHilb$ are unitarily isomorphic.
\end{theorem}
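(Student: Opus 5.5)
The plan is to reduce to the unitary tensor category $\mathcal{C}^\mathrm{f.d.}$ via the adjoint equivalence \eqref{eq:adjoint_equiv_tensor_categories}, and then build the isomorphism by a Tannakian comparison argument. Since a W*-tensor functor is determined up to unitary monoidal isomorphism by its restriction to dualizable objects, it suffices to produce a unitary monoidal natural isomorphism $\mathcal{Z}_1|_{\mathcal{C}^\mathrm{f.d.}} \Rightarrow \mathcal{Z}_2|_{\mathcal{C}^\mathrm{f.d.}}$. First, both functors send each simple $\mu$ to a super-Hilbert space of dimension $d_\mu$ with parity fixed by $\theta_\mu$, by the categorical spin-statistics remark of Section \ref{sec:statistics_fiber_functors}. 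Hence $\mathcal{Z}_1(X)\cong\mathcal{Z}_2(X)$ objectwise, though not yet naturally or monoidally. To handle the fermionic case uniformly, I would pass to the bosonization $\mathcal{C}_\mathrm{b}$ using \eqref{eq:bosonization_symmetry}. Composing each $\mathcal{Z}_i$ with the forgetful map $\sHilb\to\Hilb$ and twisting the symmetry correspondingly turns it into a symmetric fiber functor $\mathcal{C}_\mathrm{b}\to\Hilb$. Isomorphisms of the bosonized functors that respect the $\mathbb{Z}_2$ grading then recover isomorphisms of the original ones.

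For bosonic $\mathcal{C}$ I would follow Deligne--Müger. Consider the algebra
\begin{equation}
A=\bigoplus_{\mu}\overline{\mathcal{Z}_1(\mu)}\otimes\mathcal{Z}_2(\mu),
\end{equation}
more invariantly the coend $\int^X \mathcal{Z}_1(X)^\vee\otimes\mathcal{Z}_2(X)$ computed in $\mathrm{Ind}(\mathcal{C})$ rather than in $\Hilb$. The symmetric monoidal structures of the two functors make it a commutative algebra object with a $*$-structure. Monoidal natural transformations $\mathcal{Z}_1\Rightarrow\mathcal{Z}_2$ correspond to characters $A\to\mathbb{C}$, and unitary ones correspond to $*$-characters. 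Nonvanishing of $A$ comes from the unit summand, using $d_\mu>0$. The existence of a $*$-character then follows by applying Gelfand theory to the C*-completion of this commutative $*$-algebra, which is nonzero and unital.

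This bosonic step is where I expect the main obstacle. One has to check that the positivity of the canonical trace (equivalently, unitarity of the $\mathcal{Z}_i$) gives a genuine C*-norm on $A$, so that a character exists. One also has to check that any monoidal natural transformation between fiber functors on a rigid category is automatically invertible, via the dual $\mathcal{Z}(X^\vee)$. Once a unitary monoidal natural transformation is obtained, it is an isomorphism on each $\mathcal{Z}_i(\mu)$ by rigidity. Extending by W*-Cauchy completion gives the unitary isomorphism $\mathcal{Z}_1\cong\mathcal{Z}_2$ on all of $\mathcal{C}$.
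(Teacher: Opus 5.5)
Your first step, restricting to $\mathcal{C}^\mathrm{f.d.}$ along the adjoint equivalence \eqref{eq:adjoint_equiv_tensor_categories} and extending back by W*-Cauchy completion, is exactly the paper's argument. In fact it is all of the paper's argument: the paper proves nothing further and cites Doplicher--Roberts and M\"uger for the statement about symmetric unitary tensor categories. The rest of your proposal reproves that cited result by Gelfand theory on the commutative linking algebra $A$ of the two fiber functors. This route is viable, and it gains something over a purely algebraic, Deligne-style argument. A nonzero unital commutative C*-algebra always has a character (take a maximal ideal and apply Gelfand--Mazur), with no countability or finite-generation hypothesis on $\mathcal{C}$. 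Moreover, a $*$-character gives unitarity directly, with no polar-decomposition step afterwards. Your bosonization step is fine; in fact every natural isomorphism between the bosonized functors is automatically even, since each $\mathcal{Z}_i(\mu)$ is purely even or purely odd according to $\theta_\mu$. There is one type error. The assignment $X\mapsto \mathcal{Z}_1(X)^\vee\otimes\mathcal{Z}_2(X)$ is vector-space valued, so its coend is the algebraic direct sum $\bigoplus_\mu \overline{\mathcal{Z}_1(\mu)}\otimes\mathcal{Z}_2(\mu)$ in $\mathrm{Vect}$, not an object of $\mathrm{Ind}(\mathcal{C})$. That uncompleted sum is what you want.

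The gap is that the one step carrying the content is asserted rather than proved. Nonvanishing of $A$ from the unit summand is automatic and gives nothing. Gelfand theory needs the C*-envelope of $A$ to be nonzero, i.e.\ $A$ must have a nonzero $*$-representation on a Hilbert space. A nonzero unital commutative $*$-algebra need not have one: for $\mathbb{C}[x]/(x^2+1)$ with $x^*=x$, the characters $x\mapsto\pm i$ are not $*$-preserving. This is exactly where unitarity of the $\mathcal{Z}_i$ must be used, and you flag it without supplying an argument.

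The missing idea is the Haar functional $h : A\to\mathbb{C}$, the projection onto the summand $\mu=\mathds{1}$. It works as follows.
\begin{enumerate}
\item Define $*$ on $A$ using standard solutions $R_\mu,\bar R_\mu$ of the conjugate equations.
\item Since $\mathcal{Z}_i$ is unitary monoidal and $\dim\mathcal{Z}_i(\mu)=d_\mu$, the images $\mathcal{Z}_i(R_\mu),\mathcal{Z}_i(\bar R_\mu)$ have squared norm $d_\mu=\dim\mathcal{Z}_i(\mu)$. This is equality in the minimality bound, so they are standard solutions in $\Hilb$.
\item The projection of $\bar\mu\otimes\mu$ onto $\mathds{1}$ is $d_\mu^{-1}R_\mu R_\mu^*$. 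Combined with the previous step, this gives the Schur orthogonality relations $h(a^*b)=\sum_\mu d_\mu^{-1}\langle a_\mu,b_\mu\rangle$. Hence $h$ is a faithful state, and this (not the algebraic nonvanishing) is where $d_\mu>0$ enters.
\item The coefficient matrices $U^\mu$ are unitary, so $A$ acts by bounded operators on the GNS space of $h$. Its norm closure is therefore a nonzero unital commutative C*-algebra.
\item Any character of that C*-algebra restricts to a $*$-character of $A$, i.e.\ a unitary monoidal natural isomorphism.
\end{enumerate}
With this inserted, your argument is complete.
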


Applying Theorem \ref{thm:uniqueness_of_fiber_functors} to $\CBU$, we prove part (b) of Theorem \ref{thm:main}, stating that unitary QFTs are uniquely determined, up to unitary isomorphism, by their closed manifold partition functions:

\begin{theorem}[Theorem \ref{thm:main}, part (b)]
    Any two unitary QFTs $\BordX \to \sHilb$ which share the same partition function $\zeta$ on all closed $\mathcal{X}$-manifolds are unitarily isomorphic.
\end{theorem}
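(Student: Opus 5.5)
The plan is to deduce part (b) of Theorem \ref{thm:main} from Theorem \ref{thm:uniqueness_of_fiber_functors} using the universal property of $\CBU$ (Proposition \ref{prop:univ_property_CBU}).

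Let $\mathcal{Z}_1, \mathcal{Z}_2 : \BordX \to \sHilb$ be two unitary QFTs with the same closed-manifold partition function $\zeta$. Since $\zeta$ comes from a unitary QFT, it satisfies Axioms \ref{axiom:finiteness}-\ref{axiom:reflection_positivity}, as shown in Section \ref{sec:functorial_QFT}. The continuity axiom is not used in the reconstruction, so $\CBU$ is defined.

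By Proposition \ref{prop:univ_property_CBU}, each $\mathcal{Z}_i$ factors uniquely as $\BordX \to \CBU \xrightarrow{\widetilde{\mathcal{Z}}_i} \sHilb$, where $\widetilde{\mathcal{Z}}_i$ is a symmetric W*-tensor functor, i.e. a (super-)fiber functor. Concretely, $\widetilde{\mathcal{Z}}_i$ is built in three steps:
\begin{itemize}[label={}, leftmargin=1em]
\item extend $\mathcal{Z}_i$ linearly to $\CBUpre$;
\item note that null morphisms go to zero, because the $\zeta$-trace equals the Hilbert-space trace;
\item extend to the double commutant by normality, then to the W*-Cauchy completion.
\end{itemize}

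Since $\CBU$ is rigidly generated (Section \ref{sec:duals_and_rigidity}), Theorem \ref{thm:uniqueness_of_fiber_functors} supplies a unitary monoidal natural isomorphism $\eta : \widetilde{\mathcal{Z}}_1 \Rightarrow \widetilde{\mathcal{Z}}_2$. Whiskering $\eta$ with the canonical functor $\BordX \to \CBU$ gives a family of unitaries $\eta_B : \mathcal{Z}_1(B) \to \mathcal{Z}_2(B)$ with the following properties:
\begin{itemize}[label={}, leftmargin=1em]
\item natural with respect to all bordisms;
\item compatible with the tensorators, hence with disjoint union;
\item a unitary isomorphism of QFTs.
\end{itemize}

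It remains to check that $\eta$ also intertwines the CPT maps $\widehat{\mathcal{T}}$. These are canonically determined by the non-degenerate symmetric monoidal $\dagger$-functor structure (the CPT remark in Section \ref{sec:functorial_QFT}), so any unitary monoidal natural isomorphism preserves them. A second concern is whether the isomorphism must preserve the $\mathbb{Z}_2$-grading. This follows from spin-statistics: the grading is fixed by $\theta$ on both sides.

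The main obstacle is the extension step inside Proposition \ref{prop:univ_property_CBU}. Whiskering is formal; the real content is showing that a QFT, which is given only on manifolds, extends normally and monoidally to weak-operator limits and to formal images of projectors. The delicate point is that the $\mathcal{Z}_i$ must act faithfully enough to factor through $\CBUvN$.

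I would handle this by embedding the regular representation into the QFT. The map $\mathcal{H}_{B\sqcup\overline{A}} \to \mathcal{Z}_i(B)\otimes\overline{\mathcal{Z}_i(A)}$ sending $\ket{N}$ to the Hilbert--Schmidt operator $\widehat{N}$ is isometric, because both inner products equal $\zeta(\tr_\Bord(N_1^\dagger N_2))$. Using non-degeneracy, the cylinder states approximate the identity. This shows that the image of $\CBUpre$ acts on $\mathcal{Z}_i(B)$ through a normal representation of $\CBUvN$. Normality is the step I expect to need the most care.
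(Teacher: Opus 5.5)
Your proposal is correct and is essentially the paper's argument. It lifts each QFT through $\BordX \to \CBU$ to a fiber functor via the universal property (Proposition \ref{prop:univ_property_CBU}), applies Doplicher--Roberts uniqueness (Theorem \ref{thm:uniqueness_of_fiber_functors}), and whiskers the resulting unitary monoidal isomorphism back along $\BordX \to \CBU$. Your sketch of the extension step takes a different route from Appendix \ref{app:univ_property}: you embed $\mathcal{H}_{B\sqcup\overline{A}}$ isometrically into Hilbert--Schmidt operators and use cylinder approximate identities to get normality, while the paper proves operator-norm isometry via the $2n$-th root trace formula and extends through the compact ideal; both routes work, and the paper treats the universal property as already established at this point.
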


\noindent More concretely, Theorem \ref{thm:uniqueness_of_fiber_functors} asserts that any two choices $\phi_{\mu, \nu}^{\mathcal{Z}_1}, \phi_{\mu, \nu}^{\mathcal{Z}_2}$ of tensorators \eqref{eq:fiber_functor_data}, defining two fiber functors, are related by some family of unitaries,
\begin{equation}\label{eq:unitary_natural_monoidal}
    g_\mu : \mathbb{C}^{d_\mu} \to \mathbb{C}^{d_\mu}, \quad \left(\bigoplus\nolimits_\rho \mathrm{id}_{\mathcal{V}_{\mu \nu}^\rho} \otimes g_\rho \right) \circ \phi_{\mu, \nu}^{\mathcal{Z}_1} = \phi_{\mu \nu}^{\mathcal{Z}_2} \circ (g_\mu \otimes g_\nu),
\end{equation}
comprising a unitary monoidal natural isomorphism $g : \mathcal{Z}_1 \Rightarrow \mathcal{Z}_2$. As with finding tensorators $\phi_{\mu \nu}^\mathcal{Z}$, finding a collection of unitaries $g_\mu$ satisfying the coherence axioms involves heavy-duty categorical algebra, for which we again refer the reader to \cite{muger2007abstract} for details.

\subsubsection{The symmetry group}
\label{sec:symmetry_group}

The final part of the DR reconstruction theorem concerns, not the uniqueness of the fiber functor $\mathcal{Z}$ itself, but instead the uniqueness of the unitary isomorphism $g : \mathcal{Z}_1 \Rightarrow \mathcal{Z}_2$ between any two fiber functors. To discuss this uniqueness, let us consider the groupoid $\mathrm{Spec}(\mathcal{C})$, whose objects are fiber functors $\mathcal{C} \to \sHilb$, and whose morphisms are unitary monoidal natural isomorphisms. We use the notation $\mathrm{Spec}(\mathcal{C})$ for this groupoid, as it is the categorical analog of the spectrum of a commutative C*-algebra (see Section \ref{sec:higher_averaging} and \cite{Johnson-Freyd:2015fua} for more on this perspective).

Theorem \ref{thm:existence_of_fiber_functors} states that $\mathrm{Spec}(\mathcal{C})$ is nonempty, and Theorem \ref{thm:uniqueness_of_fiber_functors} states that $\mathrm{Spec}(\mathcal{C})$ is connected. Thus, $\mathrm{Spec}(\mathcal{C})$ is characterized (up to equivalence) by its fundamental group at some basepoint $\mathcal{Z} \in \mathrm{Spec}(\mathcal{C})$, defined to be the group
\begin{equation}
    G \defined \mathrm{Aut}^{\otimes, \dagger}(\mathcal{Z}),
\end{equation}
of unitary monoidal natural automorphisms $g : \mathcal{Z} \Rightarrow \mathcal{Z}$ of the chosen fiber functor. Expanding $g$ in components as in \eqref{eq:unitary_natural_monoidal} defines an embedding $G \subset \prod\nolimits_\mu U(d_\mu)$ as a closed subgroup of a product of unitary groups, and thus $G$ is a compact Hausdorff group.\footnote{$G$ is a Lie group if and only if $\mathcal{C}$ admits a finite-dimensional tensor generator.} If $G$ is nontrivial, then while any two fiber functors are unitarily isomorphic, they will not be \textit{uniquely} unitarily isomorphic, which is why we still speak of ``choosing'' a fiber functor.

A useful point of comparison is the example of topological quantum mechanics, as considered in Section \ref{sec:Ex_TopoQM}, with $\zeta(S^1) = k$, whose Hilbert space on a point is some $k$-dimensional Hilbert space. Of course, every $k$-dimensional Hilbert space is unitarily equivalent to $\mathbb{C}^k$. However, there is no canonical choice for this unitary equivalence, as an abstract $k$-dimensional Hilbert space does not come with a canonical orthonormal basis. This fact is absolutely essential to many quantum-mechanical phenomena, including but not limited to: degenerate perturbation theory, Berry connections, anomalies, and the very possibility of Yang--Mills fields. As we will see, the $U(k)$ ambiguity of choosing a $k$-dimensional Hilbert space is precisely the ambiguity in choosing a fiber functor for the baby universe category $\CBU$ associated to this example.

In general, the automorphism group $G$ of a fiber functor $\mathcal{Z} : \CBU \to \sHilb$ has a natural interpretation in terms of the associated unitary QFT: it is merely a faithfully-acting unitary symmetry group, in the usual sense. The component of a unitary monoidal natural transformation $g : \mathcal{Z} \Rightarrow \mathcal{Z}$ at an object $B \in \CBU$ is a unitary operator,
\begin{equation}\label{eq:symmetry_operator}
    \widehat{U}_B(g) : \widetilde{\mathcal{H}}_B \to \widetilde{\mathcal{H}}_B,
\end{equation}
and we have $\widehat{U}_B(g_1) \widehat{U}_B(g_2) = \widehat{U}_B(g_1 g_2)$. Moreover, the operators \eqref{eq:symmetry_operator} tensor factorize on disjoint unions, $\widehat{U}_{B_1 \sqcup B_2}(g) = \widehat{U}_{B_1}(g) \otimes \widehat{U}_{B_2}(g)$, and commute with the Euclidean evolution operator along any bordism, as they are components of a natural transformation. In particular, they commute with the evolution operators $e^{- \beta \widehat{H}_B}$ associated to Euclidean cylinders, which implies that they commute with the Hamiltonian operators $\widehat{H}_B$.

By the same argument, $G$ is a symmetry under which all source fields in the class $\mathcal{X}$ are neutral. In fact, it is precisely the maximal such symmetry, as any such symmetry will define a unitary monoidal natural automorphism of $\mathcal{Z}$.\footnote{Non-invertible symmetries will \textit{not} define monoidal natural endomorphisms, as they cannot be pushed past an arbitrary bordism. See \cite{McNamara:2021cuo,Heckman:2024obe} for more discussion, and see \cite{muger2007abstract} for a proof that a unitary fiber functor does not even admit any non-invertible natural endomorphisms.} Finally, we note that under the decomposition \eqref{eq:extended_Hilbert_space} of $\widetilde{\mathcal{H}}_B$ into $\mu$-sectors, the symmetry operator $\widehat{U}_B(g)$ acts only on the tensor factors $\mathbb{C}^{d_\mu}$, and not at all on the $\mu$-sector Hilbert spaces $\mathcal{H}_B^\mu$, again by naturality.

As we are considering fiber functors valued in $\sHilb$, the group
\begin{equation}
    \mathbb{Z}_2^F = \mathrm{Aut}^{\otimes, \dagger}(\mathrm{id}_\sHilb),
\end{equation}
generated by the fermion parity automorphism $(-1)^F$ always acts on $\mathcal{Z}$ by unitary monoidal natural transformations. Thus, we obtain a canonical map $\mathbb{Z}_2^F \to G$. The element $(-1)^F \in G$ in the image of this map is always central, and so $G$ is what the literature on Tannakian reconstruction calls a \textit{super-group}: a group $G$ with a choice of central involution $(-1)^F$. By the spin-statistics theorem, the element $(-1)^F \in G$ is nontrivial when $\mathcal{C}$ is fermionic, and is trivial when $\mathcal{C}$ is bosonic.

Physically, for fiber functors on $\CBU$, the symmetry of the associated QFT is what is known as a \textit{Spin-$G$ symmetry}, meaning that the faithfully acting symmetry is given by the diagonal quotient,
\begin{equation}\label{eq:spin_G}
    \frac{{Spin}(d) \times G}{\mathbb{Z}_2^F},
\end{equation}
of the product of the Euclidean spin rotation group and $G$, by the spin-statistics theorem. For example, a Spin$^c$ symmetry is a Spin-$G$ symmetry for $G = U(1)_{{spin}^c}$, namely, the group $U(1)$ with $(-1)^F$ identified with $-1 \in U(1)$. For bosonic QFTs, for which $\CBU$ is bosonic, the quotient \eqref{eq:spin_G} reduces to,
\begin{equation}
    {SO}(d) \times G,
\end{equation}
as $(-1)^F \in G$ is trivial. The fact that $\mathbb{Z}_2^F$ always appears as a nontrivial symmetry of any fermionic QFT is due to our restriction forbidding the class $\mathcal{X}$ from including any fermionic sources, as discussed in Section \ref{sec:germs}.

By the observations discussed above, a fiber functor $\mathcal{Z} : \mathcal{C} \to \sHilb$ canonically lifts to a symmetric W*-tensor functor,
\begin{equation}\label{eq:lift_to_sRep}
    \mathcal{Z}_G : \mathcal{C} \to \mathrm{sRep}(G),
\end{equation}
where $\mathrm{sRep}(G)$ denotes the rigidly generated symmetric W*-tensor category of unitary \textit{super-representations} of $G$, namely, unitary representations of $G$ on super-Hilbert spaces such that $(-1)^F \in G$ acts by the fermion-parity automorphism. For instance, $\mathrm{sRep}(\mathbb{Z}_2^F)$ is just $\sHilb$ itself, and $\mathrm{sRep}(U(1)_{\text{spin}^c})$ is the category of representations of $U(1)$ on super-Hilbert spaces that satisfy the spin-charge relation. If $\mathcal{C}$ is bosonic, then $(-1)^F \in G$ is trivial, and we have a canonical equivalence $\mathrm{sRep}(G) = \mathrm{Rep}(G)$ of symmetric W*-tensor categories.

We may now state the final part of the DR reconstruction theorem, which, with a fiber functor given, is the concrete reconstruction theorem of Tannaka \cite{tannaka1939dualitatssatz} and Krein \cite{Krein1949Duality}.

\begin{theorem}[Doplicher--Roberts, part (c)]\label{thm:concrete_Tannaka}
    Let $\mathcal{C}$ be a rigidly generated symmetric W*-tensor category, and let $\mathcal{Z} : \mathcal{C} \to \sHilb$ be a fiber functor (whose existence is guaranteed by Theorem \ref{thm:existence_of_fiber_functors}). Then the canonical lift $\mathcal{Z}_G : \mathcal{C} \to \mathrm{sRep}(G)$ described in \eqref{eq:lift_to_sRep} is an equivalence of symmetric W*-tensor categories.
\end{theorem}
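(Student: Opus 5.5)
The plan is to reduce to the semisimple setting and run the classical Tannaka--Krein argument. By the adjoint equivalence \eqref{eq:adjoint_equiv_tensor_categories}, it suffices to show that the restriction of $\mathcal{Z}_G$ to the symmetric unitary tensor category $\mathcal{C}^\mathrm{f.d.}$ is a symmetric unitary tensor equivalence onto $\mathrm{sRep}(G)^\mathrm{f.d.}$, the category of finite-dimensional unitary super-representations. Applying $\Hilb(-)$ then recovers the W*-statement. As in the proof of Theorem \ref{thm:existence_of_fiber_functors}, I would first bosonize, twisting the symmetry as in \eqref{eq:bosonization_symmetry}. This reduces to the bosonic case, in which $\mathcal{Z}$ lands in $\Hilb$ and $(-1)^F\in G$ is trivial. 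The central involution $(-1)^F$ and the super-grading are then restored at the end, using the spin-statistics statement that $\mathcal{Z}$ sends bosons to even spaces and fermions to odd spaces. The lift $\mathcal{Z}_G$ is already symmetric monoidal and unitary by construction, so it remains to show that it is fully faithful and essentially surjective.

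\emph{Step 1: faithfulness and reduction of fullness to invariants.} Faithfulness is immediate. $\mathcal{C}^\mathrm{f.d.}$ is semisimple, and $\mathcal{Z}(\mu)=\mathbb{C}^{d_\mu}\neq 0$ for every simple $\mu$, so no nonzero morphism is killed. For fullness, I would use the duals $(\overline{X},\mathrm{ev}_X,\mathrm{coev}_X)$, which both $\mathcal{C}^\mathrm{f.d.}$ and $\mathcal{Z}$ preserve. Frobenius reciprocity gives $\mathcal{C}(X\to Y)\cong\mathcal{C}(\varnothing\to Y\sqcup\overline{X})$ on the categorical side and $\mathrm{Hom}_G(\mathcal{Z}X,\mathcal{Z}Y)\cong \mathcal{Z}(Y\sqcup\overline{X})^G$ on the representation side, compatibly with $\mathcal{Z}_G$. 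Fullness therefore reduces to the following claim: for every finite-dimensional $X$, the injective map $\mathcal{C}(\varnothing\to X)\to\mathcal{Z}(X)^G$ is surjective. Decomposing $X$ into simples, this amounts to showing that $\mathcal{Z}(\mu)^G=0$ for every simple $\mu\not\cong\varnothing$.

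\emph{Step 2: the function algebra.} This is the main obstacle, since it requires $G$ to be large enough. I would build the commutative $*$-algebra
\begin{equation*}
    \mathcal{O}=\bigoplus_\mu \overline{\mathcal{Z}(\mu)}\otimes\mathcal{Z}(\mu),
\end{equation*}
whose elements are matrix coefficients. The product is defined by the tensorator $\mathcal{Z}(\mu)\otimes\mathcal{Z}(\nu)\cong\bigoplus_\rho\mathcal{V}_{\mu\nu}^\rho\otimes\mathcal{Z}(\rho)$ followed by projection onto the simple summands, and commutativity comes from the symmetry of $\mathcal{C}$. The $*$-structure comes from the unitary dual functor, and the counit is the projection onto the $\mu=\varnothing$ summand. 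The positive trace on $\mathcal{C}$ makes $\mathcal{O}$ a pre-C*-algebra. I would then check that the characters of its C*-completion are exactly the unitary monoidal natural automorphisms of $\mathcal{Z}$, that is, the elements of $G$. By Gelfand duality, $\mathcal{O}$ is then a dense $*$-subalgebra of $C(G)$, identified with matrix coefficients of the $\mathcal{Z}(\mu)$. Moreover, the counit, i.e.\ the projection onto the $\varnothing$-component, coincides with integration against Haar measure. It follows that the matrix coefficients of any simple $\mu\neq\varnothing$ have zero Haar integral, hence $\mu$ contains no trivial summand, so $\mathcal{Z}(\mu)^G=0$. This proves fullness. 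The delicate points in this step are positivity of the Haar-type functional and the identification of characters with $G$. If that identification becomes awkward, I would instead follow M\"uger's route \cite{muger2007abstract} through the commutative algebra object in the ind-completion.

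\emph{Step 3: essential surjectivity.} Each $\mathcal{Z}_G(\mu)$ is irreducible by fullness, since its commutant is $\mathcal{C}(\mu\to\mu)=\mathbb{C}$. Suppose some irreducible unitary representation $\pi$ of $G$ were not of the form $\mathcal{Z}_G(\mu)$. Then by Schur orthogonality its matrix coefficients would be $L^2$-orthogonal to all of $\mathcal{O}$. But $\mathcal{O}$ is dense in $C(G)$ by Stone--Weierstrass: it separates points because $G\subset\prod_\mu U(d_\mu)$, and it is closed under products (by Step 2) and conjugation (by duals). This is a contradiction. Since every object of $\mathrm{sRep}(G)^\mathrm{f.d.}$ is a finite direct sum of irreducibles, $\mathcal{Z}_G$ is essentially surjective. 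Reinstating the grading, $(-1)^F\in G$ acts by $\theta_\mu=\pm1$ on each $\mathcal{Z}(\mu)$, so the image consists precisely of the super-representations, which completes the equivalence.
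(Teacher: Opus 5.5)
Your argument is correct. It supplies something the paper does not, since the paper gives no proof of part (c). The paper transfers the classical statement for symmetric unitary tensor categories along the adjoint equivalence \eqref{eq:adjoint_equiv_tensor_categories}, then invokes Tannaka--Krein and Doplicher--Roberts and defers all details to \cite{muger2007abstract}.

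You make the same reduction to $\mathcal{C}^{\mathrm{f.d.}}$, but then prove the concrete reconstruction yourself:
\begin{itemize}
\item Frobenius reciprocity reduces fullness to the vanishing of $\mathcal{Z}(\mu)^G$ for nontrivial simple $\mu$.
\item The commutative $*$-algebra $\mathcal{O}$ of representative functions has $*$-characters exactly the elements of $G$. Together with its Haar state (projection onto the $\varnothing$-block) and Gelfand duality, this shows $G$ is large enough to kill those invariants.
\item Schur orthogonality plus Stone--Weierstrass give essential surjectivity.
\end{itemize}
Your route buys a self-contained proof. It pinpoints where unitarity enters: positivity and faithfulness of the Haar state, inherited from the positive trace on $\mathcal{C}$. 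It also shows that part (c) needs none of the coherence machinery behind parts (a) and (b), only the given fiber functor. The paper's citation buys brevity and a treatment in which all three parts come out of one framework.

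A few small points to tidy up.
\begin{itemize}
\item The projection onto the $\varnothing$-summand is the Haar state, not the counit. The counit is evaluation at the identity, $\overline{v}\otimes w\mapsto\langle v,w\rangle$.
\item Identifying the Haar state with $\int_G$ needs a one-line check plus uniqueness of Haar measure. The check is that the state is invariant under translation by $G$, which acts blockwise on $\mathcal{O}$ and trivially on the $\varnothing$-block.
\item To get ``characters $=G$'' cleanly, complete $\mathcal{O}$ in its universal C*-norm. This norm is finite because the matrix coefficients $u^\mu_{ki}$ of $\mathcal{Z}(\mu)$ in an orthonormal basis satisfy $\sum_k (u^\mu_{ki})^{*}u^\mu_{ki}=1$. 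Then every $g\in G$ and the Haar state extend, and faithfulness of the Haar state makes $\mathcal{O}\to C(G)$ injective.
\item Bosonizing leaves $G=\mathrm{Aut}^{\otimes,\dagger}(\mathcal{Z})$ unchanged, since $G$ only sees the monoidal structure. It merely changes which element is distinguished, so read ``$(-1)^F\in G$ is trivial'' in that sense.
\item The bosonization step is in fact unnecessary. Fullness and essential surjectivity are monoidal statements. Each block $\overline{\mathcal{Z}(\mu)}\otimes\mathcal{Z}(\mu)$ is even, so $\mathcal{O}$ is honestly commutative without it.
\end{itemize}
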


Applied to a fiber functor on $\CBU$, we obtain a complete characterization of the $\mu$-sectors: they are precisely the irreducible unitary super-representations of a symmetry $G$ of the associated unitary QFT. As the $\mu$-sectors characterize the breakdown of ER = EPR, we learn that the only obstruction to ER = EPR, in any theory of quantum gravity, is the incompleteness of the spectrum of charged states under a symmetry group $G$. Moreover, the identification of $\CBU$ with $\mathrm{sRep}(G)$ allows us to more explicitly identify the multi-boundary universal Hilbert spaces. Finally, by computing the right-hand side of \eqref{eq:overlap_with_HH_object} in $\mathrm{sRep}(G)$, we prove part (c) of Theorem \ref{thm:main}.

\begin{theorem}[Theorem \ref{thm:main}, part (c)]
Fix a unitary QFT on $\BordX$, with partition function $\zeta$, and with Hilbert spaces $\widetilde{\CH}_B$. Then the universally constructed Hilbert spaces associated to $(\mathcal{X}, \zeta)$ are given by the invariant sector,
\begin{equation}\label{eq:quenched_gauging_proven}
    \mathcal{H}_{B_1 \sqcup \cdots \sqcup B_n} = \left( \widetilde{\CH}_{B_1} \otimes \cdots \otimes \widetilde{\CH}_{B_n} \right)^G,
\end{equation}
under the maximal symmetry group $G$ under which all sources in $\mathcal{X}$ are neutral.
\end{theorem}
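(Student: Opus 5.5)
The plan is to compute the right-hand side of \eqref{eq:overlap_with_HH_object} after transporting it along the equivalence of Theorem \ref{thm:concrete_Tannaka}. By Proposition \ref{prop:univ_property_CBU}, the given unitary QFT extends uniquely to a fiber functor $\mathcal{Z} : \CBU \to \sHilb$ with $\mathcal{Z}(B) = \widetilde{\CH}_B$. Let $G = \mathrm{Aut}^{\otimes,\dagger}(\mathcal{Z})$, and let $\mathcal{Z}_G : \CBU \to \mathrm{sRep}(G)$ be the lift \eqref{eq:lift_to_sRep}. The discussion in Section \ref{sec:symmetry_group} identifies $G$ with the maximal unitary symmetry of the QFT under which all sources in $\mathcal{X}$ are neutral. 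Both directions are needed: every element of $G$ gives such a symmetry, and conversely any symmetry that commutes with all bordisms and factorizes on disjoint unions is a unitary monoidal natural automorphism of $\mathcal{Z}$ restricted to $\BordX$. By the uniqueness clause of the universal property, such an automorphism extends to all of $\CBU$.

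First I would use \eqref{eq:overlap_with_HH_object}, which gives $\mathcal{H}_{B_1 \sqcup \cdots \sqcup B_n} = \bbrakket{\varnothing | B_1 \sqcup \cdots \sqcup B_n}$, the morphism space $\CBU(\varnothing \to B_1 \sqcup \cdots \sqcup B_n)$ with its Hilbert space structure. One subtlety is that the fused object is typically infinite-dimensional, so this space should be read as the Hilbert-space-valued inner product rather than as Hilbert--Schmidt morphisms. Since $\mathcal{Z}_G$ is a symmetric W*-tensor equivalence, it sends the unit $\varnothing$ to the trivial representation $\mathbb{C}$. It sends the fusion to $\widetilde{\CH}_{B_1} \otimes \cdots \otimes \widetilde{\CH}_{B_n}$ with the diagonal $G$-action, using the tensorator of $\mathcal{Z}$. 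A W*-equivalence is fully faithful and preserves Hilbert-space-valued inner products, by the W*-Yoneda lemma \eqref{eq:W*_Yoneda}. This gives
\[
\bbrakket{\varnothing | B_1 \sqcup \cdots \sqcup B_n} \cong \mathrm{Hom}_{\mathrm{sRep}(G)}\big(\mathbb{C}, \widetilde{\CH}_{B_1} \otimes \cdots \otimes \widetilde{\CH}_{B_n}\big) = \big(\widetilde{\CH}_{B_1} \otimes \cdots \otimes \widetilde{\CH}_{B_n}\big)^G .
\]
Only even maps count as morphisms, but the bosonic $\mathbb{C}$ lands in the even part anyway. Moreover, $(-1)^F \in G$ forces invariant vectors to be even, which matches the statement that universal states are bosonic.

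The main obstacle is to check that this abstract isomorphism is the concrete one. It should be the isometric embedding $\mathcal{H}_B \hookrightarrow \widetilde{\CH}_B$, $\ket{M} \mapsto \ket{\psi_M}$, for $B = B_1 \sqcup \cdots \sqcup B_n$. Unwinding the definitions, a manifold state $\ket{M}$ with $M : \varnothing \to B$ is the image of the bordism $M$ in $\CBU(\varnothing \to B)$, and $\mathcal{Z}$ maps it to $\widehat{M}(1) = \ket{\psi_M}$. So the map is the QFT's path-integral embedding, and it is isometric because $\zeta(M^\dagger \circ M)$ computes both norms. Its image lies in the $G$-invariants by naturality of each $g \in G$ with respect to the bordism $M : \varnothing \to B$.

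Surjectivity onto the invariants is the content of full faithfulness. To see it concretely, decompose $\widetilde{\CH}_B = \bigoplus_\mu \mathcal{H}_B^\mu \otimes \mathbb{C}^{d_\mu}$ as in \eqref{eq:extended_Hilbert_space}, where $G$ acts irreducibly on $\mathbb{C}^{d_\mu}$ through $\mu \in \mathrm{sRep}(G)$. The invariants are then exactly the $\mu = \varnothing$ summand $\mathcal{H}_B^\varnothing = \mathcal{H}_B$. This uses that distinct $\mu$-sectors are non-isomorphic irreducibles, with only $\varnothing$ trivial, which follows from Theorem \ref{thm:concrete_Tannaka}. Finally, a different choice of fiber functor only changes the identification by a unitary isomorphism that conjugates $G$, so the statement is independent of that choice.
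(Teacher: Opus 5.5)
Your proposal is correct and follows the paper's route: extend the QFT to a fiber functor on $\CBU$ via Proposition~\ref{prop:univ_property_CBU}, identify $\CBU \simeq \mathrm{sRep}(G)$ by Theorem~\ref{thm:concrete_Tannaka}, and evaluate $\bbrakket{\varnothing | B_1 \sqcup \cdots \sqcup B_n}$ there, with your comparison against the path-integral embedding and the $\mu$-sector decomposition \eqref{eq:extended_Hilbert_space} spelling out what the paper leaves implicit. (The Hilbert--Schmidt caveat you raise is vacuous: since $\mathrm{End}(\varnothing)=\mathbb{C}$ and $\tr(\mathrm{id}_\varnothing)=1$, every bounded morphism out of $\varnothing$ is automatically Hilbert--Schmidt, and the inner product $S^\dagger T \in \mathrm{End}(\varnothing)$ is intrinsic, which is exactly why any unitary equivalence preserves it.)
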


As a result, we may reinterpret any breakdown of ER = EPR as the result of projecting onto the neutral sector of an actual unitary QFT.

\subsection{Higher ensemble averaging}\label{sec:higher_averaging}

Let us now deliver on our promise to explain the precise sense in which projecting onto $G$-invariants is, literally, an ensemble average. As motivation, let us first return to a non-factorizing partition function (thus dropping Axiom \ref{axiom:multiplicativity}). In such a case, we will have an ensemble of $\alpha$ sectors, and in each $\alpha$-sector, an underlying partition function $\zeta_\alpha$. The original partition function $\zeta$ is recovered as the average,
\begin{equation}
    \zeta(M) = \int d\alpha\ \zeta_\alpha(M),
\end{equation}
as described in the Introduction.

However, as discussed by Marolf and Maxfield \cite{Marolf:2020xie}, the universal Hilbert spaces $\CH_B$ will also, themselves, be given by an ensemble average over $\alpha$. More precisely, the universal Hilbert spaces will be given by,
\begin{equation}
    \CH_B = \int^\oplus d\alpha\ \CH_B^\alpha,
\end{equation}
where $\CH_B^\alpha$ is the universal Hilbert space constructed from $\zeta_\alpha$, and we recall that the direct integral denotes the Hilbert space of square integrable global sections of some field of Hilbert spaces. In particular, we have,
\begin{equation}\label{eq:HBU_direct_integral}
    \HBU = \int^\oplus d\alpha\ \CH_\mathrm{BU}^\alpha = \int^\oplus d\alpha\ \mathbb{C},
\end{equation}
using that $\CH_\mathrm{BU}^\alpha = \mathbb{C}$ for each $\alpha$-sector.
As a result, we see that the correct notion of ``ensemble average'' for a field of Hilbert spaces is just their direct integral.

Now, we return to our main case of a factorizing partition function $\zeta$ (thus restoring Axiom \ref{axiom:multiplicativity}). We have seen that,
\begin{equation}
    \CBU \cong \mathrm{Rep}(G), \quad \mathcal{H}_B = \widetilde{\CH}_B^G,
\end{equation}
for some compact Hausdorff group $G$ (and we restrict to the bosonic case for simplicity). We recall a standard fact in mathematics: for any sort of mathematical object $X$, an action of a group $G$ on $X$ is the same thing as a family,
\begin{equation}
    \begin{tikzcd}
        X \arrow[r] & X_G \arrow[d] \\
        & BG
    \end{tikzcd}
\end{equation}
of objects over the stack (or groupoid) $BG$ with fiber $X$. Here, $X_G$ denotes the homotopy quotient of $X$ by the $G$ action.

To see why this is the case, recall that the stack $BG$ consists of a single point with automorphism group $G$. To specify a family over $BG$, one simply specifies the fiber $X$ over the single point, and then specifies transition data encoding how $X$ comes back to itself after traveling along an automorphism. This transition data, consisting of a compatible family of automorphisms of $X$ for each $g \in G$, is the same thing as an action of $G$ on $X$. Further, in complete generality, the space of global sections over $BG$ is the space $X^G$ of homotopy fixed points. This is because a global section is a choice of element $x \in X$ (the value of the section at the single point) which glues to itself under the transition functions, meaning $g \cdot x = x$ for all $g \in G$.

As a result, a Hilbert space $\mathcal{H}$ with a unitary $G$-representation is simply a field of Hilbert spaces over $BG$. Moreover, the direct integral,
\begin{equation}\label{eq:invts_is_int_over_BG}
    \int_{BG}^\oplus \mathcal{H} = \mathcal{H}^G,
\end{equation}
defined to be the space of global sections, is simply the $G$-invariant subspace. Thus, we may rewrite \eqref{eq:quenched_gauging_proven} as a single direct integral,
\begin{equation}\label{eq:quenched_gauging_as_direct_integral}
    \mathcal{H}_{B_1 \sqcup \cdots \sqcup B_n} = \int^\oplus_{BG} \widetilde{\CH}_{B_1} \otimes \cdots \otimes \widetilde{\CH}_{B_n}.
\end{equation}
Moreover, the baby universe category itself can be written as,
\begin{equation}\label{eq:CBU_ensemble_avg}
    \CBU \cong \mathrm{Rep}(G) = \Hilb(BG) = \int_{BG}^\oplus \Hilb,
\end{equation}
analogously to \eqref{eq:HBU_direct_integral}, where here $\Hilb(BG)$ denotes the W*-category of measurable fields of Hilbert spaces over $BG$, or in other words, global sections of the trivial $\Hilb$ bundle of categories.

The expression \eqref{eq:invts_is_int_over_BG} has yet another interpretation in terms of higher algebraic geometry (or, really, higher spectral theory). First, we note that for the case of a non-factorizing partition function, we may evaluate the direct integral \eqref{eq:HBU_direct_integral} as,
\begin{equation}
    \HBU = \int^\oplus d\alpha\ \mathbb{C} = L^2\big(\mathrm{Spec}(\mathcal{A}_\mathrm{BU})\big),
\end{equation}
identifying $\mathrm{Spec}(\mathcal{A}_\mathrm{BU})$ with the space of $\alpha$-sectors. Recall from \ref{sec:DR_reconstruction} that $BG$ is, not quite canonically, simply the stack $\mathrm{Spec}(\CBU)$ of fiber functors on $\CBU$. Thus, we obtain the entirely canonical expression,
\begin{equation}
    \CBU = \int^\oplus_{\mathrm{Spec}(\CBU)} \Hilb,
\end{equation}
expressing the rigidly generated symmetric W*-tensor category $\CBU$ as the categorical ring of functions on its spectrum. We may also rephrase \eqref{eq:quenched_gauging_as_direct_integral} as a direct integral over $\mathrm{Spec}(\CBU)$.

Thus, our main result states: the universal Hilbert spaces $\mathcal{H}_B$ produced by the gravitational path integral are the ensemble average (meaning, direct integral) of a measurable field of Hilbert spaces $\widetilde{\CH}_B$ over the stack of fiber functors on $\CBU$. This phrasing is completely canonical, as it does not require us to non-canonically choose a fiber functor at any point. As we discuss in Section \ref{sec:including_alpha_sectors}, this phrasing also applies, verbatim, when we drop multiplicativity of $\zeta$ and consider both partition function and Hilbert-space factorization at once.

\subsection{Reconstruction in examples}

We now illustrate our entire construction in three examples, where we can rely on known constructions in the literature to directly compute the entire baby universe category at once.

\subsubsection{Reconstructing topological quantum mechanics}

For our first example, we return to the theory of topological quantum mechanics (TQM) on $\Bord_1^{SO}$ considered in Section \ref{sec:Ex_TopoQM}, based on \cite{Maxfield:2023mdj}, defined by the non-negative integer $\zeta(S^1) = k$. Recall that the objects of $\Bord_1^\mathrm{SO}$ are given by pairs $(n_+, n_-)$ of non-negative integers, where $n_\pm$ denotes the number of positively or negatively oriented points, respectively. We saw that there was a constraint, arising from the bordism group $\Omega_1^{SO}$, that a bordism $(n_+, n_-) \to (m_+, m_-)$ only existed if $n_+ + m_- = m_+ + n_-$. However, when this constraint is satisfied, any bordism is equivalent, modulo disconnected components, to some linear combination of bordisms specified by a permutation in $S_{n_+ + m_-}$ specifying how the points are connected.

By directly computing the null states, \cite{Maxfield:2023mdj} showed that the null states have a description where the null states correspond precisely to the linear combinations of permutations labeled by Young diagrams with more than $k$ rows. Thus, we learn that, when $n_+ + m_- = m_+ + n_-$, we have,
\begin{equation}\label{eq:hom_in_CBU_TQM}
    \CBU\big( (n_+, n_-) \to (m_+, m_-)\big) = \mathbb{C}[S_{n_+ + m_-}]/\{\text{rows}(D) > k\},
\end{equation}
which, by Schur--Weyl duality, is precisely the same as the space of $U(k)$-intertwiners,
\begin{equation}
    \Box^{\otimes n_+} \otimes \overline{\Box}^{\otimes n_-} \to \Box^{\otimes m_+} \otimes \overline{\Box}^{\otimes m_-}.
\end{equation}
As a result, we see that the symmetric W*-tensor functor $\CBU \to \mathrm{Rep}(U(k))$ mapping $\mathrm{pt}^+ \mapsto \Box$ is an equivalence of categories, as expected from the fact that the universal construction produced all the $U(k)$-invariant states in the underlying TQM.

Up to translation from algebraic representation theory to unitary representation theory, the description \eqref{eq:hom_in_CBU_TQM} of the morphism spaces in $\CBU$ precisely matches the morphisms in the formal category $\mathrm{Rep}(GL_t)$ defined by Deligne and Milne in \cite{deligne1982tannakian}, with $k$ playing the role of the formal variable $t$. The null states discussed in \cite{Maxfield:2023mdj} are in a precise correspondence with the Schur functors \cite{deligne2002categories} which annihilate $\Box \in \mathrm{Rep}(U(k))$. This connection, between Maxfield's example \cite{Maxfield:2023mdj} and the Tannakian formalism of Deligne and Milne \cite{deligne1982tannakian,deligne1990categories,deligne2002categories}, was a large part of the formal inspiration for this paper.

\subsubsection{Reconstructing a fermion}\label{sec:reconstructing_a_fermion}

Our second example will continue from Section \ref{sec:ex_fTQM}, where we discussed the fermionic TQM on $\Bord_1^{Spin}$ with a single fermionic state, defined by,
\begin{equation}
    \zeta(S^1_\mathrm{ap}) = +1, \quad \zeta(S^1_\mathrm{p}) = -1.
\end{equation}
Let us first note that, in $\CBU$, the semicircle bordism,
\begin{equation}
   \includegraphics[height=1cm,valign=c]{figures/TQM_pi.pdf} : \mathrm{pt}^+ \sqcup \mathrm{pt}^- \to \varnothing,
\end{equation}
is actually a unitary isomorphism, due to the one-dimensionality of the four-boundary Hilbert space. Thus, $\mathrm{pt}^+$ is invertible, with inverse $\mathrm{pt}^-$.

As a consequence, we do not need to label objects in $\CBU$ produced by manifolds by a pair $(n_+, n_-)$ of integers, counting the positively and negatively oriented points, and can instead label them by the net number $n = n_+ - n_-$ of points. Moreover, each object $n \in \CBU$ is simple, since they are all invertible. As a result, we have an equivalence,
\begin{equation}\label{eq:rep_U(1)}
    \CBU \simeq \mathrm{Rep}(U(1)),
\end{equation}
at least as a monoidal category, where $n$ is mapped to the representation of charge $n$. Without considering statistics, this is simply the $k = 1$ case of the previous example.

However, \eqref{eq:rep_U(1)} is not correct at the level of \textit{symmetric} monoidal categories. This is because, as seen in Section \ref{sec:ex_fTQM}, the swap bordism,
\begin{equation}
    s_{\mathrm{pt}^+, \mathrm{pt}^+} : \mathrm{pt}^+ \sqcup \mathrm{pt}^+ \to \mathrm{pt}^+ \sqcup \mathrm{pt}^+,
\end{equation}
is equivalent, modulo null states, to $- \mathrm{id}_{\mathrm{pt}^+ \sqcup \mathrm{pt}^+}$. Thus, in $\CBU$, the object $\mathrm{pt}^+$ has fermionic self-statistics. We may also see this more directly: consider the $2 \pi$-twist automorphism,
\begin{equation}
    \theta_{\mathrm{pt}^+} : \mathrm{pt}^+ \to \mathrm{pt}^+,
\end{equation}
which is realized in this topological bordism category by the nontrivial spin interval. We compute,
\begin{equation}
    \braket{\theta_{\mathrm{pt}^+} | \mathrm{id}_{\mathrm{pt}^+}} = \zeta(S^1_\mathrm{p}) = -1,
\end{equation}
and so, modulo null states, we have $\theta_{\mathrm{pt}^+} = - \mathrm{id}_{\mathrm{pt}^+}$, and so $\mathrm{pt}^+$ is a fermion.

Taking this into account, we see that, as a symmetric W*-tensor category, we have,
\begin{equation}
    \CBU \simeq \mathrm{sRep}(U(1)_{\text{spin}^c}),
\end{equation}
the category of representations of $U(1)$ on super-Hilbert spaces which satisfy the spin-charge relation.

\subsubsection{Reconstructing a two-dimensional TQFT}

Our final example will be in $d = 2$, for a partition function $\zeta$ defined on the bordism category $\Bord_2^{SO}$ of oriented surfaces. Two-dimensional unitary oriented TQFTs are extremely simple, and are all of the following form: the TQFT decomposes as a direct sum over a finite set of $k \in \mathbb{Z}_{\geq 0}$ vacua. The theory in each vacuum is invertible, and uniquely characterized by a positive number $\lambda_i > 0$ setting the Euler counterterm. The partition function on a connected surface $\Sigma_g$ of genus $g$ is given by,
\begin{equation}
    \zeta(\Sigma_g) = \sum_{i = 1}^k \lambda_i^{2 - 2 g}.
\end{equation}
For simplicity, we consider the case when we have $\lambda_i = 1$ for all $i$, in which we have $\zeta(\Sigma_g) = k$ for all connected Riemann surfaces.

In constructing $\CBU$ for $\zeta$, we are free to ignore any handles, as modulo null states the genus of a surface with boundary makes no difference (we are free to ignore disconnected components, as always, for a similar reason). Thus, modulo null states, each bordism in $\CBU$ is determined uniquely by the induced partition of its set of boundaries into bulk connected components. If we use $n \cdot S^1 \in \Bord_2^{SO}$ to denote the disjoint union of $n$ circles, and label partitions by Young diagrams, we obtain a spanning set of morphisms,
\begin{equation}
    D \in \CBU( n \cdot S^1 \to m \cdot S^1 ), \quad \text{boxes}(D) = n + m.
\end{equation}
Composition is defined formally by convolving the induced partitions, with any handles which might appear removed, and any closed surfaces evaluating to the number $k$ of vacua.

This description of $\CBU$ for the two-dimensional TQFT with $\zeta(\Sigma_g) = k$ is, precisely, the formal category $\mathrm{Rep}(S_t)$ constructed by Deligne in \cite{deligne2007categorie}, again up to a translation from unitary to algebraic representation theory. The number $k$ of vacua in the TQFT plays the role of the formal variable $t$, and moreover, when $t = k \in \mathbb{Z}_{\geq 0}$ we simply obtain, precisely, $\CBU = \mathrm{Rep}(S_k)$, after quotienting by null states. Under this identification, $S^1 \in \CBU$ maps to the defining permutation representation $\mathbb{C}^k$ of $S_k$, and the hemisphere $\varnothing \to S^1$ maps to the inclusion of the trivial representation of $S_k$ into $\mathbb{C}^k$ as the vector $(1, \dots, 1)$ (of squared norm $k$, compatibly with $\zeta(S^2) = k$).

Our formalism shows, then, that the universal construction for this two-dimensional TQFT produces, precisely, the set of states which are neutral under the $S_k$ symmetry permuting the vacua. Note that $\CBU$ is smart enough to know the difference between TQM and two-dimensional TQFT: while the total symmetry of a space of $k$ degenerate ground states is $U(k)$, the only valid local symmetry of the two-dimensional TQFT is $S_k$, due to the superselection into vacua.

\section{Robustness and extensions}
\label{sec:robustness_extensions}

We have completed our construction of $\CBU$ and proof of Theorem \ref{thm:main}. Now, in this section, we discuss the robustness of our framework as well as some simple extensions. First, we discuss the extent to which $\CBU$ depends on the choice of counterterms, and how it behaves under stacking of theories. We then describe how fermionic sources require us to generalize our framework, as is necessary to restore ER = EPR in any fermionic theory. Finally, we sketch how our framework might be generalized once Axiom \ref{axiom:multiplicativity} is dropped, so that we have a non-factorizing partition function and are forced to deal with $\alpha$-sectors.

\subsection{Counterterm independence and stacking}\label{sec:counterterms}

From the perspective of formal high-energy theory, a natural concern is that our construction requires us to know the closed-manifold partition function $\zeta$ as a fixed finite number. As discussed in Section \ref{sec:functorial_QFT}, continuum QFTs are frequently considered only up to equivalence under stacking of counterterms, meaning invertible unitary QFTs $\BordX \to \sHilb^\times$. We now discuss how our formalism works for invertible QFTs and behaves under stacking, in order to establish the sense in which $\CBU$ is counterterm-independent.

First, we need a way to discuss invertible theories knowing only their closed-manifold partition function. As invertible theories are, by definition, theories which are invertible under stacking, we make the following definition.

\begin{definition}\label{defn:invertible_theory}
    Let $\zeta$ be a partition function on $\BordX$, satisfying Axioms \ref{axiom:finiteness}-\ref{axiom:reflection_positivity}. We say that $\zeta$ is \textit{invertible} if $\zeta(M) \neq 0$ for all closed manifolds $M$, and the inverse partition function $\zeta^{-1}$ is also reflection positive (i.e., satisfies Axiom \ref{axiom:reflection_positivity}).
\end{definition}

\noindent We only need to assume that $\zeta^{-1}$ is reflection positive, as it automatically satisfies Axioms \ref{axiom:finiteness}-\ref{axiom:continuity} as soon as $\zeta$ does, provided $\zeta(M) \neq 0$ for all $M$.\footnote{It would be interesting to attempt to prove the conjecture of \cite{Freed:2016rqq} in our framework, that deformation classes of not-necessarily-topological invertible unitary QFTs are classified by a stable $d$-dimensional bordism group. This would likely require developing a notion of stable geometric structure, which might also supply a canonical unitary structure on $\BordX$, as in the stable topological case.}

Definition \ref{defn:invertible_theory} is a good definition of an invertible partition function due to the following standard fact: a Gram matrix $M_{ij}$, whose Hadamard (entrywise) inverse $M_{ij}^{-1}$ is also a Gram matrix, necessarily has rank one. As a result, the universal Hilbert spaces $\mathcal{H}_B$ associated to an invertible partition function are one-dimensional whenever the bordism class $[B] \in \Omega_{d-1}^\mathcal{X}$ vanishes, and vanish when $[B] \neq 0$ (as is always the case). This gives us a very simple description of $\CBU$ for any invertible theory,
\begin{equation}\label{eq:invertible_CBU}
    \CBU \simeq \Hilb[\Omega_{d-1}^\mathcal{X}]^{\sigma}.
\end{equation}
The superscript $\sigma$ denotes twisting the symmetry by a group homomorphism $\sigma : \Omega_{d-1}^\mathcal{X} \to \{\mathrm{b}, \mathrm{f}\}$, a theory-dependent assignment of fermion parity to bordism classes. For example, note that while the trivial theory on $\Bord_2^{Spin}$ is bosonic, the invertible Arf-invariant theory (the Kitaev chain \cite{Kitaev:2000nmw}) makes the Ramond-sector circle $[S^1_\mathrm{p}]$ a fermion.

While \eqref{eq:invertible_CBU} seems to forget most of the data of a given invertible theory, it is not lost. Instead, the invertible theory is encoded in the functor
\begin{equation}
    \BordX \to \CBU \simeq \Hilb[\Omega_{d-1}^\mathcal{X}]^\sigma,
\end{equation}
which includes knowledge of $\zeta$ by construction. We note that the choice of equivalence $\CBU \simeq \Hilb[\Omega_{d-1}^\mathcal{X}]^\sigma$ is the same as the choice of a fiber functor $\CBU \to \sHilb$, hence also the same as a choice of actual invertible QFT with partition function $\zeta$.

Let us now consider stacking two arbitrary partition functions $\zeta_1, \zeta_2$, each of which satisfies Axioms \ref{axiom:finiteness}-\ref{axiom:reflection_positivity}. The stacked theory, by definition, has partition function,
\begin{equation}\label{eq:stacking}
    \zeta(M) \defined \zeta_1(M) \zeta_2(M).
\end{equation}
By the Schur product theorem, the stacked partition function is automatically reflection positive. In general, $\CBU$ does not behave simply under stacking, as the tensor product of two QFTs may have additional symmetries beyond those which were already present in either theory; for example, when we stack two copies of the same QFT, we get a new symmetry which exchanges the copies.

However, for each manifold $B \in \BordX$, we do have a canonical isometric map between the universal Hilbert spaces defined with $\zeta$, $\zeta_1$, and $\zeta_2$ respectively, defined on manifolds by,
\begin{equation}\label{eq:stacking_isometry}
    \mathcal{H}_B^\zeta \to \mathcal{H}_B^{\zeta_1} \otimes \mathcal{H}_B^{\zeta_2}, \quad \ket{M}_\zeta \mapsto \ket{M}_{\zeta_1} \otimes \ket{M}_{\zeta_2}, \quad M \in \BordX(\varnothing \to B)
\end{equation}
and extended linearly. Note that this map only makes sense because our sources are classical, so there is no issue with cloning; indeed, the very operation \eqref{eq:stacking} of stacking theories only works because our manifolds are classical sources.

The maps \eqref{eq:stacking_isometry} extend uniquely to a symmetric monoidal W*-tensor functor,
\begin{equation}\label{eq:diagonal_under_stacking}
    \mathcal{C}_\mathrm{BU}^\zeta \to \mathcal{C}_\mathrm{BU}^{\zeta_1} \otimes \mathcal{C}_\mathrm{BU}^{\zeta_2},
\end{equation}
such that the following diagram commutes,
\begin{equation}\label{eq:diagonal_stacking_commutative_diagram}
    \begin{tikzcd}
        \BordX \arrow[r] \arrow[d, "\Delta"'] & \mathcal{C}_\mathrm{BU}^\zeta \arrow[d] \\
        \BordX \times \BordX \arrow[r] & \mathcal{C}_\mathrm{BU}^{\zeta_1} \otimes \mathcal{C}_\mathrm{BU}^{\zeta_2}
    \end{tikzcd}
\end{equation}
This follows from applying the universal property of $\mathcal{C}_\mathrm{BU}^\zeta$ (Proposition \ref{prop:univ_property_CBU}) to the bottom path through \eqref{eq:diagonal_stacking_commutative_diagram}. Again, we see that stacking depends crucially on the classical nature of source manifolds, which allows us to define the diagonal map $\Delta$ on the left side of \eqref{eq:diagonal_stacking_commutative_diagram}.

Interpreted in terms of symmetry groups via Theorem \ref{thm:main}, the functor \eqref{eq:diagonal_under_stacking} induces a group homomorphism,
\begin{equation}\label{eq:stacking_group_hom}
    G_{\zeta_1} \times G_{\zeta_2} \to G_{\zeta},
\end{equation}
between the respective symmetry groups, encoding the fact that the product of the symmetry groups of the QFTs associated to $\zeta_1$ and $\zeta_2$ does, at least, act as a symmetry of the QFT associated to $\zeta$, even if it is not the full symmetry.

However, we can show a bit more. By construction, the functor \eqref{eq:diagonal_under_stacking} maps the $[B]$-graded component of $\mathcal{C}_\mathrm{BU}^\zeta$ to the $([B], [B])$-graded component of $\mathcal{C}_\mathrm{BU}^{\zeta_1} \otimes \mathcal{C}_\mathrm{BU}^{\zeta_2}$, for any bordism class $[B] \in \Omega_{d-1}^\mathcal{X}$. What this means is that it restricts to a functor,
\begin{equation}\label{eq:relative_tensor_over_bordism}
    \mathcal{C}_\mathrm{BU}^\zeta \to \mathcal{C}_\mathrm{BU}^{\zeta_1} \otimes_{\Omega_{d-1}^\mathcal{X}} \mathcal{C}_\mathrm{BU}^{\zeta_2},
\end{equation}
where the notation $\otimes_{\Omega_{d-1}^\mathcal{X}}$ denotes the full sub-category of the ordinary tensor product on the diagonally-graded objects. Physically, \eqref{eq:relative_tensor_over_bordism} encodes the fact that the dual cobordism symmetry of \cite{McNamara:2019rup, Friedan:2023vxx} acts by the same c-number in all three QFTs, and so the map \eqref{eq:stacking_group_hom} factors through the diagonal quotient $(G_{\zeta_1} \times G_{\zeta_2})/(\Omega_{d-1}^\mathcal{X})^\vee$.

The functor \eqref{eq:relative_tensor_over_bordism} directly leads us to a form of counterterm-independence for the baby universe category. Suppose we stack an invertible partition function $\zeta_\mathrm{c.t.}$, representing a counterterm, on top of an arbitrary partition function $\zeta$,
\begin{equation}
    \zeta'(M) = \zeta_\mathrm{c.t.}(M) \zeta(M).
\end{equation}
By \eqref{eq:relative_tensor_over_bordism}, we obtain a functor,
\begin{equation}\label{eq:stacking_CBU_invertible}
    \mathcal{C}_\mathrm{BU}^{\zeta'} \to \mathcal{C}_\mathrm{BU}^{\zeta_\mathrm{c.t.}} \otimes_{\Omega_{d-1}^\mathcal{X}} \mathcal{C}_\mathrm{BU}^{\zeta},
\end{equation}
which is actually a unitary equivalence of categories. This can be seen in a number of ways: for one, the map \eqref{eq:stacking_isometry} defined from the Hadamard product of a rank-one Gram matrix with any Gram matrix is an isometry, and for another, the functor induced by stacking $\zeta'$ with $\zeta_\mathrm{c.t.}^{-1}$ is an inverse to \eqref{eq:stacking_CBU_invertible}.

The equivalence \eqref{eq:stacking_CBU_invertible} is canonical, but if we have an actual invertible QFT with partition function $\zeta$ (and not just its isomorphism class), we even get an equivalence,
\begin{equation}
    \mathcal{C}_\mathrm{BU}^{\zeta'} \simeq \Hilb[\Omega_{d-1}^\mathcal{X}]^\sigma \otimes_{\Omega_{d-1}^\mathcal{X}} \mathcal{C}_\mathrm{BU}^{\zeta} = (\mathcal{C}_\mathrm{BU}^{\zeta})^\sigma,
\end{equation}
where $(\mathcal{C}_\mathrm{BU}^{\zeta})^\sigma$ denotes the twisting of the symmetry on $\mathcal{C}_\mathrm{BU}^{\zeta}$ by $\sigma$.

As a result, $\CBU$ is nearly counterterm independent, with the only ambiguity being the ambiguity in assigning fermion parity to states on nontrivial bordism classes (such as in the Ramond sector of a two-dimensional spin CFT). This matches the fact that the symmetry group $G$ of a unitary QFT is a well-defined, universal invariant, which does not depend on counterterms.

\subsection{Including fermionic sources}\label{sec:fermionic_sources}

We now remedy a major flaw of the framework developed in the bulk of this paper, which is that it cannot accommodate fermionic sources. As discussed in Section \ref{sec:traces_and_fermions}, the issue is not unitarity, but symmetric monoidality. We illustrate this issue with the following example, continuing from Sections \ref{sec:ex_fTQM} and \ref{sec:reconstructing_a_fermion}.

\begin{example}\label{ex:cannot_add_fermion_in_bordism}
    Consider the partition function on $\Bord_1^{Spin}$ given by,
    \begin{equation}
        \zeta(S^1_\mathrm{ap}) = +1, \quad \zeta(S^1_\mathrm{p}) = -1.
    \end{equation}
    As we have seen, this example has an underlying fermionic TQM with a single fermionic state on $\mathrm{pt}^+$, and has baby universe category $\CBU = \mathrm{sRep}(U(1)_\mathrm{\text{spin}^c})$. Suppose we wanted to adjoin an end-of-the-world-brane to break the $U(1)_\mathrm{\text{spin}^c}$ global symmetry, as in Section \ref{sec:ex_EotW_branes}, or as in our review of the proof of the DR reconstruction theorem. This would correspond to adjoining some new bordism $\psi : \varnothing \to \mathrm{pt}^+$, which could be viewed as an interval with a free endpoint marked by $\psi$. No matter how hard we work, any geometric bordism $\psi$ will satisfy,
    \begin{equation}\label{eq:bad_fermion_eotw}
        \theta_{\mathrm{pt}^+} \circ \psi = \psi, \quad s_{\mathrm{pt}^+ \sqcup \mathrm{pt}^+} \circ (\psi \sqcup \psi) = \psi \sqcup \psi,
    \end{equation}
    since the twist and symmetry are monoidal natural transformations, which necessarily trivialize on $\varnothing$.

    Consider the three states $\ket{\mathrm{id}_{\mathrm{pt}^+}}, \ket{\theta_{\mathrm{pt}^+}}, \ket{\psi \sqcup \psi^\dagger} \in \CH_{\mathrm{pt}^+ \sqcup \mathrm{pt}^-}$. By reflection positivity, we must have $\braket{\psi | \psi} \geq 0$, and unless it vanishes we can rescale to assume $\ket{\psi}$ is normalized. Using \eqref{eq:bad_fermion_eotw}, we compute the inner product matrix,
    \begin{equation}
    \begin{array}{c | c c c}
        & \ket{\mathrm{id}_{\mathrm{pt}^+}} & \ket{\theta_{\mathrm{pt}^+}} & \ket{\psi \sqcup \psi^\dagger} \\
        \hline
        \bra{\mathrm{id}_{\mathrm{pt}^+}} & +1 & -1 & +1 \\
        \bra{\theta_{\mathrm{pt}^+}} & -1 & +1 & +1 \\
        \bra{\psi \sqcup \psi^\dagger} & +1 & +1 & +1
    \end{array}
    \end{equation}
    This is not positive semidefinite, and the only way out would be for $\psi$ to be a null morphism. The issue is that we tried to adjoin a necessarily-bosonic state to a theory which needed a fermionic one.
\end{example}

As demonstrated by this example, the issue arises from our requirement that the twist and symmetry be symmetric monoidal. The resolution is physically obvious: we should make fermionic sources anti-commute at spacelike separation, and transform by a sign when acted on by the twist. However, in the non-linear bordism category $\BordX$, there is no way to parse ``anti-commute'' or ``transform by a sign,'' as we cannot multiply a bordism by $(-1)$.

The resolution, to both issues, is that we need to start in a linear category. This is harmless for the previously considered case without fermionic sources, as our construction proceeds through the free linearization $\CBUpre = \mathbb{C}[\BordX]$ anyway. Starting with a linear category is also necessary if we already have some linear structure on sources, as it allows us to parse the constraint that $\zeta$ should depend linearly on sources. For example, if we start with some vector space of local operators, and we want $\mathcal{X}$ to include points marked by operator insertions, we would need to pick a basis of operators if we wanted to define $\zeta$ on a non-linear bordism category.

Thus, let us assume we are given a linear category $\CBUpre$, which we formally think of as some linearization of a unitary bordism category. We need to modify a few of the definitions of a unitary structure, such as requiring the conjugation and $\dagger$-structure,
\begin{equation}
    (-)^\dagger: \CBUpre \to (\overline{\CBUpre})^\mathrm{op}, \quad \overline{(-)} : \CBUpre \to \overline{\CBUpre},
\end{equation}
to be anti-linear, rather than linear, as before.

Now, in order to include fermionic sources, we should not ask for $\CBUpre$ to be a symmetric monoidal category, but instead ask for it to be a symmetric monoidal \textit{sVec-enriched category}. This means that the spaces of morphisms in $\CBUpre$ are super-vector spaces, and that the naturality condition for symmetric monoidality is,
\begin{equation}
    s_{Y_1, Y_2} \circ (\mathcal{O}_1 \otimes \mathcal{O}_2) = (-1)^{|\mathcal{O}_1| |\mathcal{O}_2|} (\mathcal{O}_2 \otimes \mathcal{O}_1) \circ s_{X_2, X_1},
\end{equation}
for morphisms $\mathcal{O}_i : X_i \to Y_i$ of parity $|\mathcal{O}_i|$. We pick up a factor of $(-1)^{|\mathcal{O}_1||\mathcal{O}_2|}$ because we need to use the symmetry on $\sVec$, the base of enrichment, to swap $\mathcal{O}_1 \otimes \mathcal{O}_2$ to $\mathcal{O}_2 \otimes \mathcal{O}_1$.

Asking for $\CBUpre$ to be symmetric monoidal in the $\sVec$-enriched sense encodes the fact that fermionic sources should anti-commute at spacelike separation. We also need to encode the fact that the twist should act on fermionic morphisms by a sign. We do this by asking for the twist to be a natural transformation,
\begin{equation}\label{eq:fermion_parity_twisted}
    \theta : \mathrm{id}_\CBUpre \Rightarrow \mathrm{id}_\CBUpre^{F}.
\end{equation}
The functor $\mathrm{id}_\CBUpre^F$ is a canonical twisted identity functor on any $\sVec$-enriched category, defined on objects $B$ and morphisms $\mathcal{O}$ by,
\begin{equation}\label{eq:F_twisted_identity_functor}
    \mathrm{id}_\CBUpre^F : B \mapsto B, \quad \mathcal{O} \mapsto (-1)^{|\mathcal{O}|} \mathcal{O}.
\end{equation}
Concretely, \eqref{eq:fermion_parity_twisted} means that we have,
\begin{equation}
    \theta_B \circ \mathcal{O} = (-1)^{|\mathcal{O}|} \mathcal{O} \circ \theta_A, \quad \mathcal{O} : A \to B,
\end{equation}
as desired. To show that this modification works, let us reconsider Example \ref{ex:cannot_add_fermion_in_bordism}.

\begin{example}
    Let $\Bord_1^{Spin}$ and $\zeta$ be as in Example \ref{ex:cannot_add_fermion_in_bordism}. Let us, again, adjoin a new morphism $\psi : \varnothing \to \mathrm{pt}^+$, which we can now take to be fermionic, $|\psi| = 1$. We now have,
    \begin{equation}\label{eq:good_fermion_eotw}
        \theta_{\mathrm{pt}^+} \circ \psi = -\psi, \quad s_{\mathrm{pt}^+ \sqcup \mathrm{pt}^+} \circ (\psi \sqcup \psi) = - \psi \sqcup \psi,
    \end{equation}
    because of the twisted naturality of the twist and symmetry. We compute, with $\ket{\theta_{\mathrm{pt}^+}}$, $\ket{\psi \sqcup \psi^\dagger} \in \CH_{\mathrm{pt}^+ \sqcup \mathrm{pt}^-}$ as in Example \ref{ex:cannot_add_fermion_in_bordism},
    \begin{equation}
    \begin{array}{c | c c c}
        & \ket{\mathrm{id}_{\mathrm{pt}^+}} & \ket{\theta_{\mathrm{pt}^+}} & \ket{\psi \sqcup \psi^\dagger} \\
        \hline
        \bra{\mathrm{id}_{\mathrm{pt}^+}} & +1 & -1 & +1 \\
        \bra{\theta_{\mathrm{pt}^+}} & -1 & +1 & -1 \\
        \bra{\psi \sqcup \psi^\dagger} & +1 & -1 & +1
    \end{array}
    \end{equation}
    This is positive semidefinite, as desired.
\end{example}

We now proceed in parallel with the rest of this paper, building universal super-Hilbert spaces, forming the regular representation induced by $\zeta$, and taking a double commutant to define a $\sVec$-enriched atomic W*-category $\CBUvN$. When it comes time to take the W*-Cauchy completion, the idempotent completion is unchanged, but we modify the direct-sum completion to include formal categorical wavefunctions,
\begin{equation}\label{eq:sHilb_wavefunction}
    \bigoplus_i \mathcal{H}_i \otimes X_i, \quad \mathcal{H}_i \in \sHilb, \quad X_i \in \CBUvN,
\end{equation}
with coefficient super-Hilbert spaces instead of merely Hilbert spaces. We define the baby universe category to be the result of this process,
\begin{equation}\label{eq:super_CBU}
    \CBU \defined \sHilb(\CBUvN),
\end{equation}
the $\sVec$-enriched W*-Cauchy completion of $\CBUvN$. We note that \eqref{eq:super_CBU} produces a category which is not merely $\sVec$-enriched, but also \textit{sHilb-tensored}, meaning we can tensor a super-Hilbert space with an object of $\CBU$ and get another well-defined object. We will call an $\sVec$-enriched, $\sHilb$-tensored W*-category a \textit{super W*-category}.

To avoid confusion, let us note that even without fermionic sources, the definition \eqref{eq:super_CBU} of the baby universe category will not agree with the category constructed and studied for most of this paper. For disambiguation, in the remainder of this section only, we will always take $\CBU$ to denote the $\sVec$-enriched category constructed by \eqref{eq:super_CBU}, and use $\CBUold$ to denote the category from Definition \ref{defn:CBU} we have studied for most of this paper (when it is well-defined, i.e., when we do not have fermionic sources). We have the relationship,
\begin{equation}
    \CBU = \sHilb \otimes_\Hilb \CBUold.
\end{equation}
Working with $\CBU$, as opposed to $\CBUold$, means that we are doing categorical quantum mechanics over $\sHilb$, not over $\Hilb$; for example the categorical wavefunction overlap will land in $\sHilb$. Moreover, the necessary and sufficient condition for ER = EPR is,
\begin{equation}
    \CBU = \sHilb,
\end{equation}
corresponding to $\CBUold = \Hilb$.

The last step of our main argument, given by DR reconstruction, needs a small modification, because as-stated it does not apply to a rigidly generated super W*-tensor category. Nevertheless, we can use structure already present in $\CBU$ (and in fact, already secretly encountered) to reduce to a case where DR reconstruction applies.

As motivation, recall that a \textit{real structure} on a $\mathbb{C}$-vector space $V$ is an anti-linear automorphism $\mathcal{T}$ with $\mathcal{T}^2 = 1$. Given a real structure, the invariant subspace $V_\mathbb{R}$ is a real vector space, and we have a canonical $\mathbb{C}$-linear isomorphism,
\begin{equation}\label{eq:tensor_over_R}
    V = \mathbb{C} \otimes_\mathbb{R} V_\mathbb{R},
\end{equation}
which takes $\mathcal{T}$ to the canonical real structure (complex conjugation) on the tensor factor $\mathbb{C}$. Thus, a real structure is \textit{Galois descent data} from $\mathbb{C}$ to $\mathbb{R}$: a twisted action of the Galois group,
\begin{equation}
    \mathrm{Gal}(\mathbb{C}/\mathbb{R}) = \mathbb{Z}_2^\mathcal{T},
\end{equation}
realizing $V$ as having arisen from the underlying $\mathbb{R}$-vector space $V_\mathbb{R}$.

By analogy, given a Cauchy complete super W*-category $\CC$, we define a \textit{bosonic structure} to be a twisted $\mathbb{Z}_2$ natural automorphism of the identity functor $\mathrm{id}_\CC$,
\begin{equation}
    \theta : \mathrm{id}_\mathcal{C} \Rightarrow \mathrm{id}_\mathcal{C}^{F}, \quad \theta^2 = 1.
\end{equation}
The twisted identity functor $\mathrm{id}_\mathcal{C}^{F}$ is defined, as in \eqref{eq:F_twisted_identity_functor}, to be the functor which acts as the identity on objects and by fermion parity on morphisms. By construction, we have a form of anti-linearity over $\sHilb$,
\begin{equation}\label{eq:higher_antilinearity}
    \theta_{\CH \otimes X} = (-1)^{F_\CH} \otimes \theta_X, \quad \CH \in \sHilb, \quad X \in \mathcal{C}.
\end{equation}

Given a bosonic structure on a Cauchy complete super W*-category $\CC$, we define the \textit{bosonic subcategory} $\CC_\mathrm{b}$ to be the full subcategory on the objects $X$ with $\theta_X = \mathrm{id}_X$; any morphism in $\mathcal{C}_\mathrm{b}$ is necessarily bosonic. Analogously to \eqref{eq:tensor_over_R}, we have a canonical equivalence of super W*-categories,
\begin{equation}
    \mathcal{C} = \sHilb \otimes_\Hilb \mathcal{C}_\mathrm{b},
\end{equation}
which takes the given bosonic structure $\theta$ on $\mathcal{C}$ to the canonical bosonic structure $(-1)^F$ on $\sVec$. A bosonic structure $\theta$ is higher Galois descent data from $\sHilb$ to $\Hilb$: a twisted action of the Galois higher group \cite{Johnson-Freyd:2015fua},
\begin{equation}
    \mathrm{Gal}(\sHilb/\Hilb) = B \mathbb{Z}_2^F,
\end{equation}
realizing $\CC$ as having arisen from the underlying bosonic W*-category $\CC_\mathrm{b}$. There is a very physical way to understand a bosonic structure. A bosonic structure tells you, for each simple object $\mu \in \mathcal{C}$, which of $\mu$ or $\mathbb{C}^{0|1} \otimes \mu$ is a boson, while otherwise you only knew that they must have opposite fermion parity.\footnote{Unlike the case of real structures on $\mathbb{C}$-vector spaces, there are examples of super W*-categories $\mathcal{C}$ which do not admit any bosonic structure. This happens whenever $\mathcal{C}$ contains a Majorana simple object $\mu$, meaning that $\mathbb{C}^{0 | 1} \otimes \mu \cong \mu$, or equivalently, that $\mathrm{End}(\mu)$ is the complex Clifford algebra $\mathrm{Cl}_1(\mathbb{C})$.}

In fact, we have seen a bosonic structure previously, in our review of the proof of the DR reconstruction theorem. Given a symmetric W*-tensor category $\mathcal{C}$, there are \textit{two} canonical bosonic structures on the super W*-category,
\begin{equation}\label{eq:fermionized_tensor_category}
    \mathcal{C}_\mathrm{f} \defined \sHilb \otimes_\Hilb \mathcal{C}.
\end{equation}
The first is the obvious one, which acts on $\sHilb$ as $(-1)^F$ and acts trivially on $\mathcal{C}$. The second is defined by the canonical twist $\theta$ induced by the unitary structure. With this second bosonic structure, the associated bosonic subcategory $\mathcal{C}_\mathrm{b} \subset \mathcal{C}_\mathrm{f}$ is nothing but the bosonized category $\mathcal{C}_\mathrm{b}$ with twisted symmetry $s_{X_\mathrm{b}, Y_\mathrm{b}}$ defined in \eqref{eq:bosonization_symmetry}. As discussed previously, this twisted symmetry is obtained by dressing the fermionic objects in $\CC$ with additional transparent fermions.

Returning to the baby universe category, if we do not include fermionic sources, $\CBU = \sHilb \otimes_\Hilb \CBUold$ would also have two canonical bosonic structures. The first is the obvious one which acts only on the coefficient super-Hilbert spaces, and acts trivially on $\CBUold$. The second is given by,
\begin{equation}\label{eq:bosonic_structure_CBU}
    \theta_{\mathcal{H} \otimes B} \defined (-1)^{\widehat{F}_\mathcal{H}} \otimes \theta_B, \quad \mathcal{H} \in \sHilb, \quad B \in \BordX,
\end{equation}
where now $\theta_B$ is the $2 \pi$ rotation $\mathcal{X}$-automorphism of $B$. Let us note that, even without fermionic sources, $\CC_{\mathrm{BU, b}} \mathrel{\not\simeq} \CBUold$, as $\CC_{\mathrm{BU, b}}$ is always bosonic as a symmetric W*-tensor category, while $\CBUold$ will be fermionic when the underlying QFT (depending, say, on a spin structure) requires fermions.

However, with fermionic sources, we only have one canonical bosonic structure, given by \eqref{eq:bosonic_structure_CBU}, as the trivial one is not compatible with the symmetric monoidal structure. Thus, the output of our construction, in general, is a symmetric super W*-tensor category $\CBU$, equipped with its canonical bosonic structure. The bosonic category $\CC_{\mathrm{BU, b}}$ can then be fed directly into the DR reconstruction theorem, and we learn that,
\begin{equation}
    \CC_{\mathrm{BU, b}} \simeq \mathrm{Rep}(G), \quad \CBU = \mathrm{sRep}(G),
\end{equation}
where now $\mathrm{sRep}(G)$ is just the super W*-category of arbitrary unitary representations of a group $G$, with no additional structure. The whole discussion of fiber functors versus super fiber functors, and groups versus super groups, has disappeared, because we have at our disposal a reservoir of transparent fermions. For example, the theory considered in Sections \ref{sec:ex_fTQM}, \ref{sec:reconstructing_a_fermion}, and Example \ref{ex:cannot_add_fermion_in_bordism}, given in $d = 1$ by a single fermionic state, has,
\begin{equation}
    \CBU = \sHilb \otimes_\Hilb \CBUold \simeq \sHilb \otimes_\Hilb \mathrm{sRep}^\mathrm{old}(U(1)_{\text{spin}^c}) \simeq \mathrm{sRep}(U(1)).
\end{equation}
The central element $(-1)^F$ is not really a canonical part of our framework, arising only because of the additional, unneeded bosonic structure on $\sHilb \otimes_\Hilb \CBUold$.

As a final comment, note that the canonical bosonic structure $B \mapsto \theta_B$ on $\CBU$ is the direct analog of the canonical real structure $\mathcal{T} : \ket{M} \mapsto \ket{\overline{M}}$ on states in the baby-universe Hilbert space $\HBU$, as discussed for instance in \cite{Harlow:2023hjb, Witten:2025ayw}. In \cite{Harlow:2023hjb}, it was further argued that this real structure, corresponding to CPT, should be gauged in quantum gravity. This simply means that physically-reasonable holographic partition functions $\zeta : \mathcal{A}_\mathrm{BU} \to \mathbb{C}$ must satisfy $\zeta(\overline{M}) = \overline{\zeta(M)}$, as we have assumed. Luckily, every $\alpha$-state is automatically real in this sense, being a $*$-algebra homomorphism. One could even go so far as to say that the CPT theorem in QFT \textit{is} the statement that CPT must be gauged in holographic quantum gravity.

Just as \cite{Harlow:2023hjb} argued that CPT should be gauged in quantum gravity, we argue that fermion parity must also be gauged, in the sense that physically-reasonable QFTs (i.e., fiber functors) $\CBU \to \sHilb$ should intertwine the bosonic structures. By the spin-statistics theorem, any fiber functor $\CBU \to \sHilb$ must intertwine the bosonic structures in this way. So, in some sense, we could say that the spin-statistics theorem \textit{is} the statement that fermion parity must be gauged in holographic quantum gravity.

As argued in \cite{Harlow:2023hjb}, given the canonical real structure on the baby universe Hilbert space $\HBU$, we see that the condition for partition-function factorization is, equivalently,
\begin{equation}
\HBU = \mathbb{C} \iff \mathcal{H}_{\mathrm{BU}, \mathbb{R}} = \mathbb{R}.
\end{equation}
The analogous statements, for the baby universe category, are:
\begin{equation}
    \CBU = \sHilb \iff \CC_{\mathrm{BU,b}} = \Hilb \iff \CC_{\mathrm{BU}, \mathbb{R}} = \sHilb_\mathbb{R} \iff \CC_{\mathrm{BU, b}, \mathbb{R}} = \Hilb_\mathbb{R},
\end{equation}
corresponding to the Galois fixed points under the four obvious subgroups of
\begin{equation}
    \mathrm{Gal}(\sHilb/\Hilb_\mathbb{R}) = \mathbb{Z}_2^{\mathcal{T}} \times B\mathbb{Z}_2^F,
\end{equation}
as discussed in \cite{Johnson-Freyd:2015fua}.

\subsection{Dropping multiplicativity and including $\alpha$-sectors}\label{sec:including_alpha_sectors}

Our approach in this paper has been to restrict to factorizing partition functions, but from the perspective of the GPI, it is more natural to drop multiplicativity (Axiom \ref{axiom:multiplicativity}) and consider both partition-function factorization and Hilbert-space factorization at once.

For a discrete set of $\alpha$-sectors, our approach generalizes easily: we simply project to each $\alpha$ sector to obtain a factorizing partition function $\zeta_\alpha$ to which our methods directly apply. By taking a direct sum over $\alpha$, we have,
\begin{equation}
    \CBU \defined \bigoplus_\alpha \mathcal{C}_\mathrm{BU}^\alpha,
\end{equation}
with $\BordX$ mapping in diagonally.\footnote{If some closed manifolds $M$ have $\alpha$-sector partition functions which are not uniformly bounded in $\alpha$, the inclusion map from $\BordX$ technically lands not in $\CBU$ but in the associated category of trace-class operators.} The resulting category is an atomic, rigidly generated, symmetric W*-multitensor category, and we have,
\begin{equation}
    \mathcal{A}_\mathrm{BU} \defined \CBU(\varnothing \to \varnothing) = \bigoplus_\alpha \mathbb{C},
\end{equation}
so the baby universe algebra is just a direct sum of projectors onto the $\alpha$-sectors.

Applying Theorem \ref{thm:main} in each $\alpha$-sector, we learn that,
\begin{equation}
    \CBU \simeq \bigoplus_\alpha \mathrm{sRep}(G_\alpha),
\end{equation}
for a family of compact Hausdorff super-groups $G_\alpha$, and obtain the formula,
\begin{equation}\label{eq:oplus_and_Galpha_invts}
    \mathcal{H}_{B_1 \sqcup \cdots \sqcup B_n} = \bigoplus_\alpha \left( \widetilde{\CH}_{B_1}^\alpha \otimes \cdots \otimes \widetilde{\CH}_{B_n}^\alpha \right)^{G_\alpha},
\end{equation}
for the multi-boundary Hilbert spaces obtained from the GPI. 

If the measure space of $\alpha$-sectors is diffuse, let us assume the moment problem mentioned in Footnote \ref{footnote:moment} is solved, and we do at least have a commutative baby universe von Neumann algebra $\mathcal{A}_\mathrm{BU}$. As the space $\mathrm{Spec}(\mathcal{A}_\mathrm{BU})$ of $\alpha$-sectors is diffuse, we cannot project to an $\alpha$-sector, and so we cannot directly apply our construction of $\CBU$. While we could still use $\zeta$ to build universal Hilbert spaces, we would not be able to prove the trace inequality (Proposition \ref{prop:trace_inequality}), as its proof relies on partition-function factorization. A natural approach would be to attempt to define universal self-dual $\mathcal{A}_\mathrm{BU}$-Hilbert modules for each manifold $B$, with an $\mathcal{A}_\mathrm{BU}$-valued inner product defined by pairing $M_1, M_2 : \varnothing \to B$ to the closed manifold $M_1^\dagger \circ M_2 \in \mathcal{A}_\mathrm{BU}$.

We leave the full exploration of this idea to future work. We expect that there is some version of the DR reconstruction theorem which applies to non-atomic, but still rigidly generated, symmetric W*-multitensor categories. If so, we would learn that,
\begin{equation}
    \CBU = \int^\oplus d\alpha\ \mathrm{sRep}(G_\alpha),
\end{equation}
and obtain the formula,
\begin{equation}\label{eq:alpha_integral_discussion}
    \mathcal{H}_{B_1 \sqcup \cdots \sqcup B_n} = \int^\oplus d\alpha\ \left( \widetilde{\CH}_{B_1}^\alpha \otimes \cdots \otimes \widetilde{\CH}_{B_n}^\alpha \right)^{G_\alpha},
\end{equation}
discussed in the Introduction.

Assuming the above works, then the higher ensemble averaging perspective of Section \ref{sec:higher_averaging} applies, verbatim, to the multitensor case with multiple $\alpha$-sectors. For example, if the space of $\alpha$-sectors is discrete, we have, non-canonically,
\begin{equation}
    \mathrm{Spec}(\CBU) \simeq \coprod_\alpha BG_\alpha,
\end{equation}
and thus we may rewrite \eqref{eq:oplus_and_Galpha_invts} as,
\begin{equation}
    \mathcal{H}_{B_1 \sqcup \cdots \sqcup B_n} = \int^\oplus_{\mathrm{Spec}(\CBU)} \widetilde{\CH}_{B_1} \otimes \cdots \otimes \widetilde{\CH}_{B_n} \cong \bigoplus_\alpha \left( \widetilde{\CH}_{B_1}^\alpha \otimes \cdots \otimes \widetilde{\CH}_{B_n}^\alpha \right)^{G_\alpha},
\end{equation}
where the first equality is completely canonical. This is the most direct analog of writing a non-factorizing partition function as a quenched average of the underlying $\alpha$-sector partition functions. We note that one natural measure on $\mathrm{Spec}(\CBU)$, which weights each $\alpha$-sector by the inverse of the volume of $G_\alpha$, has recently appeared in the context of ensemble holography \cite{Barbar:2025vvf,Dymarsky:2026asf}.

We can now say what an $\alpha$-sector is, once both partition function and Hilbert-space factorization are taken into account: an $\alpha$-sector is nothing but a fiber functor on $\CBU$. This matches the fact that, physically, projecting to an $\alpha$-sector is supposed to mean picking a holographic theory out of the ensemble, not just an isomorphism class of holographic theories.

We may even write a formula for the categorical wavefunction of the $\alpha$-sector, which we now do in the case of a factorizing partition function, for simplicity. Note that choosing tensorators,
\begin{equation}
\phi_{\mu \nu} : \mathbb{C}^{d_\mu} \otimes \mathbb{C}^{d_\nu} \to \bigoplus_{\rho} \mathcal{V}_{\mu \nu}^\rho \otimes \mathbb{C}^{d_\rho},
\end{equation}
defining a fiber functor is the same thing as equipping the categorical $\alpha$-state,
\begin{equation}\label{eq:categorical_alpha_state}
    \kket{\alpha} \defined \bigoplus_\mu \overline{\mathbb{C}^{d_\mu}} \kket{\mu},
\end{equation}
with the structure of a commutative algebra object in $\CBU$. Conversely, given a commutative algebra structure on $\kket{\alpha} \in \CBU$, we recover the fiber functor as $\bbra{\alpha}$, the functor defined by taking a categorical wavefunction overlap with $\kket{\alpha}$. We see that being an $\alpha$-state, while a condition in codimension-one, is data in codimension-two.

The correspondence between fiber functors and commutative algebra structures on \eqref{eq:categorical_alpha_state} is quite standard in the context of generalized symmetries. By construction, the object $\kket{\alpha} \in \CBU$ corresponds, under the identification $\CBU \simeq \mathrm{Rep}(G)$, to the regular representation of $G$, and passing to the fiber functor is known as ``gauging the $\mathrm{Rep}(G)$ symmetry.'' To see which $\mathrm{Rep}(G)$ symmetry this is, note that we may view the canonical functor $\BordX \to \CBU$ as a relative QFT, in the sense of \cite{Freed:2012bs}, whose bulk is given by flat $G$ gauge theory in any higher number of dimensions, which has a dual $\mathrm{Rep}(G)$ symmetry. Picking a fiber functor corresponds to picking a gapped Dirichlet boundary condition for this bulk, which picks a global form of $\BordX \to \CBU$ and turns it into an absolute QFT. This perspective fits well with the observations of \cite{Benini:2022hzx,Torres:2025jcb,Yu:2026gdf}. More generally, when there are multiple non-isomorphic $\alpha$-sectors, arising from a non-factorizing partition function, the bulk of this relative QFT will decompose into a direct sum over multiple vacua, matching the discussion in \cite{Sharpe:2023lfk,Perez-Lona:2025add} about the relationship between ensembles and decomposition.

\section{Discussion}\label{sec:conclusions_future_directions}

In this paper, we have studied the issue of Hilbert-space factorization in quantum gravity. We have worked within the axiomatic framework, described in Sections \ref{sec:axiomatic_approach_philosophy}-\ref{sec:spaces_of_grav_states}, in which we assume we have been given a factorizing, reflection-positive partition function $\zeta$ defined on some class $\mathcal{X}$ of source manifolds which serve as the formal boundaries of a putative GPI. By providing a proof of Theorem \ref{thm:main}, we have learned that, within this framework, any breakdown of Hilbert-space factorization is a red herring: there is always an underlying unitary QFT with partition function $\zeta$, and the Hilbert spaces of gravitational states are precisely the invariant subsector of this underlying QFT with respect to a global symmetry group $G$. Thus, we have learned that the necessary condition of Harlow \cite{Harlow:2015lma} for ER = EPR is, in fact, also sufficient.

\subsection{Baby universe field theory and averaged holography}

Given our results, there are now two perspectives one could take on a GPI which fails to satisfy ER = EPR, which parallel the perspectives called \textit{baby universe field theory} and \textit{averaged holography} in \cite{Harlow:2026hky} on a GPI which produces non-factorizing partition functions. To describe the corresponding perspective on non-factorizing Hilbert spaces, let us first review these perspectives.
\begin{itemize}
    \item \textbf{Baby universe field theory:} The baby universe Hilbert space $\HBU$ is genuinely nontrivial, and multi-boundary partition functions do not factorize. The different $\alpha$-states are actual distinct states in a theory of quantum gravity. Information which falls into a black hole ends up in a baby universe, and does not come out in the Hawking radiation. Quantum gravity can have global 0-form symmetries.

    \item \textbf{Averaged holography:} The baby universe Hilbert space $\HBU$ is an auxiliary tool, and we are secretly studying a classical ensemble, parametrized by the $\alpha$-sectors, of different boundary conditions for the GPI. Each member of this ensemble has a trivial baby universe Hilbert space and holographically factorizing partition function. In each member, black holes evaporate unitarily. Quantum gravity has no global 0-form symmetries.
\end{itemize}

The result of Coleman, Giddings, Strominger, Marolf, and Maxfield \cite{Coleman:1988cy,Giddings:1988cx,Giddings:1988wv,Marolf:2020xie} justifies the perspective of averaged holography by showing that any baby universe field theory is, secretly, nothing but the ensemble average of underlying holographic partition functions. In this paper, we have provided a justification for the analogous perspective, which we call \textit{higher averaged holography}, on a GPI with non-factorizing Hilbert spaces. Higher averaged holography stands in contrast to what might be called \textit{extended baby universe field theory}. These perspectives are given as follows.

\begin{itemize}

    \item \textbf{Extended baby universe field theory:} The baby universe category $\CBU$ is genuinely nontrivial in quantum gravity. The universal Hilbert spaces $\mathcal{H}_B$ are the actual multi-boundary Hilbert spaces, and do not factorize. The Bekenstein--Hawking formula is incorrect.\footnote{In the sense that the actual entropy of black hole microstates, derived from actual states in the one-sided Hilbert spaces, does not satisfy the Bekenstein--Hawking formula, and is instead smaller (as is necessarily the case, given Proposition \ref{prop:thermal_bound}).} Quantum gravity can have global 1-form symmetries and an incomplete spectrum of charged states. ER $\neq$ EPR.

    \item \textbf{Higher averaged holography:} The baby universe category $\CBU$ is an auxiliary tool, and we are secretly studying a higher ensemble average of underlying holographically dual QFTs. The missing one-sided black hole microstates are present in the underlying UV-complete theory, which has a trivial baby universe category. The Bekenstein--Hawking formula is correct. Quantum gravity has no global 1-form symmetries and a complete spectrum of charged states. ER = EPR.
    
\end{itemize}

While our expectation is that a UV-complete theory of quantum gravity is holographic, let us note that baby universe field theories (BUFTs) may still prove to be objects of mathematical interest. It would be interesting to establish a full set of axioms for functorial BUFT; our expectation is that they would involve an appropriately lax form of symmetric monoidal functors $\BordX \to \sHilb$. Further, we expect BUFTs, in general, to sit somewhere in between the universal construction and the underlying QFT, potentially having more states than the universal construction but fewer than the full holographic QFT. In fact, we expect BUFTs on $\BordX$ with partition function $\zeta$ to have a sort of Galois theory, being classified by the subgroup $H \subset G$ under which all of their states are neutral.\footnote{If such a Galois theory exists, then the universal construction plays the role of the ground field, while the underlying holographic QFT plays the role of the algebraic closure.}

\subsection{Comparison with past literature}

We now compare our results with previous studies of Hilbert-space factorization in the literature. First, our conclusion fits well with the previously-observed pattern \cite{Penington:2023dql,Kolchmeyer:2023gwa,Chua:2023ios,Boruch:2024kvv,Balasubramanian:2024yxk,Li:2024nft,Banerjee:2024fmh,Balasubramanian:2025zey,Balasubramanian:2025jeu} that gravitational Hilbert spaces seem to automatically factorize once enough degrees of freedom are included. Moreover, it fits especially well with the observations of \cite{Maxfield:2023mdj} where our exact conclusions were demonstrated in a toy model (reviewed in Section \ref{sec:Ex_TopoQM}). It also, of course, fits well with the previous work \cite{Colafranceschi:2023urj,Marolf:2024adj} which served as the technical backbone of this paper. This paper completes the analysis of \cite{Colafranceschi:2023urj,Marolf:2024adj} by studying the case with more than two boundaries, and explaining that the $\mu$-sectors are nothing but the irreducible representations of the symmetry group $G$.

Let us emphasize an important distinction regarding the proper meaning of Hilbert-space factorization. Holography requires, not just that multi-boundary Hilbert spaces happen to factorize as \textit{some} tensor product (indeed, this requirement is vacuous), but that the canonical map,
\begin{equation}\label{eq:disjoint_union_map_conclusions}
    \sqcup : \CH_{B_1} \otimes \CH_{B_2} \to \CH_{B_1 \sqcup B_2}, \quad \ket{M_1} \otimes \ket{M_2} \mapsto \ket{M_1 \sqcup M_2},
\end{equation}
is an isomorphism. This distinction is relevant, for example, in the case of JT gravity with matter (as studied in, e.g., \cite{Boruch:2024kvv}), where the two-boundary Hilbert space factorizes, but the map \eqref{eq:disjoint_union_map_conclusions} cannot possibly be an isomorphism. This is because, in the absence of end-of-the-world (EotW) branes, the one-boundary Hilbert space vanishes, due to the obstruction arising from bordism.

What seems to be happening is that, in JT gravity with matter, the \textit{only} remaining obstruction is the one arising from bordism, so that the object $\mathrm{pt}^+ \in \CBU$ is invertible under disjoint union (with inverse $\mathrm{pt}^-$). As with any invertible object, we must have $d_{\mathrm{pt}^+} = 1$, and so it only takes a single species of EotW brane to get the map \eqref{eq:disjoint_union_map_conclusions} to be an isomorphism. Thus, every two-sided state in JT gravity with matter can be viewed as arising from gluing this unique EotW brane to its conjugate, with any profile of matter excitations coming along for the ride. The invertibility of $\mathrm{pt}^+ \in \CBU$ under disjoint union makes this gluing invisible, which is why it played no role in the analysis of \cite{Boruch:2024kvv}.

Our analysis also correctly matches the known reason why the two-sided Hilbert space of pure JT gravity, \textit{without} matter, does not factorize in any meaningful sense. Putting aside issues with drawing from the ensemble of \cite{Saad:2019lba}, whatever the $\alpha$-sector theories may be, they must be some quantum mechanical systems. As we saw in Example \ref{sec:ex_QM}, quantum mechanical systems always have, at least, the symmetry group which rotates each energy level independently. Gravitational states in pure JT gravity only produce neutral states, or in other words, states with equal energy as measured on the two sides; this is merely the gravitational gauge constraint of JT gravity. This constraint is lifted in the presence of matter, which allows us to insert charged operators as additional sources (compare with \cite{Lin:2022rbf,Penington:2023dql,Kolchmeyer:2023gwa}). This breaks the corresponding symmetry to the diagonal $U(1)$, which is associated with the bordism group $\Omega_1^{SO} = \mathbb{Z}$ and can only be broken by EotW branes.

\subsection{Bulk interpretation of the symmetry group}
\label{sec:bulk_interpretation_of_symmetry_group}

The symmetry group $G$ appearing in Theorem \ref{thm:main} has, so far, only been discussed as a global symmetry of the holographically dual QFT associated to a bulk theory of quantum gravity. We now comment on the bulk interpretation. At first glance, $G$ should be interpreted as a bulk gauge group without any charged states, in line with the motivating example of Einstein--Maxwell theory \cite{Harlow:2015lma}. In some sense, this interpretation is correct, as every global symmetry of the holographically dual QFT is usually viewed as the dual of a bulk gauge symmetry under the holographic dictionary.

However, putting aside our specific discussion, the matching between a boundary global symmetry and a bulk gauge symmetry requires caveats. First of all, it does not correctly type check: a boundary global symmetry is a duality invariant notion, while a bulk gauge symmetry is not invariant under duality. The resolution is that the correct bulk dual of a boundary global symmetry is a boundary-localized global symmetry of the theory of quantum gravity, as defined with the boundary held fixed as a source. Such boundary-localized symmetries arise when the bulk has a deconfined gauge field with a Dirichlet boundary condition, but may arise in many other ways. In the context of AdS/CFT, having a boundary-localized global symmetry is the same as having a long-range gauge symmetry in the sense of \cite{Harlow:2018tng}. Notably, the logic of \cite{Harlow:2018tng} only rules out global symmetries in gravity which act in the bulk, and is perfectly compatible with boundary-localized global symmetries.

Even with this said, the bulk interpretation of our symmetry group $G$ requires an additional caveat, if we view our input partition function $\zeta$ as arising from projecting the output of some GPI onto an $\alpha$-sector. This is because there is no guarantee that the symmetry group $G$ must be visible as a gauge field in the GPI, or even as a long-range gauge symmetry of the bulk EFT. One obvious way this can happen is when the symmetry is associated with a nontrivial bordism class, in which case the associated ``gauge flux'' is simply the nontrivial topology of the bulk spacetime itself, as discussed in \cite{McNamara:2019rup,McNamara:2022lrw}. For example, in JT gravity, the ``gauge flux'' associated to $\Omega_1^{SO} = \mathbb{Z}$ is merely the bulk 2-manifold itself, viewed as the flux radiated by an EotW brane.

More subtly, the symmetry group $G$ might get washed out in the ensemble over $\alpha$-sectors, only reappearing once one projects to a fixed $\alpha$-sector. This happens in the topological toy model of Marolf and Maxfield \cite{Marolf:2020xie}, whose ensemble of $\alpha$-sectors corresponds to an ensemble of TQMs (as in Section \ref{sec:Ex_TopoQM}) with a Poisson-random ground state degeneracy $k$. The global symmetry in the $\alpha$-sector with ground state degeneracy $k$ is $U(k)$, but only the common diagonal $U(1)$ is visible in the bulk description as the cobordism symmetry associated to oriented manifolds. The rest of the presumptive bulk $U(k)$ gauge theory is washed out in the ensemble, with its only remnant being the nonzero mean-field Euler counterterm visible in the bulk action functional.

One upshot is that we need not get the charged states we may have expected to come for free when $G$ is non-abelian. If we knew the bulk had dynamical $G$ Yang--Mills fields, and had made sure to turn on background $G$ gauge fields in our class $\mathcal{X}$ of sources, then we would automatically get charged states from the adjoint action of $G$ on itself. Perturbatively, these charged states are built from bulk gluons, which transform in the adjoint representation of $G$ on its Lie algebra, and which provide all neutral states under the non-invertible 1-form centralizer symmetry of \cite{Rudelius:2020orz,Heidenreich:2021xpr}. Globally, this also includes the states of $G$ bundles on topologically nontrivial manifolds, which correspond to gravitational solitons and break the non-invertible 1-form centralizer symmetry to the invertible 1-form center symmetry \cite{McNamara:2021cuo}. When $G$ is not visible in the bulk EFT as a gauge field, or even as a symmetry of the boundary condition, we will not get these states, which is why \textit{all} gravitational states can be neutral under a non-abelian group.

We offer one further interpretation of the symmetry group $G$, from the perspective on holography in terms of quantum error correction (see, e.g., \cite{Harlow:2018fse}). From this perspective, the collection of semiclassical states in a holographic CFT are described by a \textit{code subspace}, in which the semiclassical bulk physics are encoded via a map from semiclassical states to states of the holographic CFT. From our perspective, the universal construction provides a canonical code subspace of geometric states in any QFT, though the encoding map will be far from isometric \cite{Akers:2022qdl} when the bulk is far from semiclassical.\footnote{Even to the point of failing to emerge, in the sense of \cite{Engelhardt:2026blp}.} Theorem \ref{thm:main}, in this language, states that the universal construction is a \textit{stabilizer code}, meaning, a code subspace specified as the invariant subspace under a group. More generally, Theorem \ref{thm:main} implies that the only families of code subspaces in a local unitary QFT which are compatible with cutting and gluing of manifolds are stabilizer codes for global symmetries.

\subsection{Reflection positivity in the gravitational path integral}

A major deficiency of our axiomatic framework, as with any axiomatic framework, is that it is only as strong as its axioms. We have already discussed the weaknesses of Axiom \ref{axiom:multiplicativity}, the multiplicativity of $\zeta$, at length, as it corresponds precisely to the original issue of partition-function factorization. We have also touched on the weaknesses of Axiom \ref{axiom:finiteness}, the finiteness/well-definition of $\zeta$, which, while severe, are an unavoidable aspect of trying to reason about the GPI. Axioms \ref{axiom:reality}, the reality of $\zeta$, and \ref{axiom:continuity}, the continuity of $\zeta$ are essentially kinematic, and we do not see any major hurdles arising from either, especially as the continuity of $\zeta$ plays a very minimal role in our construction and results.

It remains to discuss our most important assumption: Axiom \ref{axiom:reflection_positivity}, that $\zeta$ is reflection positive. As in Section \ref{sec:axioms_for_partition_function}, we emphasize just how strong of an assumption reflection positivity really is, particularly when imposed for all possible manifolds and possibly-disconnected spatial slices. Strong though it is, there are many reflection-positive partition functions, simply because there are many unitary QFTs. One might wonder, however, whether it would ever be practically possible to find a reflection-positive partition function without knowing the underlying QFT in the first place.

Our hope is that the GPI could help us find such reflection-positive partition functions. This hope applies even in the absence of partition-function factorization; as we have seen in our review of the proof of the Doplicher--Roberts reconstruction theorem, it can sometimes be easier to write down an ensemble, prove that it is reflection positive, and project to an $\alpha$-sector, than to write down an $\alpha$-sector quantity directly.

Nevertheless, the output of the GPI is, in general, not reflection-positive. This has been observed, for instance, in the GPI for three-dimensional pure gravity \cite{Maloney:2007ud,Keller:2014xba,Benjamin:2019stq,Benjamin:2020mfz,Maxfield:2020ale,DiUbaldo:2023hkc}, and was used recently to argue for the axionic weak gravity conjecture \cite{DiUbaldo:2026rly,Maldacena:2026jqd}. The results of this paper move the target for the GPI forward: while a generic GPI may not produce a reflection-positive partition function, our results imply that as soon as one does, then there exists a unitary holographic dual, which is specified uniquely up to unitary isomorphism. For instance, consider the ensemble of two-dimensional CFT data studied in \cite{Chandra:2022bqq,Wang:2025bcx}. Our results imply that the failure of a generic member of this ensemble to satisfy crossing must correspond directly to actual negativity of the corresponding GPI.\footnote{Working this out in detail would be of independent interest for the purpose of connecting our framework more concretely to the conjecture of Friedan and Shenker \cite{Friedan:1986ua} which it, in principle, resolves.}

\subsection{Towards fully-extended holography}

We close this paper by speculating on the full tower of Factorization Paradoxes described in the Introduction. The first rung on this tower, as studied by Coleman, Giddings, Strominger, Marolf, and Maxfield \cite{Coleman:1988cy,Giddings:1988wv,Giddings:1988cx,Marolf:2020xie}, is the problem of partition-function factorization arising from Euclidean wormholes, or black instantons. The story does not end there, and the next rung is Hilbert-space factorization, as studied in this paper. This second rung is threatened by ER bridges, or black holes.

But Hilbert-space factorization is also not the end of the story. Just as the baby universe Hilbert space $\HBU$ is only one among the whole family of universal Hilbert spaces $\CH_B$, the baby universe category should merely be one among a whole family of universally constructed categories $\CC_K$, one for each closed $(d-2)$-dimensional $\mathcal{X}$-manifold $K$. The definition of $\CC_K$ is not quite clear, but the rough picture is as follows. The objects of $\CC_K$ should be $(d-1)$-dimensional bordisms $B : \varnothing \to K$, with categorical wavefunction overlap,
\begin{equation}
    \bbrakket{B_1 | B_2} = \CH_{B_2 \cup_K \overline{B}_1}, \quad \kket{B_i} \in \CC_K.
\end{equation}
The Hilbert space on the right hand side is the universal Hilbert space on the \textit{connected} $(d-1)$-manifold obtained by gluing $B_2$ to $\overline{B}_1$ along their shared boundary $K$.

The morphisms in $\CC_K$ should be given, roughly, by bordisms with corners along $K$. More precisely, bordisms with corners of opening angle $2 \pi/p$ along $K$ should form the non-commutative $L^p$ spaces \cite{haagerup1979lp} of morphisms associated to $\CC_K$. Thus, opening angle $\pi$ ($p = 2$) would give the space of states in $\CH_{B_2 \cup_K \overline{B}_1}$, and opening angle $2 \pi$ ($p = 1$) would give trace-class morphisms, which can be glued into closed manifolds and fed into $\zeta$. Finally, and likely only formally, opening angle $0$ ($p = \infty$) would form the morphisms of $\CC_K$ itself, which would almost certainly be Type III.

We could then ask: do the categories $\CC_K$ factorize on disjoint unions,
\begin{equation}
    \CC_{K_1 \sqcup K_2} \stackrel{?}{=} \CC_{K_1} \otimes \CC_{K_2}.
\end{equation}
This issue, the Factorization Paradox for categories, is threatened not just by black instantons, and not just by black holes, but also by black strings. An obvious obstruction, related to completeness of the spectrum of charged strings, is the possibility of a global 2-form symmetry in the bulk, related to a global 1-form symmetry of the holographic dual.

So we should build a baby universe 2-category, whose objects are condensates \cite{Gaiotto:2019xmp} of closed $(d-2)$-dimensional $\mathcal{X}$-manifolds $K$, and prove a higher version of DR reconstruction, saying that the stack of fiber functors is $B^2G_0$ for some abelian group $G_0$, at least when partition functions and Hilbert spaces already factorize. If only partition functions factorize, this stack would be $BG$ for some 2-group $G$, and in complete generality we would just have some 2-stack of $\alpha$-sectors. We would learn that any breakdown of factorization of categories is, again, a red herring, corresponding to an ensemble average of an extended QFT over this 2-stack.

Yet again, this would not be the end of the story, and we should keep going... until we landed, finally, on a baby universe $d$-category, and obtained a higher ensemble average of fully-local, i.e., fully-extended, functorial, unitary, holographic QFTs. Even the notion of once-extended functorial QFT is far from clear, outside of the topological case. We hope that the framework described in this paper might help to find it.

As a final comment, note that the original obstruction to holography arises from Euclidean wormholes, which are simply handles of Morse index one in spacetime. In Euclidean signature, ER bridges, which obstruct Hilbert-space factorization, are handles of Morse index two, as in the Euclidean Schwarzschild solution. The full tower of Factorization Paradoxes, then, arises from the full Morse decomposition of the bulk spacetime, or in other words, the entire bulk topology. If our framework does generalize to the fully extended case, we would learn that the bulk topology is, itself, a red herring \cite{McNamara:2024kitp}, as envisioned in the paper \cite{Stephens:1993an} immediately preceding 't Hooft's introduction \cite{tHooft:1993dmi} of the holographic principle.

\section*{Acknowledgments}
We thank Xi Dong and Henry Maxfield for discussions that inspired this work during the 2024 KITP conference on Spacetime and String Theory. We thank Corey Jones and Dave Penneys for patiently explaining the theory of W*-categories, and Marius Junge for pointing out the relation of \cite{Colafranceschi:2023urj,Marolf:2024adj} to Tomita--Takesaki theory. We thank Dan Freed, Mike Hopkins, Cameron Krulewski, Tomer Schlank, Luuk Stehouwer, and Constantin Teleman for discussions on bordism categories, dagger structures, and reflection positivity, and we further thank Luuk Stehouwer for helping us sort out conventions for unitary structures. We thank David Ayala, Tom Banks, Kasia Budzik, Anatoly Dymarsky, Tom Faulkner, Elliott Gesteau, Daniel Harlow, Simeon Hellerman, Javier Magan, Don Marolf, Vinicius Nevoa, Sanjay Raman, David Reutter, Cumrun Vafa, and Gabriel Wong for helpful discussions. Finally, we especially thank Theo Johnson-Freyd for many discussions and clarifying observations regarding the entire construction.

JM is currently supported by the Simons Center for Geometry and Physics, and was previously supported by the DOE (HEP) Award DE-SC0011632 as a research fellow at the Walter Burke Institute for Theoretical Physics at the California Institute of Technilogy, where part of this work was completed. ZW is supported by the DOE award number DE-SC0015655. JM and ZW thank the Kavli Institute for Theoretical Physics for hospitality during the Spring 2024 program on Spacetime and String Theory where this work was initiated. JM thanks the Kavli Institute for Theoretical Physics for hospitality during the Spring 2025 program on Generalized Symmetries in Quantum Field Theory, the Simons Center for Geometry and Physics for hospitality during the 2025 Summer workshop, the Simons Foundation for hospitality during the November 2025 Annual Meeting of the Simons Collaboration on Global Categorical Symmetries, and the Aspen Center for Physics for hospitality during the February 2026 conference on Generalized Symmetries and Defects in QFT and Gravity, where various parts of this work were completed.

\appendix

\section{Daggers, duals, and conjugates}
\label{app:daggers_duals_conjugates}

In this appendix, we establish the basic properties of the unitary structure on $\CBUpre$, its trace, and the map $\mathcal{O} \mapsto \ket{\mathcal{O}}$ turning a morphism into a ket.\footnote{All of the same properties hold in $\BordX$ with appropriate inclusions of the subscript ``$\Bord$'' on traces, bras, and kets.} In this appendix, we will not fix any partition function, so the traces and inner products will all be valued in $\CBUpre(\varnothing \to \varnothing)$.

Recall that the ket associated to a morphism $\mathcal{O} : A \to B$ is defined by,
\begin{equation}
    \ket{\mathcal{O}} \defined (\mathcal{O} \circ \rho_{A}^{-1})^{\vee_{\overline{A}^\vee}},
\end{equation}
and the associated bra morphism is defined by taking the adjoint,
\begin{equation}
    \bra{\mathcal{O}} \defined \ket{\mathcal{O}}^\dagger. 
\end{equation}
We also have the twist, defined by,
\begin{equation}
    \theta_B \defined \rho_{\overline{B}^\vee} \circ \rho_B,
\end{equation}
which is balanced, so that,
\begin{equation}\label{eq:balancing_appendix}
    \theta_B^2 = \mathrm{id}_B,
\end{equation}
as $\mathcal{X}$-automorphisms. Finally, for a morphism $\mathcal{O} : B \to B$, we have the trace,
\begin{equation}
    \tr(\mathcal{O}) \defined \str(\theta_B \circ \mathcal{O}),
\end{equation}
where we recall that the super-trace is defined by naively gluing the incoming and outgoing boundaries, or more formally, using the canonical pivotal structure induced from the symmetric monoidal structure.

We will also need a few relations satisfied by the Hermitian structure. We recall that our conventions are that $(-)^\vee$ and $\overline{(-)}$ are involutive and commute on the nose.\footnote{We thank Luuk Stehouwer for help sorting out our conventions, which differ slightly from the conventions of \cite{Freed:2016rqq,stehouwer2024unitary,stehouwer2024spin,ferrer2024dagger} in service of better matching the standard conventions for super-Hilbert spaces used in the physics community. In the language of \cite[Appendix B]{stehouwer2024spin}, our conventions correspond to a third, undiscussed option, given by multiplying the Hermitian structures discussed there by $i^F$.} These are given by,
\begin{equation}\label{eq:bar_and_vee_of_rho}
    \overline{\rho}_B = \rho_{\overline{B}} \circ \theta_{\overline{B}}^{-1} = \rho_{B^\vee}^{-1}, \quad \rho_{B}^\vee = \rho_{B^\vee}^{-1} \circ \theta_{\overline{B}} = \rho_{\overline{B}}, \quad \overline{\rho}_B^\vee = \rho_{\overline{B}^\vee}^{-1} = \rho_B \circ \theta_B.
\end{equation}
These imply, for the twist,
\begin{equation}
    \overline{\theta}_B = \theta_{\overline{B}}, \quad \theta_B^\vee = \theta_{B^\vee}.
\end{equation}
We now check that the unitary structure satisfies the following essential conditions.

\begin{proposition}
    For any $B \in \CBUpre$, the Hermitian structure $\rho_B : B \to \overline{B}^\vee$ is unitary,
    \begin{equation}
        \rho_B^\dagger = \rho_B^{-1}.
    \end{equation}
    Moreover, for any morphism $\mathcal{O} : A \to B$, we have,
    \begin{equation}
        \mathcal{O}^{\dagger \dagger} = \mathcal{O},
    \end{equation}
    so that $(-)^\dagger$ does indeed define a $\dagger$-structure.
\end{proposition}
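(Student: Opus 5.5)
Both claims reduce to unwinding the definition \eqref{eq:dagger_on_BordX}, $\mathcal{O}^\dagger = \rho_A^{-1}\circ \overline{\mathcal{O}}^\vee\circ\rho_B$ for $\mathcal{O}:A\to B$, and then applying the relations \eqref{eq:bar_and_vee_of_rho} together with the balancing \eqref{eq:balancing_appendix}. Two further facts will be used throughout. First, $\overline{(-)}$ is a covariant functor and $(-)^\vee$ is contravariant, so $\overline{(-)}^\vee$ is contravariant and sends composites to reversed composites. Second, both operations are involutive and commute on the nose. I will not use that $\mathcal{X}$-isomorphisms are unitary, since $\rho_B$ is itself an $\mathcal{X}$-isomorphism and that would beg the question. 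Although $\rho_B$ is not a morphism of the non-unital $\BordX$, the formula for $\dagger$ still makes sense for $\mathcal{X}$-isomorphisms via the convention that $u\circ N$ means twisting a boundary.

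For unitarity of $\rho_B$, I apply the dagger formula to $\rho_B : B\to \overline{B}^\vee$. This gives
\[
\rho_B^\dagger = \rho_B^{-1}\circ \overline{\rho}_B^{\,\vee}\circ \rho_{\overline{B}^\vee}.
\]
By the third relation in \eqref{eq:bar_and_vee_of_rho}, $\overline{\rho}_B^{\,\vee} = \rho_{\overline{B}^\vee}^{-1}$. The middle and right factors therefore cancel, leaving $\rho_B^\dagger = \rho_B^{-1}$. The only care needed is to check that the source and target objects in the formula are $B$ and $\overline{B}^\vee$. This uses $\overline{\overline{B}^\vee}^\vee = B$, which holds because the two involutions commute on the nose.

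For $\mathcal{O}^{\dagger\dagger}=\mathcal{O}$, I apply the formula twice. Using contravariance of $\overline{(-)}^\vee$, its involutivity $\overline{\overline{\mathcal{O}}^\vee}^\vee = \mathcal{O}$, and the identities $\overline{(\rho^{-1})}^\vee = (\overline{\rho}^\vee)^{-1}$, I obtain
\[
\mathcal{O}^{\dagger\dagger} = \rho_B^{-1}\circ \overline{\rho_B}^{\,\vee}\circ \mathcal{O}\circ (\overline{\rho_A}^{\,\vee})^{-1}\circ \rho_A.
\]
Next I substitute $\overline{\rho}_X^{\,\vee} = \rho_X\circ\theta_X$ from \eqref{eq:bar_and_vee_of_rho}. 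This turns the expression into $\theta_B\circ\mathcal{O}\circ\theta_A^{-1}$. Naturality of the twist, $\theta_B\circ N = N\circ\theta_A$, extended linearly to $\CBUpre$, then collapses it to $\mathcal{O}$.

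The main obstacle is bookkeeping rather than any conceptual step. One has to track which of the several equivalent forms in \eqref{eq:bar_and_vee_of_rho} to substitute, and confirm that the twist factors appear as $\theta$ and $\theta^{-1}$ on opposite sides, so that naturality cancels them, rather than as $\theta^2$. If the signs instead land as $\theta_B\,\mathcal{O}\,\theta_A$, I would invoke $\theta_A^2=\mathrm{id}$ from \eqref{eq:balancing_appendix} to identify $\theta_A$ with $\theta_A^{-1}$. Either way the result follows, and the balancing condition is precisely what makes the argument insensitive to this convention ambiguity. Finally, antilinear extension from $\BordX$ to $\CBUpre$ is immediate, since both identities are checked on bordisms.
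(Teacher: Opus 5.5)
Your proposal is correct and follows the paper's proof essentially verbatim: unitarity comes from the same one-line cancellation using $\overline{\rho}_B^{\,\vee} = \rho_{\overline{B}^\vee}^{-1}$, and involutivity from the same double unwinding to $\rho_B^{-1}\circ\overline{\rho}_B^{\,\vee}\circ\mathcal{O}\circ(\overline{\rho}_A^{\,\vee})^{-1}\circ\rho_A$ followed by naturality of the twist. The only cosmetic difference is the substitution: you use the form $\rho_X\circ\theta_X$ and arrive at $\theta_B\circ\mathcal{O}\circ\theta_A^{-1}$, whereas the paper uses $\rho_{\overline{X}^\vee}^{-1}$ and arrives at $\theta_B^{-1}\circ\mathcal{O}\circ\theta_A$; naturality alone reduces either expression to $\mathcal{O}$.
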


\begin{proof}
    For the unitarity of $\rho_B$, we compute,
    \begin{equation}
        \rho_B^\dagger = \rho_B^{-1} \circ \overline{\rho}_B^\vee \circ \rho_{\overline{B}^\vee} = \rho_B^{-1},
    \end{equation}
    by \eqref{eq:bar_and_vee_of_rho}. For the involutory nature of $(-)^\dagger$, we compute,
    \begin{align}
        \mathcal{O}^{\dagger \dagger} &= \rho_B^{-1} \circ \overline{\mathcal{O}^\dagger}^\vee \circ \rho_A \\
        &= \rho_B^{-1} \circ \overline{\rho_A^{-1} \circ \overline{\mathcal{O}}^\vee \circ \rho_B}^\vee \circ \rho_A \\
        &= \rho_B^{-1}  \circ \overline{\rho}_B^\vee \circ \mathcal{O} \circ (\overline{\rho}_A^\vee)^{-1} \circ \rho_A \\
        &= \rho_B^{-1}  \circ \rho_{\overline{B}^\vee}^{-1} \circ \mathcal{O} \circ \rho_{\overline{A}^\vee} \circ \rho_A \\
        &= \theta_{B}^{-1} \circ \CO \circ \theta_A = \CO,
    \end{align}
    using \eqref{eq:bar_and_vee_of_rho} and the naturality of $\theta_B$.\footnote{Based on this calculation, one might worry that $\CO^{\dagger \dagger} = - \CO$ for fermionic morphisms in the $\sVec$-enriched generalization discussed in Section \ref{sec:fermionic_sources}. However, it does not, as we pick up an additional sign from the fact that $\theta_B$ becomes a twisted $B\mathbb{Z}_2$ action. Thus, $\CO^{\dagger \dagger} = \CO$, even for fermionic morphisms.}
\end{proof}

We now establish the properties of the trace and the kets associated to morphisms.

\begin{proposition}\label{prop:trace_pairing_is_inner_product}
    Let $\mathcal{O}_1, \CO_2 : A \to B$ be two parallel morphisms in $\CBUpre$. Then we have,
    \begin{equation}
        \tr(\CO_1^\dagger \CO_2) = \braket{\CO_1 | \CO_2},
    \end{equation}
    so that the trace pairing agrees with the inner product.
\end{proposition}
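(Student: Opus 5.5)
The plan is to unwind both sides into explicit composites in the rigid symmetric monoidal category and compare them using only the relations \eqref{eq:bar_and_vee_of_rho}, the naturality of $\theta$, and the balancing \eqref{eq:balancing_appendix}.

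By definition,
\begin{equation}
    \braket{\CO_1|\CO_2} = \ket{\CO_1}^\dagger \circ \ket{\CO_2}, \quad \ket{\CO_i} = (\CO_i \circ \rho_A^{-1})^{\vee_{\overline{A}^\vee}} : \varnothing \to B \sqcup \overline{A}.
\end{equation}
The first step is to compute $\ket{\CO_1}^\dagger$. I would first establish a lemma describing how $\dagger$ interacts with partial dualization. For $N : X \to Y \sqcup Z^\vee$ obtained by bending, I expect $(N^{\vee_Z})^\dagger$ to equal the partial dual of $N^\dagger$, corrected by the Hermitian structure $\rho_Z$ on the bent leg. This should follow from the definition $N^\dagger = \rho^{-1}\circ\overline{N}^\vee\circ\rho$, the fact that $\overline{(-)}^\vee$ is a contravariant monoidal involution compatible with partial duals, and $\rho_{Y\sqcup Z}=\rho_Y\sqcup\rho_Z$. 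Applying this lemma gives
\begin{equation}
    \bra{\CO_1} : B \sqcup \overline{A} \to \varnothing
\end{equation}
as a bent version of $\rho_A \circ \CO_1^\dagger$, acting with an evaluation-type map on the $\overline{A}$ leg. That leg carries a factor of $\rho_{\overline{A}}$ or $\overline{\rho}_A$, which \eqref{eq:bar_and_vee_of_rho} lets me rewrite in terms of $\rho_{A^\vee}^{-1}$ and $\theta$.

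The second step is to compose $\bra{\CO_1}\circ\ket{\CO_2}$. The snake (zigzag) identities for the non-unital duality straighten the $B$ leg. This leaves a closed manifold given by gluing the outgoing end of
\begin{equation}
    \rho_A \circ \CO_1^\dagger \circ \CO_2 \circ \rho_A^{-1}
\end{equation}
back to its incoming end through the bent $\overline{A}$ leg. That gluing is the super-trace $\str$ over $\overline{A}^\vee$ with some residual Hermitian factors. Cyclicity of $\str$, available because there are no fermionic sources, then conjugates away $\rho_A$. The residual factors should collapse to exactly one $\rho_{\overline{A}^\vee}\circ\rho_A = \theta_A$ inserted in the loop. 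The result is
\begin{equation}
    \str(\theta_A \circ \CO_1^\dagger \circ \CO_2) = \tr(\CO_1^\dagger\circ\CO_2).
\end{equation}

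The main obstacle is the bookkeeping of Hermitian structures and twists in the $\dagger$-versus-bending lemma. Specifically, I must show that exactly one net $\theta_A$ survives rather than zero or two, since which of $\str$ or $\tr$ appears hinges on this. I would attack it by treating the fundamental case $\CO=\mathrm{id}$-like bordisms first, where the claim reduces to the spin example $M^\dagger\circ M = S^1_{\mathrm{ap}}$. I would then use $\theta_A^2=\mathrm{id}$ and $\overline{\theta}_A=\theta_{\overline{A}}$, $\theta_A^\vee=\theta_{A^\vee}$ to absorb any stray twists into the single required one.
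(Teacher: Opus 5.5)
Your strategy is essentially the paper's, run in the other direction. The paper expands both sides into the common composite $\overline{\CO}_1^{\vee_{\overline{B}}}\circ(\rho_B\sqcup\rho_{A^\vee})\circ\CO_2^{\vee_A}$, using that the super-trace of a composite is the composite of the partial duals along the cut. You use the same fact to turn $\bra{\CO_1}\circ\ket{\CO_2}$ into a super-trace and then invoke cyclicity.

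The gap is in the only step with real content: whether the net correction is one twist or an even number of them, i.e.\ whether you land on $\tr$ or on $\str$. You assert this (``should collapse'') rather than derive it, and the method you propose for settling it does not work.
\begin{itemize}
\item Checking identity-like bordisms in the spin example cannot prove an identity in an arbitrary unitary bordism category. It is also circular here: the paper's evaluation $M^\dagger\circ M=\mathrm{tr}_\mathrm{Bord}(\mathrm{id}_{\mathrm{pt}^+})=S^1_\mathrm{ap}$ is itself an instance of the bordism-level version \eqref{eq:tr_pairing_bord_is_bord_overlap} of this proposition.
\item The identities $\theta_A^2=\mathrm{id}_A$, $\overline{\theta}_A=\theta_{\overline{A}}$ and $\theta_A^\vee=\theta_{A^\vee}$ only let you move twists around and cancel them in pairs. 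They cannot convert an even count into the single required twist, so the parity must be computed from \eqref{eq:bar_and_vee_of_rho}, not absorbed.
\end{itemize}

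The computation is short and uses tools you already named.
\begin{itemize}
\item In $\ket{\CO_1}^\dagger=\overline{\ket{\CO_1}}^\vee\circ\rho_{B\sqcup\overline{A}}$, the factor acting on the $\overline{A}$ leg is $\overline{\rho}_A^{-1}\circ\rho_{\overline{A}}$. By the first identity in \eqref{eq:bar_and_vee_of_rho}, $\overline{\rho}_A=\rho_{\overline{A}}\circ\theta_{\overline{A}}^{-1}$, this is exactly $\theta_{\overline{A}}$.
\item Hence $\bra{\CO_1}$ is the bend of $\rho_A\circ\CO_1^\dagger=\overline{\CO}_1^\vee\circ\rho_B$ along its output $\overline{A}^\vee$, precomposed with $\mathrm{id}_B\sqcup\theta_{\overline{A}}$. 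Meanwhile $\ket{\CO_2}$ is the bend of $\CO_2\circ\rho_A^{-1}$ with no extra factor.
\item Sliding $\theta_{\overline{A}}$ through the bend turns it into $\theta_{\overline{A}}^\vee=\theta_{\overline{A}^\vee}$, so
\[
\braket{\CO_1|\CO_2}=\str\big(\rho_A\circ\CO_1^\dagger\circ\CO_2\circ\rho_A^{-1}\circ\theta_{\overline{A}^\vee}\big).
\]
\item The definition of the twist gives $\rho_A^{-1}\circ\theta_{\overline{A}^\vee}=\rho_{\overline{A}^\vee}=\theta_A\circ\rho_A^{-1}$. Cyclicity then removes $\rho_A$, and naturality of $\theta$ yields $\str(\theta_A\circ\CO_1^\dagger\circ\CO_2)=\tr(\CO_1^\dagger\circ\CO_2)$.
\end{itemize}
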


\begin{proof}
    First, we evaluate the inner product more explicitly. Expanding out the definition of $(-)^\dagger$, we have,
    \begin{equation}
        \bra{\mathcal{O}_1} = \overline{\mathcal{O}_1 \circ \rho_A^{-1}}^{\vee_{\overline{B}}} \circ \rho_{B \sqcup \overline{A}} = \overline{\mathcal{O}}_1^{\vee_{\overline{B}}} \circ (\rho_B \sqcup \theta_{\overline{A}}),
    \end{equation}
    using the naturality of taking partial duals (as we continue to use below without comment). Thus, we have,
    \begin{align}
        \braket{\CO_1 | \CO_2} &= \overline{\mathcal{O}}_1^{\vee_{\overline{B}}} \circ (\rho_B \sqcup \theta_{\overline{A}}) \circ (\CO_2 \circ \rho_A^{-1})^{\vee_{\overline{A}^\vee}} \\
        &= \overline{\mathcal{O}}_1^{\vee_{\overline{B}}} \circ (\rho_B \sqcup \theta_{\overline{A}}) \circ (\mathrm{id}_B \sqcup (\rho_A^\vee)^{-1})\circ \CO_2^{\vee_{A}} \\
        &= \overline{\mathcal{O}}_1^{\vee_{\overline{B}}} \circ (\rho_B \sqcup \theta_{\overline{A}}) \circ (\mathrm{id}_B \sqcup \theta_{\overline{A}} \circ \rho_{A^\vee})\circ \CO_2^{\vee_{A}} \\
        &= \overline{\mathcal{O}}_1^{\vee_{\overline{B}}} \circ (\rho_B \sqcup \rho_{A^\vee})\circ \CO_2^{\vee_{A}}\label{eq:inner_prod_computed}
    \end{align}
    Now, we evaluate the trace pairing. We compute,
    \begin{align}
        \tr(\CO_1^\dagger \CO_2) &= \str(\theta_A \circ \rho_A^{-1} \circ \overline{\CO}_1^\vee \circ \rho_B \circ \CO_2) \\
        &= \str(\rho_{\overline{A}^\vee} \circ \overline{\CO}_1^\vee \circ \rho_B \circ \CO_2). \label{eq:intermezzo_trace_pairing}
    \end{align}
    
    We need the following fact: the super-trace of a composite is given by composing the partial duals along the outermost cut. This follows from the definition of the symmetric monoidal pivotal structure, but can be seen more geometrically from the fact that both are given by the same naive gluing. Thus, continuing from \eqref{eq:intermezzo_trace_pairing}, we have,
    \begin{align}
        \tr(\CO_1^\dagger \CO_2) 
        &= (\rho_{\overline{A}^\vee} \circ \overline{\CO}_1^\vee)^{\vee_A} \circ (\rho_B \circ \CO_2)^{\vee_A} \\
        &= \overline{\CO}_1^{\vee_{\overline{B}}} \circ (\rho_B \sqcup \rho_{A^\vee}) \circ \CO_2^{\vee_A}
    \end{align}
    Comparing with \eqref{eq:inner_prod_computed}, we obtain the desired result.
\end{proof}

\begin{proposition}\label{prop:reality_of_trace}
    Let $\CO : B \to B$ be a morphism in $\CBUpre$. Then we have,
    \begin{equation}
        \overline{\tr(\CO)} = \tr(\CO^\dagger).
    \end{equation}
    Thus, the trace is real.
\end{proposition}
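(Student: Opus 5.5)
The plan is to reduce the statement to the reality of the underlying bordism operation: we will show, at the level of $\CBUpre$ (with values in $\CBUpre(\varnothing \to \varnothing)$), that the closed manifold $\tr_\Bord(\CO^\dagger)$ is the conjugate $\overline{\tr_\Bord(\CO)}$. Here the bar on the left-hand side is interpreted as the anti-linear conjugation functor $\overline{(-)}$ applied to the endomorphism $\tr(\CO)$ of $\varnothing$. Once $\zeta$ is applied, Axiom \ref{axiom:reality} turns this into complex conjugation of numbers. By anti-linearity of both sides in $\CO$, it suffices to treat a single bordism $\CO = N : B \to B$.

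First, expand the definitions. We have $\tr(\CO^\dagger) = \str(\theta_B \circ \rho_B^{-1} \circ \overline{\CO}^\vee \circ \rho_B)$. Next, use cyclicity of the super-trace, which holds because sources are bosonic and $\str$ is defined via the symmetric pivotal structure. Move $\rho_B$ around the loop and use naturality of $\theta$ to commute $\theta_B$ past $\rho_B$; this requires $\rho_B \circ \theta_B = \theta_{\overline{B}^\vee} \circ \rho_B$, which is naturality applied to the $\mathcal{X}$-isomorphism $\rho_B$. The result is
\begin{equation}
    \tr(\CO^\dagger) = \str\big(\theta_{\overline{B}^\vee} \circ \overline{\CO}^\vee\big).
\end{equation}

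Next, we need two invariance facts about the super-trace. First, $\str(X^\vee) = \str(X)$ for $X : C \to C$: the dual reverses arrows of time and swaps source and target, but naive gluing of the two ends yields the same closed manifold. Equivalently, the pivotal trace in a symmetric rigid category is invariant under taking the dual morphism. Second, $\str(\overline{X}) = \overline{\str(X)}$, since $\overline{(-)}$ is a symmetric monoidal functor commuting with duality and hence preserves the canonical gluing. Applying these with $\theta_{\overline{B}^\vee} = (\overline{\theta_B})^\vee$, obtained from $\overline{\theta}_B = \theta_{\overline{B}}$ and $\theta_B^\vee = \theta_{B^\vee}$, and with the contravariance $(\overline{\theta_B} \circ \overline{\CO})^\vee = \overline{\CO}^\vee \circ \overline{\theta_B}^\vee$, cyclicity gives
\begin{equation}
    \str\big(\theta_{\overline{B}^\vee}\circ\overline{\CO}^\vee\big) = \str\big((\overline{\theta_B \circ \CO})^\vee\big) = \overline{\str(\theta_B \circ \CO)} = \overline{\tr(\CO)}.
\end{equation}

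The main obstacle is sign and orientation bookkeeping in the step $\str(X^\vee) = \str(X)$, together with the compatibility of $\overline{(-)}$ with the pivotal structure. Taking the dual in a symmetric category can introduce the balancing, and the relations \eqref{eq:bar_and_vee_of_rho} involve factors of $\theta$. If a stray $\theta_B$ appears, I would absorb it using $\theta_B^2 = \mathrm{id}_B$ from \eqref{eq:balancing_appendix}, and check the answer on the spin example $\Bord_1^{Spin}$, where $\tr(\mathrm{id}_{\mathrm{pt}^+}) = S^1_\mathrm{ap}$ must be self-conjugate.

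As an alternative route that avoids these subtleties, use Proposition \ref{prop:trace_pairing_is_inner_product} with a cylinder-type factorization. Write $\tr(\CO) = \braket{C | \CO'}$-style pairings, and use that $\braket{X|Y}^\dagger = \braket{Y|X}$ while $\dagger$ on $\CBUpre(\varnothing \to \varnothing)$ coincides with conjugation. This last identification holds because $\rho_\varnothing = \mathrm{id}$, so for closed $M$ one has $M^\dagger = \overline{M}^\vee = \overline{M}$.
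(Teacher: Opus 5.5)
Your argument is correct and is the paper's proof read in reverse. The paper starts from $\overline{\str(\theta_B\circ\CO)}$, uses the reality and sphericality of the super-trace ($\overline{\str(X)}=\str(\overline{X})$ and $\str(X^\vee)=\str(X)$) together with $\overline{\theta_B}^\vee=\theta_{\overline{B}^\vee}$ to reach $\str(\theta_{\overline{B}^\vee}\circ\overline{\CO}^\vee)$, and then inserts $\rho_B^{\pm1}$ by cyclicity and naturality of the twist to obtain $\str(\theta_B\circ\CO^\dagger)$, which is exactly your chain of identities, so the cylinder/inner-product fallback is unnecessary.
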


\begin{proof}
    We compute, using the manifest reality and sphericality of the super-trace,
    \begin{equation}
        \overline{\tr(\CO)} = \overline{\str(\theta_B \circ \CO)} = \str(\theta_{\overline{B}^{\vee}} \circ \overline{\CO}^\vee) = \str( \theta_{B} \circ \CO^\dagger) = \tr(\CO^\dagger),
    \end{equation}
    having used cyclicity of the super-trace and naturality of the twist in the third step to insert copies of the Hermitian structure.
\end{proof}

\begin{corollary}\label{corr:Hermitian_and_dagger_is_antiunitary}
    The inner product (equivalently, the trace pairing) on parallel morphisms $\CO_1, \CO_2 : A \to B$ is Hermitian,
    \begin{equation}
        \overline{\braket{\CO_1 | \CO_2}} = \braket{\CO_2 | \CO_1},
    \end{equation}
    and the $\dagger$-structure is anti-unitary,
    \begin{equation}
        \braket{\CO_1^\dagger | \CO_2^\dagger} = \braket{\CO_2 | \CO_1}.
    \end{equation}
\end{corollary}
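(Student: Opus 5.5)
This corollary should follow in two lines each from Proposition \ref{prop:trace_pairing_is_inner_product} and Proposition \ref{prop:reality_of_trace}. The plan is to trade inner products for trace pairings, then use the reality of the trace, involutivity of $(-)^\dagger$, and cyclicity of the trace.

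For Hermiticity, first I rewrite
\begin{equation}
\braket{\CO_1|\CO_2} = \tr(\CO_1^\dagger \CO_2)
\end{equation}
using Proposition \ref{prop:trace_pairing_is_inner_product}. Conjugating, Proposition \ref{prop:reality_of_trace} gives $\overline{\tr(\CO_1^\dagger\CO_2)} = \tr\big((\CO_1^\dagger\CO_2)^\dagger\big)$. Since $(-)^\dagger$ is a contravariant involution (established in the preceding proposition, $\CO^{\dagger\dagger}=\CO$), we have $(\CO_1^\dagger\CO_2)^\dagger = \CO_2^\dagger \CO_1$. Hence
\begin{equation}
\overline{\braket{\CO_1|\CO_2}} = \tr(\CO_2^\dagger\CO_1) = \braket{\CO_2|\CO_1}.
\end{equation}
The one point to check is that $(-)^\dagger$ reverses composition. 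This holds because $\overline{(-)}^\vee$ is contravariant and the $\rho$'s cancel in the middle.

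For anti-unitarity of $\dagger$, the morphisms $\CO_i^\dagger : B\to A$ are again parallel, so Proposition \ref{prop:trace_pairing_is_inner_product} applies to them. It gives $\braket{\CO_1^\dagger|\CO_2^\dagger} = \tr(\CO_1^{\dagger\dagger}\CO_2^\dagger) = \tr(\CO_1\CO_2^\dagger)$. Cyclicity of the trace, which follows from naturality of $\theta$ together with cyclicity of the super-trace, moves this to $\tr(\CO_2^\dagger\CO_1) = \braket{\CO_2|\CO_1}$.

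The only step with any real content is cyclicity for morphisms $A\to B$ and $B\to A$ between different objects. For this I would write $\tr(\CO_1\CO_2^\dagger)=\str(\theta_B\CO_1\CO_2^\dagger)$ and use naturality $\theta_B\CO_1=\CO_1\theta_A$. Then cyclicity of $\str$ (which is geometrically obvious, since both sides are the same glued closed manifold) gives $\str(\theta_A\CO_2^\dagger\CO_1)$. Everything else is formal.
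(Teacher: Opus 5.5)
Your proposal is correct and matches the paper's proof: Hermiticity via Proposition~\ref{prop:trace_pairing_is_inner_product} plus Proposition~\ref{prop:reality_of_trace}, and anti-unitarity via $\braket{\CO_1^\dagger|\CO_2^\dagger}=\tr(\CO_1\CO_2^\dagger)$ and cyclicity of the trace. You additionally spell out $(\CO_1^\dagger\CO_2)^\dagger=\CO_2^\dagger\CO_1$ and derive cyclicity from naturality of $\theta$ and cyclicity of $\str$, which the paper leaves implicit.
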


\begin{proof}
    Hermiticity follows from Proposition \ref{prop:trace_pairing_is_inner_product} and Proposition \ref{prop:reality_of_trace} by,
    \begin{equation}
        \overline{\braket{\CO_1 | \CO_2}} = \overline{\tr(\CO_1^\dagger \CO_2)} = \tr(\CO_2^\dagger \CO_1) = \braket{\CO_2 | \CO_1}.
    \end{equation}
    Anti-unitarity of the $\dagger$-structure follows from cyclicity of the trace,
    \begin{equation}
        \braket{\CO_1^\dagger | \CO_2^\dagger} = \tr(\CO_1 \CO_2^\dagger) = \tr(\CO_2^\dagger \CO_1) = \braket{\CO_2 | \CO_1}.
    \end{equation}
\end{proof}

\begin{proposition}\label{prop:dagger_and_bar_kets_same}
    Let $\mathcal{O} : A \to B$ be a morphism in $\CBUpre$. We have,
    \begin{equation}
        \ket{\CO^\dagger} = \ket{\overline{\CO}},
    \end{equation}
    suppressing the use of the symmetry. As a consequence, we learn that $\overline{(-)}$ is also anti-unitary.
\end{proposition}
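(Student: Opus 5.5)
The plan is to compare both kets directly from the definitions and reduce their equality to the identities \eqref{eq:bar_and_vee_of_rho} relating $\rho$, $\theta$, conjugation and duality. Both $\ket{\CO^\dagger}$ and $\ket{\overline{\CO}}$ are morphisms $\varnothing \to \overline{A}\sqcup B$, up to the symmetry swapping the two factors. Indeed, $\CO^\dagger : B \to A$ gives a ket in $A \sqcup \overline{B}$. And $\overline{\CO}:\overline{A}\to\overline{B}$ gives a ket in $\overline{B}\sqcup \overline{\overline{A}} = \overline{B}\sqcup A$. So it makes sense to compare them once the symmetry is inserted.

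First I will unfold
\begin{equation}
\ket{\CO^\dagger} = (\rho_B^{-1}\circ\overline{\CO}^\vee\circ\rho_B\circ\rho_B^{-1})^{\vee_{\overline{B}^\vee}} = (\rho_B^{-1}\circ\overline{\CO}^\vee)^{\vee_{\overline{B}^\vee}}.
\end{equation}
I will then use naturality of partial duals to move the factor $\rho_B^{-1}$ across the bend. On the bent leg it becomes $(\rho_B^{-1})^\vee$, which by \eqref{eq:bar_and_vee_of_rho} is $\rho_{\overline{B}}^{-1}$. The remaining piece is $(\overline{\CO}^\vee)^{\vee}$ bent along $\overline{B}^\vee$, which by the snake identities for the (non-unital) duality is just $\overline{\CO}$ with its incoming leg $\overline{A}$ bent to an outgoing $A$, followed by the symmetry.

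Next, I will unfold
\begin{equation}
\ket{\overline{\CO}} = (\overline{\CO}\circ\rho_{\overline{A}}^{-1})^{\vee_{\overline{\overline{A}}^\vee}}
\end{equation}
in the same way. This is also $\overline{\CO}$ with its incoming leg bent, dressed by a Hermitian structure $\rho_{\overline{A}}^{-1}$ on the bent leg. At this point both sides have the form "$\overline{\CO}$ bent, times a Hermitian-structure/twist correction on one leg". The main obstacle is bookkeeping these corrections: one sits on the $B$-side leg and one on the $A$-side leg. I would slide the correction across $\overline{\CO}$ using naturality of the twist $\theta$ together with the relation $\rho_B^\vee=\rho_{\overline{B}}$ and $\overline{\rho}_B^\vee=\rho_B\circ\theta_B$. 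What remains should be a product of $\theta$'s, and I expect the two $\theta$'s to cancel by the balancing $\theta^2=\mathrm{id}$ in \eqref{eq:balancing_appendix}. If a stray $\theta$ survives, I would instead insert the symmetry $s$ explicitly. Then I would use that the symmetric-monoidal pivotal structure relates $s\circ(\text{bend on one side})$ to the bend on the other side up to exactly one $\theta$, which absorbs it.

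Finally, anti-unitarity of $\overline{(-)}$ follows immediately. The symmetry is unitary and, by Corollary \ref{corr:Hermitian_and_dagger_is_antiunitary}, $\dagger$ is anti-unitary. Hence
\begin{equation}
\braket{\overline{\CO}_1|\overline{\CO}_2}=\braket{\CO_1^\dagger|\CO_2^\dagger}=\braket{\CO_2|\CO_1}.
\end{equation}
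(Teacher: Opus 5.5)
Your core idea is the paper's: rewrite $\ket{\CO^\dagger}$ by moving the bend to the other leg, then convert the dualized Hermitian structure using $\rho_A^\vee=\rho_{\overline{A}}$ from \eqref{eq:bar_and_vee_of_rho}. However, the execution contains a labelling error, and that error is what creates the ``obstacle'' you then try to resolve. Since $\CO : A \to B$, the adjoint is $\CO^\dagger = \rho_A^{-1}\circ\overline{\CO}^\vee\circ\rho_B$. Hence $\CO^\dagger\circ\rho_B^{-1} = \rho_A^{-1}\circ\overline{\CO}^\vee : \overline{B}^\vee \to A$. Your expression $\rho_B^{-1}\circ\overline{\CO}^\vee$ does not type-check, because $\overline{\CO}^\vee$ lands in $\overline{A}^\vee$. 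With the correct label, move the bend from $\overline{B}^\vee$ to the other leg; geometrically this is the same bordism, with only the order of the outgoing components changed. This gives
\[
\ket{\CO^\dagger} = \big(\rho_A^{-1}\circ\overline{\CO}^\vee\big)^{\vee_{\overline{B}^\vee}} = \big(\overline{\CO}\circ(\rho_A^{\vee})^{-1}\big)^{\vee_{A^\vee}} = \big(\overline{\CO}\circ\rho_{\overline{A}}^{-1}\big)^{\vee_{A^\vee}} = \ket{\overline{\CO}}.
\]
The last equality is literally the definition of $\ket{\overline{\CO}}$, since $\overline{\overline{A}}^\vee = A^\vee$. Both Hermitian-structure corrections sit on the same ($A$-side) leg and agree on the nose. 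No twist appears, and $\theta^2=\mathrm{id}$ is never used.

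As written, though, your proof does not close. The step you propose for reconciling a ``$B$-side'' correction with an ``$A$-side'' one is not valid. A Hermitian structure cannot be slid across $\overline{\CO}$ by naturality of $\theta$, because $\rho$ is not natural with respect to $\overline{\CO}$. The only relation that moves $\rho$ past a morphism is the definition of $\dagger$ itself, and it changes the morphism, e.g.\ $\rho_{\overline{A}}\circ\overline{\CO}^\dagger = \CO^\vee\circ\rho_{\overline{B}}$. Your fallback is also incorrect. With partial duals defined by reversing arrows of time, bending one leg versus the other differs only by the symmetry, with no $\theta$. Inserting one would therefore give a wrong answer whenever $\theta$ acts nontrivially, e.g.\ for spin structures. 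Once the label is corrected, none of this is needed, and your derivation of anti-unitarity of $\overline{(-)}$ from Corollary~\ref{corr:Hermitian_and_dagger_is_antiunitary} goes through. Minor point: both kets live in $A\sqcup\overline{B}$ up to the symmetry, not $\overline{A}\sqcup B$ as your first sentence says.
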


\begin{proof}
    We compute,
    \begin{align}
        \ket{\CO^\dagger} &= (\CO^\dagger \circ \rho_B^{-1})^{\vee_{\overline{B}^\vee}} \\
        &= (\rho_A^{-1} \circ \overline{\CO}^\vee)^{\vee_{\overline{B}^\vee}} \\
        &= (\overline{\CO} \circ (\rho_A^\vee)^{-1})^{\vee_{A^\vee}} \\
        &= (\overline{\CO} \circ \rho_{\overline{A}}^{-1})^{\vee_{A^\vee}} = \ket{\overline{\CO}},
    \end{align}
    as claimed.
\end{proof}

\begin{proposition}\label{prop:dinatuality_of_kets}
    Let $\CO_1 : B_1 \to B_2$ and $\CO_2: B_2 \to B_3$ be composable morphisms in $\CBUpre$. We have,
    \begin{equation}
        \ket{\CO_2 \circ \CO_1} = \CO_2 \circ_{B_2} \ket{\CO_1} = \overline{\CO}_1^\dagger \circ_{\overline{B}_2} \ket{\CO_2},
    \end{equation}
    where we recall from Section \ref{sec:analytic_core} that the notation $\circ_B$ denotes a partial composition along the boundary component $B$.
\end{proposition}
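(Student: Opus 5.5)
We prove both equalities directly from the definition $\ket{\CO} = (\CO \circ \rho_A^{-1})^{\vee_{\overline{A}^\vee}}$, using only two formal facts. The first is naturality of partial dualization: for $f : X \to Y$ and $g$ a morphism out of $Y \sqcup Z$, we have $(g \circ f)^{\vee_Z} = g^{\vee_Z} \circ f$, and the analogous statement holds for precomposition on the component being dualized. The second is the relations \eqref{eq:bar_and_vee_of_rho}. Throughout, we suppress symmetries reordering boundary components, as in Proposition \ref{prop:dagger_and_bar_kets_same}.

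For the first equality, we unwind the definition of partial composition. The expression $\CO_2 \circ_{B_2} \ket{\CO_1}$ means $(\CO_2 \sqcup \mathrm{id}_{\overline{B}_1})$ applied to $\ket{\CO_1} : \varnothing \to B_2 \sqcup \overline{B}_1$; by \eqref{eq:partial_composition_left} this is literally how $\circ_{B_2}$ was defined. We then check that this agrees with $\ket{\CO_2\circ\CO_1}$. Since the dualized component $\overline{B}_1^\vee$ is untouched by postcomposition with $\CO_2$, naturality gives
\[
(\CO_2 \sqcup \mathrm{id})\circ(\CO_1\circ\rho_{B_1}^{-1})^{\vee_{\overline{B}_1^\vee}} = (\CO_2\circ\CO_1\circ\rho_{B_1}^{-1})^{\vee_{\overline{B}_1^\vee}},
\]
which is the first equality.

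The second equality is the substantive one, since it moves a morphism from the outgoing side to the bent incoming side. We write
\[
\overline{\CO}_1^\dagger = \rho_{\overline{B}_1}^{-1}\circ \CO_1^{\vee} \circ \rho_{\overline{B}_2} : \overline{B}_2 \to \overline{B}_1,
\]
using that conjugation is involutive. The expression $\overline{\CO}_1^\dagger \circ_{\overline{B}_2}\ket{\CO_2}$ is $(\mathrm{id}_{B_3}\sqcup \overline{\CO}_1^\dagger)\circ(\CO_2\circ\rho_{B_2}^{-1})^{\vee_{\overline{B}_2^\vee}}$. The key step is the ``sliding'' identity for partial duals,
\[
(\mathrm{id}\sqcup h)\circ g^{\vee_{X}} = (g\circ h^{\vee})^{\vee_{Y}}, \qquad h : X^\vee \to Y^\vee,
\]
which expresses dinaturality of the coevaluation in the non-unital rigid structure. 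This can be checked geometrically, since both sides are the same glued bordism, exactly as in the proof of Proposition \ref{prop:trace_pairing_is_inner_product}. Applying it reduces the claim to the identity
\[
\rho_{B_2}^{-1}\circ(\overline{\CO}_1^\dagger)^\vee = \CO_1\circ\rho_{B_1}^{-1}.
\]
Taking the dual of $\overline{\CO}_1^\dagger$ gives $(\overline{\CO}_1^\dagger)^\vee = \rho_{\overline{B}_2}^\vee \circ \CO_1 \circ (\rho_{\overline{B}_1}^{-1})^\vee$, so the claim becomes two Hermitian-structure identities. The first is $\rho_{B_2}^{-1}\circ\rho_{\overline{B}_2}^\vee = \mathrm{id}$. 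This follows from $\rho_{\overline{B}}^\vee$, which by the relation $\rho_B^\vee = \rho_{\overline{B}}$ applied with $B \to \overline{B}$ equals $\rho_{\overline{B}^{\,\overline{\ }}}=\rho_B$. The second is $(\rho_{\overline{B}_1}^{-1})^\vee = \rho_{B_1}^{-1}$, which follows from the same relation.

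The main obstacle we expect is bookkeeping of twists $\theta$. The relations \eqref{eq:bar_and_vee_of_rho} carry stray factors of $\theta_{\overline{B}}$, and the directions and conventions of the partial duals, $\overline{A}^\vee$ versus $A^\vee$, must match precisely. If a leftover $\theta$ appears, we would cancel it using naturality of $\theta$ together with $\theta^2 = \mathrm{id}$, mirroring the cancellation at the end of the computation of $\CO^{\dagger\dagger}$. Should it not cancel, the fallback is to verify the identity first for $\CO_1 = \rho$-type isomorphisms and $\mathcal{X}$-isomorphisms, then use functoriality.
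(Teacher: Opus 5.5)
Your proposal is correct and follows the paper's proof. The first equality is naturality of partial dualization. The second is the same slide of a morphism across the bend combined with the relation $\rho_B^\vee = \rho_{\overline{B}}$ (equivalently $\rho_{\overline{B}}^\vee = \rho_B$); the paper runs this forward from $\ket{\CO_2 \circ \CO_1}$ by inserting $\rho_{B_2}^{-1}\circ\rho_{B_2}$, while you run it backward from $\overline{\CO}_1^\dagger \circ_{\overline{B}_2} \ket{\CO_2}$. Since that relation carries no twist, no stray $\theta$ appears, and your fallback paragraph is unnecessary.
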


\begin{proof}
    The first equality is immediate from the naturality of partial dualization,
    \begin{align}
        \ket{\CO_2 \circ \CO_1} &= (\CO_2 \circ \CO_1 \circ \rho_{B_1}^{-1})^{\vee_{\overline{B}_1^\vee}} \\ &= \CO_2 \circ_{B_2} (\CO_1 \circ \rho_{B_1}^{-1})^{\vee_{\overline{B}_1^\vee}} \\ &= \CO_2 \circ_{B_2} \ket{\CO_1}.
    \end{align}
    For the second equality, we compute,
    \begin{align}
        \ket{\CO_2 \circ \CO_1} &= (\CO_2 \circ \CO_1 \circ \rho_{B_1}^{-1})^{\vee_{\overline{B}_1^\vee}} \\
        &= (\CO_2 \circ \rho_{B_2}^{-1} \circ \rho_{B_2} \circ \CO_1 \circ \rho_{B_1}^{-1})^{\vee_{\overline{B}_1^\vee}} \\
        &= (\rho_{\overline{B}_1}^{-1} \circ \CO_1^\vee \circ \rho_{\overline{B}_2}) \circ_{\overline{B}_2} (\CO_2 \circ \rho_{B_2}^{-1})^{\vee_{\overline{B}_2^\vee}} \\
        &= \overline{\CO}_1^\dagger \circ_{\overline{B}_2} \ket{\CO_2},
    \end{align}
    as claimed.
\end{proof}

\begin{corollary}\label{cor:modular_conjugation}
    The modular conjugation map $J : \ket{\CO} \mapsto \ket{\CO^\dagger}$ is anti-unitary and satisfies,
    \begin{equation}
        J : \CO_2 \circ_{B_2} \ket{\CO_1} \mapsto \overline{\CO_2} \circ_{\overline{B}_2} \ket{\CO_1^\dagger},
    \end{equation}
    for composable bordisms $\CO_1 : B_1 \to B_2$ and $\CO_2: B_2 \to B_3$.
\end{corollary}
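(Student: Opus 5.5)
The corollary should follow quickly by combining the results just established in this appendix. The plan has two parts: first anti-unitarity of $J$, then the intertwining formula with partial composition.

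For anti-unitarity, $J$ is defined on kets of morphisms $\CO : B_1 \to B_3$ (more generally $A \to B$) by $\ket{\CO} \mapsto \ket{\CO^\dagger}$. Corollary \ref{corr:Hermitian_and_dagger_is_antiunitary} already gives
\begin{equation}
    \braket{\CO_1^\dagger | \CO_2^\dagger} = \braket{\CO_2 | \CO_1} = \overline{\braket{\CO_1 | \CO_2}}.
\end{equation}
Extending $J$ anti-linearly to $\CBUpre$, this means $J$ preserves the sesquilinear form up to complex conjugation on the pre-Hilbert spaces. In particular, $J$ maps null states to null states, so it descends to the quotient and extends by continuity to the completions $\mathcal{H}_{B \sqcup \overline{A}} \to \mathcal{H}_{A \sqcup \overline{B}}$. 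Since $(-)^{\dagger\dagger} = \mathrm{id}$ by the first proposition of this appendix, $J$ is an involution and hence surjective. Together these give anti-unitarity. The only thing to check is that the ket map $\CO \mapsto \ket{\CO}$ is compatible with this linear extension, and that holds by definition.

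For the intertwining formula, I would start from Proposition \ref{prop:dinatuality_of_kets}, which gives $\CO_2 \circ_{B_2} \ket{\CO_1} = \ket{\CO_2 \circ \CO_1}$. Applying $J$ yields $\ket{(\CO_2\circ\CO_1)^\dagger} = \ket{\CO_1^\dagger \circ \CO_2^\dagger}$, using that $\dagger$ is a contravariant functor. This last fact follows directly from the defining formula \eqref{eq:dagger_on_BordX}, since $\overline{(-)}^\vee$ is contravariant and the inner $\rho$'s cancel. Next I would apply the second equality of Proposition \ref{prop:dinatuality_of_kets} to the composable pair $\CO_2^\dagger : B_3 \to B_2$ and $\CO_1^\dagger : B_2 \to B_1$. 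This gives
\begin{equation}
    \ket{\CO_1^\dagger \circ \CO_2^\dagger} = \overline{\CO_2^\dagger}^{\,\dagger} \circ_{\overline{B}_2} \ket{\CO_1^\dagger}.
\end{equation}

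It then remains to identify $\overline{\CO_2^\dagger}^{\,\dagger}$ with $\overline{\CO_2}$, i.e. to show that conjugation commutes with $\dagger$. I expect this to be the main point requiring care. My first approach is to expand both sides using \eqref{eq:dagger_on_BordX} and simplify with the relations \eqref{eq:bar_and_vee_of_rho}, namely $\overline{\rho}_B = \rho_{\overline{B}}\circ\theta_{\overline{B}}^{-1}$. The $\theta$ factors that arise should cancel in pairs by naturality of the twist and $\theta^2 = 1$ from \eqref{eq:balancing_appendix}, just as in the proof that $\CO^{\dagger\dagger} = \CO$.

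If the direct computation becomes messy, I would instead compare kets. Proposition \ref{prop:dagger_and_bar_kets_same} identifies $\ket{\CO^\dagger}$ with $\ket{\overline{\CO}}$ up to the symmetry, and applying it twice shows that $\overline{\CO_2^\dagger}^{\,\dagger}$ and $\overline{\CO_2}$ have the same ket. Since the ket map is injective on $\CBUpre$, being an invertible partial dualization composed with an isomorphism, the two morphisms agree. Once this identification is in place, the corollary follows.
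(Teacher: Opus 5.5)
Your proposal is correct and follows essentially the same route as the paper: anti-unitarity comes from Corollary~\ref{corr:Hermitian_and_dagger_is_antiunitary}, and then $\ket{\CO_2\circ\CO_1}\mapsto\ket{\CO_1^\dagger\circ\CO_2^\dagger}$ is rewritten using the second equality of Proposition~\ref{prop:dinatuality_of_kets}. The identification $\overline{\CO_2^\dagger}^{\,\dagger}=\overline{\CO_2}$ that you flag is used tacitly in the paper's one-line computation, and your direct check closes it: for $\CO : A \to B$ one finds $\overline{\CO^\dagger}=\theta_{\overline{A}}\circ(\overline{\CO})^\dagger\circ\theta_{\overline{B}}^{-1}=(\overline{\CO})^\dagger$, which uses only naturality of the twist.
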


\begin{proof}
    Anti-unitarity was shown in Corollary \ref{corr:Hermitian_and_dagger_is_antiunitary}. For the claim regarding the actions, we compute, using Proposition \ref{prop:dinatuality_of_kets},
    \begin{equation}
        CO_2 \circ_{B_2} \ket{\CO_1} = \ket{\CO_2 \circ \CO_1} \mapsto \ket{\CO_1^\dagger \circ \CO_2^\dagger} = \overline{\CO_2} \circ_{\overline{B}_2} \ket{\CO_1^\dagger},
    \end{equation}
    as claimed.
\end{proof}

\section{Analytic properties of the universal construction}
\label{app:analytic_properties}

Fix a partition function $\zeta$ on a unitary bordism category $\BordX$ that satisfies Axioms \ref{axiom:finiteness}-\ref{axiom:reflection_positivity}. In this appendix, we describe various analytic properties of the universal construction based on $\mathcal{X}$ and $\zeta$.

\subsection{The trace inequality}
\label{app:trace_inequality}

The most important analytic property of the universal construction is the fundamental trace inequality of \cite{Colafranceschi:2023urj}. We reproduce this trace inequality as the following proposition, which goes slightly beyond \cite{Colafranceschi:2023urj} in that our proof manifestly applies to fermionic theories.

\begin{proposition}[Trace inequality of \cite{Colafranceschi:2023urj}] \label{prop:trace_inequality}
    Let $\mathcal{O}_1 : B_1 \to B_2$ and $\mathcal{O}_2 : B_2 \to B_3$ be any pair of composable morphisms in the pre-baby universe category $\CBUpre = \mathbb{C}[\BordX]$. Then we have the inequality,
    \begin{equation}\label{eq:trace_inequality_states}
        \braket{\mathcal{O}_2 \circ \mathcal{O}_1 | \mathcal{O}_2 \circ \mathcal{O}_1} \leq \braket{\mathcal{O}_2 | \mathcal{O}_2} \braket{\mathcal{O}_1 | \mathcal{O}_1},
    \end{equation}
    or, equivalently,
    \begin{equation}\label{eq:trace_inequality_traces}
        \tr(\mathcal{O}_1^\dagger \circ \mathcal{O}_2^\dagger \circ \mathcal{O}_2 \circ \mathcal{O}_1) \leq \tr(\mathcal{O}_2^\dagger \circ \mathcal{O}_2) \tr(\mathcal{O}_1^\dagger \circ \mathcal{O}_1).
    \end{equation}
\end{proposition}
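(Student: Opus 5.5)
The plan is to rewrite both sides of \eqref{eq:trace_inequality_states} as inner products in a single multi-boundary universal Hilbert space and then apply Cauchy--Schwarz, using multiplicativity to identify the right-hand side. Write $\CO_1 = \sum_i a_i N_i$ and $\CO_2 = \sum_j b_j N'_j$. Use the unitary structure to bend $\CO_1$ into a ket $\ket{\CO_1} \in \CH^\mathrm{pre}_{B_2 \sqcup \overline{B}_1}$ and $\CO_2$ into $\ket{\CO_2} \in \CH^\mathrm{pre}_{B_3 \sqcup \overline{B}_2}$. The equivalence of \eqref{eq:trace_inequality_states} and \eqref{eq:trace_inequality_traces} is Proposition \ref{prop:trace_pairing_is_inner_product}, so it suffices to prove the state version.

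The key construction is a four-boundary state. Consider
\begin{equation}
    \ket{\Psi} \defined \ket{\CO_2} \otimes \ket{\CO_1} \mapsto \ket{\CO_2 \sqcup \CO_1} \in \CH_{B_3 \sqcup \overline{B}_2 \sqcup B_2 \sqcup \overline{B}_1}.
\end{equation}
By multiplicativity (Axiom \ref{axiom:multiplicativity}), its norm squared is $\braket{\CO_2|\CO_2}\braket{\CO_1|\CO_1}$. The composite is obtained by gluing the $\overline{B}_2$ and $B_2$ boundaries together, which is a pairing against the \emph{bent identity} on $B_2$. That bent identity is not a bordism, so I would instead use a ``projector'' built from the state itself. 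Geometrically, $\braket{\CO_2\circ\CO_1|\CO_2\circ\CO_1}$ is the closed manifold $\tr(\CO_1^\dagger \CO_2^\dagger \CO_2 \CO_1)$, a single cycle gluing four pieces. The right-hand side is instead two separate doubles.

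A cleaner route is to consider the positive operator $X = \CO_2^\dagger \CO_2$ acting on $B_2$, so that the claim becomes $\tr(\CO_1^\dagger X \CO_1) \le \tr(X)\,\tr(\CO_1^\dagger\CO_1)$. Form the states $\ket{\CO_1} \in \CH_{B_2\sqcup\overline{B}_1}$ and $\ket{\CO_2}\in\CH_{B_3\sqcup \overline{B}_2}$. In the four-boundary pre-Hilbert space, compare $\ket{\Psi}$ with the ``swapped'' state $\ket{\Psi'}$ obtained by applying the symmetry bordism exchanging the two $B_2$-type boundaries, that is, reconnecting $\overline{B}_2$ of $\CO_2$ with $B_2$ of $\CO_1$ on the ket side. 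The inner product $\braket{\Psi | s \Psi}$, computed by gluing closed manifolds, yields exactly the connected cycle $\tr_\Bord(\CO_1^\dagger\CO_2^\dagger\CO_2\CO_1)$, possibly with a twist $\theta$ that is absorbed by balancing. For fermionic categories the twist appears precisely so that the trace rather than the super-trace arises; this should be checked using the formulas in Appendix \ref{app:daggers_duals_conjugates}. Since the swap $s$ is unitary on the four-boundary universal Hilbert space (it is an $\mathcal{X}$-isomorphism, hence unitary by the $\dagger$-structure), Cauchy--Schwarz gives
\begin{equation}
    |\braket{\Psi|s\Psi}| \le \braket{\Psi|\Psi} = \braket{\CO_2|\CO_2}\braket{\CO_1|\CO_1}.
\end{equation}
Finally, the left-hand side of \eqref{eq:trace_inequality_states} is nonnegative by reflection positivity, which completes the argument.

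The main obstacle is the bookkeeping identity $\braket{\Psi|s\Psi} = \braket{\CO_2\circ\CO_1|\CO_2\circ\CO_1}$: verifying that the swap reconnects the boundaries to form a single four-piece cycle, with the correct Hermitian structures $\rho$ and twists so that the result is $\tr$ and not $\str$. I would handle this with the partial-dual calculus of Proposition \ref{prop:trace_pairing_is_inner_product} and Proposition \ref{prop:dinatuality_of_kets}: write $\bra{\Psi}$ via $\overline{(-)}^{\vee}$ and the relations \eqref{eq:bar_and_vee_of_rho}, and check that the stray $\theta$'s cancel using $\theta_B^2=\mathrm{id}_B$ and naturality.
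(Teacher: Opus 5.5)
Your overall strategy matches the paper's: a four-boundary state whose squared norm factorizes by multiplicativity, paired against a swap and bounded by Cauchy--Schwarz. But the state you chose does not work, and the gap is a type mismatch rather than $\rho$/$\theta$ bookkeeping.

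Your $\ket{\CO_2 \sqcup \CO_1}$ lives in $\CH_{B_3 \sqcup \overline{B}_2 \sqcup B_2 \sqcup \overline{B}_1}$. Its two middle components, $\overline{B}_2$ and $B_2$, are different objects, so no symmetry isomorphism exchanges them. At best $s_{\overline{B}_2, B_2}$ is a relabeling of components, and then $\braket{\Psi|s\Psi}$ reduces to $\braket{\Psi|\Psi}$. What actually produces the connected cycle is gluing the $\overline{B}_2$ of the $\CO_2$-ket to the $B_2$ of the $\CO_1$-ket. That is the pairing against the bent identity (cup--cap) on $B_2$ you mentioned at the start. It is a rank-one operator $\ket{\Omega}\bra{\Omega}$ with norm $d_{B_2}$, typically infinite and not even available in the non-unital $\BordX$; it is not a unitary. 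In matrix language, $\lvert\lvert \CO_2 \CO_1 \rvert\rvert_2^2 = \bra{\Psi}(1 \otimes \ket{\Omega}\bra{\Omega} \otimes 1)\ket{\Psi}$, and Cauchy--Schwarz only gives the useless bound $d_{B_2}\braket{\Psi|\Psi}$. So the central identity $\braket{\Psi|s\Psi} = \braket{\CO_2\circ\CO_1|\CO_2\circ\CO_1}$, which you flagged as the main obstacle, is false for your $\Psi$.

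The missing idea is a partial transpose: the partial transpose of $\ket{\Omega}\bra{\Omega}$ is the swap. Geometrically, you replace $\CO_1$ by its conjugate. The paper uses
\[
\ket{\CO_2 \sqcup \overline{\CO}_1} \in \CH_{B_3 \sqcup \overline{B}_2 \sqcup \overline{B}_2 \sqcup B_1},
\]
or equivalently $\ket{\CO_1^\dagger}$ in place of $\ket{\overline{\CO}_1}$, by Proposition \ref{prop:dagger_and_bar_kets_same}. This space has two copies of the same object $\overline{B}_2$, so $s_{\overline{B}_2,\overline{B}_2}$ is a genuine unitary with $s^2=1$.

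The paper then ``folds the trace twice'' in three moves:
\begin{itemize}
\item it moves $\CO_2^\dagger\CO_2$ onto $\ket{\CO_1}$;
\item it uses Proposition \ref{prop:dinatuality_of_kets} to transfer $\CO_1$ to the other side as $\overline{\CO}_1^\dagger$;
\item it uses Proposition \ref{prop:trace_pairing_is_inner_product} to reread the result as an inner product.
\end{itemize}
This yields
\[
\braket{\CO_2 \circ \CO_1 | \CO_2 \circ \CO_1} = \braket{\CO_2 \sqcup \overline{\CO}_1 | s_{\overline{B}_2,\overline{B}_2} | \CO_2 \sqcup \overline{\CO}_1}.
\]
Cauchy--Schwarz, multiplicativity, and $\braket{\overline{\CO}_1|\overline{\CO}_1} = \braket{\CO_1|\CO_1}$ then finish exactly as you intended. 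With this one change your argument goes through.
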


\begin{proof}
    Our proof will be, essentially, the argument given in \cite{Colafranceschi:2023urj}, phrased algebraically in a way that could be applied to any category with the structures we have put on $\BordX$. We prove the trace inequality by folding the trace twice in order to obtain an inner product in a four-boundary pre-Hilbert space, where we may apply the Cauchy--Schwarz inequality. We illustrate the argument in Figure \ref{fig:trace_inequality}.

    \begin{figure}
    \centering
    \includegraphics[width=\linewidth]{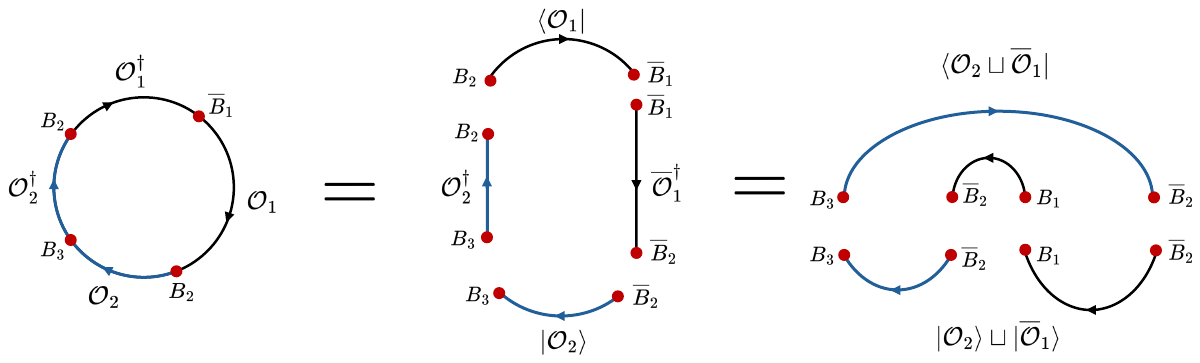}
    \caption{Diagrammatic representation of the proof of the trace inequality. Starting from $\braket{\mathcal{O}_2 \circ \mathcal{O}_1 | \mathcal{O}_2 \circ \mathcal{O}_1}$ (left), we fold it once to obtain $\langle \mathcal{O}_1|\circ (\mathcal{O}_2^\dagger \sqcup \overline{\mathcal{O}}_1^\dagger)\circ|\mathcal{O}_2\rangle $ (middle). Finally, we fold it a second time, to obtain $\braket{\mathcal{O}_2 \sqcup \overline{\mathcal{O}}_1 | s_{\overline{B}_2, \overline{B}_2} | \mathcal{O}_2 \sqcup \overline{\mathcal{O}}_1}$ (right). The insertion of the symmetry arises because the copies of $\overline{B}_2$ are connected differently on the top and bottom of the inner product.}
    \label{fig:trace_inequality}
    \end{figure}

    First, we compute,
    \begin{equation}
        \braket{\mathcal{O}_2 \circ \mathcal{O}_1 | \mathcal{O}_2 \circ \mathcal{O}_1} = \braket{\mathcal{O}_1 | \mathcal{O}_2^\dagger \circ \mathcal{O}_2 \circ \mathcal{O}_1},
    \end{equation}
    using that $\mathcal{O}_2^\dagger$ is an adjoint of $\mathcal{O}_2$. Then, using Proposition \ref{prop:dinatuality_of_kets}, we have,
    \begin{equation}\label{eq:folded_once}
        \braket{\mathcal{O}_1 | \mathcal{O}_2^\dagger \circ \mathcal{O}_2 \circ \mathcal{O}_1} = \braket{\mathcal{O}_1 | (\mathcal{O}_2^\dagger \sqcup \overline{\mathcal{O}}_1^\dagger) |  \mathcal{O}_2}.
    \end{equation}
    We have thus folded the trace once.
    
    To fold it again, we turn the right hand side of \eqref{eq:folded_once} into another trace,
    \begin{equation}
        \braket{\mathcal{O}_1 | (\mathcal{O}_2^\dagger \sqcup \overline{\mathcal{O}}_1^\dagger) |  \mathcal{O}_2} = \tr\big( \bra{\mathcal{O}_1} \circ (\mathcal{O}_2^\dagger \sqcup \overline{\mathcal{O}}_1^\dagger) \circ \ket{\mathcal{O}_2} \big),
    \end{equation}
    and use cyclicity of the trace together with Propositions \ref{prop:trace_pairing_is_inner_product} and \ref{prop:dagger_and_bar_kets_same} to write,
    \begin{align}
        \tr\big( \bra{\mathcal{O}_1} \circ (\mathcal{O}_2^\dagger \sqcup \overline{\mathcal{O}}_1^\dagger) \circ \ket{\mathcal{O}_2} \big) &= \tr\big( (\mathcal{O}_2 \sqcup \overline{\mathcal{O}}_1)^\dagger \circ (\ket{\mathcal{O}_2} \circ \bra{\mathcal{O}_1}) \big) \\
        &= \bra{\CO_2 \sqcup \overline{\CO}_1} (\ket{\CO_2} \sqcup \ket{\CO_1})
    \end{align}
    Finally, we note that,
    \begin{equation}
        \ket{\CO_2} \sqcup \ket{\CO_1} = s_{\overline{B}_2, \overline{B}_2} \ket{\mathcal{O}_2 \sqcup \overline{\mathcal{O}}_1},
    \end{equation}
    as the copies of $\overline{B}_2$ are swapped between the disjoint union of kets and the ket of a disjoint union (see Figure \ref{fig:trace_inequality}). Altogether, we have,
    \begin{equation}\label{eq:folded_twice}
        \braket{\mathcal{O}_2 \circ \mathcal{O}_1 | \mathcal{O}_2 \circ \mathcal{O}_1} = \braket{\mathcal{O}_2 \sqcup \overline{\mathcal{O}}_1 | s_{\overline{B}_2, \overline{B}_2} | \mathcal{O}_2 \sqcup \overline{\mathcal{O}}_1},
    \end{equation}
    successfully folding the trace twice.
    
    We now apply the Cauchy--Schwarz inequality and Proposition \ref{prop:dagger_and_bar_kets_same}, obtaining
    \begin{align}
        \braket{\mathcal{O}_2 \sqcup \overline{\mathcal{O}}_1 | s_{\overline{B}_2, \overline{B}_2} | \mathcal{O}_2 \sqcup \overline{\mathcal{O}}_1} &\leq \sqrt{\braket{\mathcal{O}_2 \sqcup \overline{\mathcal{O}}_1 | \mathcal{O}_2 \sqcup \overline{\mathcal{O}}_1} \braket{\mathcal{O}_2 \sqcup \overline{\mathcal{O}}_1 | s_{\overline{B}_2, \overline{B}_2}^2 | \mathcal{O}_2 \sqcup \overline{\mathcal{O}}_1}} \\
        &=\sqrt{\braket{\mathcal{O}_1 | \mathcal{O}_1}^2 \braket{\mathcal{O}_2 |\mathcal{O}_2}^2} = \braket{\mathcal{O}_1 | \mathcal{O}_1} \braket{\mathcal{O}_2 | \mathcal{O}_2}, \label{eq:end_of_Cauchy_Schwarz}
    \end{align}
    Combining \eqref{eq:folded_twice} and \eqref{eq:end_of_Cauchy_Schwarz} we have,
    \begin{equation}
        \braket{\mathcal{O}_2 \circ \mathcal{O}_1 | \mathcal{O}_2 \circ \mathcal{O}_1} \leq \braket{\mathcal{O}_1 | \mathcal{O}_1} \braket{\mathcal{O}_2 | \mathcal{O}_2},
    \end{equation}
    as claimed.
\end{proof}

\begin{corollary}\label{cor:composition_is_jointly_continuous}
    The composition map,
    \begin{equation}
        \circ : \mathcal{H}_{B_3 \sqcup \overline{B}_2}^\mathrm{pre} \otimes \mathcal{H}_{B_2 \sqcup \overline{B}_1}^\mathrm{pre} \to \mathcal{H}_{B_3 \sqcup \overline{B}_1}^\mathrm{pre}, \quad \ket{\mathcal{O}_2} \otimes \ket{\CO_1} \mapsto \ket{\CO_2 \circ \CO_1},
    \end{equation}
    is jointly continuous in the Hilbert space topology. Thus it extends to a jointly continuous bilinear map,
    \begin{equation}
        \circ : \mathcal{H}_{B_3 \sqcup \overline{B}_2} \otimes \mathcal{H}_{B_2 \sqcup \overline{B}_1} \to \mathcal{H}_{B_3 \sqcup \overline{B}_1},
    \end{equation}
    on the universal Hilbert spaces.
\end{corollary}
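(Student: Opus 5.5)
The plan is to read the joint continuity directly off the trace inequality, Proposition \ref{prop:trace_inequality}, and then extend by density. First note that the composition map is well defined on the pre-Hilbert spaces. The assignment $\CO \mapsto \ket{\CO}$ identifies $\CBUpre(B_1 \to B_2)$ with $\mathcal{H}^\mathrm{pre}_{B_2 \sqcup \overline{B}_1}$. This is a linear bijection on the free spans, since the ket construction is a bijection on bordisms given by the unitary structure. So the formula $\ket{\CO_2} \otimes \ket{\CO_1} \mapsto \ket{\CO_2 \circ \CO_1}$ defines a bilinear map at the level of finite linear combinations of bordisms.

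Next I would show that this map descends through the quotient by null states. Proposition \ref{prop:trace_inequality} gives
\begin{equation}
    \lVert \ket{\CO_2 \circ \CO_1} \rVert \leq \lVert \ket{\CO_2} \rVert \, \lVert \ket{\CO_1} \rVert
\end{equation}
for arbitrary linear combinations $\CO_1, \CO_2$, not only single bordisms, since the proposition is stated in $\CBUpre$. It follows that if either $\ket{\CO_1}$ or $\ket{\CO_2}$ is null, then the composite is null. Hence the map is well defined on the images of the pre-Hilbert spaces in their completions.

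The same inequality gives boundedness of the bilinear map, with norm at most $1$. Joint continuity then follows from the standard estimate
\begin{equation}
    \lVert \CO_2\CO_1 - \CO_2'\CO_1' \rVert \leq \lVert \CO_2 - \CO_2' \rVert \lVert \CO_1 \rVert + \lVert \CO_2' \rVert \lVert \CO_1 - \CO_1' \rVert .
\end{equation}
Because the map is bilinear and bounded, it sends Cauchy pairs to Cauchy sequences. It therefore extends uniquely, by density of the pre-Hilbert spaces, to a jointly continuous bilinear map on the completions, and the bound $\leq 1$ continues to hold there.

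I do not expect a real obstacle here. The only point needing care is the well-definedness modulo null states, and that is precisely what the trace inequality supplies.
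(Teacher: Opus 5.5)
Your proposal is correct and follows the paper's route. The paper gives no separate proof and treats the corollary as immediate from Proposition~\ref{prop:trace_inequality}, whose square root $\lVert \ket{\CO_2 \circ \CO_1} \rVert \leq \lVert \ket{\CO_2} \rVert \, \lVert \ket{\CO_1} \rVert$ makes composition a bounded bilinear map (norm $\leq 1$), hence jointly continuous and extendable by density. You also check that null states go to null states, and you treat the extension as a bounded bilinear map on the product rather than a bounded linear map on the Hilbert-space tensor product (which it need not be when the intermediate spaces are infinite-dimensional); both are details the paper leaves implicit.
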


\begin{corollary}
\label{cor:operator_norm_bound}
    The operator norm of any morphism $\mathcal{O}$ in $\CBUpre$, acting by pre- or post-composition, is uniformly bounded,
    \begin{equation}
        \lvert \lvert \mathcal{O} \rvert \rvert \leq \sqrt{\braket{\mathcal{O} | \mathcal{O}}},
    \end{equation}
    by the Hilbert space norm.
\end{corollary}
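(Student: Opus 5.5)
The plan is to derive Corollary \ref{cor:operator_norm_bound} directly from Proposition \ref{prop:trace_inequality}, applied with $\mathcal{O}$ as one factor and an arbitrary state-preparing morphism as the other. Fix $\mathcal{O} : B_1 \to B_2$ in $\CBUpre$ and an auxiliary object $A$. The operator $\widehat{\mathcal{O}}_{\overline{A}}$ acts on $\mathcal{H}_{B_1 \sqcup \overline{A}}^\mathrm{pre}$ by partial composition. Every vector there is a finite linear combination of kets $\ket{\mathcal{O}'}$, and linear combinations of morphisms $A \to B_1$ are again morphisms in $\CBUpre$. So every pre-Hilbert vector has the form $\ket{\mathcal{O}'}$ for a single morphism $\mathcal{O}' : A \to B_1$ of $\CBUpre$, using linearity of the ket map.

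By \eqref{eq:partial_composition_left} we have $\widehat{\mathcal{O}}_{\overline{A}}\ket{\mathcal{O}'} = \ket{\mathcal{O} \circ \mathcal{O}'}$. Proposition \ref{prop:trace_inequality} with $\mathcal{O}_1 = \mathcal{O}'$ and $\mathcal{O}_2 = \mathcal{O}$ then gives
\begin{equation}
\lVert \widehat{\mathcal{O}}_{\overline{A}}\ket{\mathcal{O}'}\rVert^2 \leq \braket{\mathcal{O}|\mathcal{O}}\,\lVert \ket{\mathcal{O}'}\rVert^2 .
\end{equation}
This is a bound on the dense pre-Hilbert space with a constant independent of $A$. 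It shows, in passing, that null vectors are mapped to null vectors, so $\widehat{\mathcal{O}}_{\overline{A}}$ is well defined on the quotient and extends to the completion with $\lVert \widehat{\mathcal{O}}_{\overline{A}}\rVert \leq \sqrt{\braket{\mathcal{O}|\mathcal{O}}}$. Taking the supremum over $A$ in \eqref{eq:operator_norm} gives the claim for post-composition. The special case $A = \varnothing$ recovers the bound for the universal-construction operator $\widehat{N}$ used in Section \ref{sec:universal_construction}.

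For the action by pre-composition, i.e.\ the conjugate partial composition \eqref{eq:partial_composition_right}, we have $\overline{\mathcal{O}} \circ_{\overline{A}_1}\ket{\mathcal{O}'} = \ket{\mathcal{O}' \circ \mathcal{O}^\dagger}$. Here Proposition \ref{prop:trace_inequality} is applied with $\mathcal{O}_1 = \mathcal{O}^\dagger$ and $\mathcal{O}_2 = \mathcal{O}'$, which gives the bound $\braket{\mathcal{O}^\dagger|\mathcal{O}^\dagger}\lVert\ket{\mathcal{O}'}\rVert^2$. By Corollary \ref{corr:Hermitian_and_dagger_is_antiunitary}, $\braket{\mathcal{O}^\dagger|\mathcal{O}^\dagger} = \braket{\mathcal{O}|\mathcal{O}}$, so the same constant appears. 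Alternatively, modular conjugation (Corollary \ref{cor:modular_conjugation}) intertwines the two actions anti-unitarily, which transfers the bound directly.

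The only step needing care is the reduction to a single ket together with the well-definedness modulo null states. Both follow from linearity of $\mathcal{O}' \mapsto \ket{\mathcal{O}'}$ and from the estimate itself, so I expect no real obstacle beyond invoking the trace inequality.
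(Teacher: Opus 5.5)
Your proposal is correct and follows the paper's argument. The paper gives no written proof of the corollary; it treats it as an immediate consequence of Proposition \ref{prop:trace_inequality}, taking $\mathcal{O}$ as one factor and an arbitrary ket of the (dense) pre-Hilbert space as the other, which is exactly what you do. This yields the bound uniformly in the auxiliary boundary $A$, and your handling of the pre-composition case via the anti-unitarity of $\dagger$ (or modular conjugation) is also sound.
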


\subsection{Continuity in sources}
\label{app:source_continuity}

We now establish the continuity of states and operators in the universal construction, as functions of sources.

\begin{proposition}
    \label{prop:continuity_of_state}
    Fix $B \in \BordX$, and consider the state $\ket{M}$ in the universal Hilbert space $\CH_B$ associated to an $\mathcal{X}$-bordism $M : \varnothing \to B$. Then $\ket{M}$ depends continuously on sources in the Hilbert space topology.
\end{proposition}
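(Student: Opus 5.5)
The plan is to reduce continuity of the vector $\ket{M}$ to continuity of the partition function $\zeta$ (Axiom \ref{axiom:continuity}) by expanding the squared norm of a difference. Let $M'$ be another $\mathcal{X}$-bordism $\varnothing \to B$ whose underlying manifold is that of $M$, but with a nearby $\mathcal{X}$-structure that agrees with the fixed germ near $B$. Then
\begin{equation*}
    \lVert \ket{M} - \ket{M'} \rVert^2 = \zeta(M^\dagger \circ M) + \zeta(M'^\dagger \circ M') - \zeta(M^\dagger \circ M') - \zeta(M'^\dagger \circ M).
\end{equation*}
Each term is the partition function of a closed $\mathcal{X}$-manifold: the double of $M$, the double of $M'$, or the mixed gluing. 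All four share the same underlying closed manifold $\overline{M}^\vee \cup_B M$. Their $\mathcal{X}$-structures are obtained by gluing the given structures on the two halves along the common germ at $B$.

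First I will check that the gluing map $(\xi_1,\xi_2) \mapsto \xi_1^\dagger \cup_B \xi_2$ is continuous, as a map from pairs of $\mathcal{X}$-structures on $M$ (with fixed boundary germ) to $\mathcal{X}$-structures on the glued closed manifold. Here I use that $\mathcal{X}$ is a sheaf and that conjugation $\overline{(-)}$ and the Hermitian structure $\rho_B$ act continuously on sources. I will take this as part of what it means for $\mathcal{X}$ to carry a topology compatible with its unitary structure. Since the germ near $B$ is fixed, gluing is local and continuity is inherited patchwise.

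Given this, as $M' \to M$ in the source topology, each of the four closed manifolds converges to the double $M^\dagger \circ M$. By Axiom \ref{axiom:continuity}, all four values of $\zeta$ then converge to $\zeta(M^\dagger \circ M)$. The expression above therefore tends to $0$, which gives norm continuity of $M \mapsto \ket{M}$ in $\CH_B$.

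A variant of the argument also handles bordisms whose underlying manifolds are only identified by a diffeomorphism fixing the germ. Since $\zeta$ is invariant under $\mathcal{X}$-isomorphism, such a diffeomorphism can be used to pull the structure on $M'$ back to $M$, and then the argument above applies verbatim.

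The main obstacle is the topological bookkeeping in the gluing step: making precise that the topology on $\mathcal{X}$-structures is compatible with gluing along a fixed germ and with the $\dagger$-operation. I would handle it by restricting to $\mathcal{X}$-structures that agree with the prescribed germ on a fixed collar, so that gluing becomes a local operation of sheaf sections. Once that is in place, the rest is the elementary expansion above.
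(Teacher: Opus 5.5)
Your proposal is correct and is essentially the paper's argument. The paper shows norm convergence via the criterion ``weak convergence plus convergence of norms,'' whose proof is exactly your expansion of $\lVert \ket{M} - \ket{M'} \rVert^2$, and it uses the same key input: the glued closed manifolds vary continuously, so Axiom~\ref{axiom:continuity} applies. Your version is slightly more direct, since it only needs the overlap with $\ket{M}$ itself rather than weak convergence against a dense subspace, and it states explicitly the continuity of conjugation and gluing that the paper leaves implicit.
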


\begin{proof}
    Recall that a net $\ket{\psi_i}$ of states in a Hilbert space converges to a state $\ket{\psi}$ if and only if it converges weakly and the norms converge:
    \begin{equation}
        \ket{\psi_i} \to \ket{\psi} \iff \braket{\phi | \psi_i} \to \braket{\phi | \psi}, \forall \ket{\phi} \text{ and } \lvert\lvert \psi_i \rvert\rvert \to \lvert\lvert \psi \rvert\rvert.
    \end{equation}
    The implication $(\Longrightarrow)$ is immediate, and the converse $(\Longleftarrow)$ follows from,
    \begin{equation}
        \lvert\lvert \psi - \psi_i \rvert\rvert^2 = \lvert\lvert \psi \rvert\rvert^2 + \lvert\lvert \psi_i \rvert\rvert^2 - 2 \mathrm{Re} \braket{\psi | \psi_i} \to \lvert\lvert \psi \rvert\rvert^2 + \lvert\lvert \psi \rvert\rvert^2 - 2 \lvert\lvert \psi \rvert\rvert^2 = 0.
    \end{equation}
    Now, suppose we have a net $M_i \in \BordX(\varnothing \to B)$ of source manifolds converging to a fixed source manifold $M \in \BordX(\varnothing \to B)$. Both the inner products of $\ket{M_i}$ with any fixed state in $\mathcal{H}_B^\mathrm{pre}$ and the squared norms $\braket{M_i |M_i}$ are obtained by evaluating $\zeta$ on closed manifolds which depend continuously on $M_i$ (by continuity of the gluing of $\mathcal{X}$-structures). Thus, by continuity of $\zeta$, the norms of $\ket{M_i}$ converge to the norm of $\ket{M}$, and $\ket{M_i}$ converges weakly to $\ket{M}$ when tested against a dense subspace, hence everywhere by convergence of the norm. Thus, the states $\ket{M_i}$ converge to $\ket{M}$ in the Hilbert space topology, as claimed.
\end{proof}

\begin{proposition}\label{prop:continuity_of_morphisms}
    Fix $A, B \in \BordX$, and consider the morphisms in $\CBUpre$ or $\CBU$ associated to $\mathcal{X}$-bordisms $N : A \to B$. Then these morphisms depend continuously on sources in the norm topology.
\end{proposition}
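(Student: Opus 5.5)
The plan is to reduce norm continuity of the morphism attached to $N$ to Hilbert-space continuity of its ket, via the uniform bound of Corollary \ref{cor:operator_norm_bound}. Let $N_i \to N$ be a net of $\mathcal{X}$-bordisms $A \to B$ converging in sources; the boundary germs stay fixed throughout. The difference $N_i - N$ is a morphism of $\CBUpre$, and its operator norm \eqref{eq:operator_norm}, taken as a supremum over all twisted representations, obeys
\begin{equation*}
    \lVert N_i - N \rVert \leq \sqrt{\braket{N_i - N | N_i - N}} = \lVert \ket{N_i} - \ket{N} \rVert_{\mathcal{H}_{B \sqcup \overline{A}}}.
\end{equation*}
Since $\CBUpre$ maps into $\CBUvN \subset \CBU$ contractively for this norm, the same estimate holds in $\CBU$. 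It therefore suffices to show that $\ket{N_i} \to \ket{N}$ in the universal Hilbert space $\mathcal{H}_{B \sqcup \overline{A}}$.

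To get that convergence, I would apply Proposition \ref{prop:continuity_of_state} to the bordisms $\ket{N_i}_\Bord = (N_i \circ \rho_A^{-1})^{\vee_{\overline{A}^\vee}} : \varnothing \to B \sqcup \overline{A}$. The key point is that bending by the fixed Hermitian structure $\rho_A$ and partially dualizing changes neither the underlying manifold nor its interior $\mathcal{X}$-structure; it only relabels the boundary germs. So the bent bordisms converge to $\ket{N}_\Bord$ in the topology on $\mathcal{X}$-structures. Proposition \ref{prop:continuity_of_state} then gives $\ket{N_i} \to \ket{N}$ in norm.

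For completeness, I would recall why that proposition applies. The quantities $\braket{N_i|N_i} = \zeta(\tr_\Bord(N_i^\dagger N_i))$ and $\braket{M|N_i}$ are values of $\zeta$ on closed manifolds obtained by gluing along fixed germs. Gluing is continuous in sources, so Axiom \ref{axiom:continuity} gives convergence of the norms together with weak convergence on the dense subspace $\mathcal{H}^\mathrm{pre}$. Because the norms are bounded, this upgrades to weak convergence, and hence to norm convergence.

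The main obstacle I expect is the claim that bending preserves source convergence, in particular that the adjoint $N_i^\dagger$ also varies continuously. This requires the conjugation $\overline{(-)}$ to act continuously on the space of $\mathcal{X}$-structures, which is implicit in the unitary structure. I would state it as part of the standing assumptions on the topology of $\mathcal{X}$, or argue it directly: $\overline{(-)}$ is a homeomorphism, being an involution defined locally on source fields. Given that, the norm bound carries the rest of the argument.
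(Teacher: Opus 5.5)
Your proposal is correct and follows the paper's argument: apply Corollary \ref{cor:operator_norm_bound} to the difference $N_i - N$ to bound its operator norm by $\lVert \ket{N_i} - \ket{N} \rVert$, then invoke Proposition \ref{prop:continuity_of_state} for the bent bordisms. You also point out that continuity of the conjugation $\overline{(-)}$ on sources is needed for the doubled closed manifolds that compute the norms. The paper uses this tacitly in Proposition \ref{prop:continuity_of_state}, so making it explicit is a good addition.
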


\begin{proof}
    This follows from continuity of the states $\ket{N}$ in sources, as was just shown in Proposition \ref{prop:continuity_of_state}, combined with the bound of Corollary \ref{cor:operator_norm_bound} showing that the operator norm of $N$ is controlled by the Hilbert space norm of $\ket{N}$.
\end{proof}

\begin{proposition}
    \label{prop:seperable_hilbert_spaces}
    The universal Hilbert spaces $\mathcal{H}_B$ are separable.
\end{proposition}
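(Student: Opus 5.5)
The plan is to show that $\mathcal{H}_B$ contains a countable set whose linear span is dense. By construction, $\mathcal{H}_B$ is the completion of $\mathcal{H}_B^\mathrm{pre}$. The states $\ket{M}$, with $M \in \BordX(\varnothing \to B)$, span $\mathcal{H}_B^\mathrm{pre}$. It therefore suffices to find a countable family $\{M_n\}$ of $\mathcal{X}$-bordisms $\varnothing \to B$ with the following property: every $\ket{M}$ is a Hilbert-space limit of states $\ket{M_{n_k}}$. Rational linear combinations of the $\ket{M_{n}}$ will then form a countable dense subset.

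The first step is to organize $\BordX(\varnothing \to B)$ up to $\mathcal{X}$-isomorphism fixing the boundary germ.

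\begin{itemize}
\item Fix the underlying topology. There are only countably many diffeomorphism classes of compact smooth $d$-manifolds with boundary diffeomorphic to the underlying manifold of $B$. Up to isotopy, there are also only countably many identifications of the boundary with $B$, since mapping class groups of compact manifolds are countable.
\item Fix the $\mathcal{X}$-structure. For each such topological type $W$, the bordisms with that type form a subspace of the space of $\mathcal{X}$-structures on $W$ extending the fixed germ near $\partial W$. By the standing assumption of Section \ref{sec:source_data}, the space of $\mathcal{X}$-structures on a compact manifold is separable. A subspace of a separable metrizable space is separable, and I expect any physically reasonable topology to be metrizable. So each such subspace has a countable dense subset.
\item Take the union of these countable dense subsets over the countably many topological types. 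This is the desired countable family $\{M_n\}$.
\end{itemize}

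The second step invokes Proposition \ref{prop:continuity_of_state}: the map $M \mapsto \ket{M} \in \mathcal{H}_B$ is continuous in sources. Density of $\{M_n\}$ in each stratum therefore transfers to density of $\{\ket{M_n}\}$ among all manifold states. Here one should check that continuity is with respect to the same topology on sources restricted to the stratum with fixed boundary germ. Granting this, the closed span of $\{\ket{M_n}\}$ contains every $\ket{M}$, so it equals $\mathcal{H}_B$.

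I expect the main obstacle to be the first step. Specifically, two points need care. First, the subspace of $\mathcal{X}$-structures with a prescribed boundary germ must inherit separability. Second, one needs the countability of topological types together with boundary identifications, and this countability should be phrased so that it respects $\mathcal{X}$-isomorphism.

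If the topology on sources is not assumed metrizable, I would argue differently. Proposition \ref{prop:continuity_of_state} is stated for nets. With that form, density of a countable subset in the separable space of $\mathcal{X}$-structures on $W$ still gives weak and norm approximation along nets. For the subspace issue, I would restrict to structures that agree with the germ near the boundary, and use that the sheaf $\mathcal{X}$ lets one glue the fixed germ onto any interior modification. When $\mathcal{X}$ is topological, all of this collapses: there are countably many bordisms up to isomorphism, so separability is immediate.
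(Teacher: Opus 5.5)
Your proposal is correct and follows essentially the same route as the paper's proof. Both arguments use countably many topological types of bordisms $\varnothing \to B$, pick a countable dense set of sources on each type using the assumed separability, and then apply Proposition \ref{prop:continuity_of_state} to conclude that rational linear combinations of the resulting states are dense. Your extra care on two points, the countability of boundary identifications and separability passing to the subspace with a fixed boundary germ, fills in steps the paper's short proof leaves implicit.
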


\begin{proof}
     For fixed $B$, the set of diffeomorphism classes of compact $d$-manifolds with boundary $B$ is countable.\footnote{To see this, note that the set of $PL$-isomorphism classes of such manifolds is countable, as each compact $PL$-manifold is determined by a finite amount of combinatorial data. Moreover, due to smoothing theory \cite{hirsch2016smoothings}, the number of smoothings of a fixed compact $PL$-manifold is at most countable, and every smooth manifold admits a $PL$ structure.} Combining this fact with the assumed separability of the space of sources on any compact manifold, we may choose a countable dense set $\{M_i\} \subset \BordX(\varnothing \to B)$ of $\mathcal{X}$-bordisms. By Proposition \ref{prop:continuity_of_state}, the $\mathbb{Q}$-linear span of the states $\ket{M_i}$ is dense in $\mathcal{H}_B$, and thus $\mathcal{H}_B$ is a separable Hilbert space, as desired.
\end{proof}

\subsection{Spacetime symmetries}
\label{app:geometric_symmetries}

We now consider the action of spacetime symmetries on the universal Hilbert spaces. First, we construct a strongly continuous unitary representation of the group $\mathrm{Iso}^\mathcal{X}(B)$ of $\mathcal{X}$-isomorphisms acting on the Hilbert space $\mathcal{H}_B$. This unitary representation is required in order for the universal construction to be well-defined, as our manifolds are only defined up to $\mathcal{X}$-isomorphism in any case.

\begin{definition}\label{defn:unitaries_from_isometries}
    Let $u \in \mathrm{Iso}^\mathcal{X}(B)$. We define the operator $\widehat{u}$ on $\mathcal{H}_B^\mathrm{pre}$ on manifold states by the formula
    \begin{equation}
        \widehat{u} \ket{M} = \ket{u \circ M},
    \end{equation}
    where we recall that $u \circ M$ denotes the bordism obtained by twisting the outgoing boundary of $M$ by $u$.
\end{definition}

\begin{proposition}
    The operators $\widehat{u}$ are unitary, and so extend uniquely to unitary operators on $\mathcal{H}_B$. Moreover, the representation $u \mapsto \widehat{u}$ is continuous in the strong operator topology.
\end{proposition}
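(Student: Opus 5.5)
To prove unitarity, I will show that $\widehat{u}$ preserves inner products on manifold states and has an inverse; strong continuity then follows from the continuity results of Appendix \ref{app:source_continuity}.

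For the isometry property, take $M_1, M_2 : \varnothing \to B$. By definition
\begin{equation}
    \braket{\widehat{u} M_1 | \widehat{u} M_2} = \zeta\big((u \circ M_1)^\dagger \circ (u \circ M_2)\big) = \zeta\big(M_1^\dagger \circ u^\dagger \circ u \circ M_2\big).
\end{equation}
Here I use that the $\dagger$-structure is contravariant with respect to twisting by $\mathcal{X}$-isomorphisms. Since every $\mathcal{X}$-isomorphism is required to be unitary, $u^\dagger = u^{-1}$, so the glued closed manifold is $\mathcal{X}$-isomorphic to $M_1^\dagger \circ M_2$. Axiom \ref{axiom:finiteness} says $\zeta$ is invariant under $\mathcal{X}$-isomorphism, so the inner product equals $\braket{M_1 | M_2}$. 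By sesquilinearity $\widehat{u}$ is isometric on $\mathcal{H}_B^\mathrm{pre}$. In particular it maps null states to null states, so it descends to the quotient and extends by continuity to an isometry of $\mathcal{H}_B$. Twisting is a group action, $(u v) \circ M = u \circ (v \circ M)$, so $\widehat{u}\widehat{v} = \widehat{uv}$ and $\widehat{u^{-1}}$ inverts $\widehat{u}$. Hence each $\widehat{u}$ is a surjective isometry, i.e.\ unitary.

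For strong continuity, let $u_i \to u$ in $\mathrm{Iso}^\mathcal{X}(B)$. I first want $\ket{u_i \circ M} \to \ket{u \circ M}$ in norm for each fixed manifold state. The bordisms $u_i \circ M$ converge to $u \circ M$ as source manifolds, once the topology on twisted gluings is regarded as continuous in the twisting isomorphism. Proposition \ref{prop:continuity_of_state} then gives norm convergence. Explicitly, each inner product $\braket{u_i \circ M | u_i \circ M'}$ is $\zeta$ of a closed manifold depending continuously on $u_i$, so Axiom \ref{axiom:continuity} applies. Manifold states span a dense subspace and the $\widehat{u}_i$ are uniformly bounded (all have norm one), so a standard $\epsilon/3$ argument extends convergence to every vector in $\mathcal{H}_B$.

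The main obstacle is the claim that $u \circ M$ varies continuously with $u$ in the chosen topology on $\mathcal{X}$-structures. It asks that regluing the collar germ via a varying isomorphism gives a continuous family of source manifolds. I would handle it by choosing a collar in which $u$ acts, and realizing the twist as a diffeomorphism supported near the boundary that interpolates from $u$ to the identity. The closed manifolds $M_1^\dagger \circ u_i \circ M_2$ then become a continuous family of $\mathcal{X}$-structures on one fixed manifold, and Axiom \ref{axiom:continuity} applies directly.
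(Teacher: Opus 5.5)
Your proposal is correct and is essentially the paper's proof: unitarity comes from $u^\dagger = u^{-1}$ together with $\mathcal{X}$-isomorphism invariance of $\zeta$, and strong continuity comes from continuity of $\zeta$ on the twisted gluings $M_1^\dagger \circ u \circ M_2$. The paper phrases the weak-to-strong upgrade as ``a unitary representation weakly continuous at the identity is strongly continuous'', which rests on the same norm identity as Proposition~\ref{prop:continuity_of_state}, and it simply asserts the convergence of the glued manifolds that your last paragraph justifies. One small slip: $\braket{u_i \circ M | u_i \circ M'}$ is constant by unitarity, so the overlap whose continuity in $u_i$ you actually need is $\braket{M' | u_i \circ M} = \zeta\big((M')^\dagger \circ u_i \circ M\big)$.
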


\begin{proof}
    Unitarity of $\widehat{u}$ follows immediately from the unitarity of $u$ under the $\dagger$-structure on $\BordX$. To check continuity in the strong operator topology, we use that a unitary group of operators is strongly continuous if and only if it is weakly continuous at the identity. For weak continuity, we have
    \begin{equation}
        \braket{M_1 | \widehat{u} | M_2} = \zeta( M_1^\dagger \circ u \circ M_2 ).
    \end{equation}
    But the $\mathcal{X}$-manifolds $M_1^\dagger \circ u \circ M_2$ converge to the $\mathcal{X}$-manifold $M_1^\dagger \circ M_2$ as $u \to \mathrm{id}_B$. Thus, by continuity of $\zeta$, the operators $\widehat{u}$ converge to the identity in the weak operator (hence strong operator) topology as $u \to 1$.
\end{proof}

Next, we construct the Hamiltonian, generating time translations. Note that the operators $\widehat{C}_B(\beta)$ of evolution along a cylinder of length $\beta > 0$ form a semigroup of self-adjoint operators for $\beta > 0$. We would like to conclude that
\begin{equation}
    \widehat{C}_B(\beta) = e^{- \beta \widehat{H}_B},
\end{equation}
for some self-adjoint, bounded-below Hamiltonian operator $\widehat{H}_B$. By the Hille--Yosida theorem, it suffices to prove the following.

\begin{proposition}[Corollary 4 of \cite{Colafranceschi:2023urj}]\label{prop:approx_identity}
    The operators $\widehat{C}_B(\beta)$ converge to the identity in the strong operator topology as $\beta \to 0$.
\end{proposition}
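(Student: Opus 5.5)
The plan is the standard argument for strong continuity of a contraction-like semigroup: show that the operators $\widehat{C}_B(\beta)$ are uniformly bounded for small $\beta$, prove convergence $\widehat{C}_B(\beta)\ket{M} \to \ket{M}$ on manifold states, and then extend to all of $\mathcal{H}_B$ by density. I expect the uniform boundedness to be the main obstacle, since Corollary \ref{cor:operator_norm_bound} gives $\lVert \widehat{C}_B(\beta)\rVert \le \sqrt{\braket{C_B(\beta)|C_B(\beta)}} = \sqrt{\zeta(S^1_{2\beta}\times B)}$, which typically diverges as $\beta \to 0$ in a geometric theory.

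For boundedness, I would instead use that $\widehat{C}_B(\beta)$ is a positive self-adjoint semigroup: $\widehat{C}_B(\beta) = \widehat{C}_B(\beta/2)^\dagger \widehat{C}_B(\beta/2)\ge 0$. Fix $\beta_0>0$. For any $\ket{\psi}$, log-convexity (Cauchy--Schwarz iterated) gives
\[
\lVert \widehat{C}_B(\beta)\psi\rVert^2=\braket{\psi|\widehat{C}_B(2\beta)|\psi}\le \lVert\psi\rVert^{2(1-2\beta/\beta_0)}\braket{\psi|\widehat{C}_B(\beta_0)|\psi}^{2\beta/\beta_0},
\]
valid for dyadic $\beta$ and then extended. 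The norm of $\widehat{C}_B(\beta_0)$ is finite, so $\lVert \widehat{C}_B(\beta)\rVert \le \lVert \widehat{C}_B(\beta_0)\rVert^{\beta/\beta_0}$, which is bounded by $\max(1,\lVert\widehat{C}_B(\beta_0)\rVert)$ on $(0,\beta_0]$. More simply, the spectral radius of a positive operator equals its norm, so $\lVert \widehat{C}_B(\beta)\rVert^{n} = \lVert\widehat{C}_B(n\beta)\rVert$, and this yields the same bound.

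On manifold states, $\widehat{C}_B(\beta)\ket{M} = \ket{C_B(\beta)\circ M}$, and $C_B(\beta)\circ M$ is $M$ with a collar extended by $\beta$. I would use the germ's time-translation form to identify $C_B(\beta)\circ M$ with a family of $\mathcal{X}$-manifolds $M_\beta$ converging to $M$ in the source topology, for example by reparametrizing the collar. Proposition \ref{prop:continuity_of_state} then gives $\ket{M_\beta}\to\ket{M}$ in norm, and this is exactly where Axiom \ref{axiom:continuity} enters. Alternatively, following \cite{Colafranceschi:2023urj}, one needs only continuity of $\zeta$ in the cylinder length: compute
\[
\lVert \widehat{C}_B(\beta)M - M\rVert^2=\zeta(M^\dagger C_B(2\beta) M)-2\,\mathrm{Re}\,\zeta(M^\dagger C_B(\beta)M)+\zeta(M^\dagger M)
\]
and argue that each term tends to $\zeta(M^\dagger M)$, after identifying $M^\dagger C_B(\beta)M$ with a length deformation of a collar.

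Finally, for general $\psi\in\mathcal{H}_B$, choose $\phi$ in the span of manifold states with $\lVert\psi-\phi\rVert<\epsilon$. Then $\lVert\widehat{C}_B(\beta)\psi-\psi\rVert\le (K+1)\epsilon+\lVert\widehat{C}_B(\beta)\phi-\phi\rVert$, where $K$ is the uniform bound from the first step. This gives strong convergence to the identity.
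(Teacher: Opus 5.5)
Your argument is correct except for one step that needs an extra sentence, noted at the end, but it takes a different route from the paper.

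\textbf{The paper's route.} The paper never proves a separate uniform bound and never proves convergence on manifold states. It uses norm continuity of $\beta\mapsto\widehat{C}_B(\beta)$ for $\beta>0$ (Proposition \ref{prop:continuity_of_morphisms}), together with uniqueness of positive roots, to identify $\widehat{C}_B(\beta)=\widehat{C}_B(1)^\beta$. The spectral theorem and dominated convergence then show that the strong limit exists and equals $1-\widehat{P}_{\ker}$. Finally it kills the kernel with a purely geometric observation: every manifold state lies in the range, $\ket{M}=\widehat{C}_B(\beta)\ket{M_\mathrm{cut}}$, once a short collar is cut off. Hence the ranges are dense, and by self-adjointness the kernel vanishes.

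\textbf{Comparison.} You instead run the textbook $C_0$-semigroup argument: a local uniform bound, convergence on a dense set, and an $\epsilon/3$ estimate. The paper's route is more economical. The uniform bound comes for free from functional calculus, the dense-set convergence is replaced by collar-cutting, and it produces the representation $\widehat{C}_B(1)^\beta$ that feeds Corollary \ref{cor:Hamiltonian}. Your route is more elementary. Your second variant also makes explicit that only continuity of $\zeta$ in the length of an embedded cylinder is needed (the hypothesis of \cite{Colafranceschi:2023urj}), rather than the general source-continuity of Proposition \ref{prop:continuity_of_state}. The collar trick would also shorten your own step 2: $\widehat{C}_B(\beta)\ket{M}=\widehat{C}_B(\beta+\beta_1)\ket{M_\mathrm{cut}}\to\widehat{C}_B(\beta_1)\ket{M_\mathrm{cut}}=\ket{M}$, by norm continuity at $\beta_1>0$.

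\textbf{Step to tighten.} In your boundedness step, neither the identity $\lVert\widehat{C}_B(\beta)\rVert^n=\lVert\widehat{C}_B(n\beta)\rVert$ nor the dyadic H\"older bound controls $\lVert\widehat{C}_B(\beta)\rVert$ for all small $\beta$. They only cover a countable set of values, such as $\beta_0/n$ or dyadic fractions of $\beta_0$. The phrase ``then extended'' therefore hides a needed continuity input, and without one the claim is false. For a discontinuous additive $\phi$, the scalar semigroup $e^{-\phi(\beta)}$ is positive and self-adjoint, satisfies your dyadic bound, and is unbounded near $0$. The fix uses tools you already have.
\begin{itemize}
\item Set $K=\sup_{\gamma\in[\beta_0/2,\beta_0]}\lVert\widehat{C}_B(\gamma)\rVert$. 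Then $K\le\sup_\gamma\sqrt{\zeta(S^1_{2\gamma}\times B)}<\infty$, by Corollary \ref{cor:operator_norm_bound} and continuity of $\zeta$ in the circle length on a compact interval.
\item For $\beta\le\beta_0/2$, choose $n=\lfloor\beta_0/\beta\rfloor$, so that $n\beta\in(\beta_0/2,\beta_0]$.
\item Then $\lVert\widehat{C}_B(\beta)\rVert=\lVert\widehat{C}_B(n\beta)\rVert^{1/n}\le\max(1,K)$.
\end{itemize}
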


\begin{proof}
    By Proposition \ref{prop:continuity_of_morphisms}, the operators $\widehat{C}_B(\beta)$ are continuous in $\beta$ in the norm topology. For rational $\beta$, we have,
    \begin{equation}
        \widehat{C}_B(\beta) = \widehat{C}_B(1)^{\beta},
    \end{equation}
    by the semigroup property, and for all $\beta$ by norm continuity. By the spectral theorem, we may write $\widehat{C}_B(\beta)$ in terms of its spectral projectors $\widehat{P}_\lambda$ of $\widehat{C}_B(1)$ as,
    \begin{equation}
        \widehat{C}_B(\beta) = \int d\widehat{P}_\lambda\ \lambda^\beta.
    \end{equation}
    By dominated convergence, we have,
    \begin{equation}
        \lvert\lvert \widehat{C}_B(\beta) \ket{\psi} - \ket{\psi} \rvert\rvert^2 = \int \braket{\psi | d\widehat{P}_\lambda | \psi} \lvert  \lambda^\beta - 1\rvert^2 \to \braket{\psi | \widehat{P}_{\ker(\widehat{C}_B(\beta))} | \psi},
    \end{equation}
    where $\widehat{P}_{\ker(\widehat{C}_B(\beta))}$ is the projector onto the mutual kernel of $\widehat{C}_B(\beta)$. Thus, we are done if we can show that this kernel vanishes.

    By self-adjointness, it suffices to show instead that the mutual range of $\widehat{C}_B(\beta)$ is dense in $\CH_B$. This follows because every manifold state $\ket{M}$ can be written as,
    \begin{equation}
        \ket{M} = \widehat{C}_B(\beta) \ket{M_\mathrm{cut}},
    \end{equation}
    for sufficiently small $\beta > 0$, where $M_\mathrm{cut}$ is obtained from $M$ by removing the cylinder $C_B(\beta)$. But the linear span of the manifold states is dense in $\CH_B$ by construction.
\end{proof}

\begin{corollary}\label{cor:Hamiltonian}
    There exists a self-adjoint, bounded-below Hamiltonian operator $\widehat{H}_B$ on $\mathcal{H}_B$ such that we have $\widehat{C}_B(\beta) = e^{- \beta \widehat{H}_B}$.
\end{corollary}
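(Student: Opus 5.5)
The statement to prove is Corollary \ref{cor:Hamiltonian}: the cylinder operators on $\mathcal{H}_B$ are $e^{-\beta\widehat{H}_B}$ for a self-adjoint, bounded-below $\widehat{H}_B$. The plan is to deduce this from Proposition \ref{prop:approx_identity} together with the structure already established for the cylinder operators $\widehat{C}_B(\beta)$. The three ingredients are as follows.

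1. The family $\{\widehat{C}_B(\beta)\}_{\beta>0}$ is a semigroup, $\widehat{C}_B(\beta_1)\widehat{C}_B(\beta_2)=\widehat{C}_B(\beta_1+\beta_2)$. This follows from functoriality of the universal construction and \eqref{eq:self_adjoint_semigroup}.
2. Each $\widehat{C}_B(\beta)$ is self-adjoint. This follows from $C_B(\beta)^\dagger=C_B(\beta)$ and $\widehat{N^\dagger}=\widehat{N}^\dagger$.
3. Each $\widehat{C}_B(\beta)$ is bounded, with $\lVert\widehat{C}_B(\beta)\rVert\leq\sqrt{\braket{C_B(\beta)|C_B(\beta)}}$ by Corollary \ref{cor:operator_norm_bound}.

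Next I show positivity. Writing $\widehat{C}_B(\beta)=\widehat{C}_B(\beta/2)^\dagger\widehat{C}_B(\beta/2)$, each operator is a positive contraction-type operator. Proposition \ref{prop:approx_identity} then upgrades the semigroup to a strongly continuous one on $[0,\infty)$ once we set $\widehat{C}_B(0)=1$. Strong continuity at interior points follows from continuity at $0$ combined with the uniform bounds.

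To construct $\widehat{H}_B$, I work with the spectral measure $\widehat{P}_\lambda$ of the positive operator $\widehat{C}_B(1)$, as in the proof of Proposition \ref{prop:approx_identity}. There it was shown that $\widehat{C}_B(\beta)=\widehat{C}_B(1)^\beta$ and that $\ker \widehat{C}_B(1)=0$, so the spectral measure gives no weight to $\lambda=0$. I then define
\begin{equation}
  \widehat{H}_B \defined -\int d\widehat{P}_\lambda\,\log\lambda
\end{equation}
by the Borel functional calculus. This operator is self-adjoint on its natural domain, and $e^{-\beta\widehat{H}_B}=\int d\widehat{P}_\lambda\,\lambda^\beta=\widehat{C}_B(\beta)$ holds by construction. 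Since the spectrum of $\widehat{C}_B(1)$ lies in $[0,\lVert\widehat{C}_B(1)\rVert]$, we get $\widehat{H}_B\geq-\log\lVert\widehat{C}_B(1)\rVert$, so $\widehat{H}_B$ is bounded below. Alternatively, one could invoke Hille--Yosida (or the spectral theorem for strongly continuous self-adjoint semigroups) directly.

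The main subtlety is the identity $\widehat{C}_B(\beta)=\widehat{C}_B(1)^\beta$ for irrational $\beta$, which requires continuity in $\beta$; this is supplied by Proposition \ref{prop:continuity_of_morphisms}. The other subtlety is that triviality of the kernel is needed for $\log\lambda$ to be finite almost everywhere, and this was already established in Proposition \ref{prop:approx_identity}. So the corollary follows directly from these earlier results.
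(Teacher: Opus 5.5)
Your argument is correct and follows the paper's proof in substance. The paper observes that $\{\widehat{C}_B(\beta)\}$ is a semigroup of self-adjoint operators and invokes Hille--Yosida once Proposition \ref{prop:approx_identity} supplies strong continuity as $\beta\to 0$; your definition $\widehat{H}_B=-\log\widehat{C}_B(1)$ instead reads the generator directly off the spectral representation $\widehat{C}_B(\beta)=\int d\widehat{P}_\lambda\,\lambda^\beta$ and the trivial kernel established in that proposition's proof, which bypasses Hille--Yosida and makes the lower bound $\widehat{H}_B\geq-\log\lVert\widehat{C}_B(1)\rVert$ explicit. The only slip is the phrase ``contraction-type'': $\lVert\widehat{C}_B(\beta)\rVert$ can exceed $1$ (e.g.\ with a negative Casimir energy). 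You never use contractivity, though, only positivity from $\widehat{C}_B(\beta)=\widehat{C}_B(\beta/2)^\dagger\widehat{C}_B(\beta/2)$, and that positivity is exactly what justifies $\widehat{C}_B(1/q)=\widehat{C}_B(1)^{1/q}$ via uniqueness of positive roots.
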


We will need the following slight generalization of the above, where we allow the cylinders to act on one component of a multi-boundary universal Hilbert space. The proof is directly analogous to the proof of Proposition \ref{prop:approx_identity}.

\begin{proposition}\label{prop:approx_identities_in_two_sided_reps}
    Fix $A, B \in \BordX$. The operators $\CH_{B \sqcup \overline{A}} \to \CH_{B \sqcup \overline{A}}$ defined by partial composition with $C_B(\beta)$ along $B$ or $C_{\overline{A}}(\beta)$ along $\overline{A}$ converge to the identity in the strong operator topology as $\beta \to 0$.
\end{proposition}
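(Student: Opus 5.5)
The plan is to mimic the proof of Proposition \ref{prop:approx_identity}, replacing $\CH_B$ by the two-boundary space $\CH_{B \sqcup \overline{A}}$ and the cylinder operator $\widehat{C}_B(\beta)$ by partial composition $C_B(\beta) \circ_B (-)$. I treat the $B$ side; the $\overline{A}$ side then follows by conjugating with the anti-unitary modular conjugation $\widehat{J}$ of Corollary \ref{cor:modular_conjugation}. That corollary exchanges partial composition with $C_B(\beta)$ along $B$ with partial composition with $\overline{C_B(\beta)} = C_{\overline{B}}(\beta)$ along $\overline{B}$, and strong convergence is preserved under conjugation by an anti-unitary map.

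The first step is to set $T_\beta \defined \widehat{C_B(\beta)}_{\overline{A}}$, meaning partial composition along $B$. This operator is bounded by Corollary \ref{cor:operator_norm_bound}, self-adjoint because $C_B(\beta)^\dagger = C_B(\beta)$ and the twisted representation is a $\dagger$-functor, and satisfies $T_{\beta_1}T_{\beta_2} = T_{\beta_1+\beta_2}$.

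The second step is norm continuity in $\beta$. Corollary \ref{cor:operator_norm_bound} gives $\lVert T_\beta - T_{\beta'}\rVert \leq \lVert \ket{C_B(\beta)} - \ket{C_B(\beta')}\rVert$ in $\CH_{B\sqcup \overline{B}}$. The right side tends to zero by Proposition \ref{prop:continuity_of_state}, since the cylinder length is a source parameter on which the glued closed manifolds depend continuously.

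The third step is to invoke the spectral theorem for $T_1 \geq 0$ (it is positive because $T_1 = T_{1/2}^2$ with $T_{1/2}$ self-adjoint). This gives $T_\beta = \int \lambda^\beta\, d\widehat{P}_\lambda$, first for rational $\beta$ and then for all $\beta$ by continuity. Dominated convergence then gives $\lVert T_\beta\psi - \psi\rVert^2 \to \braket{\psi|\widehat{P}_{\ker}|\psi}$, where $\widehat{P}_{\ker}$ is the projector onto $\ker T_1 = \bigcap_\beta \ker T_\beta$.

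It therefore remains to show this kernel vanishes, or equivalently, by self-adjointness, that $\bigcap_\beta \overline{\mathrm{ran}\, T_\beta}$ is dense. This is the step that needs care. For a bordism $M : A \to B$ the ket is $\ket{M} = (M\circ\rho_A^{-1})^{\vee}$, a manifold with boundary $B \sqcup \overline{A}$ carrying the germ of the product collar near $B$. Because the germ around $B$ is time-translation invariant, that collar extends to an honest embedded cylinder $[0,\epsilon]\times B$ for some $\epsilon > 0$. For $\beta < \epsilon$ I can write $M = C_B(\beta)\circ M_{\mathrm{cut}}$, so $\ket{M} = T_\beta \ket{M_{\mathrm{cut}}}$, using Proposition \ref{prop:dinatuality_of_kets} to identify partial composition with composition. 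Every such $\ket{M}$ therefore lies in $\mathrm{ran}\,T_\beta$ for all sufficiently small $\beta$. Since $\mathrm{ran}\,T_\beta \subset \mathrm{ran}\,T_{\beta'}$ whenever $\beta \geq \beta'$, it lies in every range, and these states span a dense subspace by construction.

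The main obstacle is justifying the collar extraction precisely. It requires checking that the germ condition guarantees a genuine finite-length cylinder, which needs compactness of $M$, and that cutting it off yields a legitimate $\mathcal{X}$-bordism $M_{\mathrm{cut}}$.
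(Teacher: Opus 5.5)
Your proposal follows the paper's own route. The paper only says the argument is directly analogous to that of Proposition~\ref{prop:approx_identity}, and you reproduce that argument: the semigroup property, norm continuity from Corollary~\ref{cor:operator_norm_bound} and Proposition~\ref{prop:continuity_of_state}, the spectral theorem with dominated convergence, and density of manifold states obtained by peeling off a collar. Handling the $\overline{A}$ side by conjugating with $\widehat{J}$ via Corollary~\ref{cor:modular_conjugation}, instead of rerunning the collar argument at $A$, is also fine.

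One line needs correcting. The inclusion $\mathrm{ran}\,T_\beta \subset \mathrm{ran}\,T_{\beta'}$ for $\beta \geq \beta'$ only lets you pass from larger to smaller $\beta$. It therefore gives $\ket{M} \in \mathrm{ran}\,T_{\beta'}$ only for $\beta' < \epsilon_M$, not for every $\beta'$. For $\beta'$ beyond the collar length, membership need not hold.

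You do not need that claim. The spectral representation gives $\ker T_\beta = \ker T_1$ for every $\beta > 0$. So for $\phi \in \ker T_1$ and $\beta < \epsilon_M$ you get $\braket{\phi | M} = \braket{T_\beta \phi | M_{\mathrm{cut}}} = 0$, and density of manifold states forces $\ker T_1 = 0$. This is exactly the paper's ``mutual range'' step: what matters is that each manifold state lies in \emph{some} range, not in all of them.
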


\subsection{A thermal bound}
\label{app:thermal_bound}

We now prove a thermal bound, comparing the thermal partition functions as computed by $\zeta$ and as computed by the Hilbert space trace of $e^{- \beta \widehat{H}_B}$ over the universal one-boundary Hilbert space $\CH_B$.

\begin{proposition}\label{prop:thermal_bound}
    Let $B \in \BordX$ be a closed $(d-1)$-manifold, and consider the Hamiltonian $\widehat{H}_B$ on $\CH_B$ as defined in Corollary \ref{cor:Hamiltonian}. We have the bound
    \begin{equation}\label{eq:thermal_bound_appendix}
        \mathrm{tr}_{\mathcal{H}_B}\big(e^{- \beta \widehat{H}_B}\big) \leq \zeta\left(S^1_\beta \times B  \right),
    \end{equation}
    where the manifold $S^1_\beta \times B$ is equipped with an $\mathcal{X}$-structure via,
    \begin{equation}
        S^1_\beta \times B \defined \tr_\Bord(C_B(\beta)).
    \end{equation}
    In particular, $S^1_\beta$ is equipped with the anti-periodic spin structure when necessary. Moreover, we have equality in \eqref{eq:thermal_bound_appendix} if and only if the map
    \begin{equation}
        \sqcup : \mathcal{H}_B \otimes \mathcal{H}_{\overline{B}} \to \mathcal{H}_{B \sqcup \overline{B}},
    \end{equation}
    is an isomorphism.
\end{proposition}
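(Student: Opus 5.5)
Identify the two sides as norms of states in $\mathcal{H}_{B \sqcup \overline{B}}$ and compare them through the orthogonal projection onto the image of the disjoint-union isometry.

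On the right, $\zeta(S^1_\beta \times B) = \zeta(\tr_\Bord(C_B(\beta)))$. Split $C_B(\beta) = C_B(\beta/2)\circ C_B(\beta/2)$ and use self-adjointness of cylinders together with Proposition~\ref{prop:trace_pairing_is_inner_product}. This gives
\[
\zeta(S^1_\beta\times B) = \tr\big(C_B(\beta/2)^\dagger \circ C_B(\beta/2)\big) = \braket{C_B(\beta/2) | C_B(\beta/2)},
\]
the squared norm of the cylinder (thermofield-double) state $\ket{C_B(\beta/2)} \in \mathcal{H}_{B\sqcup\overline{B}}$.

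On the left, let $V = \sqcup(\mathcal{H}_B\otimes\mathcal{H}_{\overline{B}})$. This is a closed subspace, since $\sqcup$ is an isometry by multiplicativity. Let $\Pi$ be the orthogonal projection onto $V$. Choose an orthonormal basis $\{e_i\}$ of $\mathcal{H}_B$ (separable by Proposition~\ref{prop:seperable_hilbert_spaces}). The CPT map $\widehat{\mathcal{T}}$ gives an orthonormal basis $\{\widehat{\mathcal{T}}e_j\}$ of $\mathcal{H}_{\overline{B}}$, so $\{e_i \sqcup \widehat{\mathcal{T}}e_j\}$ is an orthonormal basis of $V$. Compute the overlaps
\[
\braket{e_i \sqcup \widehat{\mathcal{T}}e_j \,|\, C_B(\beta/2)},
\]
first for manifold states $e_i = \ket{M_i}$, $e_j = \ket{M_j}$ and then by continuity. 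Gluing $M_i^\dagger$ and $\overline{M}_j^\dagger$ onto the cylinder yields the closed manifold $M_i^\dagger\circ C_B(\beta/2)\circ M_j$, possibly after applying the twist and the Hermitian-structure identities of Appendix~\ref{app:daggers_duals_conjugates} (Propositions~\ref{prop:dinatuality_of_kets} and~\ref{prop:dagger_and_bar_kets_same}). Hence the overlap equals $\braket{e_i | e^{-\beta\widehat{H}_B/2} | e_j}$, using Corollary~\ref{cor:Hamiltonian}. Summing over $i,j$ and using Parseval gives
\[
\lVert \Pi \ket{C_B(\beta/2)}\rVert^2 = \sum_{i,j} \big|\braket{e_i|e^{-\beta\widehat{H}_B/2}|e_j}\big|^2 = \tr_{\mathcal{H}_B}\big(e^{-\beta\widehat{H}_B}\big),
\]
which is the Hilbert--Schmidt norm squared of $e^{-\beta \widehat{H}_B/2}$. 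Since $\lVert\Pi\psi\rVert\le\lVert\psi\rVert$, the inequality \eqref{eq:thermal_bound_appendix} follows. In particular the trace is finite, so $e^{-\beta\widehat H_B}$ is trace class.

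For the equality case, the bound is saturated if and only if $\ket{C_B(\beta/2)}\in V$. If $\sqcup$ is surjective this clearly holds. For the converse, I would show that the cylinder states generate $\mathcal{H}_{B\sqcup\overline{B}}$ under the commuting actions of bordisms on $B$ and on $\overline{B}$ (the regular representation). Indeed, every $\ket{N}$ with $N:B\to B$ equals
\[
\lim_{\beta\to0} N\circ_B \ket{C_B(\beta)},
\]
by Proposition~\ref{prop:approx_identities_in_two_sided_reps}, and the kets of such $N$ span a dense subspace. The subspace $V$ is invariant under these one-sided actions, since acting on one boundary of $\ket{M_1\sqcup M_2}$ acts on $M_1$ or $M_2$ separately. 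Hence equality for all $\beta$, or for one $\beta$ followed by this invariance argument, forces $V = \mathcal{H}_{B\sqcup\overline{B}}$.

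The main obstacle is the overlap computation. The conjugation and twist conventions ($\rho_B$, $\theta_B$, $\widehat{\mathcal{T}}$) must combine so that the coefficient is exactly the matrix element of $e^{-\beta\widehat H_B/2}$, with no complex conjugation or stray $\theta_B$. This is also where the anti-periodic spin structure is forced. I would check it first on manifold states via the ket/bra identities of Appendix~\ref{app:daggers_duals_conjugates}.
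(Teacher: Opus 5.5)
Your route is the paper's. Both sides are identified as $\lVert\ket{C_B(\beta/2)}\rVert^2$ and $\lVert\Pi\ket{C_B(\beta/2)}\rVert^2$ in $\CH_{B\sqcup\overline{B}}$, and the equality case is reduced to showing that factorizing cylinder states force $V=\CH_{B\sqcup\overline{B}}$, using one-sided invariance of $V$ and Proposition~\ref{prop:approx_identities_in_two_sided_reps}. Your Parseval computation is an explicit version of the paper's identification of $\Pi\ket{C_B(\beta/2)}$ with the Hilbert--Schmidt operator $\widehat{C}_B(\beta/2)$. The convention issue you worry about is settled directly by Proposition~\ref{prop:dinatuality_of_kets}, which gives $\overline{M}_j^\dagger\circ_{\overline{B}}\ket{C_B(\beta/2)}=\ket{C_B(\beta/2)\circ M_j}$ with no stray twist or conjugation.

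There is one gap, in the case where equality holds at a single $\beta_0$. Then you only know $\ket{C_B(\beta_0/2)}\in V$. One-sided invariance of $V$ only produces the states $N\circ_B\ket{C_B(\beta_0/2)}=\ket{N\circ C_B(\beta_0/2)}$; among cylinder states these are only the ones with $\beta\geq\beta_0$, since composition can only add length. Your limit $\lim_{\beta\to0}N\circ_B\ket{C_B(\beta)}$, however, needs arbitrarily short cylinders.

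The missing fact is that the one-sided cylinder operators $T_\beta$ on $\CH_{B\sqcup\overline{B}}$ are injective. They are self-adjoint, form a semigroup, and converge strongly to the identity. So if $T_{\beta_1}\psi=0$, then $\lVert T_{\beta_1/2}\psi\rVert^2=\braket{\psi|T_{\beta_1}|\psi}=0$, and iterating gives $\psi=\lim_n T_{2^{-n}\beta_1}\psi=0$.

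The paper uses this as follows. For $\beta<\beta_0$, let $T$ be partial composition with $C_B(\beta_0/2-\beta/2)$. It is self-adjoint and preserves $V$, hence commutes with $\Pi$. Therefore $T(1-\Pi)\ket{C_B(\beta/2)}=(1-\Pi)\ket{C_B(\beta_0/2)}=0$, and injectivity gives $\ket{C_B(\beta/2)}\in V$. Equivalently, injectivity gives the $\overline{B}$-side cylinder operator dense range, so the vectors $\ket{N\circ C_B(\beta_0/2)}\in V$ are already dense.

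With this added, your finish via density of bordism kets and closedness of $V$ works. It replaces the paper's use of Corollary~\ref{cor:composition_is_jointly_continuous} to write an arbitrary $\ket{\psi}$ as $\lim_{\beta\to0}\overline{\psi}^\dagger\circ_{\overline{B}}\ket{C_B(\beta/2)}$.
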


\begin{proof}
    We have,
    \begin{equation}\label{eq:thermal_bound_intermezzo}
        \zeta\left( S^1_\beta \times B \right) = \braket{C_B(\beta/2) | C_B(\beta/2)},
    \end{equation}
    where the inner product is taken in $\mathcal{H}_{B \sqcup \overline{B}}$. Let $\widehat{P}_{\mathrm{fact}}$ denote the orthogonal projector onto the factorizing subspace $\mathcal{H}_B \otimes \mathcal{H}_{\overline{B}} \subset \mathcal{H}_{B \sqcup \overline{B}}$. Since $\widehat{P}_{\mathrm{fact}} \leq 1$, we have
    \begin{equation}
        \zeta\left( S^1_\beta \times B \right) \geq \braket{C_B(\beta/2) | \widehat{P}_{\mathrm{fact}} | C_B(\beta/2)},
    \end{equation}
    with equality if and only if $\ket{C_B(\beta/2)} \in \mathcal{H}_B \otimes \mathcal{H}_{\overline{B}}$.
    
    Now, note that the state,
    \begin{equation}
        \widehat{P}_{\mathrm{fact}} \ket{ C_B(\beta/2)} \in \mathcal{H}_B \otimes \mathcal{H}_{\overline{B}},
    \end{equation}
    corresponds to the Hilbert-Schmidt operator,
    \begin{equation}
        \widehat{C}_B(\beta/2) : \CH_{B} \to \CH_B.
    \end{equation}
    So, continuing from \eqref{eq:thermal_bound_intermezzo}, we have,
    \begin{align}
        \zeta\left( S^1_\beta \times B \right) &\geq \braket{C_B(\beta/2) | \widehat{P}_{\mathrm{fact}}^2 | C_B(\beta/2)} \\
        &= \tr_{\CH_B}\big(\widehat{C}_B(\beta/2)^2\big) \\
        &= \mathrm{tr}_{\mathcal{H}_B}\big(e^{- \beta \widehat{H}_B}\big),
    \end{align}
    as desired, again with equality if and only if $\ket{C_B(\beta/2)} \in \mathcal{H}_B \otimes \mathcal{H}_{\overline{B}}$.
    
    It remains to show that
    \begin{equation}
        \big(\ket{C_B(\beta_0/2)} \in \mathcal{H}_B \otimes \mathcal{H}_{\overline{B}}\big) \implies \big( \mathcal{H}_B \otimes \mathcal{H}_{\overline{B}} = \mathcal{H}_{B \sqcup \overline{B}} \big),
    \end{equation}
    for some fixed $\beta_0$. First of all, note that as soon as the cylinder state $\ket{C_B(\beta_0/2)}$ factorizes for some $\beta_0$, the cylinder states $\ket{C_B(\beta)}$ automatically factorize for all $\beta$. To see this, note that for $\beta > \beta_0$, we have,
    \begin{equation}
        \ket{C_B(\beta/2)} = C_B(\beta/2 - \beta_0/2) \circ_B \ket{C_B(\beta_0/2)} \in \mathcal{H}_B \otimes \mathcal{H}_{\overline{B}},
    \end{equation}
    since the partial composition of anything with a factorizing state must factorize. For $\beta < \beta_0$, note that we have,
    \begin{equation}
        C_B(\beta_0/2 - \beta/2) \circ_B (1 - \widehat{P}_{\mathrm{fact}}) \ket{C_B(\beta/2)} = (1 - \widehat{P}_{\mathrm{fact}}) \ket{C_B(\beta_0/2)} = 0,
    \end{equation}
    using that $\ket{C_B(\beta_0/2)}$ factorizes and that the self-adjoint operator of partial composition with $C_B(\beta_0/2 - \beta/2)$ preserves the factorizing subspace, hence commutes with $\widehat{P}_\mathrm{fact}$. But as we saw in the proof of Proposition \ref{prop:approx_identity} (or, strictly speaking, the analogous argument which would be made in proving Proposition \ref{prop:approx_identities_in_two_sided_reps}), the cylinder operators have no kernel, which implies that $\ket{C_B(\beta/2)}$ factorizes itself.

    Now, let $\ket{\psi} \in \mathcal{H}_{B \sqcup \overline{B}}$ be arbitrary. We compute, using Proposition \ref{prop:approx_identities_in_two_sided_reps},
    \begin{equation}
        \ket{\psi} = \lim_{\beta \to 0} C_B(\beta/2) \circ_B \ket{\psi} = \lim_{\beta \to 0} \overline{\psi}^\dagger \circ_{\overline{B}} \ket{C_B(\beta/2)} \in \mathcal{H}_B \otimes \mathcal{H}_{\overline{B}},
    \end{equation}
    as desired, having used Corollary \ref{cor:composition_is_jointly_continuous} to extend partial composition to the state $\ket{\psi}$. 
\end{proof}

\section{Universal property of the baby universe category}
\label{app:univ_property}

In this appendix, we provide a proof of the following universal property of the baby universe category $\CBU$. First, we need the notion of a $\mathcal{D}$-valued unitary QFT, for an arbitrary symmetric monoidal W*-tensor category $\mathcal{D}$.

\begin{definition}\label{defn:D_valued_QFT}
    Let $\BordX$ be a unitary bordism category, and let $\mathcal{D}$ be any rigidly generated symmetric monoidal W*-tensor category. A \textit{$\mathcal{D}$-valued unitary QFT} on $\BordX$ is a non-degenerate, unitary, symmetric monoidal functor,
    \begin{equation}
        \mathcal{Z} : \BordX \to \mathcal{D}.
    \end{equation}
    Non-degeneracy means that, for every $B \in \BordX$, we have
    \begin{equation}
        \mathcal{Z}(C_B(\beta)) \to \mathrm{id}_{\mathcal{Z}(B)}, \quad \beta \to 0,
    \end{equation}
    in the $\sigma$-strong topology.\footnote{By the fact that the cylinder operators $C_B(\beta)$ form a self-adjoint semigroup, we could also equivalently ask for convergence in the $\sigma$-weak topology.} By simplicity of the tensor unit in $\mathcal{D}$, $\mathcal{Z}(M)$ is just some complex number for each closed $d$-manifold $M$, which we refer to as the \textit{partition function of $Z$}.
\end{definition}

\begin{proposition}[Universal property of $\CBU$]\label{prop:univ_property_CBU}
    Let $\zeta$ be a partition function on a unitary bordism category $\BordX$, satisfying Axioms \ref{axiom:finiteness}-\ref{axiom:reflection_positivity}, and let $\CBU$ be its baby universe category. Let $\mathcal{D}$ be a rigidly generated symmetric W*-tensor category, and let $\mathcal{Z} : \BordX \to \mathcal{D}$ be a $\mathcal{D}$-valued unitary QFT with partition function $\zeta$. Then there exists a unique symmetric W*-tensor functor $\CBU \to \mathcal{D}$ making the following diagram commute.
    \begin{equation}\label{eq:univ_property_appendix}
    \begin{tikzcd}
        & \CBU \arrow[dr, dashed, "\exists !"] & \\
        \BordX \arrow[ur] \arrow[rr] & & \mathcal{D}
    \end{tikzcd}
    \end{equation}
\end{proposition}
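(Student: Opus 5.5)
We extend $\mathcal{Z}$ in stages along the roadmap $\BordX \to \CBUpre \to \CBUvN \to \CBU$, checking at each stage that the extension is forced, which gives uniqueness.

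\emph{Step 1: the algebraic core.} Since $\CBUpre = \mathbb{C}[\BordX]$ is the free linearization, $\mathcal{Z}$ extends uniquely and linearly to a unitary symmetric monoidal functor $\CBUpre \to \mathcal{D}$. The key computation is that $\mathcal{Z}$ intertwines the trace on $\CBUpre$ defined by $\zeta$ with the canonical trace on $\mathcal{D}$. This holds because $\mathcal{Z}$ is unitary symmetric monoidal, so it preserves $\rho_B$, $\theta_B$ and super-traces, and the partition function of $\mathcal{Z}$ is $\zeta$. By \eqref{eq:trace_equals_inner_prod}, the Hilbert--Schmidt pairing $\tr_\mathcal{D}(\mathcal{Z}(\CO_1)^\dagger \mathcal{Z}(\CO_2))$ then equals $\braket{\CO_1|\CO_2}$. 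Two consequences follow:
\begin{itemize}
\item if $\ket{\CO}$ is null, then $\mathcal{Z}(\CO)$ has zero Hilbert--Schmidt norm, hence vanishes by faithfulness of the trace on $\mathcal{D}$;
\item $\mathcal{Z}$ therefore descends to $\CBUpre/\mathcal{N}$.
\end{itemize}
One subtlety is that $\mathcal{Z}(B)$ is typically infinite-dimensional in $\mathcal{D}$. The traces must therefore be taken on trace-class morphisms, which the images of bordisms are, since each bordism factors through two shorter ones. This step uses non-degeneracy and the semifinite trace on $\mathcal{D}$, which exists because $\mathcal{D}$ is rigidly generated and hence atomic with integer dimensions on simples.

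\emph{Step 2: extension to $\CBUvN$.} For each $A$, the assignment $\ket{\CO} \mapsto \mathcal{Z}(\CO)$ (viewed as a morphism $\mathcal{Z}(A)\to\mathcal{Z}(B)$) is an isometry from $\mathcal{H}_{B\sqcup\overline{A}}$ onto a subspace of the Hilbert--Schmidt space $L^2$ of $\mathcal{D}(\mathcal{Z}(A)\to\mathcal{Z}(B))$, and it intertwines left composition. The plan is to realize the normal representation of $\mathcal{Z}(\CBUpre)''$ inside the standard form of $\mathcal{D}$ as a subrepresentation of the regular representation of $\CBUpre$; cylinder non-degeneracy makes this subrepresentation generating. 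Then the map $\CBUpre/\mathcal{N} \to \mathcal{D}$ is continuous from the weak operator topology of the regular representation to the $\sigma$-weak topology on $\mathcal{D}$, because both are computed by matrix elements against Hilbert--Schmidt vectors of the same form. It therefore extends uniquely to a normal $\dagger$-functor $\CBUvN \to \mathcal{D}$, and uniqueness follows from weak density of $\CBUpre$ (the double commutant theorem). Monoidality passes to the limit by separate normality of $\sqcup$ on both sides, using the multi-boundary representations of Section \ref{sec:monoidal_structure}.

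\emph{Step 3: Cauchy completion.} $\mathcal{D}$ is Cauchy complete, so by the universal property of W*-Cauchy completion (the W*-Yoneda lemma, \cite[Prop.~4.12]{henriques2024completewcategories}) the normal functor extends uniquely, up to unique isomorphism, to $\CBU = \Hilb(\CBUvN)$. The symmetric monoidal structure transfers as well, since images of projectors and direct sums are preserved.

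The main obstacle is Step 2: showing that normality, i.e.\ weak continuity, is controlled by the $\zeta$-inner products, with care about non-unitality and infinite objects. I would first try to construct an explicit isometric intertwiner $\mathcal{H}_{B\sqcup\overline{A}} \to L^2(\mathcal{D}(\mathcal{Z}(A)\to\mathcal{Z}(B)))$ and use non-degeneracy to show that its range generates, so that the commutant of the regular representation maps into the commutant in $\mathcal{D}$.
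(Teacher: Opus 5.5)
Your proposal is correct in outline. Steps 1 and 3 and the handling of monoidality match the paper, but Step 2 takes a genuinely different route.

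\textbf{The paper's route.} The paper first shows that $\mathcal{Z}$ is isometric in \emph{operator} norm. It uses trace preservation for every power, $\tr_\mathcal{D}\mathcal{Z}((\CO^\dagger\CO)^n)=\zeta(\tr((\CO^\dagger\CO)^n))$, together with $\|x\|=\lim_n\|x\|_{2n}$ on both sides. It then applies Lemma~\ref{lem:multiplier_extension_for_CBU}:
\begin{itemize}
\item $\CBUpre$ is norm-dense in the compact ideal $\mathcal{K}(\CBUvN)$;
\item the cylinders form approximate units;
\item because $\CBUvN$ is atomic, the fact that a non-degenerate representation of $\mathcal{K}(\mathcal{H})$ extends uniquely to a normal representation of $\mathcal{B}(\mathcal{H})$ can be applied block-diagonally.
\end{itemize}

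\textbf{Your route.} You use only the $n=1$ identity, namely that $\mathcal{Z}$ pulls back the trace of $\mathcal{D}$ to $\zeta\circ\tr$. You then run the classical GNS argument for trace-preserving maps. The isometry $V:\ket{\CO}\mapsto\mathcal{Z}(\CO)$ identifies the regular representation with the restriction $\pi$ of the standard form of $\mathcal{D}$ to the closed range $K$ of $V$. Here $M$ is the von Neumann algebra generated by $\mathcal{Z}(\CBUpre)$, and $K$ is $M$-invariant.

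A few points in your sketch need tightening. The direction is that the regular representation sits inside the pulled-back standard form, not the reverse. ``Generating'' must mean that $\pi:M\to\mathcal{B}(K)$ is faithful; it cannot mean that $K$ is cyclic for $\mathcal{D}$, which fails in general. Faithfulness is exactly what cylinder non-degeneracy supplies: $y\,\mathcal{Z}(C_B(\beta))=0$ for all $\beta$ forces $y=0$. A faithful normal representation is a normal isomorphism onto a von Neumann algebra, and by normality of $\pi$ that algebra is the weak closure of $\pi(\mathcal{Z}(\CBUpre))$. Hence the image of $\CBUvN$ in the regular representation is $V^*\pi(M)V\cong M\subset\mathcal{D}$. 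The operator-norm bound and the bounded-set continuity you assert then come out as consequences rather than inputs.

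\textbf{What each route buys.} Yours avoids the $L^{2n}$ limit and never uses atomicity of $\CBUvN$, so it would survive in a semifinite, non-atomic setting. It does need a faithful normal semifinite trace on $\mathcal{D}$ that is compatible with $\mathcal{Z}$. The paper's route proves the stronger, trace-free statement that $\CBUvN$ is the multiplier category of $\CBUpre$: any norm-bounded non-degenerate $\dagger$-functor into any W*-category extends. Trace preservation is used only to verify boundedness.
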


\begin{proof}
    The result follows as a specific case of Lemma \ref{lem:multiplier_extension_for_CBU} below, provided we can show that the free extension of a $\mathcal{D}$-valued QFT with partition function $\zeta$ to $\CBUpre$ is bounded in operator norm. This is because symmetric monoidality is an algebraic constraint, which transfers automatically once we have a W*-functor making diagram \eqref{eq:univ_property_appendix} commute.

    Thus, we must show that a $\mathcal{D}$-valued QFT with partition function $\zeta$ is bounded in operator norm. In fact, we will show more: it is automatically isometric in operator norm. To see this, let $\tr_\mathcal{D}$ denote the canonical trace on $\mathcal{D}$ induced from the canonical positive trace on the symmetric unitary tensor category $\mathcal{D}^{\mathrm{f.d.}}$. Now, we have the following identity,
    \begin{equation}\label{eq:replica_moments}
        \lvert \lvert \mathcal{Z}(\mathcal{O}) \rvert \rvert = \lim_{n \to \infty} \sqrt[2n]{\tr_\mathcal{D} \mathcal{Z}\big((\mathcal{O}^\dagger \mathcal{O})^n\big)},
    \end{equation}
    expressing the operator norm as a limit of $L^p$ norms for $p = 2n$, where $n \in \mathbb{Z}_{> 0}$. This follows from continuity in the $L^p$ norms as $p \to \infty$ for any rigidly generated (hence atomic) symmetric W*-tensor category.
    
    Because unitary symmetric monoidal functors take unitary traces to unitary traces, we have
    \begin{equation}\label{eq:trace_of_Z_is_Z_of_trace}
        \tr_\mathcal{D} \mathcal{Z}\big((\mathcal{O}^\dagger \mathcal{O})^n\big) =\mathcal{Z}\big(\tr \big((\mathcal{O}^\dagger \circ \mathcal{O})^n \big)\big) =  \zeta\big(\tr \big((\mathcal{O}^\dagger \circ \mathcal{O})^n \big)\big),
    \end{equation}
    by our assumption that $\mathcal{Z}$ has partition function $\zeta$. But, by applying the same argument to $\CBUvN$, the operator norm $\lvert \lvert \mathcal{O} \rvert \rvert$ in $\CBUpre$ is also given by the $2n$-th root of \eqref{eq:trace_of_Z_is_Z_of_trace} as well. Thus, $\mathcal{Z} : \CBUpre \to \mathcal{D}$ is isometric in operator norm and non-degenerate by assumption, hence extends uniquely by Lemma \ref{lem:multiplier_extension_for_CBU}.
\end{proof}

\begin{lemma}\label{lem:multiplier_extension_for_CBU}
    Let $\zeta$ be a partition function on a unitary bordism category $\BordX$, satisfying Axioms \ref{axiom:finiteness}-\ref{axiom:reflection_positivity}, and let $\CBUpre$ be its pre-baby universe category. Let $\mathcal{Z} : \CBUpre \to \mathcal{D}$ be a $\dagger$-functor valued in some W*-category $\mathcal{D}$, which is norm-bounded and non-degenerate in the same sense as Definition \ref{defn:D_valued_QFT}. Then there exists a unique W*-functor $\CBUvN \to \mathcal{D}$ making the following diagram commute.\footnote{In other words, $\CBUvN$ is the multiplier category of $\CBUpre$, or equivalently, of its C*-completion (see \cite{bunke2020additive}).}
    \begin{equation}
    \begin{tikzcd}
        & \CBUvN \arrow[dr, dashed, "\exists !"] & \\
        \CBUpre \arrow[ur] \arrow[rr] & & \mathcal{D}
    \end{tikzcd}
    \end{equation}
    Moreover, if $\mathcal{D}$ is Cauchy complete, the same conclusion holds with $\CBUvN$ replaced with $\CBU$.
\end{lemma}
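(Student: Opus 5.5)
I will prove the lemma by building the extension by hand on the regular representation. I will use the double-commutant description of $\CBUvN$, namely that $\CBUpre/\mathcal{N}$ is weak-operator dense in $\CBUvN$, together with the non-degeneracy of $\mathcal{Z}$.

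The first step is to show that $\mathcal{Z}$ factors through the quotient by the null ideal $\mathcal{N}$. Take a morphism $\CO$ with $\lVert \CO \rVert = 0$. By norm-boundedness, $\lVert \mathcal{Z}(\CO) \rVert \leq K \lVert \CO \rVert = 0$. In the case of interest (Proposition \ref{prop:univ_property_CBU}), $\mathcal{Z}$ is even isometric, so it extends to an isometric $\dagger$-functor on the C*-completion $\overline{\CBUpre/\mathcal{N}}^{\,\lVert\cdot\rVert}$.

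The second step is to identify $\CBUvN$ with the multiplier category of this C*-completion, in the sense of \cite{bunke2020additive}. The cylinders $C_B(\beta)$ form an approximate identity: they converge strongly to the identity in the regular representation (Proposition \ref{prop:approx_identities_in_two_sided_reps}). Any $T \in \CBUvN(A \to B)$ should then act as a two-sided multiplier via $\CO \mapsto T\circ \CO$. The key claim is that $T \circ \CO$ lies in the C*-completion whenever $\CO$ does. I would prove this in two moves. First, atomicity of $\CBUvN$ together with the lower bound \eqref{eq:lower_bound_on_trace} shows that the Hilbert--Schmidt morphisms $\CBUHS$ sit inside the norm closure of $\CBUpre$; to see this, approximate $\ket{\CO}$ in $\CH_{B\sqcup\overline A}$ and use Corollary \ref{cor:operator_norm_bound}. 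Second, $T$ maps Hilbert--Schmidt morphisms to Hilbert--Schmidt morphisms by composition.

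With this in hand, I define $\widetilde{\mathcal{Z}}(T)$ as the strict limit of $\mathcal{Z}(T\circ C_A(\beta))$ as $\beta \to 0$. These operators are uniformly bounded by $\lVert T\rVert$. Non-degeneracy of $\mathcal{Z}$ gives $\mathcal{Z}(C_A(\beta)) \to \mathrm{id}$ $\sigma$-strongly. Hence for any vector of the form $\mathcal{Z}(C_A(\beta'))\xi$, the net stabilizes to $\mathcal{Z}(T\circ C_A(\beta'))\xi$. Such vectors are dense, so uniform boundedness yields $\sigma$-strong convergence everywhere. Functoriality and $\dagger$-compatibility then follow from the corresponding identities on $\CBUpre$ by passing to limits.

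The step I expect to be the main obstacle is showing that $\widetilde{\mathcal{Z}}$ is normal, i.e.\ $\sigma$-weakly continuous, which is what makes it a W*-functor. Normality is also what forces uniqueness. Since $\CBUpre$ is $\sigma$-weakly dense in $\CBUvN$, any two normal extensions agree on a dense set and hence agree everywhere. For existence, I would use atomicity: $\CBUvN$ is an $\ell^\infty$-direct sum of type I factors $\mathcal{B}(\CH^\mu_B)$ (the $\mu$-sector decomposition). On each summand, a bounded $*$-representation that is non-degenerate on the compact (Hilbert--Schmidt) ideal is automatically normal, because the strict extension of a non-degenerate representation of the compacts is normal. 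Writing each $\mathcal{B}(\CH^\mu_B)$ as the multiplier algebra of its compacts and gluing over $\mu$ using the central projections therefore gives normality.

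For the final clause about $\CBU$, I would use that $\CBU = \Hilb(\CBUvN)$. By the universal property of W*-Cauchy completion \cite{henriques2024completewcategories}, any W*-functor into a Cauchy complete target extends uniquely up to unique unitary isomorphism. One sends orthogonal direct sums to orthogonal direct sums and formal images $\mathrm{im}(P)$ to $\mathrm{im}(\widetilde{\mathcal{Z}}(P))$, the latter existing because $\mathcal{D}$ is Cauchy complete.
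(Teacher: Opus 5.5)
Your proposal is correct and follows essentially the paper's argument. You extend $\mathcal{Z}$ by norm-boundedness to the norm closure of $\CBUpre$ (the compact ideal, which contains $\CBUHS$). You use the cylinders $C_B(\beta)$ as an approximate identity so that non-degeneracy passes to that ideal. You then apply, block-diagonally over the atomic $\mu$-sector decomposition, the standard fact that a non-degenerate representation of $\mathcal{K}(\CH^\mu_B)$ extends uniquely to a normal representation of $\mathcal{B}(\CH^\mu_B)$, and finish with the universal property of W*-Cauchy completion.

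Your explicit strict-limit construction and your uniqueness argument via $\sigma$-weak density only spell out what the paper leaves implicit. Two small touch-ups: atomicity is not needed to place $\CBUHS$ inside the norm closure of $\CBUpre$, since Corollary~\ref{cor:operator_norm_bound} alone suffices. Also, the net $\mathcal{Z}(T\circ C_A(\beta+\beta'))\xi$ converges rather than stabilizes, by norm continuity of the cylinders (Proposition~\ref{prop:continuity_of_morphisms}).
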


\begin{proof}
     To see this, recall the following standard fact \cite{TakesakiI}: any non-degenerate norm-bounded $*$-homomorphism,
     \begin{equation}
         \mathcal{K}(\CH) \to \mathcal{A},
     \end{equation}
     from the C*-algebra $\mathcal{K}(\CH)$ of compact operators on some Hilbert space to an arbitrary von Neumann algebra $\mathcal{A}$ extends uniquely to a normal $*$-algebra map,
     \begin{equation}
         \mathcal{B}(\CH) \to \mathcal{A}.
     \end{equation}
     This is because the representation theory of $\mathcal{K}(\CH)$ is very simple, and every representation is an amplification of $\CH$, i.e., a possibly infinite direct sum of copies of $\CH$ itself.

     Now, to extend our given $\dagger$-functor $\mathcal{Z} : \CBUpre \to \mathcal{D}$, we first note that $\CBUpre$ is dense in $\CBUHS$, which is itself dense in its norm closure, the compact ideal $\mathcal{K}(\CBUvN)$. By norm-boundedness, $\mathcal{Z}$ extends uniquely to a $\dagger$-functor $\mathcal{K}(\CBUvN) \to \mathcal{D}$. By Proposition \ref{prop:approx_identities_in_two_sided_reps}, the cylinder morphisms $C_B(\beta)$ form a family of approximate identities for $\mathcal{K}(\CBUvN)$, and so our assumption of non-degeneracy of $\mathcal{Z}$ implies that $\mathcal{K}(\CBUvN) \to \mathcal{D}$ is non-degenerate. Applying the extension result described above block diagonally to the atomic W*-category $\CBUvN$, we obtain our desired extension. Finally, when $\mathcal{D}$ is Cauchy complete, the extension to $\CBU$ follows from the universal property of W*-Cauchy completion.
\end{proof}

\newpage
\bibliographystyle{JHEP}
\bibliography{ref}

@article{Colafranceschi:2023urj,
    author = "Colafranceschi, Eugenia and Dong, Xi and Marolf, Donald and Wang, Zhencheng",
    title = "{Algebras and Hilbert spaces from gravitational path integrals. Understanding Ryu-Takayanagi/HRT as entropy without AdS/CFT}",
    eprint = "2310.02189",
    archivePrefix = "arXiv",
    primaryClass = "hep-th",
    doi = "10.1007/JHEP10(2024)063",
    journal = "JHEP",
    volume = "10",
    pages = "063",
    year = "2024"
}

@article{hirsch2016smoothings,
  title={Smoothings of Piecewise Linear Manifolds.(AM-80)},
  author={Hirsch, Morris W and Mazur, Barry},
  year={2016},
  publisher={Princeton university press}
}

@article{khovanov2020universal,
  title={Universal construction of topological theories in two dimensions},
  author={Khovanov, Mikhail},
  journal={arXiv preprint arXiv:2007.03361},
  year={2020}
}

@article{Penneys2018UnitaryDF,
  title={Unitary dual functors for unitary multitensor categories},
  author={David Penneys},
  journal={Higher Structures},
  year={2018},
  url={https://api.semanticscholar.org/CorpusID:119269075},
  eprint={1808.00323},
  archivePrefix={arXiv},
  primaryClass={math.QA}
}

@article{Freed:2012bs,
    author = "Freed, Daniel S. and Teleman, Constantin",
    title = "{Relative quantum field theory}",
    eprint = "1212.1692",
    archivePrefix = "arXiv",
    primaryClass = "hep-th",
    doi = "10.1007/s00220-013-1880-1",
    journal = "Commun. Math. Phys.",
    volume = "326",
    pages = "459--476",
    year = "2014"
}

@article{lashof1963poincare,
  title={Poincar{\'e} duality and cobordism},
  author={Lashof, Richard},
  journal={Transactions of the American Mathematical Society},
  pages={257--277},
  year={1963},
  publisher={JSTOR}
}

@article{grady2026higher,
  title={Higher categories of bordisms with geometric structures},
  author={Grady, Daniel and Pavlov, Dmitri},
  journal={arXiv preprint arXiv:2605.03453},
  year={2026}
}

@article{Marolf:2024jze,
    author = "Marolf, Donald",
    title = "{On the nature of ensembles from gravitational path integrals}",
    eprint = "2407.04625",
    archivePrefix = "arXiv",
    primaryClass = "hep-th",
    doi = "10.1088/1751-8121/ad9cd4",
    journal = "J. Phys. A",
    volume = "58",
    number = "3",
    pages = "035204",
    year = "2025"
}

@article{Chen:2025fwp,
    author = "Chen, Hong Zhe",
    title = "{Observers seeing gravitational Hilbert spaces: abstract sources for an abstract path integral}",
    eprint = "2505.15892",
    archivePrefix = "arXiv",
    primaryClass = "hep-th",
    doi = "10.1007/JHEP10(2025)139",
    journal = "JHEP",
    volume = "10",
    pages = "139",
    year = "2025"
}

@article{Harlow:2023hjb,
    author = "Harlow, Daniel and Numasawa, Tokiro",
    title = "{Gauging spacetime inversions in quantum gravity}",
    eprint = "2311.09978",
    archivePrefix = "arXiv",
    primaryClass = "hep-th",
    reportNumber = "MIT-CTP/5647",
    doi = "10.1007/JHEP01(2026)098",
    journal = "JHEP",
    volume = "01",
    pages = "098",
    year = "2026"
}

@article{stehouwer2024spin,
  title={The spin-statistics theorem for topological quantum field theories},
  author={Stehouwer, Luuk},
  journal={Communications in Mathematical Physics},
  volume={405},
  number={11},
  pages={253},
  year={2024},
  publisher={Springer}
}

@article{Freed:2016rqq,
    author = "Freed, Daniel S. and Hopkins, Michael J.",
    title = "{Reflection positivity and invertible topological phases}",
    eprint = "1604.06527",
    archivePrefix = "arXiv",
    primaryClass = "hep-th",
    doi = "10.2140/gt.2021.25.1165",
    journal = "Geom. Topol.",
    volume = "25",
    pages = "1165--1330",
    year = "2021"
}

@article{Rudelius:2020orz,
    author = "Rudelius, Tom and Shao, Shu-Heng",
    title = "{Topological Operators and Completeness of Spectrum in Discrete Gauge Theories}",
    eprint = "2006.10052",
    archivePrefix = "arXiv",
    primaryClass = "hep-th",
    doi = "10.1007/JHEP12(2020)172",
    journal = "JHEP",
    volume = "12",
    pages = "172",
    year = "2020"
}

@article{Heidenreich:2021xpr,
    author = "Heidenreich, Ben and McNamara, Jacob and Montero, Miguel and Reece, Matthew and Rudelius, Tom and Valenzuela, Irene",
    title = "{Non-invertible global symmetries and completeness of the spectrum}",
    eprint = "2104.07036",
    archivePrefix = "arXiv",
    primaryClass = "hep-th",
    reportNumber = "ACFI-T21-03",
    doi = "10.1007/JHEP09(2021)203",
    journal = "JHEP",
    volume = "09",
    pages = "203",
    year = "2021"
}

@article{Lin:2022rbf,
    author = "Lin, Henry W.",
    title = "{The bulk Hilbert space of double scaled SYK}",
    eprint = "2208.07032",
    archivePrefix = "arXiv",
    primaryClass = "hep-th",
    doi = "10.1007/JHEP11(2022)060",
    journal = "JHEP",
    volume = "11",
    pages = "060",
    year = "2022"
}

@article{DiUbaldo:2026rly,
    author = "Di Ubaldo, Gabriele and Iliesiu, Luca V. and Lin, Henry W. and Yan, Cynthia",
    title = "{Positivity of the gravitational path integral implies the axionic weak gravity conjecture}",
    eprint = "2605.05305",
    archivePrefix = "arXiv",
    primaryClass = "hep-th",
    reportNumber = "RIKEN-iTHEMS-Report-26",
    month = "5",
    year = "2026"
}

@article{Binder:2019zqc,
    author = "Binder, Damon J. and Rychkov, Slava",
    title = "{Deligne Categories in Lattice Models and Quantum Field Theory, or Making Sense of $O(N)$ Symmetry with Non-integer $N$}",
    eprint = "1911.07895",
    archivePrefix = "arXiv",
    primaryClass = "hep-th",
    reportNumber = "PUPT-2601",
    doi = "10.1007/JHEP04(2020)117",
    journal = "JHEP",
    volume = "04",
    pages = "117",
    year = "2020"
}

@article{Engelhardt:2026blp,
    author = "Engelhardt, Netta and Gesteau, Elliott",
    title = "{A Semiclassical Diagnostic for Spacetime Emergence}",
    eprint = "2605.06780",
    archivePrefix = "arXiv",
    primaryClass = "hep-th",
    reportNumber = "MIT-CTP/6035",
    month = "5",
    year = "2026"
}

@article{Akers:2022qdl,
    author = "Akers, Chris and Engelhardt, Netta and Harlow, Daniel and Penington, Geoff and Vardhan, Shreya",
    title = "{The black hole interior from non-isometric codes and complexity}",
    eprint = "2207.06536",
    archivePrefix = "arXiv",
    primaryClass = "hep-th",
    doi = "10.1007/JHEP06(2024)155",
    journal = "JHEP",
    volume = "06",
    pages = "155",
    year = "2024"
}

@article{Harlow:2018fse,
    author = "Harlow, Daniel",
    title = "{TASI Lectures on the Emergence of Bulk Physics in AdS/CFT}",
    eprint = "1802.01040",
    archivePrefix = "arXiv",
    primaryClass = "hep-th",
    doi = "10.22323/1.305.0002",
    journal = "PoS",
    volume = "TASI2017",
    pages = "002",
    year = "2018"
}

@article{Dymarsky:2026asf,
    author = "Dymarsky, Anatoly and Shapere, Alfred",
    title = "{Mass formula for topological boundary conditions from TQFT gravity}",
    eprint = "2602.00224",
    archivePrefix = "arXiv",
    primaryClass = "hep-th",
    month = "1",
    year = "2026"
}

@article{Chandra:2022bqq,
    author = "Chandra, Jeevan and Collier, Scott and Hartman, Thomas and Maloney, Alexander",
    title = "{Semiclassical 3D gravity as an average of large-c CFTs}",
    eprint = "2203.06511",
    archivePrefix = "arXiv",
    primaryClass = "hep-th",
    doi = "10.1007/JHEP12(2022)069",
    journal = "JHEP",
    volume = "12",
    pages = "069",
    year = "2022"
}

@article{Maldacena:2026jqd,
    author = "Maldacena, Juan and Maloney, Alexander and McPeak, Brian",
    title = "{Wormholes and the imaginary distance bound}",
    eprint = "2605.05336",
    archivePrefix = "arXiv",
    primaryClass = "hep-th",
    month = "5",
    year = "2026"
}

@article{McNamara:2022lrw,
    author = "McNamara, Jacob and Reece, Matthew",
    title = "{Reflections on Parity Breaking}",
    eprint = "2212.00039",
    archivePrefix = "arXiv",
    primaryClass = "hep-th",
    month = "11",
    year = "2022"
}

@article{Perez-Lona:2025add,
    author = "Perez-Lona, Alonso and Sharpe, Eric and Yu, Xingyang and Zhang, Hao",
    title = "{Total instanton restriction via multiverse interference: Noncompact gauge theories and ({\ensuremath{-}}1)-form symmetries}",
    eprint = "2508.00050",
    archivePrefix = "arXiv",
    primaryClass = "hep-th",
    doi = "10.1007/JHEP05(2026)214",
    journal = "JHEP",
    volume = "05",
    pages = "214",
    year = "2026"
}

@article{Sharpe:2023lfk,
    author = "Sharpe, Eric",
    title = "{Dilaton shifts, probability measures, and decomposition}",
    eprint = "2312.08438",
    archivePrefix = "arXiv",
    primaryClass = "hep-th",
    doi = "10.1088/1751-8121/ad8196",
    journal = "J. Phys. A",
    volume = "57",
    number = "44",
    pages = "445401",
    year = "2024"
}

@article{Krein1949Duality,
  author  = {Krein, M. G.},
  title   = {A principle of duality for bicompact groups and quadratic block algebras},
  journal = {Doklady Akademii Nauk SSSR. Novaya Seriya},
  volume  = {69},
  number  = {6},
  year    = {1949},
  pages   = {725--728},
  note    = {In Russian}
}

@article{tannaka1939dualitatssatz,
  title={{\"U}ber den Dualit{\"a}tssatz der nichtkommutativen topologischen Gruppen},
  author={Tannaka, Tadao},
  journal={Tohoku Mathematical Journal, First Series},
  volume={45},
  pages={1--12},
  year={1939},
  publisher={Mathematical Institute, Tohoku University}
}

@article{milne2013motives,
  title={Motives—Grothendieck’s dream},
  author={Milne, James S},
  journal={Open problems and surveys of contemporary mathematics},
  volume={6},
  pages={325--342},
  year={2013}
}

@article{Czech:2023zmq,
    author = "Czech, Bartlomiej and de Boer, Jan and Esp{\'\i}ndola, Ricardo and Najian, Bahman and van der Heijden, Jeremy and Zukowski, Claire",
    title = "{Changing states in holography: From modular Berry curvature to the bulk symplectic form}",
    eprint = "2305.16384",
    archivePrefix = "arXiv",
    primaryClass = "hep-th",
    doi = "10.1103/PhysRevD.108.066003",
    journal = "Phys. Rev. D",
    volume = "108",
    number = "6",
    pages = "066003",
    year = "2023"
}

@article{Belin:2018fxe,
    author = "Belin, Alexandre and Lewkowycz, Aitor and S{\'a}rosi, G{\'a}bor",
    title = "{The boundary dual of the bulk symplectic form}",
    eprint = "1806.10144",
    archivePrefix = "arXiv",
    primaryClass = "hep-th",
    doi = "10.1016/j.physletb.2018.10.071",
    journal = "Phys. Lett. B",
    volume = "789",
    pages = "71--75",
    year = "2019"
}

@article{Marolf:2017kvq,
    author = "Marolf, Donald and Parrikar, Onkar and Rabideau, Charles and Izadi Rad, Ali and Van Raamsdonk, Mark",
    title = "{From Euclidean Sources to Lorentzian Spacetimes in Holographic Conformal Field Theories}",
    eprint = "1709.10101",
    archivePrefix = "arXiv",
    primaryClass = "hep-th",
    doi = "10.1007/JHEP06(2018)077",
    journal = "JHEP",
    volume = "06",
    pages = "077",
    year = "2018"
}

@article{Bak:2017rpp,
    author = "Bak, Dongsu and Trivella, Andrea",
    title = "{Quantum Information Metric on $\mathbb{R} \times S^{d-1}$}",
    eprint = "1707.05366",
    archivePrefix = "arXiv",
    primaryClass = "hep-th",
    doi = "10.1007/JHEP09(2017)086",
    journal = "JHEP",
    volume = "09",
    pages = "086",
    year = "2017"
}

@article{Lashkari:2015hha,
    author = "Lashkari, Nima and Van Raamsdonk, Mark",
    title = "{Canonical Energy is Quantum Fisher Information}",
    eprint = "1508.00897",
    archivePrefix = "arXiv",
    primaryClass = "hep-th",
    doi = "10.1007/JHEP04(2016)153",
    journal = "JHEP",
    volume = "04",
    pages = "153",
    year = "2016"
}

@article{Miyaji:2015woj,
    author = "Miyaji, Masamichi and Numasawa, Tokiro and Shiba, Noburo and Takayanagi, Tadashi and Watanabe, Kento",
    title = "{Distance between Quantum States and Gauge-Gravity Duality}",
    eprint = "1507.07555",
    archivePrefix = "arXiv",
    primaryClass = "hep-th",
    reportNumber = "YITP-15-62, IPMU15-0119, YITP-15-62, IPMU15-0119",
    doi = "10.1103/PhysRevLett.115.261602",
    journal = "Phys. Rev. Lett.",
    volume = "115",
    number = "26",
    pages = "261602",
    year = "2015"
}

@article{McNamara:2021cuo,
    author = "McNamara, Jacob",
    title = "{Gravitational Solitons and Completeness}",
    eprint = "2108.02228",
    archivePrefix = "arXiv",
    primaryClass = "hep-th",
    month = "8",
    year = "2021"
}

@article{stolz2012traces,
  title={Traces in monoidal categories},
  author={Stolz, Stephan and Teichner, Peter},
  journal={Transactions of the American Mathematical Society},
  volume={364},
  number={8},
  pages={4425--4464},
  year={2012}
}

@article{Witten:2025ayw,
    author = "Witten, Edward",
    title = "{Bras and kets in Euclidean path integrals}",
    eprint = "2503.12771",
    archivePrefix = "arXiv",
    primaryClass = "hep-th",
    doi = "10.4310/bpam.260113013520",
    journal = "Beijing J. Pure Appl. Math.",
    volume = "3",
    number = "1",
    pages = "1--34",
    year = "2026"
}

@article{bunke2020additive,
  title={Additive C*-categories and K-theory},
  author={Bunke, Ulrich and Engel, Alexander},
  journal={arXiv preprint arXiv:2010.14830},
  year={2020}
}

@article{Harlow:2026hky,
    author = "Harlow, Daniel",
    title = "{Observers, $\alpha$-parameters, and the Hartle-Hawking state}",
    eprint = "2602.03835",
    archivePrefix = "arXiv",
    primaryClass = "hep-th",
    reportNumber = "MIT-CTP/5900",
    month = "2",
    year = "2026"
}

@article{Atiyah1988TQFT,
  author  = {Atiyah, Michael F.},
  title   = {Topological quantum field theory},
  journal = {Publications Mathématiques de l'IHÉS},
  volume  = {68},
  pages   = {175--186},
  year    = {1988},
  doi     = {10.1007/BF02698547}
}

@incollection{segal1988definition,
  title={The definition of conformal field theory},
  author={Segal, Graeme B},
  booktitle={Differential geometrical methods in theoretical physics},
  pages={165--171},
  year={1988},
  publisher={Springer}
}

@article{ayala2017local,
  title={Local structures on stratified spaces},
  author={Ayala, David and Francis, John and Tanaka, Hiro Lee},
  journal={Advances in Mathematics},
  volume={307},
  pages={903--1028},
  year={2017},
  publisher={Elsevier}
}

@article{OsterwalderSchrader1973,
  author  = {Osterwalder, Konrad and Schrader, Robert},
  title   = {Axioms for Euclidean Green's Functions},
  journal = {Communications in Mathematical Physics},
  volume  = {31},
  number  = {2},
  pages   = {83--112},
  year    = {1973},
  doi     = {10.1007/BF01645738}
}

@article{OsterwalderSchrader1975,
  author  = {Osterwalder, Konrad and Schrader, Robert},
  title   = {Axioms for Euclidean Green's Functions II},
  journal = {Communications in Mathematical Physics},
  volume  = {42},
  number  = {3},
  pages   = {281--305},
  year    = {1975},
  doi     = {10.1007/BF01608978}
}

@book{StreaterWightman2000,
  author    = {Streater, Raymond F. and Wightman, Arthur S.},
  title     = {PCT, Spin and Statistics, and All That},
  publisher = {Princeton University Press},
  series    = {Princeton Landmarks in Mathematics and Physics},
  year      = {2000}
}

@article{quillen1997module,
  title={Module theory over nonunital rings},
  author={Quillen, Daniel},
  journal={Notes},
  volume={20},
  pages={1445--1459},
  year={1997}
}

@misc{johnsonfreyd2026rigidfirm,
  author       = {Johnson-Freyd, Theo},
  title        = {Rigid Firm Categories},
  howpublished = {Report for talk at \emph{Higher Structures from Symmetries in Quantum Field Theory}, Mathematisches Forschungsinstitut Oberwolfach. Available at \url{https://categorified.net/JohnsonFreyd-MFO2612.pdf}.},
  month        = mar,
  year         = {2026},
  note         = {Talk delivered March 18, 2026; joint work with Jake McNamara and David Reutter},
  url          = {https://categorified.net/JohnsonFreyd-MFO2612.pdf}
}

@article{Wightman1956,
  author  = {Wightman, Arthur S.},
  title   = {Quantum Field Theory in Terms of Vacuum Expectation Values},
  journal = {Physical Review},
  volume  = {101},
  number  = {2},
  pages   = {860--866},
  year    = {1956},
  doi     = {10.1103/PhysRev.101.860}
}

@article{Banerjee:2022pmw,
    author = "Banerjee, Anindya and Moore, Gregory W.",
    title = "{Comments on summing over bordisms in TQFT}",
    eprint = "2201.00903",
    archivePrefix = "arXiv",
    primaryClass = "hep-th",
    doi = "10.1007/JHEP09(2022)171",
    journal = "JHEP",
    volume = "09",
    pages = "171",
    year = "2022"
}

@article{dewitt1967quantum,
  title={Quantum theory of gravity. I. The canonical theory},
  author={DeWitt, Bryce S},
  journal={Physical Review},
  volume={160},
  number={5},
  pages={1113},
  year={1967},
  publisher={APS}
}

@article{Kuchar:1993ne,
    author = "Kuchar, Karel V.",
    title = "{Canonical quantum gravity}",
    eprint = "gr-qc/9304012",
    archivePrefix = "arXiv",
    reportNumber = "UU-REL-92-12-10",
    month = "4",
    year = "1993"
}

@article{Isham:1992ms,
    author = "Isham, C. J.",
    editor = "Ibort, L. A. and Rodriguez, M. A.",
    title = "{Canonical quantum gravity and the problem of time}",
    eprint = "gr-qc/9210011",
    archivePrefix = "arXiv",
    reportNumber = "IMPERIAL-TP-91-92-25",
    journal = "NATO Sci. Ser. C",
    volume = "409",
    pages = "157--287",
    year = "1993"
}

@article{ayala2015factorization,
  title={Factorization homology of topological manifolds},
  author={Ayala, David and Francis, John},
  journal={Journal of Topology},
  volume={8},
  number={4},
  pages={1045--1084},
  year={2015},
  publisher={Oxford University Press}
}

@incollection{deligne1982tannakian,
  title={Tannakian categories},
  author={Deligne, Pierre and Milne, James S},
  booktitle={Hodge cycles, motives, and Shimura varieties},
  pages={101--228},
  year={1982},
  publisher={Springer}
}

@article{deligne1990categories,
  title={Cat{\'e}gories tannakiennes, the grothendieck festschrift, vol. ii, 111--195},
  author={Deligne, Pierre},
  journal={Progr. Math},
  volume={87},
  year={1990}
}

@article{deligne2002categories,
  title={Cat{\'e}gories tensorielles},
  author={Deligne, Pierre},
  journal={Mosc. Math. J},
  volume={2},
  number={2},
  pages={227--248},
  year={2002}
}

@article{ferrer2024dagger,
  title={Dagger $ n $-categories},
  author={Ferrer, Giovanni and Hungar, Brett and Johnson-Freyd, Theo and Krulewski, Cameron and M{\"u}ller, Lukas and Penneys, David and Reutter, David and Scheimbauer, Claudia and Stehouwer, Luuk and Vuppulury, Chetan and others},
  eprint = "2403.01651",
  year={2024}
}

@article{ghez1985w,
  title={W$^*$-categories},
  author={Ghez, Paul and Lima, Ricardo and Roberts, John},
  journal={Pacific Journal of Mathematics},
  volume={120},
  number={1},
  pages={79--109},
  year={1985},
  publisher={Mathematical Sciences Publishers}
}

@article{Hamada:2021yxy,
    author = "Hamada, Yuta and Montero, Miguel and Vafa, Cumrun and Valenzuela, Irene",
    title = "{Finiteness and the swampland}",
    eprint = "2111.00015",
    archivePrefix = "arXiv",
    primaryClass = "hep-th",
    doi = "10.1088/1751-8121/ac6404",
    journal = "J. Phys. A",
    volume = "55",
    number = "22",
    pages = "224005",
    year = "2022"
}

@article{henriques2024completewcategories,
      title={Complete W*-categories}, 
      author={André Henriques and Nivedita and David Penneys},
      year={2024},
      eprint={2411.01678},
      archivePrefix={arXiv},
      primaryClass={math.OA},
      url={https://arxiv.org/abs/2411.01678}, 
}

@article{Marolf:2024adj,
    author = "Marolf, Donald and Zhang, Daiming",
    title = "{When left and right disagree: entropy and von Neumann algebras in quantum gravity with general AlAdS boundary conditions}",
    eprint = "2402.09691",
    archivePrefix = "arXiv",
    primaryClass = "hep-th",
    doi = "10.1007/JHEP08(2024)010",
    journal = "JHEP",
    volume = "08",
    pages = "010",
    year = "2024"
}

@article{Marolf:2020xie,
    author = "Marolf, Donald and Maxfield, Henry",
    title = "{Transcending the ensemble: baby universes, spacetime wormholes, and the order and disorder of black hole information}",
    eprint = "2002.08950",
    archivePrefix = "arXiv",
    primaryClass = "hep-th",
    doi = "10.1007/JHEP08(2020)044",
    journal = "JHEP",
    volume = "08",
    pages = "044",
    year = "2020"
}

@article{Coleman:1988cy,
    author = "Coleman, Sidney R.",
    title = "{Black holes as red herrings: Topological fluctuations and the loss of quantum coherence}",
    reportNumber = "HUTP-88/A008",
    doi = "10.1016/0550-3213(88)90110-1",
    journal = "Nucl. Phys. B",
    volume = "307",
    pages = "867--882",
    year = "1988"
}

@article{Giddings:1988cx,
    author = "Giddings, Steven B. and Strominger, Andrew",
    title = "{Loss of incoherence and determination of coupling constants in quantum gravity}",
    reportNumber = "HUTP-88/A006",
    doi = "10.1016/0550-3213(88)90109-5",
    journal = "Nucl. Phys. B",
    volume = "307",
    pages = "854--866",
    year = "1988"
}

@article{Giddings:1988wv,
    author = "Giddings, Steven B. and Strominger, Andrew",
    title = "{Baby Universes, Third Quantization and the Cosmological Constant}",
    reportNumber = "HUTP-88/A036",
    doi = "10.1016/0550-3213(89)90353-2",
    journal = "Nucl. Phys. B",
    volume = "321",
    pages = "481--508",
    year = "1989"
}

@article{Maxfield:2023mdj,
    author = "Maxfield, Henry",
    title = "{Counting states in a model of replica wormholes}",
    eprint = "2311.05703",
    archivePrefix = "arXiv",
    primaryClass = "hep-th",
    month = "11",
    year = "2023"
}

@article{Casini:2019kex,
    author = "Casini, Horacio and Huerta, Marina and Mag\'an, Javier M. and Pontello, Diego",
    title = "{Entanglement entropy and superselection sectors. Part I. Global symmetries}",
    eprint = "1905.10487",
    archivePrefix = "arXiv",
    primaryClass = "hep-th",
    doi = "10.1007/JHEP02(2020)014",
    journal = "JHEP",
    volume = "02",
    pages = "014",
    year = "2020"
}

@article{Casini:2020rgj,
    author = "Casini, Horacio and Huerta, Marina and Magan, Javier M. and Pontello, Diego",
    title = "{Entropic order parameters for the phases of QFT}",
    eprint = "2008.11748",
    archivePrefix = "arXiv",
    primaryClass = "hep-th",
    doi = "10.1007/JHEP04(2021)277",
    journal = "JHEP",
    volume = "04",
    pages = "277",
    year = "2021"
}

@article{Casini:2021zgr,
    author = "Casini, Horacio and Magan, Javier M.",
    title = "{On completeness and generalized symmetries in quantum field theory}",
    eprint = "2110.11358",
    archivePrefix = "arXiv",
    primaryClass = "hep-th",
    doi = "10.1142/S0217732321300251",
    journal = "Mod. Phys. Lett. A",
    volume = "36",
    number = "36",
    pages = "2130025",
    year = "2021"
}

@article{Benedetti:2022zbb,
    author = "Benedetti, Valentin and Casini, Horacio and Magan, Javier M.",
    title = "{Generalized symmetries and Noether{\textquoteright}s theorem in QFT}",
    eprint = "2205.03412",
    archivePrefix = "arXiv",
    primaryClass = "hep-th",
    doi = "10.1007/JHEP08(2022)304",
    journal = "JHEP",
    volume = "08",
    pages = "304",
    year = "2022"
}

@article{Benedetti:2024dku,
    author = "Benedetti, Valentin and Casini, Horacio and Kawahigashi, Yasuyuki and Longo, Roberto and Magan, Javier M.",
    title = "{Modular invariance as completeness}",
    eprint = "2408.04011",
    archivePrefix = "arXiv",
    primaryClass = "hep-th",
    doi = "10.1103/PhysRevD.110.125004",
    journal = "Phys. Rev. D",
    volume = "110",
    number = "12",
    pages = "125004",
    year = "2024"
}

@article{Casini:2025lfn,
    author = "Casini, Horacio and Magan, Javier M.",
    title = "{A generalization of the DHR theorem for higher form symmetries}",
    eprint = "2511.21810",
    archivePrefix = "arXiv",
    primaryClass = "hep-th",
    month = "11",
    year = "2025"
}

@article{Shao:2025mfj,
    author = "Shao, Shu-Heng and Sorce, Jonathan and Srivastava, Manu",
    title = "{Additivity, Haag duality, and non-invertible symmetries}",
    eprint = "2503.20863",
    archivePrefix = "arXiv",
    primaryClass = "hep-th",
    reportNumber = "MIT-CTP/5853, YITP-SB-2025-06",
    doi = "10.1007/JHEP08(2025)009",
    journal = "JHEP",
    volume = "08",
    pages = "009",
    year = "2025"
}

@article{Harlow:2025cqc,
    author = "Harlow, Daniel and Shao, Shu-Heng and Sorce, Jonathan and Srivastava, Manu",
    title = "{Disjoint additivity and local quantum physics}",
    eprint = "2509.03589",
    archivePrefix = "arXiv",
    primaryClass = "hep-th",
    reportNumber = "MIT-CTP/5901",
    doi = "10.1007/JHEP06(2026)025",
    journal = "JHEP",
    volume = "06",
    pages = "025",
    year = "2026"
}

@article{McNamara:2020uza,
    author = "McNamara, Jacob and Vafa, Cumrun",
    title = "{Baby Universes, Holography, and the Swampland}",
    eprint = "2004.06738",
    archivePrefix = "arXiv",
    primaryClass = "hep-th",
    month = "4",
    year = "2020"
}

@article{McNamara:2019rup,
    author = "McNamara, Jacob and Vafa, Cumrun",
    title = "{Cobordism Classes and the Swampland}",
    eprint = "1909.10355",
    archivePrefix = "arXiv",
    primaryClass = "hep-th",
    month = "9",
    year = "2019"
}

@article{Harlow:2015lma,
    author = "Harlow, Daniel",
    title = "{Wormholes, Emergent Gauge Fields, and the Weak Gravity Conjecture}",
    eprint = "1510.07911",
    archivePrefix = "arXiv",
    primaryClass = "hep-th",
    doi = "10.1007/JHEP01(2016)122",
    journal = "JHEP",
    volume = "01",
    pages = "122",
    year = "2016"
}

@article{Harlow:2018tqv,
    author = "Harlow, Daniel and Jafferis, Daniel",
    title = "{The Factorization Problem in Jackiw-Teitelboim Gravity}",
    eprint = "1804.01081",
    archivePrefix = "arXiv",
    primaryClass = "hep-th",
    doi = "10.1007/JHEP02(2020)177",
    journal = "JHEP",
    volume = "02",
    pages = "177",
    year = "2020"
}

@article{DoplicherRoberts89,
    author = "Doplicher, S. and Roberts, J.E.",
    title = "{A new duality theory for compact groups}",
    doi = "10.1007/BF01388849",
    journal = "Inventiones mathematicae",
    volume = "98",
    pages = "157–218",
    year = "1989"
}

@article{Saad:2019lba,
    author = "Saad, Phil and Shenker, Stephen H. and Stanford, Douglas",
    title = "{JT gravity as a matrix integral}",
    eprint = "1903.11115",
    archivePrefix = "arXiv",
    primaryClass = "hep-th",
    month = "3",
    year = "2019"
}

@article{Kolchmeyer:2023gwa,
    author = "Kolchmeyer, David K.",
    title = "{von Neumann algebras in JT gravity}",
    eprint = "2303.04701",
    archivePrefix = "arXiv",
    primaryClass = "hep-th",
    doi = "10.1007/JHEP06(2023)067",
    journal = "JHEP",
    volume = "06",
    pages = "067",
    year = "2023"
}

@article{Chua:2023ios,
    author = "Chua, Wan Zhen and Jiang, Yikun",
    title = "{Hartle-Hawking state and its factorization in 3d gravity}",
    eprint = "2309.05126",
    archivePrefix = "arXiv",
    primaryClass = "hep-th",
    doi = "10.1007/JHEP03(2024)135",
    journal = "JHEP",
    volume = "03",
    pages = "135",
    year = "2024"
}

@article{Boruch:2024kvv,
    author = "Boruch, Jan and Iliesiu, Luca V. and Lin, Guanda and Yan, Cynthia",
    title = "{How the Hilbert space of two-sided black holes factorises}",
    eprint = "2406.04396",
    archivePrefix = "arXiv",
    primaryClass = "hep-th",
    doi = "10.1007/JHEP06(2025)092",
    journal = "JHEP",
    volume = "06",
    pages = "092",
    year = "2025"
}

@article{Banerjee:2024fmh,
    author = "Banerjee, Souvik and Erdmenger, Johanna and Karl, Jonathan",
    title = "{Nonlocality induces isometry and factorisation in holography}",
    eprint = "2411.09616",
    archivePrefix = "arXiv",
    primaryClass = "hep-th",
    doi = "10.1103/tm45-3lz8",
    journal = "Phys. Rev. D",
    volume = "112",
    number = "2",
    pages = "L021902",
    year = "2025"
}

@article{Li:2024nft,
    author = "Li, Pan",
    title = "{Notes on the factorisation of the Hilbert space for two-sided black holes in higher dimensions}",
    eprint = "2410.23886",
    archivePrefix = "arXiv",
    primaryClass = "hep-th",
    doi = "10.1007/JHEP02(2025)060",
    journal = "JHEP",
    volume = "02",
    pages = "060",
    year = "2025"
}

@article{Balasubramanian:2024yxk,
    author = "Balasubramanian, Vijay and Craps, Ben and Hernandez, Juan and Khramtsov, Mikhail and Knysh, Maria",
    title = "{Factorization of the Hilbert space of eternal black holes in general relativity}",
    eprint = "2410.00091",
    archivePrefix = "arXiv",
    primaryClass = "hep-th",
    doi = "10.1007/JHEP01(2025)046",
    journal = "JHEP",
    volume = "01",
    pages = "046",
    year = "2025"
}

@article{Gesteau:2020wrk,
    author = "Gesteau, Elliott and Kang, Monica Jinwoo",
    title = "{Holographic baby universes: an observable story}",
    eprint = "2006.14620",
    archivePrefix = "arXiv",
    primaryClass = "hep-th",
    reportNumber = "CALT-TH-2020-029",
    month = "6",
    year = "2020"
}

@article{VanRaamsdonk:2010pw,
    author = "Van Raamsdonk, Mark",
    title = "{Building up spacetime with quantum entanglement}",
    eprint = "1005.3035",
    archivePrefix = "arXiv",
    primaryClass = "hep-th",
    doi = "10.1142/S0218271810018529",
    journal = "Gen. Rel. Grav.",
    volume = "42",
    pages = "2323--2329",
    year = "2010"
}

@article{Balasubramanian:2025zey,
    author = "Balasubramanian, Vijay and Yildirim, Tom",
    title = "{The nonperturbative Hilbert space of quantum gravity with one boundary}",
    eprint = "2506.04319",
    archivePrefix = "arXiv",
    primaryClass = "hep-th",
    doi = "10.1007/JHEP03(2026)040",
    journal = "JHEP",
    volume = "03",
    pages = "040",
    year = "2026"
}

@article{Maldacena:2013xja,
    author = "Maldacena, Juan and Susskind, Leonard",
    title = "{Cool horizons for entangled black holes}",
    eprint = "1306.0533",
    archivePrefix = "arXiv",
    primaryClass = "hep-th",
    doi = "10.1002/prop.201300020",
    journal = "Fortsch. Phys.",
    volume = "61",
    pages = "781--811",
    year = "2013"
}

@article{Gibbons:1978ac,
    author = "Gibbons, G. W. and Hawking, S. W. and Perry, M. J.",
    title = "{Path Integrals and the Indefiniteness of the Gravitational Action}",
    reportNumber = "PRINT-78-0375 (CAMBRIDGE)",
    doi = "10.1016/0550-3213(78)90161-X",
    journal = "Nucl. Phys. B",
    volume = "138",
    pages = "141--150",
    year = "1978"
}

@article{Gibbons:1976ue,
    author = "Gibbons, G. W. and Hawking, S. W.",
    title = "{Action Integrals and Partition Functions in Quantum Gravity}",
    reportNumber = "PRINT-76-0995 (CAMBRIDGE)",
    doi = "10.1103/PhysRevD.15.2752",
    journal = "Phys. Rev. D",
    volume = "15",
    pages = "2752--2756",
    year = "1977"
}

@article{Page:1993wv,
    author = "Page, Don N.",
    title = "{Information in black hole radiation}",
    eprint = "hep-th/9306083",
    archivePrefix = "arXiv",
    reportNumber = "ALBERTA-THY-24-93",
    doi = "10.1103/PhysRevLett.71.3743",
    journal = "Phys. Rev. Lett.",
    volume = "71",
    pages = "3743--3746",
    year = "1993"
}

@article{Penington:2019npb,
    author = "Penington, Geoffrey",
    title = "{Entanglement Wedge Reconstruction and the Information Paradox}",
    eprint = "1905.08255",
    archivePrefix = "arXiv",
    primaryClass = "hep-th",
    doi = "10.1007/JHEP09(2020)002",
    journal = "JHEP",
    volume = "09",
    pages = "002",
    year = "2020"
}

@article{Gaberdiel:2009rd,
    author = "Gaberdiel, Matthias R. and Volpato, Roberto",
    title = "{Higher genus partition functions of meromorphic conformal field theories}",
    eprint = "0903.4107",
    archivePrefix = "arXiv",
    primaryClass = "hep-th",
    doi = "10.1088/1126-6708/2009/06/048",
    journal = "JHEP",
    volume = "06",
    pages = "048",
    year = "2009"
}

@article{Gaberdiel:2010jf,
    author = "Gaberdiel, Matthias R. and Keller, Christoph A. and Volpato, Roberto",
    title = "{Genus Two Partition Functions of Chiral Conformal Field Theories}",
    eprint = "1002.3371",
    archivePrefix = "arXiv",
    primaryClass = "hep-th",
    doi = "10.4310/CNTP.2010.v4.n2.a2",
    journal = "Commun. Num. Theor. Phys.",
    volume = "4",
    pages = "295--364",
    year = "2010"
}

@article{muger2007abstract,
  title={Abstract duality theory for symmetric tensor *-categories},
  author={M{\"u}ger, Michael},
  journal={Philosophy of Physics},
  pages={865--922},
  year={2007},
  publisher={North-Holland}
}

@article{Codogni:2019sub,
    author = "Codogni, Giulio",
    title = {{Vertex algebras and Teichm{\"u}ller modular forms}},
    eprint = "1901.03079",
    archivePrefix = "arXiv",
    primaryClass = "math.AG",
    month = "1",
    year = "2019"
}

@article{Carpi:2026bfh,
    author = "Carpi, Sebastiano and Codogni, Giulio",
    title = {{Vertex operator algebras, partition functions and Teichm{\"u}ller modular forms}},
    eprint = "2605.26972",
    archivePrefix = "arXiv",
    primaryClass = "math.QA",
    month = "5",
    year = "2026"
}

@article{Balasubramanian:2025jeu,
    author = "Balasubramanian, Vijay and Yildirim, Tom",
    title = "{A Nonperturbative Toolkit for Quantum Gravity}",
    eprint = "2504.16986",
    archivePrefix = "arXiv",
    primaryClass = "hep-th",
    month = "4",
    year = "2025"
}

@article{Almheiri:2019psf,
    author = "Almheiri, Ahmed and Engelhardt, Netta and Marolf, Donald and Maxfield, Henry",
    title = "{The entropy of bulk quantum fields and the entanglement wedge of an evaporating black hole}",
    eprint = "1905.08762",
    archivePrefix = "arXiv",
    primaryClass = "hep-th",
    doi = "10.1007/JHEP12(2019)063",
    journal = "JHEP",
    volume = "12",
    pages = "063",
    year = "2019"
}

@article{Penington:2019kki,
    author = "Penington, Geoff and Shenker, Stephen H. and Stanford, Douglas and Yang, Zhenbin",
    title = "{Replica wormholes and the black hole interior}",
    eprint = "1911.11977",
    archivePrefix = "arXiv",
    primaryClass = "hep-th",
    doi = "10.1007/JHEP03(2022)205",
    journal = "JHEP",
    volume = "03",
    pages = "205",
    year = "2022"
}

@article{Almheiri:2019qdq,
    author = "Almheiri, Ahmed and Hartman, Thomas and Maldacena, Juan and Shaghoulian, Edgar and Tajdini, Amirhossein",
    title = "{Replica Wormholes and the Entropy of Hawking Radiation}",
    eprint = "1911.12333",
    archivePrefix = "arXiv",
    primaryClass = "hep-th",
    doi = "10.1007/JHEP05(2020)013",
    journal = "JHEP",
    volume = "05",
    pages = "013",
    year = "2020"
}

@article{Balasubramanian:2020jhl,
    author = "Balasubramanian, Vijay and Kar, Arjun and Ross, Simon F. and Ugajin, Tomonori",
    title = "{Spin structures and baby universes}",
    eprint = "2007.04333",
    archivePrefix = "arXiv",
    primaryClass = "hep-th",
    doi = "10.1007/JHEP09(2020)192",
    journal = "JHEP",
    volume = "09",
    pages = "192",
    year = "2020"
}

@article{Maldacena:2001kr,
    author = "Maldacena, Juan Martin",
    title = "{Eternal black holes in anti-de Sitter}",
    eprint = "hep-th/0106112",
    archivePrefix = "arXiv",
    reportNumber = "NSF-ITP-01-59",
    doi = "10.1088/1126-6708/2003/04/021",
    journal = "JHEP",
    volume = "04",
    pages = "021",
    year = "2003"
}

@article{Polchinski:2003bq,
    author = "Polchinski, Joseph",
    editor = "Baer, H. and Belyaev, A.",
    title = "{Monopoles, duality, and string theory}",
    eprint = "hep-th/0304042",
    archivePrefix = "arXiv",
    doi = "10.1142/S0217751X0401866X",
    journal = "Int. J. Mod. Phys. A",
    volume = "19S1",
    pages = "145--156",
    year = "2004"
}

@article{Banks:2010zn,
    author = "Banks, Tom and Seiberg, Nathan",
    title = "{Symmetries and Strings in Field Theory and Gravity}",
    eprint = "1011.5120",
    archivePrefix = "arXiv",
    primaryClass = "hep-th",
    doi = "10.1103/PhysRevD.83.084019",
    journal = "Phys. Rev. D",
    volume = "83",
    pages = "084019",
    year = "2011"
}

@article{stolz2004elliptic,
  title={What is an elliptic object?},
  author={Stolz, Stephan and Teichner, Peter},
  journal={London Mathematical Society Lecture Note Series},
  volume={308},
  pages={247},
  year={2004},
  publisher={Cambridge University Press}
}

@misc{ayala2009geometriccobordismcategories,
      title={Geometric Cobordism Categories}, 
      author={David Ayala},
      year={2009},
      eprint={0811.2280},
      archivePrefix={arXiv},
      primaryClass={math.AT},
      url={https://arxiv.org/abs/0811.2280}, 
}

@article{lurie2008classification,
  title={On the classification of topological field theories},
  author={Lurie, Jacob},
  journal={Current developments in mathematics},
  volume={2008},
  number={1},
  pages={129--280},
  year={2008},
  publisher={International Press of Boston}
}

@article{calaque2019note,
  title={A note on the $(\infty, n)$--category of cobordisms},
  author={Calaque, Damien and Scheimbauer, Claudia},
  journal={Algebraic \& Geometric Topology},
  volume={19},
  number={2},
  pages={533--655},
  year={2019},
  publisher={Mathematical Sciences Publishers}
}

@article{stolz2011supersymmetric,
  title={Supersymmetric field theories and generalized cohomology},
  author={Stolz, Stephan and Teichner, Peter},
  journal={arXiv preprint arXiv:1108.0189},
  year={2011}
}

@book{freed2019lectures,
  title={Lectures on field theory and topology},
  author={Freed, Daniel S},
  volume={133},
  year={2019},
  publisher={American Mathematical Soc.}
}

@article{Penington:2023dql,
    author = "Penington, Geoff and Witten, Edward",
    title = "{Algebras and States in JT Gravity}",
    eprint = "2301.07257",
    archivePrefix = "arXiv",
    primaryClass = "hep-th",
    month = "1",
    year = "2023"
}

@article{Ryu:2006bv,
    author = "Ryu, Shinsei and Takayanagi, Tadashi",
    title = "{Holographic derivation of entanglement entropy from AdS/CFT}",
    eprint = "hep-th/0603001",
    archivePrefix = "arXiv",
    reportNumber = "NSF-KITP-06-11, NSF-KITP-06-11",
    doi = "10.1103/PhysRevLett.96.181602",
    journal = "Phys. Rev. Lett.",
    volume = "96",
    pages = "181602",
    year = "2006"
}

@article{Kitaev:2000nmw,
    author = "Kitaev, Alexei",
    title = "{Unpaired Majorana fermions in quantum wires}",
    eprint = "cond-mat/0010440",
    archivePrefix = "arXiv",
    doi = "10.1070/1063-7869/44/10S/S29",
    journal = "Phys. Usp.",
    volume = "44",
    number = "10S",
    pages = "131--136",
    year = "2001"
}

@inproceedings{haagerup1979lp,
  title={Lp-spaces associated with an arbitrary von Neumann algebra},
  author={Haagerup, Uffe and others},
  booktitle={Algebres d’op{\'e}rateurs et leurs applications en physique math{\'e}matique (Proc. Colloq., Marseille, 1977)},
  volume={274},
  pages={175--184},
  year={1979}
}

@article{Friedan:2023vxx,
    author = "Friedan, Daniel",
    title = "{Global structure of euclidean quantum gravity}",
    eprint = "2306.00019",
    archivePrefix = "arXiv",
    primaryClass = "hep-th",
    month = "5",
    year = "2023"
}

@article{Gaiotto:2019xmp,
    author = "Gaiotto, Davide and Johnson-Freyd, Theo",
    title = "{Condensations in higher categories}",
    eprint = "1905.09566",
    archivePrefix = "arXiv",
    primaryClass = "math.CT",
    month = "5",
    year = "2019"
}

@article{Stephens:1993an,
    author = "Stephens, Christopher R. and 't Hooft, Gerard and Whiting, Bernard F.",
    title = "{Black hole evaporation without information loss}",
    eprint = "gr-qc/9310006",
    archivePrefix = "arXiv",
    reportNumber = "THU-93-20, UF-RAP-93-11",
    doi = "10.1088/0264-9381/11/3/014",
    journal = "Class. Quant. Grav.",
    volume = "11",
    pages = "621--648",
    year = "1994"
}

@article{Johnson-Freyd:2015fua,
    author = "Johnson-Freyd, Theo",
    title = "{Spin, statistics, orientations, unitarity}",
    eprint = "1507.06297",
    archivePrefix = "arXiv",
    primaryClass = "math-ph",
    doi = "10.2140/agt.2017.17.917",
    journal = "Algebr. Geom. Topol.",
    volume = "17",
    number = "2",
    pages = "917--956",
    year = "2017"
}

@article{Cheung:2024ypq,
    author = "Cheung, Clifford and Derda, Maria and Kim, Joon-Hwi and Nevoa, Vinicius and Rothstein, Ira and Shah, Nabha",
    title = "{Generalized symmetry in dynamical gravity}",
    eprint = "2403.01837",
    archivePrefix = "arXiv",
    primaryClass = "hep-th",
    reportNumber = "CALT-TH 2024-009",
    doi = "10.1007/JHEP10(2024)007",
    journal = "JHEP",
    volume = "10",
    pages = "007",
    year = "2024"
}

@article{deligne2007categorie,
  title={La cat{\'e}gorie des repr{\'e}sentations du groupe sym{\'e}trique St, lorsque t n’est pas un entier naturel},
  author={Deligne, Pierre},
  journal={Algebraic groups and homogeneous spaces},
  volume={19},
  pages={209--273},
  year={2007},
  publisher={Tata Institute of Fundamental Research Mumbai}
}

@article{Heckman:2024obe,
    author = "Heckman, Jonathan J. and McNamara, Jacob and Montero, Miguel and Sharon, Adar and Vafa, Cumrun and Valenzuela, Irene",
    title = "{Fate of stringy noninvertible symmetries}",
    eprint = "2402.00118",
    archivePrefix = "arXiv",
    primaryClass = "hep-th",
    reportNumber = "CERN-TH-2024-019",
    doi = "10.1103/PhysRevD.110.106001",
    journal = "Phys. Rev. D",
    volume = "110",
    number = "10",
    pages = "106001",
    year = "2024"
}

@article{Hubeny:2007xt,
    author = "Hubeny, Veronika E. and Rangamani, Mukund and Takayanagi, Tadashi",
    title = "{A Covariant holographic entanglement entropy proposal}",
    eprint = "0705.0016",
    archivePrefix = "arXiv",
    primaryClass = "hep-th",
    reportNumber = "DCPT-07-13, KUNS-2069",
    doi = "10.1088/1126-6708/2007/07/062",
    journal = "JHEP",
    volume = "07",
    pages = "062",
    year = "2007"
}

@article{Faulkner:2013ana,
    author = "Faulkner, Thomas and Lewkowycz, Aitor and Maldacena, Juan",
    title = "{Quantum corrections to holographic entanglement entropy}",
    eprint = "1307.2892",
    archivePrefix = "arXiv",
    primaryClass = "hep-th",
    doi = "10.1007/JHEP11(2013)074",
    journal = "JHEP",
    volume = "11",
    pages = "074",
    year = "2013"
}

@article{Engelhardt:2014gca,
    author = "Engelhardt, Netta and Wall, Aron C.",
    title = "{Quantum Extremal Surfaces: Holographic Entanglement Entropy beyond the Classical Regime}",
    eprint = "1408.3203",
    archivePrefix = "arXiv",
    primaryClass = "hep-th",
    doi = "10.1007/JHEP01(2015)073",
    journal = "JHEP",
    volume = "01",
    pages = "073",
    year = "2015"
}

@article{Harlow:2020bee,
    author = "Harlow, Daniel and Shaghoulian, Edgar",
    title = "{Global symmetry, Euclidean gravity, and the black hole information problem}",
    eprint = "2010.10539",
    archivePrefix = "arXiv",
    primaryClass = "hep-th",
    doi = "10.1007/JHEP04(2021)175",
    journal = "JHEP",
    volume = "04",
    pages = "175",
    year = "2021"
}

@article{tHooft:1993dmi,
    author = "'t Hooft, Gerard",
    title = "{Dimensional reduction in quantum gravity}",
    eprint = "gr-qc/9310026",
    archivePrefix = "arXiv",
    reportNumber = "THU-93-26",
    journal = "Conf. Proc. C",
    volume = "930308",
    pages = "284--296",
    year = "1993"
}

@article{Susskind:1994vu,
    author = "Susskind, Leonard",
    title = "{The World as a hologram}",
    eprint = "hep-th/9409089",
    archivePrefix = "arXiv",
    reportNumber = "SU-ITP-94-33",
    doi = "10.1063/1.531249",
    journal = "J. Math. Phys.",
    volume = "36",
    pages = "6377--6396",
    year = "1995"
}

@article{Bousso:2002ju,
    author = "Bousso, Raphael",
    title = "{The Holographic principle}",
    eprint = "hep-th/0203101",
    archivePrefix = "arXiv",
    reportNumber = "NSF-ITP-02-17, NSF-ITP-02-17",
    doi = "10.1103/RevModPhys.74.825",
    journal = "Rev. Mod. Phys.",
    volume = "74",
    pages = "825--874",
    year = "2002"
}

@article{Hawking:1987mz,
author = "Hawking, S. W.",
title = "{Quantum coherence down the wormhole}",
doi = "10.1016/0370-2693(87)90028-1",
journal = "Phys. Lett. B",
volume = "195",
pages = "337--343",
year = "1987"
}

@article{Hawking:1988ae,
author = "Hawking, S. W.",
title = "{Wormholes in space-time}",
doi = "10.1103/PhysRevD.37.904",
journal = "Phys. Rev. D",
volume = "37",
pages = "904--910",
year = "1988"
}

@article{Gubser:1998bc,
    author = "Gubser, S. S. and Klebanov, Igor R. and Polyakov, Alexander M.",
    title = "{Gauge theory correlators from noncritical string theory}",
    eprint = "hep-th/9802109",
    archivePrefix = "arXiv",
    reportNumber = "PUPT-1767",
    doi = "10.1016/S0370-2693(98)00377-3",
    journal = "Phys. Lett. B",
    volume = "428",
    pages = "105--114",
    year = "1998"
}

@article{Witten:1998qj,
    author = "Witten, Edward",
    title = "{Anti de Sitter space and holography}",
    eprint = "hep-th/9802150",
    archivePrefix = "arXiv",
    reportNumber = "IASSNS-HEP-98-15",
    doi = "10.4310/ATMP.1998.v2.n2.a2",
    journal = "Adv. Theor. Math. Phys.",
    volume = "2",
    pages = "253--291",
    year = "1998"
}

@article{Einstein:1935rr,
    author = "Einstein, Albert and Podolsky, Boris and Rosen, Nathan",
    title = "{Can quantum mechanical description of physical reality be considered complete?}",
    doi = "10.1103/PhysRev.47.777",
    journal = "Phys. Rev.",
    volume = "47",
    pages = "777--780",
    year = "1935"
}

@article{Blommaert:2021fob,
    author = "Blommaert, Andreas and Iliesiu, Luca V. and Kruthoff, Jorrit",
    title = "{Gravity factorized}",
    eprint = "2111.07863",
    archivePrefix = "arXiv",
    primaryClass = "hep-th",
    doi = "10.1007/JHEP09(2022)080",
    journal = "JHEP",
    volume = "09",
    pages = "080",
    year = "2022"
}

@article{Saad:2021uzi,
    author = "Saad, Phil and Shenker, Stephen H. and Yao, Shunyu",
    title = "{Comments on wormholes and factorization}",
    eprint = "2107.13130",
    archivePrefix = "arXiv",
    primaryClass = "hep-th",
    doi = "10.1007/JHEP10(2024)076",
    journal = "JHEP",
    volume = "10",
    pages = "076",
    year = "2024"
}

@article{Saad:2021rcu,
    author = "Saad, Phil and Shenker, Stephen H. and Stanford, Douglas and Yao, Shunyu",
    title = "{Wormholes without averaging}",
    eprint = "2103.16754",
    archivePrefix = "arXiv",
    primaryClass = "hep-th",
    doi = "10.1007/JHEP09(2024)133",
    journal = "JHEP",
    volume = "09",
    pages = "133",
    year = "2024"
}

@article{Gesteau:2024gzf,
    author = "Gesteau, Elliott and Marcolli, Matilde and McNamara, Jacob",
    title = "{Wormhole Renormalization: The gravitational path integral, holography, and a gauge group for topology change}",
    eprint = "2407.20324",
    archivePrefix = "arXiv",
    primaryClass = "hep-th",
    month = "7",
    year = "2024"
}

@article{Zhao:2026mpl,
    author = "Zhao, Ying",
    title = "{''It from Bit'': The Hartle-Hawking state and quantum mechanics for de Sitter observers}",
    eprint = "2602.05939",
    archivePrefix = "arXiv",
    primaryClass = "hep-th",
    reportNumber = "MIT-CTP/6000",
    month = "2",
    year = "2026"
}

@article{Almheiri:2019hni,
    author = "Almheiri, Ahmed and Mahajan, Raghu and Maldacena, Juan and Zhao, Ying",
    title = "{The Page curve of Hawking radiation from semiclassical geometry}",
    eprint = "1908.10996",
    archivePrefix = "arXiv",
    primaryClass = "hep-th",
    doi = "10.1007/JHEP03(2020)149",
    journal = "JHEP",
    volume = "03",
    pages = "149",
    year = "2020"
}

@article{Dong:2024tjx,
    author = "Dong, Xi and Kolanowski, Maciej and Liu, Xiaoyi and Marolf, Donald and Wang, Zhencheng",
    title = "{Null states and time evolution in a toy model of black hole dynamics}",
    eprint = "2405.04571",
    archivePrefix = "arXiv",
    primaryClass = "hep-th",
    doi = "10.1007/JHEP08(2024)199",
    journal = "JHEP",
    volume = "08",
    pages = "199",
    year = "2024"
}

@misc{McNamara:2024kitp,
  author       = {McNamara, Jacob},
  title        = {{Cobordism, ER = EPR, and the Sum Over Topologies}},
  howpublished         = {Talk at Spacetime and String Theory conference, Kavli Institute for Theoretical Physics.
                  Available at \url{https://online.kitp.ucsb.edu/online/strings-c24/mcnamara/}},
  year         = {2024}
  }

@article{Friedan:1986ua,
    author = "Friedan, Daniel and Shenker, Stephen H.",
    title = "{The Analytic Geometry of Two-Dimensional Conformal Field Theory}",
    reportNumber = "EFI-86-18A-CHICAGO",
    doi = "10.1016/0550-3213(87)90418-4",
    journal = "Nucl. Phys. B",
    volume = "281",
    pages = "509--545",
    year = "1987"
}

@article{Blanchet:1995TQFT,
    author = "Blanchet, Christian and Habegger, Nathan and Masbaum, Gregor and Vogel, Pierre",
    title = "{Topological quantum field theories derived from the Kauffman bracket}",
    journal = "Topology",
    volume = "34",
    number = "4",
    pages = "883--927",
    year = "1995",
    doi = "10.1016/0040-9383(94)00051-4"
}

@phdthesis{McNamara:2022xkg,
    author = "McNamara, Jacob Miles",
    title = "{The Kinematics of Quantum Gravity}",
    school = "Harvard U. (main)",
    year = "2022"
}

@article{Gelfand:1943normed,
    author = "Gel'fand, I. M. and Naimark, M. A.",
    title = "{On the imbedding of normed rings into the ring of operators in Hilbert space}",
    journal = "Mat. Sb.",
    volume = "12",
    number = "2",
    pages = "197--217",
    year = "1943"
}

@article{Segal:1947operator,
    author = "Segal, I. E.",
    title = "{Irreducible representations of operator algebras}",
    journal = "Bull. Amer. Math. Soc.",
    volume = "53",
    number = "2",
    pages = "73--88",
    year = "1947",
    doi = "10.1090/S0002-9904-1947-08742-5"
}

@article{Jafferis:2017tiu,
    author = "Jafferis, Daniel Louis",
    title = "{Bulk reconstruction and the Hartle-Hawking wavefunction}",
    eprint = "1703.01519",
    archivePrefix = "arXiv",
    primaryClass = "hep-th",
    month = "3",
    year = "2017"
}

@book{TakesakiI,
  author    = {Takesaki, Masamichi},
  title     = {Theory of Operator Algebras I},
  publisher = {Springer},
  address   = {New York},
  year      = {1979},
  doi       = {10.1007/978-1-4612-6188-9}
}

@article{Longo:1997yc,
    author = "Longo, Roberto and Roberts, John E.",
    title = "{A theory of dimension}",
    eprint = "funct-an/9604008",
    archivePrefix = "arXiv",
    journal = "K Theory",
    volume = "11",
    number = "2",
    pages = "103--159",
    year = "1997",
    doi = "10.1023/A:1007714415067"
}

@article{Torres:2025jcb,
    author = "Torres, Ethan and Yu, Xingyang",
    title = "{Symmetry topological field theory entanglement and holographic nonfactorization}",
    eprint = "2510.06319",
    archivePrefix = "arXiv",
    primaryClass = "hep-th",
    reportNumber = "CERN-TH-2025-194",
    doi = "10.1103/9wgn-ntnn",
    journal = "Phys. Rev. D",
    volume = "113",
    number = "10",
    pages = "106028",
    year = "2026"
}

@article{Witten:1999xp,
    author = "Witten, Edward and Yau, Shing-Tung",
    editor = "D'Hoker, Erik and Phong, Duong and Yau, Shing-Tung",
    title = "{Connectedness of the boundary in the AdS / CFT correspondence}",
    eprint = "hep-th/9910245",
    archivePrefix = "arXiv",
    doi = "10.4310/ATMP.1999.v3.n6.a1",
    journal = "Adv. Theor. Math. Phys.",
    volume = "3",
    pages = "1635--1655",
    year = "1999"
}

@book{stehouwer2024unitary,
  title={Unitary fermionic topological field theory},
  author={Stehouwer, Luuk},
  year={2024},
  publisher={Rheinische Friedrich-Wilhelms-Universitaet Bonn (Germany)}
}

@article{Wen:2019ylt,
    author = "Wen, Xueda and Wen, Xiao-Gang",
    title = "{Distinguish modular categories and 2+1D topological orders beyond modular data: Mapping class group of higher genus manifold}",
    eprint = "1908.10381",
    archivePrefix = "arXiv",
    primaryClass = "cond-mat.str-el",
    month = "8",
    year = "2019"
}

@article{Kaidi:2024cbx,
    author = "Kaidi, Justin and Tachikawa, Yuji and Yonekura, Kazuya",
    title = "{On non-supersymmetric heterotic branes}",
    eprint = "2411.04344",
    archivePrefix = "arXiv",
    primaryClass = "hep-th",
    reportNumber = "TU-1248, KYUSHU-HET-294",
    doi = "10.1007/JHEP03(2025)211",
    journal = "JHEP",
    volume = "03",
    pages = "211",
    year = "2025"
}

@article{Abdalla:2025gzn,
    author = "Abdalla, Ahmed I. and Antonini, Stefano and Iliesiu, Luca V. and Levine, Adam",
    title = "{The gravitational path integral from an observer{\textquoteright}s point of view}",
    eprint = "2501.02632",
    archivePrefix = "arXiv",
    primaryClass = "hep-th",
    doi = "10.1007/JHEP05(2025)059",
    journal = "JHEP",
    volume = "05",
    pages = "059",
    year = "2025"
}

@article{Harlow:2025pvj,
    author = "Harlow, Daniel and Usatyuk, Mykhaylo and Zhao, Ying",
    title = "{Quantum mechanics and observers for gravity in a closed universe}",
    eprint = "2501.02359",
    archivePrefix = "arXiv",
    primaryClass = "hep-th",
    reportNumber = "MIT-CTP/5824",
    doi = "10.1007/JHEP02(2026)108",
    journal = "JHEP",
    volume = "02",
    pages = "108",
    year = "2026"
}

@article{Usatyuk:2024isz,
    author = "Usatyuk, Mykhaylo and Zhao, Ying",
    title = "{Closed universes, factorization, and ensemble averaging}",
    eprint = "2403.13047",
    archivePrefix = "arXiv",
    primaryClass = "hep-th",
    doi = "10.1007/JHEP02(2025)052",
    journal = "JHEP",
    volume = "02",
    pages = "052",
    year = "2025"
}

@article{Usatyuk:2024mzs,
    author = "Usatyuk, Mykhaylo and Wang, Zi-Yue and Zhao, Ying",
    title = "{Closed universes in two dimensional gravity}",
    eprint = "2402.00098",
    archivePrefix = "arXiv",
    primaryClass = "hep-th",
    doi = "10.21468/SciPostPhys.17.2.051",
    journal = "SciPost Phys.",
    volume = "17",
    number = "2",
    pages = "051",
    year = "2024"
}

@article{Giddings:1987cg,
    author = "Giddings, Steven B. and Strominger, Andrew",
    title = "{Axion Induced Topology Change in Quantum Gravity and String Theory}",
    reportNumber = "HUTP-87-A067",
    doi = "10.1016/0550-3213(88)90446-4",
    journal = "Nucl. Phys. B",
    volume = "306",
    pages = "890--907",
    year = "1988"
}

@article{freedman2005universal,
  title={Universal manifold pairings and positivity},
  author={Freedman, Michael H and Kitaev, Alexei and Nayak, Chetan and Slingerland, Johannes K and Walker, Kevin and Wang, Zhenghan},
  journal={Geometry \& Topology},
  volume={9},
  number={4},
  pages={2303--2317},
  year={2005},
  publisher={Mathematical Sciences Publishers}
}

@article{Calegari:2008cw,
    author = "Calegari, Danny and Freedman, Michael and Walker, Kevin",
    title = "{Positivity of the universal pairing in 3 dimensions}",
    eprint = "0802.3208",
    archivePrefix = "arXiv",
    primaryClass = "math.GT",
    month = "2",
    year = "2008"
}

@article{Brennan:2024fgj,
    author = "Brennan, T. Daniel and Sun, Zhengdi",
    title = "{A SymTFT for continuous symmetries}",
    eprint = "2401.06128",
    archivePrefix = "arXiv",
    primaryClass = "hep-th",
    doi = "10.1007/JHEP12(2024)100",
    journal = "JHEP",
    volume = "12",
    pages = "100",
    year = "2024"
}

@article{Bonetti:2024cjk,
    author = "Bonetti, Federico and Del Zotto, Michele and Minasian, Ruben",
    title = "{SymTFTs for continuous non-Abelian symmetries}",
    eprint = "2402.12347",
    archivePrefix = "arXiv",
    primaryClass = "hep-th",
    doi = "10.1016/j.physletb.2025.140010",
    journal = "Phys. Lett. B",
    volume = "871",
    pages = "140010",
    year = "2025"
}

@article{Eberhardt:2021jvj,
    author = "Eberhardt, Lorenz",
    title = "{Summing over Geometries in String Theory}",
    eprint = "2102.12355",
    archivePrefix = "arXiv",
    primaryClass = "hep-th",
    doi = "10.1007/JHEP05(2021)233",
    journal = "JHEP",
    volume = "05",
    pages = "233",
    year = "2021"
}

@article{Cordova:2022rer,
    author = "Cordova, Clay and Ohmori, Kantaro and Rudelius, Tom",
    title = "{Generalized symmetry breaking scales and weak gravity conjectures}",
    eprint = "2202.05866",
    archivePrefix = "arXiv",
    primaryClass = "hep-th",
    doi = "10.1007/JHEP11(2022)154",
    journal = "JHEP",
    volume = "11",
    pages = "154",
    year = "2022"
}

@article{Harlow:2018tng,
    author = "Harlow, Daniel and Ooguri, Hirosi",
    title = "{Symmetries in quantum field theory and quantum gravity}",
    eprint = "1810.05338",
    archivePrefix = "arXiv",
    primaryClass = "hep-th",
    doi = "10.1007/s00220-021-04040-y",
    journal = "Commun. Math. Phys.",
    volume = "383",
    number = "3",
    pages = "1669--1804",
    year = "2021"
}

@article{Maldacena:2004rf,
    author = "Maldacena, Juan Martin and Maoz, Liat",
    title = "{Wormholes in AdS}",
    eprint = "hep-th/0401024",
    archivePrefix = "arXiv",
    reportNumber = "ITFA-2003-57",
    doi = "10.1088/1126-6708/2004/02/053",
    journal = "JHEP",
    volume = "02",
    pages = "053",
    year = "2004"
}

@misc{M-JF-R_wip,
    author = "Johnson-Freyd, Theo and McNamara, Jacob and Reutter, David",
    howpublished = "Work in progress"
}

@article{Hartle:1983ai,
    author = "Hartle, J. B. and Hawking, S. W.",
    editor = "Fang, Li-Zhi and Ruffini, R.",
    title = "{Wave Function of the Universe}",
    reportNumber = "PRINT-83-0937 (CAMBRIDGE)",
    doi = "10.1103/PhysRevD.28.2960",
    journal = "Phys. Rev. D",
    volume = "28",
    pages = "2960--2975",
    year = "1983"
}

@article{Halliwell:1989dy,
    author = "Halliwell, Jonathan J. and Hartle, James B.",
    title = "{Integration Contours for the No Boundary Wave Function of the Universe}",
    reportNumber = "NSF-ITP-89-147",
    doi = "10.1103/PhysRevD.41.1815",
    journal = "Phys. Rev. D",
    volume = "41",
    pages = "1815",
    year = "1990"
}

@article{Witten:2021nzp,
    author = "Witten, Edward",
    title = "{A Note On Complex Spacetime Metrics}",
    eprint = "2111.06514",
    archivePrefix = "arXiv",
    primaryClass = "hep-th",
    month = "11",
    year = "2021"
}

@article{Kontsevich:2021dmb,
    author = "Kontsevich, Maxim and Segal, Graeme",
    title = "{Wick Rotation and the Positivity of Energy in Quantum Field Theory}",
    eprint = "2105.10161",
    archivePrefix = "arXiv",
    primaryClass = "hep-th",
    doi = "10.1093/qmath/haab027",
    journal = "Quart. J. Math. Oxford Ser.",
    volume = "72",
    number = "1-2",
    pages = "673--699",
    year = "2021"
}

@article{Marolf:2022ybi,
    author = "Marolf, Donald",
    title = "{Gravitational thermodynamics without the conformal factor problem: partition functions and Euclidean saddles from Lorentzian path integrals}",
    eprint = "2203.07421",
    archivePrefix = "arXiv",
    primaryClass = "hep-th",
    doi = "10.1007/JHEP07(2022)108",
    journal = "JHEP",
    volume = "07",
    pages = "108",
    year = "2022"
}

@article{Maloney:2007ud,
    author = "Maloney, Alexander and Witten, Edward",
    title = "{Quantum Gravity Partition Functions in Three Dimensions}",
    eprint = "0712.0155",
    archivePrefix = "arXiv",
    primaryClass = "hep-th",
    doi = "10.1007/JHEP02(2010)029",
    journal = "JHEP",
    volume = "02",
    pages = "029",
    year = "2010"
}

@article{Benjamin:2019stq,
    author = "Benjamin, Nathan and Ooguri, Hirosi and Shao, Shu-Heng and Wang, Yifan",
    title = "{Light-cone modular bootstrap and pure gravity}",
    eprint = "1906.04184",
    archivePrefix = "arXiv",
    primaryClass = "hep-th",
    reportNumber = "CALT-TH 2019-020, IPMU19-0086, PUPT-2586",
    doi = "10.1103/PhysRevD.100.066029",
    journal = "Phys. Rev. D",
    volume = "100",
    number = "6",
    pages = "066029",
    year = "2019"
}

@book{EGNO,
  author    = {Etingof, Pavel and Gelaki, Shlomo and Nikshych, Dmitri and Ostrik, Victor},
  title     = {Tensor Categories},
  series    = {Mathematical Surveys and Monographs},
  volume    = {205},
  publisher = {American Mathematical Society},
  address   = {Providence, RI},
  year      = {2015},
  isbn      = {978-1-4704-2024-6}
}

@book{TakesakiII,
  author    = {Takesaki, Masamichi},
  title     = {Theory of Operator Algebras II},
  series    = {Encyclopaedia of Mathematical Sciences},
  volume    = {125},
  publisher = {Springer},
  address   = {Berlin},
  year      = {2003},
  doi       = {10.1007/978-3-662-10451-4}
}

@article{Arkani-Hamed:2007cpn,
    author = "Arkani-Hamed, Nima and Orgera, Jacopo and Polchinski, Joseph",
    title = "{Euclidean wormholes in string theory}",
    eprint = "0705.2768",
    archivePrefix = "arXiv",
    primaryClass = "hep-th",
    doi = "10.1088/1126-6708/2007/12/018",
    journal = "JHEP",
    volume = "12",
    pages = "018",
    year = "2007"
}

@article{DiUbaldo:2023hkc,
    author = "Di Ubaldo, Gabriele and Perlmutter, Eric",
    title = "{AdS3 Pure Gravity and Stringy Unitarity}",
    eprint = "2308.01787",
    archivePrefix = "arXiv",
    primaryClass = "hep-th",
    doi = "10.1103/PhysRevLett.132.041602",
    journal = "Phys. Rev. Lett.",
    volume = "132",
    number = "4",
    pages = "041602",
    year = "2024"
}

@article{Keller:2014xba,
    author = "Keller, Christoph A. and Maloney, Alexander",
    title = "{Poincare Series, 3D Gravity and CFT Spectroscopy}",
    eprint = "1407.6008",
    archivePrefix = "arXiv",
    primaryClass = "hep-th",
    reportNumber = "RUNHETC-2014-13",
    doi = "10.1007/JHEP02(2015)080",
    journal = "JHEP",
    volume = "02",
    pages = "080",
    year = "2015"
}

@article{Benjamin:2020mfz,
    author = "Benjamin, Nathan and Collier, Scott and Maloney, Alexander",
    title = "{Pure Gravity and Conical Defects}",
    eprint = "2004.14428",
    archivePrefix = "arXiv",
    primaryClass = "hep-th",
    doi = "10.1007/JHEP09(2020)034",
    journal = "JHEP",
    volume = "09",
    pages = "034",
    year = "2020"
}

@article{Maxfield:2020ale,
    author = "Maxfield, Henry and Turiaci, Gustavo J.",
    title = "{The path integral of 3D gravity near extremality; or, JT gravity with defects as a matrix integral}",
    eprint = "2006.11317",
    archivePrefix = "arXiv",
    primaryClass = "hep-th",
    doi = "10.1007/JHEP01(2021)118",
    journal = "JHEP",
    volume = "01",
    pages = "118",
    year = "2021"
}

@article{Doplicher:1969fvo,
    author = "Doplicher, Sergio and Haag, Rudolf and Roberts, John E.",
    title = "{Fields, observables and gauge transformations I}",
    journal = "Commun. Math. Phys.",
    volume = "13",
    pages = "1--23",
    year = "1969",
    doi = "10.1007/BF01645267"
}

@article{Doplicher:1969fvg,
    author = "Doplicher, Sergio and Haag, Rudolf and Roberts, John E.",
    title = "{Fields, observables and gauge transformations II}",
    journal = "Commun. Math. Phys.",
    volume = "15",
    pages = "173--200",
    year = "1969",
    doi = "10.1007/BF01645674"
}

@article{Doplicher:1971wk,
    author = "Doplicher, Sergio and Haag, Rudolf and Roberts, John E.",
    title = "{Local observables and particle statistics. 1}",
    doi = "10.1007/BF01877742",
    journal = "Commun. Math. Phys.",
    volume = "23",
    pages = "199--230",
    year = "1971"
}

@article{Doplicher:1973at,
    author = "Doplicher, Sergio and Haag, Rudolf and Roberts, John E.",
    title = "{Local observables and particle statistics. 2}",
    doi = "10.1007/BF01646454",
    journal = "Commun. Math. Phys.",
    volume = "35",
    pages = "49--85",
    year = "1974"
}

@article{Maldacena:1997re,
    author = "Maldacena, Juan Martin",
    title = "{The Large $N$ limit of superconformal field theories and supergravity}",
    eprint = "hep-th/9711200",
    archivePrefix = "arXiv",
    reportNumber = "HUTP-97-A097, HUTP-98-A097",
    doi = "10.4310/ATMP.1998.v2.n2.a1",
    journal = "Adv. Theor. Math. Phys.",
    volume = "2",
    pages = "231--252",
    year = "1998"
}

@article{DiFrancesco:1987ez,
    author = "Di Francesco, P. and Saleur, H. and Zuber, J. B.",
    title = "{Critical Ising Correlation Functions in the Plane and on the Torus}",
    reportNumber = "SACLAY-SPH-T-87-097",
    doi = "10.1016/0550-3213(87)90202-1",
    journal = "Nucl. Phys. B",
    volume = "290",
    pages = "527",
    year = "1987"
}

@article{Itzykson:1986pj,
    author = "Itzykson, C. and Zuber, J. B.",
    title = "{Two-Dimensional Conformal Invariant Theories on a Torus}",
    reportNumber = "SACLAY-PHT-85-019",
    doi = "10.1016/0550-3213(86)90576-6",
    journal = "Nucl. Phys. B",
    volume = "275",
    pages = "580--616",
    year = "1986"
}

@article{Bagger:1988yc,
    author = "Bagger, Jonathan and Nemeschansky, Dennis and Zuber, Jean-Bernard",
    title = "{Minimal Model Correlation Functions on the Torus}",
    reportNumber = "HUTP-88/A051, USC-88/009",
    doi = "10.1016/0370-2693(89)91122-2",
    journal = "Phys. Lett. B",
    volume = "216",
    pages = "320--324",
    year = "1989"
}

@article{Ambrose1945,
  author = {Ambrose, Warren},
  title = {Structure theorems for a special class of {Banach} algebras},
  journal = {Transactions of the American Mathematical Society},
  volume = {57},
  year = {1945},
  pages = {364--386},
  doi = {10.1090/S0002-9947-1945-0013235-8}
}

@article{Barbar:2025vvf,
    author = "Barbar, Ahmed",
    title = "{Automorphism-weighted ensembles from TQFT gravity}",
    eprint = "2511.04311",
    archivePrefix = "arXiv",
    primaryClass = "hep-th",
    month = "11",
    year = "2025"
}

@article{Colin-Ellerin:2020mva,
    author = "Colin-Ellerin, Sean and Dong, Xi and Marolf, Donald and Rangamani, Mukund and Wang, Zhencheng",
    title = "{Real-time gravitational replicas: Formalism and a variational principle}",
    eprint = "2012.00828",
    archivePrefix = "arXiv",
    primaryClass = "hep-th",
    doi = "10.1007/JHEP05(2021)117",
    journal = "JHEP",
    volume = "05",
    pages = "117",
    year = "2021"
}

@article{Held:2026huj,
    author = "Held, Jesse and Kaplan, Molly and Marolf, Donald and Wang, Zhencheng",
    title = "{Axion Wormholes and the AdS/CFT Factorization Problem}",
    eprint = "2601.02507",
    archivePrefix = "arXiv",
    primaryClass = "hep-th",
    month = "1",
    year = "2026"
}

@article{Chen:2025leq,
    author = "Chen, Hong Zhe",
    title = "{Thermodynamic stability from Lorentzian path integrals and codimension-two singularities}",
    eprint = "2501.08409",
    archivePrefix = "arXiv",
    primaryClass = "hep-th",
    doi = "10.1007/JHEP05(2025)180",
    journal = "JHEP",
    volume = "05",
    pages = "180",
    year = "2025"
}

@article{Louko:1995jw,
    author = "Louko, Jorma and Sorkin, Rafael D.",
    title = "{Complex actions in two-dimensional topology change}",
    eprint = "gr-qc/9511023",
    archivePrefix = "arXiv",
    reportNumber = "SU-GP-95-5-1, WISC-MILW-95-TH-16, MDDP-PP-96-40",
    doi = "10.1088/0264-9381/14/1/018",
    journal = "Class. Quant. Grav.",
    volume = "14",
    pages = "179--204",
    year = "1997"
}

@article{Held:2026bbo,
    author = "Held, Jesse and Kaplan, Molly and Marolf, Donald and Wang, Zhencheng",
    title = "{Lorentzian Path Integrals and Jackiw-Teitelboim wormholes with imaginary scalars}",
    eprint = "2601.09932",
    archivePrefix = "arXiv",
    primaryClass = "hep-th",
    month = "1",
    year = "2026"
}

@article{Colafranceschi:2023txs,
    author = "Colafranceschi, Eugenia and Marolf, Donald and Wang, Zhencheng",
    title = "{A trace inequality for Euclidean gravitational path integrals (and a new positive action conjecture)}",
    eprint = "2309.02497",
    archivePrefix = "arXiv",
    primaryClass = "hep-th",
    doi = "10.1007/JHEP04(2024)140",
    journal = "JHEP",
    volume = "04",
    pages = "140",
    year = "2024"
}

@article{Kolanowski:2026gii,
    author = "Kolanowski, Maciej and Marolf, Donald",
    title = "{How to tame your (black hole) saddles: Lessons from the Lorentzian Gravitational Path Integral}",
    eprint = "2603.24681",
    archivePrefix = "arXiv",
    primaryClass = "hep-th",
    month = "3",
    year = "2026"
}

@article{Benini:2022hzx,
    author = "Benini, Francesco and Copetti, Christian and Di Pietro, Lorenzo",
    title = "{Factorization and global symmetries in holography}",
    eprint = "2203.09537",
    archivePrefix = "arXiv",
    primaryClass = "hep-th",
    reportNumber = "SISSA 05/2022/FISI, SISSA 05/2022/FISI",
    doi = "10.21468/SciPostPhys.14.2.019",
    journal = "SciPost Phys.",
    volume = "14",
    number = "2",
    pages = "019",
    year = "2023"
}

@article{Yu:2026gdf,
    author = "Yu, Xingyang",
    title = "{From Baby Universes to Narain Moduli: Topological Boundary Averaging in SymTFTs}",
    eprint = "2605.06653",
    archivePrefix = "arXiv",
    primaryClass = "hep-th",
    month = "5",
    year = "2026"
}

@article{Held:2025mai,
    author = "Held, Jesse and Maxfield, Henry",
    title = "{Gravitational Hilbert spaces: invariant and co-invariant states, inner products, gauge-fixing, and BRST}",
    eprint = "2509.05412",
    archivePrefix = "arXiv",
    primaryClass = "hep-th",
    month = "9",
    year = "2025"
}

@article{Wang:2025bcx,
    author = "Wang, Diandian and Wang, Zhencheng and Wei, Zixia",
    title = "{Wormholes with ends of the world}",
    eprint = "2504.12278",
    archivePrefix = "arXiv",
    primaryClass = "hep-th",
    doi = "10.1007/JHEP09(2025)166",
    journal = "JHEP",
    volume = "09",
    pages = "166",
    year = "2025"
}

\end{document}